\documentclass{amsart}
\usepackage[utf8]{inputenc}

\usepackage{amsmath}
\usepackage{amsthm}
\usepackage[foot]{amsaddr}

\usepackage{graphicx}

\makeatletter
\providecommand{\leftsquigarrow}{%
	\mathrel{\mathpalette\reflect@squig\relax}%
}
\newcommand{\reflect@squig}[2]{%
	\reflectbox{$\m@th#1\rightsquigarrow$}%
}
\makeatother

\makeatletter
\providecommand{\leftsquigarrowstar}{%
	\mathrel{\mathpalette\reflect@squigs\relax}%
}
\newcommand{\reflect@squigs}[2]{%
	\reflectbox{$\m@th#1\rightsquigarrow^*$}%
}
\makeatother

\usepackage{virginialake}

\usepackage{hyperref}

\makeatletter
\newcommand{\oset}[3][0ex]{%
  \mathrel{\mathop{#3}\limits^{
    \vbox to#1{\kern-2\ex@
    \hbox{$\scriptscriptstyle#2$}\vss}}}}
\makeatother

\renewcommand{\phi}{\varphi}

\newtheorem{theorem}{Theorem}
\newtheorem{proposition}[theorem]{Proposition}
\newtheorem{lemma}[theorem]{Lemma}
\newtheorem{corollary}[theorem]{Corollary}
\newtheorem{example}[theorem]{Example}
\newtheorem{fact}[theorem]{Fact}
\newtheorem{observation}[theorem]{Observation}

\theoremstyle{definition}
\newtheorem{definition}[theorem]{Definition}
\newtheorem{remark}[theorem]{Remark}
\newtheorem{convention}[theorem]{Convention}

\newcommand{\red}[1]{{\color{red}#1}}
\newcommand{\blue}[1]{{\color{blue}#1}}
\newcommand{\purple}[1]{{\color{purple}#1}}
\newcommand{\orange}[1]{{\color{orange}#1}}

\renewcommand{\emptyset}{\varnothing}
\renewcommand{\epsilon}{\varepsilon}

\renewcommand{\setminus}{-}

\newcommand{\IH}{\mathit{IH}}

\newcommand{\extensionality}{\mathrm{ER}}
\newcommand{\induction}{\mathrm{Ind}}

\newcommand{\nmod}{\mathfrak{N}}
\newcommand{\interp}[1]{#1^{\nmod}}
\newcommand{\cmod}[1]{\mathcal C_{#1}}
\newcommand{\SN}[1]{\mathrm{SN}_{#1}}

\newcommand{\cimp}{\supset}
\newcommand{\cor}{\vee}
\newcommand{\cand}{\wedge}
\newcommand{\cnot}{\neg}
\newcommand{\ciff}{\equiv}

\newcommand{\hro}[1]{\hr{#1}}

\newcommand{\hr}[1]{\mathsf{HR}_{#1}}
\newcommand{\he}[1]{\mathsf{HE}_{#1}}

\newcommand{\seqar}{\Rightarrow}
\newcommand{\Nat}{\mathbb{N}}

\newcommand{\id}{\mathsf{id}}

\newcommand{\wk}{\mathsf{wk}}
\newcommand{\cntr}{\mathsf{cntr}}
\newcommand{\exch}{\mathsf{ex}}
\newcommand{\cut}{\mathsf{cut}}
\newcommand{\lefrul}[1]{#1_l}
\newcommand{\rigrul}[1]{#1_r}

\newcommand{\leftimp}{\mathsf{L}}
\newcommand{\rightimp}{\mathsf{R}}

\newcommand{\rec}{\mathsf{rec}}
\newcommand{\cond}{\mathsf{cond}}

\newcommand{\comp}{\circ}

\newcommand{\nat}{N}
\newcommand{\level}{\mathsf{lev}}

\newcommand{\farc}{farc}
\newcommand{\Farc}{Farc}
\newcommand{\FARC}{\mathsf{FARC}}

\newcommand{\rhotrans}[2]{{#1^{#2}}}

\renewcommand{\succ}{\mathsf{s}}
\newcommand{\pred}{\mathsf{p}}
\newcommand{\Kcomb}[2]{\mathrm{K}_{#1#2}}
\newcommand{\Scomb}[3]{\mathrm{S}_{#1#2#3}}

\newcommand{\apply}{\mathsf{app}}

\newcommand{\PRF}[1]{\mathrm{PRF}_{#1}}
\newcommand{\CRF}[1]{\mathrm{CRF}_{#1}}

\newcommand{\T}{\mathit{T}}
\newcommand{\C}{\mathit{CT}}
\newcommand{\nT}[1]{\T_{#1}}
\newcommand{\nC}[1]{\C_{#1}}
\newcommand{\aca}{\mathsf{ACA_0}}
\newcommand{\PA}{\mathsf{PA}}

\newcommand{\RCA}{\mathsf{RCA}_0}
\newcommand{\WKL}{\mathsf{WKL}_0}
\newcommand{\ACA}{\aca}

\newcommand{\ISn}[1]{I\Sigma_{#1}}
\newcommand{\proves}{\vdash}

\newcommand{\CIND}[1]{I#1}

\newcommand{\reduces}{\rightsquigarrow}
\newcommand{\unreduces}{\leftsquigarrow}
\newcommand{\unreducess}{\leftsquigarrowstar}
\newcommand{\conv}{\approx}
\newcommand{\numeral}[1]{\underline{#1}}

\newcommand{\exteq}{\simeq}

\newcommand{\branch}{\mathrm{Gen}}

\newcommand{\parred}{\vartriangleright}
\newcommand{\parder}{\vartriangleleft}

\newcommand{\ifthelse}[3]{\mathtt{if}\ #1=0\ \mathtt{then}\ #2\ \mathtt{else}\ #3}


\newcommand{\tree}{\mathrm{RedTree}}

\newcommand{\Con}{\mathrm{UNF}_\nat}


\newcommand{\REC}{\mathrm{Rec}}
\newcommand{\SREC}{\mathrm{SimRec}}
\newcommand{\runreduces}{\rightarrowtriangle}
\newcommand{\rununreduces}{\leftarrowtriangle}

\newcommand{\runle}{\oset{\: {+}}{\rununreduces}}

\newcommand{\runge}{\oset{{+}\: }{\runreduces}}

\newcommand{\tuple}[1]{\left\langle #1 \right\rangle}
\newcommand{\proj}[2]{\pi_{#1}(#2)}
\newcommand{\projraised}[2]{\pi_{#1}'(#2)}

\title{A circular version of G\"odel's $\T$ and its abstraction complexity}

\author{Anupam Das$^*$}
\address{$^*$University of Birmingham}
\email{a.das@bham.ac.uk}

\begin{document}

\maketitle

\begin{abstract}
\emph{Circular} and \emph{non-wellfounded} proofs have become an increasingly popular tool for metalogical treatments of systems with forms of induction and/or recursion. In this work we investigate the expressivity of a variant $\C$ of G\"odel's system $\T$ where programs are circularly typed, rather than including an explicit recursion combinator. 
In particular, we examine the abstraction complexity (i.e.\ type level) of $\C$, and show that the G\"odel primitive recursive functionals may be typed more \emph{succinctly} with circular derivations, using types precisely one level lower than in $\T$.
In fact we give a logical correspondence between the two settings, interpreting the quantifier-free type 1 \emph{theory} of level $n+1$ $\T$ into that of level $n$ $\C$ and vice-versa.

We also obtain some further results and perspectives on circular `derivations', namely strong normalisation and confluence, models based on hereditary computable functionals, continuity at type 2, and a translation to terms of $\T$ computing the same functional, at all types.
\end{abstract}

\section{Introduction}

In recent years \emph{non-wellfounded} proofs have attracted increasing attention.
The modern inception of the area arguably lies with the celebrated work \cite{nw96} of Niwinski and Walukiewicz, where a circular analytic tableau system was proved sound and complete for Kozen's modal $\mu$-calculus \cite{kozen83:mu-calc}.
Since then several distinct lines of research have emerged, in particular pivoting towards \emph{proof theoretical} aspects of non-wellfounded reasoning:
\begin{itemize}
\item \emph{Modal logic}. Niwinski and Walukiewicz's system has been recast as more traditional sequent based systems in, e.g., the works \cite{dhl06} for the linear time fragment, and \cite{studer08} for the general fragment, both offering alternative cut-free completeness proofs. Recently more `constructive' proofs have emerged of both these results, namely \cite{doumane17:multl-constr-comp} and \cite{afsharileigh17}. Similar results may be readily recovered for many other fixed point modal logics.
\item \emph{Predicate logic}. Brotherston and Simpson initiated a program for non-wellfounded proof theory based on predicate logic \cite{BroSim07:comp-seq-calc-ind-inf-desc,BroSim11:seq-calc-ind-inf-desc}. Here, syntactic correctness criteria from modal logic, themselves inspired by automaton theory, were adapted to provide a sound circular proof theory for forms of \emph{inductive definitions}. This approach has found applications in automated theorem proving  \cite{BroGorPet12:gen-cyc-prover,BroDisPet11:aut-cyc-ent,BroRow17:realiz-cyc-proof}, and more recently variations of these systems in the setting of first-order arithmetic have been investigated \cite{Sim17:cyclic-arith,BerTat17:lics,Das19:log-comp-cyc-arith,CohRow20:tc-logic}. 
\item \emph{Type systems}. Recently, formulations of cyclic proofs through the lens of the \emph{Curry-Howard} correspondence have garnered attention in the French school of proof theory. Fortier and Santocanale, also inspired by \cite{nw96}, seem to have been the first of the modern era to give a \emph{cut-elimination} result for circular proofs with the aforementioned correctness conditions,\footnote{Of course, in this regard, one must also mention Mints' famous `continuous cut-elimination' \cite{mints78:cont-cut-elim}; while his approach indeed seems to apply, the main difficulty herein seems to be the preservation of syntactic correctness at the limit.} namely for \emph{additive linear logic} with least and greatest fixed points \cite{ForSan13:cuts-circ-proofs-sem-cut-elim}. This was generalised in later work to the logic including multiplicatives too (for closed formulas) in \cite{BaeDouSau16:cut-elim}, and with more expressive correctness conditions recently in \cite{BaeDouKupSau:bouncing-threads}. Deepening this Curry-Howard viewpoint, presentations of `proof nets' have recently appeared \cite{DeSaurin:infinets:2019}, yielding a form of Natural Deduction for circular proofs.
\item \emph{Algebras}. Inspired by the aforementioned work on type systems, there have been many recent applications to classes of algebras based on fixed points, in particular the \emph{Kleene star}. \cite{DasPou17:cut-free-cyc-prf-sys-kl-alg} has presented a cut-free complete system for Kleene algebra, and established a cut-elimination result for it and the extension by \emph{residuals} and \emph{meets} (i.e.\ Lambek calculus + Kleene star) in \cite{DasPous18:nwf-lal}. 
These have been used directly to obtain alternative completeness results, e.g.\ \cite{DasDouPou18:left-handed-comp}, and have inspired recent undecidability results, e.g.\ for the logic of \emph{action lattices} \cite{Kuz19:AL-undec}, which solved a longstanding open problem.
\end{itemize}

A key motivation in all these areas is the so-called `Brotherston-Simpson conjecture': are cyclic proofs and inductive ones equally powerful? Naturally, the answer depends on how one interprets `equally powerful', e.g.\ as provability, proof complexity, logical complexity etc., as well as on the logic at hand. One example of this nuance is readily found in the setting of first-order logic:
\begin{itemize}
\item Berardi and Tatsuta have shown that first-order proofs with induction, over particular inductive definitions, do not prove the same theorems as cyclic ones \cite{BerTat17:cfolid-neq-folid}.  
\item Simpson in \cite{Sim17:cyclic-arith} and Berardi and Tatsuta in \cite{BerTat17:lics} have independently shown that inductive and cyclic proofs are equally powerful in the presence of Peano Arithmetic.
\item Both these results were refined in \cite{Das19:log-comp-cyc-arith} where it was shown that provability by cyclic proofs containing only $\Sigma_n$ formulas ($C\Sigma_n$) coincides with provability by $\Sigma_{n+1}$-induction ($I\Sigma_{n+1}$), over $\Pi_{n+1}$ theorems.
\end{itemize}

The current work is somewhat inspired by the line of work just mentioned, in the sense that it attempts to understand the Brotherston-Simpson question for type systems and their corresponding equational theories at the level of \emph{abstraction complexity} (or \emph{type level}).
In particular, motivated by the results for arithmetic mentioned above, our natural starting point is G\"odel's `system $\T$' \cite{dialectica}.

$\T$ is a multi-sorted classical quantifier-free theory over a simply typed programming language based on primitive recursion at all finite types.
G\"odel's celebrated \emph{Dialectica} functional interpretation \cite{dialectica} allows all of first-order arithmetic to be interpreted by this simple theory,\footnote{In fact, G\"odel's interpretation was only for the intuitionistic theory Heyting Arithmetic. Peano Arithmetic may be duly interpreted upon composition with a suitable double negation translation.} essentially trading off logical complexity of quantifiers for abstraction complexity in functional programs.
This tradeoff was made precise by Parsons in \cite{Par72:n-quant-ind}: the fragment $I\Sigma_{n+1}$ of Peano Arithmetic is $\mathit{ND}$-interpreted into the fragment of $\T$ admitting only type $n$ recursion ($\nT n$).
Naturally, a converse result holds too, in the sense that 
$\nT n$ may be interpreted into appropriate fragments of arithmetic where the interpretation of types is relativised to classes of hereditarily computable functionals. 

In this work we present a \emph{circular} (or \emph{cyclic}) version $\C$ of G\"odel's $\T$, where typing derivations may be non-wellfounded, but remain finitely presentable.
At the level of the type system, we simply adapt the correctness conditions of many of the previously mentioned works to the language of simple types.
At the level of the theory we admit the same axioms as $\T$, in particular quantifier-free induction, and include a form of extensional equality.
Our main result is that, similar to the arithmetic setting, $\C$ and $\T$ are mutually interpretable.
In fact, we obtain a similar refinement: the type $n$ restriction of $\C$ ($\nC n$) is interpretable in $\nT{n+1}$ and vice versa, over at least the type 1 quantifier-free theory.
Intuitively this means that cyclic typing derivations, and their induced theory, are more succinct than ones in $\T$, by precisely 1 type level.

Our arguments, however, are more subtle and technical than the analogous ones from \cite{Das19:log-comp-cyc-arith}, since we are working simultaneously with systems for typing and systems for reasoning within a theory.
We take a \emph{proof mining} approach to interpreting $\C$ in $\T$, formalising a totality argument in suitable fragments of `second-order'\footnote{As for simple type theory, the allusion to `second' or `higher' order is only suggestive: formally speaking these are multi-sorted first-order settings and not bona fide second or higher order.} arithmetic. 
Extraction of witnessing functionals and corresponding specifications in fragments of $\T$ follow by the aforementioned results of Parsons, under well-known conservation results over fragments of first-order arithmetic.
Notably, since we are unable to formalise the standard set theoretic model of higher order functionals, we take a detour through the model theory of $\T$, in particular presenting `coterm' models that play the roles of the \emph{hereditarily recursive} (and \emph{hereditarily effective}) operations.
We give a rewriting theoretic implementation of program execution, and establish a \emph{confluence} result within $\RCA$, yielding determinism of normalising programs.
The interpretation of $\C$ in $\T$ is then a consequence of the fact that these type structures indeed constitute models of $\C$.
For the refinement at each type level, we take advantage of recent results on the reverse mathematics of cyclic proof checking from \cite{Das19:log-comp-cyc-arith}, inspired by \cite{KMPS16:buchi-reverse,KMPS19:buchi-reverse}.

Our ultimate motivation is to establish a correspondence between two of the proof theoretic worlds mentioned at the start: predicate logic and type systems.
In future work we would like to establish a `circular Dialectica' functional interpretation, thereby completing the picture and formally associating the two settings.

\subsection{Related work}
Kuperberg, Pinault and Pous have notably also studied non-wellfounded typing derivations inspired by $\T$ in \cite{KupPinPou20:sysT}, in particular investigating \emph{affinity}.
Their types are closed under a `Kleene star' operation for list formation, inspired by previous works such as \cite{DasPous18:nwf-lal}, and are equivalent to the usual notion of simple (or \emph{finite}) types. They show that the affine fragment of this type system, where \emph{contraction} is omitted, computes precisely the primitive recursive functions in the standard set-theoretic model, generalising a similar result by Dal Lago for affine $\T$ \cite{DalLago09:linear-ho-rec}.
They also show that, in the presence of contraction, their type 1 fragment computes just the type 1 primitive recursive function\emph{al}s (\`a la G\"odel), i.e.\ those computed by type 1 terms of $\T$ in the standard model.

This latter result is subsumed by and, in particular, refined in the current work in terms of {type level}  (we discuss this further in Section~\ref{sect:fixed points}).
Moreover, the point of this work is to establish a \emph{logical} correspondence between fragments of $\C$ and $\T$, i.e.\ at the level of their equational/quantifier-free theories, not only at the level of interpretation of their terms in the standard model. 

\subsection{Outline and prerequisites}
The remainder of this paper is structured as follows. 
In Section~\ref{sect:stt} we give some preliminaries on Church's simple type theory and we recall system $\T$ in Section~\ref{sect:t}, in particular giving a sequent-style presentation of typing derivations.

In Section~\ref{sect:ct} we present non-wellfounded typing derivations (`coderivations'), in particular giving semantic results with respect to the standard model such as extensional completeness at type 1, Turing completeness for the regular fragment and, of course, well-definedness of the induced functionals. This section concludes with the definition of $\C$ and its type-level-restricted fragments $\nC n$.
In Section~\ref{sect:c-sim-t} we give a simulation of $\nT{n+1} $ within $\nC n$, over the type $n+1$ theory, in the presence of extensionality.
The techniques of this section are entirely proof-theoretic.

In Section~\ref{sect:coterm-models} we turn to the model theory of $\C$. We recast traditional type structures of hereditarily recursive and hereditarily effective operations into `coterm' models, in light of the aforementioned Turing-completeness result. At the base level, program execution is implemented as a rewrite system induced from the equational axioms in the usual way; in particular we prove a confluence result, yielding determinism of computation.
In Section~\ref{sect:t-sim-c} we formalise the aforementioned type structures within fragments of second-order arithmetic, in particular proving (within these fragments) that they constitute models of $\C$ (with extensionality). 
By applying standard proof mining results, we obtain an interpretation of $\nC n$ into $\nT{n+1}$, over the type 1 equational theory, a converse result to that of Section~\ref{sect:c-sim-t}.

Finally, in Section~\ref{sect:perspectives} we give some further results and perspectives on $\C$ coterms, in particular obtaining type 2 continuity, weak and strong normalisation, and a translation to $\T$ terms computing the same functional.
We also discuss proof theoretic strength, cut-elimination and the incorporation of fixed point operators.
\smallskip

It would not be pertinent to give a purely self-contained presentation of the content herein, since we rely on a number of established disciplines.
That said, we aim for a level of exposition that highlights the significance and subtleties of our results and techniques for the general proof theorist.

Naturally, it is helpful to have some background with G\"odel's system $\T$ and the Dialectica functional interpretation (though we shall not explicitly work with it), for which \cite{avigad-feferman:dialectica} is an excellent survey and \cite{Troelstra73:metamathematical-investigations,kohlenbach08:applied-proof-theory} are more comprehensive.
There are also excellent references for the technical disciplines underlining this work, namely rewriting theory (e.g.\ \cite{terese}), reverse mathematics (e.g.\, \cite{Sim09:reverse-math,Hir14:reverse-math}) and higher-order computability theory (e.g.\ \cite{ho-computability}).
Finally, we give metamathematical accounts of many of our results,\footnote{This is for two reasons: (a) for self-contained interest; while some such results are probably folklore, they have not appeared elsewhere, as far as we know; and, more importantly, (b) since we exploit these metamathematical resuls in Section~\ref{sect:t-sim-c} in order to interpret $\C$ within $\T$.} and so assume
some familiarity with
metamathematics of first- and second-order arithmetic (e.g.\ \cite{hajek-pudlak:metamathematics}, also \cite{kleene:intro-to-metamath} for a recursion-theoretic viewpoint and \cite{Troelstra73:metamathematical-investigations} for a constructive viewpoint).

\section{Preliminaries on Church's simple type theory}
\label{sect:stt}

At the heart of G\"odel's $\T$ is a rudimentary version of Church's Simple Type Theory \cite{church40:stt}. 
Since we later consider a rather non-standard `circular' calculus $\C$, we will here give a presentation of simple type theory that underlies both $\T$ and $\C$.

\subsection{Simple types}
Throughout this work we will deal with \emph{terms} that are \emph{simply typed}. (Simple) \textbf{types}, written $\sigma, \tau $ etc., are defined as follows:
\begin{itemize}
\item $\nat$ is a type.
\item If $\sigma $ and $\tau$ are types, then so is $(\sigma \to \tau)$.
\end{itemize}

\noindent
We typically omit parentheses on types when they are associated to the right. E.g., we may write $\rho \to \sigma \to \tau$ instead of $\rho \to (\sigma \to \tau)$ and so on.

\medskip

We define the \textbf{level} of a type $\sigma$, written $\level (\sigma)$, inductively as follows:
\begin{itemize}
	\item $\level (\nat) = 0$
	\item $\level (\sigma \to \tau) = \max (1 + \level (\sigma), \level (\tau))$.
\end{itemize}

\medskip

Every type $\sigma$ can be uniquely written as $\sigma_1 \to \cdots \to \sigma_n \to \nat$, for some $n \in \Nat$. 
In this case we sometimes write $\vec \sigma$ for $(\sigma_1, \dots, \sigma_n)$ and, as an abuse of notation, we also sometimes write $\vec \sigma \to \nat$ for $\sigma$.
We call $n$ here the \textbf{arity} of $\sigma$ (in reference to the type isomorphism $\rho \to \sigma \to \tau \equiv (\rho \times \sigma) \to \tau$).
Note that, in this case, $\level (\sigma) = 1 + \max\limits_{i=1}^n (\level (\sigma_i))$. 

\subsection{Simply typed theories (STTs)}

A simply typed \textbf{language} is a multi-sorted first-order language,
whose sorts are just the simple types.
The simply typed languages considered in this work will consist of some basic set of {constants} of simple type, called \emph{combinators}, as well as infinitely many \emph{variables}, written $x,y,z$ etc., of each simple type.
We may sometimes indicate the type of a variable (or term) as superscript to aid parsing, e.g.\ writing $x^\sigma$ for a variable $x$ of type $\sigma$.

\textbf{Terms} are formed from constants (and variables) by typed \emph{application}:
\begin{itemize}
    \item Any constant or variable of type $\sigma$ is a term of type $\sigma$.
    \item if $s$ and $t$ are terms of types $\sigma$ and $\sigma \to \tau$, respectively, then $(t \comp s)$ is a term of type $\tau$.
\end{itemize}
We usually just omit the application symbol $\comp$, e.g.\ writing $t\, s$ instead of $t \comp s$, and omit parentheses for long applications when they are associated to the left, e.g.\ writing $r\, s\, t$ for $(r\, s)\, t$ and so on.
	We do not include a $\lambda$-abstraction operation as primitive, instead requiring that it is coded by constants and composition for combinatory completeness (see, e.g., Fact~\ref{fact:comb-comp-ks}).
	
	\begin{remark}
	[Application]
	Formally speaking, being in a multi-sorted first-order framework, `application' itself comprises a family of operations, one for each pair of types $(\sigma \to \tau, \sigma)$.
	We shall gloss over this formality in what follows, unless we need to distinguish application operations of differing types.
	\end{remark}

Simply typed languages, for us, always include a binary relation symbol $=_\sigma$ `\textbf{equality} at type $\sigma$', on each type $\sigma$.
\textbf{Atomic formulas} have the form $s=_\sigma t$, where $s$ and $t$ are terms of type $\sigma$, although we shall suppress the subscript $\sigma$ when it does not cause confusion.
\textbf{Formulas} are built from atomic ones in the usual way, using $\cnot$ (negation), $\cor$ (disjunction), $\cand$ (conjunction), $\cimp$ (implication), $\ciff$ (if and only if), $\exists$ (existential quantifier), and $\forall$ (universal quantifier).

Simply typed \textbf{theories} (or \emph{STTs}) are (first-order, classical) theories over a (simply typed) language, typically specified by a set of quantifier-free axioms and rules.
 We always assume that STTs include the axioms for equality from Figure~\ref{fig:equality-axioms}. 

\begin{figure}[h]
\begin{enumerate}
	\item\label{item:refl-eq-axiom} $ t=_\sigma t$, for any term $t$ of type $\sigma$. (Reflexivity)
	\item\label{item:leibniz-eq-axiom}
	$(s =_\sigma t \cand  \phi(s))  \cimp \phi(t)$, for terms $s,t$ of type $\sigma$.
(Leibniz)
\end{enumerate}
\caption{Axiom schemata for (intensional) equality.}
\label{fig:equality-axioms}
\end{figure}

We will also include a form of extensionality for equality given by the extensionality rule ($\extensionality$) in Figure~\ref{fig:extensionality-axiom}.

\begin{figure}[h]
\begin{itemize}
\item[($\extensionality$)]\label{item:extensionality-axiom} 
If $\proves s \, \vec x \, = \, t\, \vec x\, $ then $\proves s=t$.
\end{itemize}
\caption{Extensionality rule.}
\label{fig:extensionality-axiom}
\end{figure}

\begin{remark}
[On equality]
Our inclusion of an equality symbol in all finite types coupled with a rule for extensionality is non-standard, but it eases some of the technical development.
Let us note, however, that our version has been previously considered in the literature, e.g.\ in \cite{Par72:n-quant-ind} where $\extensionality$ is called `SI' and facilitates the translation from $I\Sigma_{n+1}$ into $\nT n$ (cf.~Section~\ref{sect:fragments-of-t} later).

Our axiomatisation of equality thus sits somewhere between \emph{intensional} or \emph{weakly extensional} variants and the \emph{fully extensional} variant.
Since our principal concern in this work is in comparing STTs by their type 1 theories, the precise variant of equality is not so important, vis a vis known extensionality-elimination techniques at lower types, cf.~\cite{luckhardt73:ext-elim}.
Weaker formulations, in particular {weak extensionality}, seem to suffice for certain results, but the appropriate proof adaptations seem to introduce technicalities that detract from our main purpose.
Thus such considerations are beyond the scope of this work, but they will naturally be of greater importance for related directions, in particular for \emph{proof interpretations}.
\end{remark}

\subsection{Structures and the standard model}
\label{sect:structures-standard-model}

We consider usual Henkin structures for simply typed languages, called \emph{type structures}. Note that we do not, a priori, require $=_\sigma$ to be interpreted as true equality on the interpretation of $\sigma$.

One particular structure, the `standard' or `full set-theoretic' model $\nmod$, is given by the following interpretation of types:
\begin{itemize}
	\item $\nat^\nmod$ is $\Nat$.
	\item $(\sigma \to \tau )^\nmod$ is the set of functions $\sigma^\nmod \to \tau^\nmod$.
	\item $\comp^\nmod$ is just function application, i.e.\ given $f \in \sigma^\nmod$ and $g \in (\sigma \to \tau)^\nmod$, $g \comp^\nmod f \in \tau^\nmod$ is defined as $g(f)$.
	\item For each type $\sigma$, we have an \emph{extensional} equality relation $=_\sigma^\nmod$: 
	\begin{itemize}
		\item $=_\nat^\nmod$ is just equality of natural numbers;
		\item for $f,g \in (\sigma \to \tau )^\nmod$, we have $f=_{\sigma \to \tau}^\nmod g$ just if $ \forall x \in \sigma^\nmod .  f(x) =_\tau^\nmod g(x)$.
	\end{itemize}
\end{itemize}

As a notational convention, for a function $f : X\to {Z^Y}$ and $x \in X$, $y\in Y$, we may write $f(x,y)$ for $f(x)(y)$, and so on, in reference to the usual `Currying' isomorphism $(Z^Y)^X \cong Z^{X \times Y}$.

This structure also has standard extensions to the STTs considered later in this work, which will be presented at the appropriate moments.

\subsection{Example: Combinatory Algebra}
\label{sect:comb-alg}
The language of \emph{combinatory algebra} consists of the following constants:
\begin{itemize}
		\item $\Kcomb \sigma \tau$, for each pair $(\sigma, \tau)$, of type $\sigma \to \tau \to \sigma$.
	\item $\Scomb \rho \sigma \tau$, for each triple $(\rho, \sigma, \tau)$, of type $(\rho \to \sigma \to \tau) \to (\rho \to \sigma) \to \rho \to \tau$.
\end{itemize}
We will typically omit the type subscripts of the combinators in what follows when it is unambiguous.
\textbf{Combinatory Algebra} is a theory over this language that includes all axioms of the form:
\begin{equation}
	\label{eqn:ks-comb-axioms}
	\begin{array}{rcl}
		\Kcomb{}{} \, x\, y & = &   x \\
		\Scomb{}{}{} \, x\, y\, z & = &  x\, z\, (y\, z)
	\end{array}
\end{equation}
The standard model $\nmod$ may be extended to a model of Combinatory Algebra by taking the equations above as definitions (oriented from left to right).
It is well-known that Combinatory Algebra is \emph{complete}:
\begin{fact}
	\label{fact:comb-comp-ks}
	For each term $t$ of type $\tau$ and variable $x$ of type $\sigma$, there is a term $\lambda x t$ of type $\sigma \to \tau$ s.t.\ $\eqref{eqn:ks-comb-axioms} \proves\ (\lambda x t) \, y\ = \ t[y/x]$.
\end{fact}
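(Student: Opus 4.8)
The plan is to prove Fact~\ref{fact:comb-comp-ks} by the standard bracket abstraction algorithm, defining $\lambda x t$ by recursion on the structure of the term $t$, and then verifying the reduction property $(\lambda x t)\, y = t[y/x]$ by a parallel induction using the equational axioms \eqref{eqn:ks-comb-axioms} together with the equality axioms of Figure~\ref{fig:equality-axioms}.

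First I would set up the recursion on terms. For the base cases: if $t$ is the variable $x$ itself, put $\lambda x t := \Scomb{}{}{}\, \Kcomb{}{}\, \Kcomb{}{}$ (suitably typed), which behaves as the identity; if $t$ is a variable $y \neq x$ or a constant, put $\lambda x t := \Kcomb{}{}\, t$, exploiting that $x$ does not occur in $t$. For the inductive case, if $t = (r\comp s)$ with $r$ of type $\sigma' \to \tau$ and $s$ of type $\sigma'$, set $\lambda x t := \Scomb{}{}{}\,(\lambda x r)\,(\lambda x s)$, with the type subscripts of $\mathrm S$ chosen so that everything typechecks --- here one uses that $\lambda x r$ has type $\sigma \to \sigma' \to \tau$ and $\lambda x s$ has type $\sigma \to \sigma'$ by the induction hypothesis, so $\Scomb{\sigma}{\sigma'}{\tau}$ is the right instance. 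A small remark on typing discipline: one should check that in each case the stated type $\sigma \to \tau$ is indeed produced, which is routine from the types of $\mathrm K$ and $\mathrm S$.

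Next I would verify $\eqref{eqn:ks-comb-axioms} \proves (\lambda x t)\, y = t[y/x]$ by induction on $t$, matching the cases above. In the identity case, $(\Scomb{}{}{}\, \Kcomb{}{}\, \Kcomb{}{})\, y = \Kcomb{}{}\, y\, (\Kcomb{}{}\, y) = y = x[y/x]$ using first the $\mathrm S$-axiom and then the $\mathrm K$-axiom. In the constant-or-other-variable case, $(\Kcomb{}{}\, t)\, y = t = t[y/x]$ by the $\mathrm K$-axiom, using that $x$ is not free in $t$. In the application case, $(\Scomb{}{}{}\,(\lambda x r)\,(\lambda x s))\, y = (\lambda x r)\, y\, ((\lambda x s)\, y)$ by the $\mathrm S$-axiom, which by two applications of the induction hypothesis (and the Leibniz/congruence axioms to substitute under application) equals $r[y/x]\, (s[y/x]) = (r\comp s)[y/x] = t[y/x]$. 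Throughout, the congruence of $=$ with respect to application --- needed to replace a subterm inside a larger term --- follows from the Leibniz schema in Figure~\ref{fig:equality-axioms}.

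The only mild obstacle here is bookkeeping rather than mathematics: one must be careful that the type subscripts on the combinators are chosen consistently so that $\lambda x t$ is a well-formed term of the advertised type, and that the equational reasoning is genuinely derivable from \eqref{eqn:ks-comb-axioms} plus reflexivity and Leibniz (rather than appealing to $\beta$-reduction semantically). Since the statement only asks for provability from \eqref{eqn:ks-comb-axioms}, no appeal to the standard model or to extensionality is needed; the whole argument is a finite, purely equational derivation, uniform in the term $t$. I expect the write-up to be short, with the induction on the structure of $t$ carrying all the weight.
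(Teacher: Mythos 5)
Your bracket-abstraction argument is the standard proof of combinatory completeness, and it is correct as outlined (including the use of the Leibniz schema for congruence under application and the closure of the axioms \eqref{eqn:ks-comb-axioms} under substitution); the paper itself states Fact~\ref{fact:comb-comp-ks} as well-known and gives no proof, so there is nothing to diverge from.
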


\subsection{Sequent style type system}
\label{sect:sect:typing-rules}

\emph{Sequent calculi} give us a way to write typed terms that are more succinct with respect to type level, and also enjoy nice proof theoretic properties, e.g.\ cut-elimination. 
From the point of view of the \emph{Curry-Howard correspondence}, they associate sequent proofs of minimal logic to simply typed terms.
Importantly, the sequent presentation and its induced relations between type occurrences makes it easier to define our notion of progressing non-wellfounded derivation later.

\begin{definition}
	[Sequent calculus]
    Sequents are expressions $\vec \sigma \seqar \tau$, where $\vec \sigma$ is a list of types and $\tau $ is a type.
The rules for minimal logic are given in Figure~\ref{fig:seq-calc-min}.
\end{definition}
	\begin{figure}
		[h]
		\[
		\vlinf{\exch}{}{\red{\vec \rho}, \blue{\rho}, \orange{\sigma}, \purple{\vec \sigma }\seqar \tau}{\red{\vec \rho},  \orange{\sigma}, \blue{\rho}, \purple{\vec \sigma }\seqar \tau}
		\qquad
		\vlinf{\wk}{}{\purple{\vec \sigma}, \blue{\sigma} \seqar \tau}{\purple{\vec \sigma }\seqar \tau}
		\qquad
		\vlinf{\cntr}{}{\purple{\vec \sigma}, \blue{\sigma} \seqar \tau}{\purple{\vec \sigma}, \blue{\sigma}, \blue{\sigma }\seqar \tau}
		\qquad 
		\vliinf{\cut}{}{\purple{\vec \sigma }\seqar \tau}{\purple{\vec \sigma} \seqar  \sigma }{\purple{\vec \sigma}, \sigma \seqar \tau}
		\]
		\[
		\vlinf{\id}{}{\purple{\sigma} \seqar \sigma}{}
		\qquad
		\vliinf{\leftimp}{}{\purple{\vec \sigma }, \blue{\rho \to \sigma} \seqar \tau}{\purple{\vec \sigma} \seqar \rho}{\purple{\vec \sigma}, {\sigma }\seqar \tau}
		\qquad 
		\vlinf{\rightimp}{}{\purple{\vec \sigma }\seqar \sigma \to \tau}{\purple{\vec \sigma}, \sigma \seqar \tau}
		\]
		\caption{Sequent style typing rules from minimal logic.}
		\label{fig:seq-calc-min}
	\end{figure}

Here, and throughout this section, colours of each type occurrence in typing rules may be ignored for now and will become relevant later in Section~\ref{sect:ct}. That said, it may be illustrative for the reader to imagine that, once we give interpretations of these rules, type occurrences of the same colour will correspond to identical inputs for the corresponding functionals.
In this way, the colour assigned to a type on the LHS of a sequent is a sort of variable annotation for that occurrence.

Recall that, for a list of types $\vec \sigma = (\sigma_1, \dots, \sigma_n)$, we sometimes write $\vec \sigma \to \sigma$ for the type $\sigma_1 \to \cdots \to \sigma_n \to \sigma$.
Each rule instance (or \textbf{step}) determines a constant of the appropriate type:
\begin{itemize}
	\item A step $\vlinf{}{}{\vec \tau \seqar \tau}{}$ is a constant of type $\vec \tau \to \tau$.
	\item A step $\vlinf{}{}{\vec \tau \seqar \tau}{\vec \sigma \seqar \sigma}$ is a constant of type
	\(
	(\vec \sigma \to \sigma) \to \vec \tau \to \tau
	\).
	\item A step $\vliinf{}{}{\vec \tau \seqar \tau}{\vec \rho\seqar \rho}{\vec \sigma \seqar \sigma}$ is a constant of type
	$(\vec \rho \to \rho) \to (\vec \sigma \to \sigma) \to \vec \tau \to \tau$.
\end{itemize}

We will usually refer to steps only by their labels, e.g.\ $\exch$, $\cntr$, $\rightimp$ etc., rather than explicitly indicating their types in the premisses and conclusions; unless otherwise clear from context, the associated typing should be assumed to be as given in the original specification of the rule, e.g.\ in Figure~\ref{fig:seq-calc-min}.

\begin{definition}
[Derivations and terms]
A sequent calculus \textbf{derivation} of $\vec \sigma \seqar \tau$ determines a term of type $\vec \sigma \to \tau$ in the expected way, by applying all the inference steps according to its structure.
Formally, derivations are construed as terms by inductively setting:
\begin{itemize}
	\item $\vlderivation{
		\vlin{\mathsf r}{}{\vec \tau \seqar \tau}{
		\vltr{s}{\vec \sigma \seqar \sigma}{\vlhy{ \ }}{\vlhy{}}{\vlhy{\ }}
		}
	}$ is the term $\mathsf r\, s$; and,
	\item $\vlderivation{
		\vliin{\mathsf r}{}{\vec \tau \seqar \tau}{
			\vltr{r}{\vec \rho \seqar \rho}{\vlhy{\ }}{\vlhy{ }}{\vlhy{\ }}
		}{
			\vltr{s}{\vec \sigma \seqar \sigma}{\vlhy{\ }}{\vlhy{ }}{\vlhy{\ }}
		}
	}$ is the term $\mathsf r\ r\, s$.
\end{itemize}
We write $t: \vec \sigma \seqar \tau$ if $t$ is a derivation of the sequent $\vec \sigma \seqar \tau$ (and so also a term of type $\vec \sigma \to \tau$).
\end{definition}

Note that, strictly speaking, derivations form a strict subset of all closed terms, since they are not formally closed under application. 
I.e., for $s:\ \seqar \sigma$ and $t:\sigma \seqar \tau$, we have that $t\, s $ is a term of type $\tau$, but $t\, s $ is not, in general, a derivation.\footnote{E.g., if $t$ concludes with a binary step, then the only way to write $t\, s $ in the form $\mathsf r\, r_1 \cdots r_n$, with $\mathsf r$ an inference step, requires $n>2$, which is not possible since all rules are at most binary.}
In what follows, we will see that we can \emph{interpret} the term $t\, s$ as the derivation $\cut\, t\, s$ (with appropriate types), but it will nonetheless be convenient to distinguish these two terms.
In particular this interpretation does not admit the same notion of `thread' for non-wellfounded derivations we consider later.

\begin{remark}
	[Why rules as combinators?]
While it may seem strange to adopt a sequent style type system but construe inference steps as combinators rather than meta-level operations on, say,  $\lambda$-terms, we adopt this approach to facilitate our later notions of non-wellfounded derivations and coterms.
This combinatory approach ensures that the `term associated to a derivation' is actually a \emph{continuous} construction, so that when we later consider non-wellfounded derivations, the corresponding notion of `coterm' is well-defined.

On the other hand, why do we use the sequent calculus at all? This is due to the particular termination criterion we will adopt for non-wellfounded derivations later, exploiting  well-known notions of formula \emph{ancestry} available in the sequent calculus.
This is why the sequent calculus is the standard formalism in circular proof theory.
\end{remark}

\begin{definition}
	[Axiomatisation]
	We define the axiomatisation in Figure~\ref{fig:eq-ax-seq-calc-min}, where the types corresponding to each rule label are as indicated in the corresponding rule instance in Figure~\ref{fig:seq-calc-min}.

	\begin{figure}
		[h]
		\[
		\begin{array}{rcl}
		\id \ x & = & x \\
		\exch \ t \ \vec x\  x \ y\ \vec y & = & t \ \vec x \ y \ x \ \vec y \\
		\wk \ t  \ \vec x \ x & = & t \ \vec x \\
		\cntr \ t\ \vec x \ x & = & t \ \vec x \ x \ x \\
		\cut \ s\ t\ \vec x & = & t \ \vec x \ (s \ \vec x ) \\
		\leftimp \ s\ t\ \vec x\ y & = & t \ \vec x\ (y\ (r\ \vec x) ) \\
		\rightimp \ t \ \vec x \ x & = & t\ \vec x\ x
		\end{array}
		\]
		\caption{Equational axiomatisation of sequent calculus rules.}
		\label{fig:eq-ax-seq-calc-min}
	\end{figure}
\end{definition}
Note that, here and elsewhere, there is no formal reason why we distinguish the arguments corresponding to subderivations by term meta-variables, $s,t$ etc., and other arguments by variables $x,y$ etc. It is purely in order to facilitate the identification of the corresponding arguments. All axioms are closed under substitution of terms for variables.

\begin{remark}
    [Combinatory completeness]
    The sequent calculus in Figure~\ref{fig:seq-calc-min}, under the axiomatisation in Figure~\ref{fig:eq-ax-seq-calc-min}, is equivalent to Combinatory Algebra (from Section~\ref{sect:comb-alg}).
    In particular there are derivations for $\Kcomb \sigma \tau$ and $\Scomb \rho \sigma \tau$ that satisfy the corresponding equational axioms.
    As a consequence, our sequent calculus is also combinatory complete.
\end{remark}

\begin{remark}
[Standard model]
	The structure $\nmod$ from Section~\ref{sect:structures-standard-model} may be extended to one for Figures~\ref{fig:seq-calc-min} and \ref{fig:eq-ax-seq-calc-min} by taking the equations of Figure~\ref{fig:eq-ax-seq-calc-min} as definitions, oriented left-to-right.
\end{remark}

\begin{convention}
	[Rules modulo exchange]
	\label{conv:rules-mod-exch}
In the rest of this work, it will often be convenient to omit instances of exchange, $\exch$, in typing derivations and their corresponding notation as terms.
For example, we may freely write a `rule instance',
\begin{equation}
\label{eqn:cut-with-exchange}
\vliinf{\cut}{}{\purple{\vec \rho}, \blue{\vec \sigma} \seqar \tau}{\purple{\vec \rho}, \blue{\vec \sigma} \seqar \sigma}{\purple{\vec \rho}, \sigma , \blue{\vec \sigma }\seqar \tau}
\end{equation}
instead of the corresponding derivation with exchanges, and an `axiom',
\[
\cut\, s\, t\, \vec x\, \vec y \ = \ t\, \vec x\, (s\, \vec x\, \vec y)\, \vec y
\]
instead of the corresponding equation derived using the $\exch$ and $\cut $ axioms. 
Again, we may omit typing of rule labels when it is unambiguous.
\end{convention}

\section{Preliminaries on G\"odel's system $\T$}
\label{sect:t}

So far we have not imposed any restrictions on our base type $\nat$, despite the fact that the standard model $\nmod$ interprets $\nat$ as $\Nat$. 
System $\T$ is a simple type theory that extends Combinatory Algebra by including new constants, axioms and rules that constrain the interpretation of $\nat$ to this effect.
Its definition is borne out over the following subsections.

\subsection{Constants for natural numbers and recursion}
The language of $\T$ extends the sequent system from Figure~\ref{fig:seq-calc-min} by the typing rules in Figure~\ref{fig:rules-0-s-rec}.
Again, we may omit the subscript (and other typing information) of an instance of $\rec_\tau$ when it is unambiguous.

\begin{figure}[h]
\[
\vlinf{0}{}{\seqar \nat}{}
\qquad
\vlinf{\succ}{}{\purple{\nat} \seqar \nat}{}
\qquad
\vliinf{\rec_\tau}{}{ \purple{\vec \sigma}, \blue{\nat }\seqar \tau}{\purple{\vec \sigma }\seqar \tau}{ \purple{\vec \sigma}, \blue{\nat}, \sigma \seqar \tau}
\]
\caption{Typing rules for $0$, $\succ$ and $\rec_\tau$ combinators.}
\label{fig:rules-0-s-rec}
\end{figure}

When writing terms, we assume $\succ$ binds stronger than application. This is usually visually signified since the symbol $\succ$ will appear in closer proximity to the term it is bound to. E.g. We write $s\ \succ t$ for $s \comp (\succ \comp t)$.

A \textbf{numeral} is a term of the form $\underbrace{\succ \cdots \succ}_n 0$, which we more succinctly write as $\numeral n$.

\subsection{Recursion axioms}

$\T$ includes as axioms the equations from Figure~\ref{fig:rec-axioms}. 
\begin{figure}[h]
    \[
\begin{array}{rcl}
\rec\ s\ t\ \vec x\  0 &=& s\, \vec x \\
\rec\ s\ t\ \vec x\ \succ y &=& t\ \vec x\ y \ (\rec\ s\ t\ \vec x \ y)
\end{array}
\]
    \caption{Equational axioms for recursion combinators, where $y$ is a variable of type $\nat$.}
    \label{fig:rec-axioms}
\end{figure}
Before concluding our definition of $\T$, let us note that the equational axioms thus far presented are enough to expose well-behaved computational content:\footnote{Note that this result also applies to alternative combinatorial bases such as ours, e.g.\ as noted in \cite{curry-hindley-seldin:72}, Chapter B.}

\begin{fact}
	[\cite{Tait:67:normalisation-of-t-+-bar-rec-typ01}]
	\label{fact:normalisation+confluence-of-t}
	Orienting the axioms of Figures~\ref{fig:eq-ax-seq-calc-min} and \ref{fig:rec-axioms} left-to-right yields a terminating and confluent rewriting system on closed terms of $\T$.
\end{fact}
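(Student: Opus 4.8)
The plan is to treat the two assertions by quite different means: confluence is a soft consequence of the \emph{shape} of the rewrite rules, whereas termination is the substantive content and is obtained by Tait's reducibility (`computability predicates') method. Throughout one may restrict attention to \emph{closed} terms, which is all that is claimed and which, in this combinatory presentation, also spares us any open-term or substitution bookkeeping.

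\textbf{Confluence.} Read left-to-right, each equation of Figures~\ref{fig:eq-ax-seq-calc-min} and~\ref{fig:rec-axioms} is a rewrite rule whose left-hand side is a single combinator applied to a fixed number of \emph{distinct} variables (for the $\rec$ rules, the last argument is moreover the constructor pattern $0$ or $\succ y$) --- here `combinator' ranging over $\id,\exch,\wk,\cntr,\cut,\leftimp,\rightimp$ and $\rec_\tau$, each \emph{typed} instance counting as its own constant, and recalling that $0$ and $\succ$ carry no rules. Every rule is thus left-linear; and since distinct rules are headed by distinct constants and each left-hand side is merely a head constant applied to variables, there are no critical pairs: the system is \emph{orthogonal}. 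By the standard orthogonality criterion (Parallel Moves / Tait--Martin-L\"of lemma, see e.g.\ \cite{terese}) an orthogonal term rewriting system is confluent on all terms --- a fortiori on the closed ones --- independently of termination.

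\textbf{Termination, the set-up.} For each type $\sigma$ I would define, by induction on $\sigma$, a set of \emph{computable} closed terms: a closed term of type $\nat$ is computable iff it is strongly normalising; and a closed term $t$ of type $\sigma \to \tau$ is computable iff $t\, s$ is a computable term of type $\tau$ for every computable closed term $s$ of type $\sigma$. A routine Girard-style (CR1/CR3) induction on types then yields the usual facts: (i) every computable term is strongly normalising (at base type this is the definition); (ii) computability is closed under reduction, and any \emph{neutral} term all of whose one-step reducts are computable is itself computable; and (iii) \emph{adequacy}: each combinator is a computable term. Granting (iii), an immediate induction on the structure of a closed term --- a lone combinator by (iii), an application $t\, s$ of closed terms by the arrow clause --- shows every closed term of $\T$ is computable, hence strongly normalising by (i).

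\textbf{The crux, and the main obstacle.} The delicate point --- and the main obstacle --- is clause (iii) for the recursor: given computable $s,t,\vec x$ and a strongly normalising closed term $n$ of type $\nat$, one must show $\rec\, s\, t\, \vec x\, n$ is computable. Since $n$ is closed, strongly normalising and of type $\nat$, a short structural lemma on normal forms --- any closed normal form of type $\nat$ is a numeral, because any other combinator supplied with enough arguments to reach type $\nat$ already exhibits a redex --- tells us $n$ reduces to a numeral $\numeral k$, unique by confluence. One then argues by well-founded induction on the pair consisting of $k$ and the sum of the maximal reduction lengths of $s,t,\vec x,n$, ordered lexicographically, applying property (ii) to the neutral term $\rec\, s\, t\, \vec x\, n$: each of its one-step reducts is obtained either by reducing inside one of $s,t,\vec x,n$ (strictly decreasing the second component while leaving $k$ fixed), or, when $n$ has the form $0$, equals $s\, \vec x$ (computable, as $s,\vec x$ are), or, when $n$ has the form $\succ r$ (so $r$ is strongly normalising with normal form $\numeral{k-1}$), equals $t\, \vec x\, r\, (\rec\, s\, t\, \vec x\, r)$, which is computable by the induction hypothesis on $r$ together with the arrow clauses for $t$. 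The remaining combinators are routine --- each merely permutes, discards or duplicates already-computable arguments and then head-reduces, so (ii) applies at once. Everything outside this recursor analysis is bookkeeping; an alternative that sidesteps it entirely is to transport the combinatory system to the usual $\lambda$-calculus presentation of $\T$ and invoke the classical normalisation theorem there, but the self-contained Tait argument \cite{Tait:67:normalisation-of-t-+-bar-rec-typ01} sketched above is the natural route given the combinators are already in hand.
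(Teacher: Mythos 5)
The paper does not prove this statement at all: it is imported as a \emph{Fact} with a citation to Tait, together with a footnote observing that the result transfers to combinatory bases such as the present one, so there is no in-paper argument to measure yours against. Judged on its own terms, your sketch is correct and is essentially the classical route. For confluence, the oriented rules are indeed left-linear with no critical pairs (each typed instance of a rule head acts as its own constant, the two $\rec$ rules being separated by the constructor patterns $0$ and $\succ y$), so orthogonality yields confluence independently of termination. For termination, the Tait computability argument is the right tool, and you correctly isolate the only substantive case, adequacy of $\rec$: the reduction to the lemma that closed normal terms of type $\nat$ are numerals, followed by a lexicographic induction on the numeral value of the recursion argument paired with the remaining reduction length of the already-computable arguments, works, and your use of the previously established confluence to make that numeral value well-defined and stable under internal reduction of $n$ is exactly what is needed. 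One point to tighten: since you restrict to closed terms, the implication ``computable $\Rightarrow$ strongly normalising'' at arrow types is not automatic from the definition; it needs every type to be inhabited by some computable closed term (e.g.\ canonical zero functionals built from $0$, $\wk$ and $\rightimp$), a role played by variables in the usual open-term formulation, so this should be stated as part of your CR1/CR3 package. Incidentally, the method you chose is precisely the one the paper itself deploys later, in Section~\ref{sect:perspectives}, where convertibility predicates and the conversion lemma give strong normalisation for the harder non-wellfounded system $\C$; so while Fact~\ref{fact:normalisation+confluence-of-t} is only cited, your argument is aligned with the paper's own toolkit.
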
 

We will not elaborate now on the rewriting theoretic aspects of $\T$ since we will revisit it in more detail later in Section~\ref{sect:coterm-models}.
However, let us note that Tait's result above induces a well-behaved \emph{term model} of $\T$, with equality simply comparing (unique) normal forms.
The point of Section~\ref{sect:coterm-models} is to establish similar models for the non-wellfounded type system we will introduce in Section~\ref{sect:ct}, formalised in the setting of second-order arithmetic.

\subsection{Number-theoretic axioms}

Finally, $\T$ includes the axioms from Figure~\ref{fig:ax-t-non-eq}, indicating that $(0,\succ)$ generates a free and inductive structure.

\begin{figure}[h]
	\begin{enumerate}
		\item\label{item:ax-0-min} $\cnot\,  \succ x = 0 $
		\item\label{item:ax-succ-inj} $\succ x = \succ y\,  \cimp \, x=y$
		\item[($\induction$)]\label{item:induction-schema-t} 
		If $\proves \phi(0)$ and $\proves \phi(x) \cimp \phi(\succ x)$ then $\proves \phi(t)$, for $\phi$ quantifier-free.
	\end{enumerate}
	\caption{Number-theoretic axioms for $\T$, where $x$ and $y$ are variables of type $\nat$, and $t$ is a term of type $\nat$.}
	\label{fig:ax-t-non-eq}
\end{figure}

\noindent
This concludes the definition of $\T$, i.e.:
\begin{definition}
[System $\T$]
$\T$ is the simple type theory over the language given by Figures~\ref{fig:seq-calc-min} and Figures~\ref{fig:rules-0-s-rec},
axiomatised by the formulas and rules from Figures~\ref{fig:equality-axioms}, \ref{fig:extensionality-axiom}, \ref{fig:eq-ax-seq-calc-min}, \ref{fig:rec-axioms} and \ref{fig:ax-t-non-eq}.
\end{definition}

\smallskip

Going back to Fact~\ref{fact:normalisation+confluence-of-t} and the succeeding discussion, one crucial property of $\T$ is that the only (closed) normal forms of type $\nat$ are numerals.
For the aforementioned term model induced by unique normal forms of closed terms,
this property allows the verification of the induction schema \eqref{item:induction-schema-t} in Figure~\ref{fig:ax-t-non-eq} to be reduced to induction at the meta-level.
Interestingly,
this property \emph{fails} for the non-wellfounded calculus we will present in Section~\ref{sect:ct} (see Remark~\ref{rmk:non-numeral-normal}), necessitating a somewhat specialised construction of corresponding models in Section~\ref{sect:coterm-models}.

\subsection{The standard model and primitive recursive functionals}
The standard model $\nmod$ from Section~\ref{sect:structures-standard-model} may be extended to one of $\T$ by setting,
\begin{itemize}
	\item $0^\nmod:= 0 \in \Nat$.
	\item $\succ^\nmod (n) := n+1$.
\end{itemize}
and taking the axioms for $\rec$ from Figure~\ref{fig:rec-axioms} as definitions, oriented left-to-right.
Note that the interpretation of $\rec$ is indeed well-defined by these axioms, provable by induction on $\Nat$.

The interpretations of terms in this model, i.e.\ the functionals $t^\nmod$, form a higher-order function algebra known as the (G\"odel) \emph{primitive recursive functionals} of finite type, written $\PRF{}$. 
It is well-known by a result of Kreisel that its type 1 functions coincide with those definable by effective transfinite recursion up to $\epsilon_0$ \cite{Kreisel51,Kreisel52}, and moreover that ordinal complexity (height of an $\omega$-tower) can be effectively traded off for abstraction complexity (type level) and vice-versa (cf., e.g., \cite{tait68:constructive-reasoning,schwichtenberg75:highertypes-vs-type-level}).

\subsection{Restricting the level of recursors}
\label{sect:fragments-of-t}
The main subject of study in this work will be fragments of $\T$ induced by restricting the type level of recursors.

\begin{definition}
[Fragments of $\T$]
	 $\nT n$ is the restriction of $\T$ to the language containing only recursors $\rec_\sigma$ where $\level (\sigma) \leq n$.
\end{definition}

The significance of these fragments was investigated in the seminal work of Parsons \cite{Par72:n-quant-ind}.
In particular we have:
\begin{proposition}
	[\cite{Par72:n-quant-ind}]
	\label{prop:parsons-arith-to-t}
	If $\ISn{n+1} \proves \forall \vec x \exists  y A(\vec x,y)$, where $A$ is $\Delta_0$, then there is a $\nT n$ term $t$ with $\nT n \proves A(\vec x, t\, \vec x)$.\footnote{We assume here some standard encoding of $\Delta_0$ formulas into quantifier-free formulas of $\nT 0$. Alternatively we could admit bounded quantifiers into the language of $\T$, on which induction is allowed, without affecting expressivity. We shall gloss over this technicality here.}
\end{proposition}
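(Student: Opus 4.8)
This statement is attributed to Parsons and is not proved in the excerpt, but it is precisely the kind of result whose standard proof I would reconstruct as follows. The plan is to follow the Dialectica (functional) interpretation, specialised to the $\ISn{n+1}$ fragment, and then track the type level of the witnessing functionals through the interpretation. First I would recall that $\ISn{n+1}$, a classical theory, can be reduced to its intuitionistic counterpart $I\Sigma_{n+1}^i$ via a negative translation (G\"odel--Gentzen), noting that $\Pi_2$ consequences are preserved and that the quantifier complexity is not increased in a way that matters (the negative translation of a $\Sigma_{n+1}$ formula, modulo intuitionistically provable equivalences and the Markov-free portion needed here, stays within the $\Pi/\Sigma_{n+1}$ hierarchy; this is the classical Parsons bookkeeping). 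Then I would apply the Dialectica interpretation to a derivation of $\forall\vec x\,\exists y\, A(\vec x,y)$ in the intuitionistic fragment: the interpretation assigns to each formula $\phi$ a formula $\exists \vec u\,\forall\vec v\, \phi_D(\vec u,\vec v)$ with $\phi_D$ quantifier-free, and to each proof a tuple of terms of $\T$ witnessing the existential block, provably correct in the quantifier-free theory of $\T$ (with the $\extensionality$ rule, here playing the role of Parsons' `SI').

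The heart of the matter — and the step I expect to be the main obstacle — is the \emph{type-level analysis of the recursors introduced by the induction rule}. Every axiom and rule of predicate logic other than induction is interpreted by terms built from $\Kcomb{}{}$, $\Scomb{}{}{}$ and application, i.e.\ at no cost in recursion. The induction rule $\phi(0),\ \forall x(\phi(x)\cimp\phi(\succ x))\ \vdash\ \forall x\,\phi(x)$ is interpreted using a recursor: one iterates the functional witnessing the step formula. The key quantitative observation is that when $\phi$ is $\Sigma_{n+1}$ (equivalently, its Dialectica translate $\exists\vec u\,\forall\vec v\,\phi_D$ has a witness tuple $\vec u$ of types of level $\le n$), the recursion needed to build the witness for $\forall x\,\phi(x)$ is recursion \emph{at type level $\le n$}, since it iterates a function producing objects of level $\le n$. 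I would make this precise by (i) fixing a measure of the type level of the Dialectica witness tuple of a formula as a function of its quantifier complexity, showing $\Sigma_{n+1}$ (or $\Delta_0$) formulas get witness types of level $\le n$; (ii) checking that each logical rule does not raise this level; and (iii) checking that the induction rule for a $\Sigma_{n+1}$ formula is interpreted by $\rec_\sigma$ with $\level(\sigma)\le n$, hence lives in $\nT n$. Care is needed because $\phi$ may contain parameters and because the step formula's Dialectica translate has an extra counterexample block; I would use the standard trick (as in Parsons, or Troelstra's treatment) of first noting that one may assume the induction formula to be in a prenex normal form of the appropriate complexity, and bounding the arising types accordingly.

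Finally I would assemble the pieces: given $\ISn{n+1}\proves\forall\vec x\,\exists y\, A(\vec x,y)$ with $A$ quantifier-free ($\Delta_0$, using the standard coding of bounded quantifiers into $\nT 0$ flagged in the footnote), the negative translation plus Dialectica yields a $\T$-term $t$ with $\nT n\proves A(\vec x, t\,\vec x)$, and the level analysis above guarantees every recursor occurring in $t$ has type level $\le n$, so $t$ is in fact a $\nT n$ term and the verification goes through in $\nT n$ (whose quantifier-free theory is what Dialectica soundness delivers). The only genuinely delicate point, as noted, is the uniform bound on the type level of witnesses through the induction rule; everything else is routine bookkeeping over the Dialectica clauses. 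An alternative route, avoiding Dialectica, would be a direct realisability-style argument extracting $\nT n$ functionals from $\ISn{n+1}$ proofs by recursion on the proof, with the same level-tracking as the crux; I would mention this but carry out the Dialectica version as the cleaner presentation.
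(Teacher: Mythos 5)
Your proposal is correct and follows exactly the route the paper indicates for this cited result: the paper itself gives no proof, attributing it to Parsons and remarking that it is "a direct consequence of G\"odel's Dialectica functional interpretation composed with a suitable negative translation," with the extensionality rule ($\extensionality$, Parsons' `SI') facilitating the verification in $\nT n$ — precisely the negative-translation-plus-Dialectica argument, with type-level tracking of the recursors arising from $\Sigma_{n+1}$-induction, that you reconstruct. Your closing mention of the alternative structural/witnessing route also matches the paper's remark that both directions can be obtained via the transfinite-recursion-theoretic characterisations of $\T$.
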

\noindent
In fact, this result is a direct consequence of G\"odel's famous `Dialectica' functional interpretation \cite{dialectica}, composed with a suitable negative translation. 
The converse, that $\ISn{n+1}$ proves the totality of all type 1 terms of $\nT n$, is obtained by formalising models of hereditarily computable functionals similar to those in Section~\ref{sect:coterm-models}.
Both directions may be alternatively obtained via the aforementioned transfinite recursion theoretical characterisations of $\T$, using purely structural proof theoretic methods, cf.~ \cite{Buss1995witness}.

Both results naturally extend to the conservative extension $\RCA + \CIND{\Sigma^0_{n+1}}$.
\begin{corollary}
\label{cor:rca+isn+1-to-tn}
If $\RCA+ \ISn{n+1} \proves \forall \vec x \exists  y A(\vec x,y)$, where $A$ is $\Delta^0_0$, then there is a $\nT n$ term $t$ with $\nT n \proves A(\vec x, t\, \vec x)$.
\end{corollary}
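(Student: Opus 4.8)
The plan is to obtain Corollary~\ref{cor:rca+isn+1-to-tn} as a routine consequence of Proposition~\ref{prop:parsons-arith-to-t} together with a standard conservativity fact relating the second-order system $\RCA + \ISn{n+1}$ to the first-order system $\ISn{n+1}$. The key observation is that $\RCA + \ISn{n+1}$ is $\Pi^1_1$-conservative (in particular, $\Pi^0_2$-conservative) over $\ISn{n+1}$, since $\RCA$ itself is conservative over $\ISn 1$ for arithmetic statements and adding $\Sigma^0_{k}$-induction on the second-order side is absorbed by $\Sigma_k$-induction on the first-order side for $\Pi^1_1$ consequences. This is the content of the sentence ``Both results naturally extend to the conservative extension $\RCA + \CIND{\Sigma^0_{n+1}}$'' in the excerpt, so I would simply invoke it.

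First I would note that the hypothesis $\RCA + \ISn{n+1} \proves \forall \vec x \exists y\, A(\vec x, y)$ with $A$ being $\Delta^0_0$ is a $\Pi^0_2$ (hence $\Pi^1_1$) statement in the language of first-order arithmetic. By the conservativity of $\RCA + \ISn{n+1}$ over $\ISn{n+1}$ for such statements, we get $\ISn{n+1} \proves \forall \vec x \exists y\, A(\vec x, y)$. Here I would be slightly careful about the $\Delta^0_0$ versus $\Delta_0$ distinction: a $\Delta^0_0$ formula with only number quantifiers is literally a $\Delta_0$ formula of first-order arithmetic (any set parameters would be free, but since the statement is provable outright with no set hypotheses, none occur), so the translation is harmless, modulo the footnoted encoding convention already flagged in the excerpt.

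Then I would apply Proposition~\ref{prop:parsons-arith-to-t} directly to the first-order statement $\ISn{n+1} \proves \forall \vec x \exists y\, A(\vec x, y)$, obtaining a $\nT n$ term $t$ with $\nT n \proves A(\vec x, t\, \vec x)$, which is exactly the desired conclusion. So the proof is essentially a two-line composition: conservativity to descend from the second-order system to the first-order one, then Parsons' theorem.

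The main (and only) potential obstacle is the precise formulation and citation of the conservativity result for $\RCA + \ISn{n+1}$ over $\ISn{n+1}$; one wants to make sure the additional axiom scheme $\ISn{n+1}$ on the second-order side does not push strength beyond what the first-order $\ISn{n+1}$ can handle for $\Pi^1_1$ consequences. This follows from the standard model-theoretic argument: given a model $M \models \ISn{n+1}$, expand it to a second-order structure by taking the $\Delta^0_1$-definable subsets of $M$ (the usual $\RCA$ term model over $M$); this satisfies $\RCA$, and satisfies $\ISn{n+1}$ because induction instances over $\Sigma^0_{n+1}$ formulas with $\Delta^0_1$-definable set parameters unwind to $\Sigma_{n+1}$-induction instances available in $M$. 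Hence any $\Pi^1_1$ sentence failing in $M$ would fail in this expansion, giving the conservativity; I would phrase this briefly or cite \cite{Sim09:reverse-math, hajek-pudlak:metamathematics} rather than carry out the argument in full.
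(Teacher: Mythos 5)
Your proposal is correct and matches the paper's intended argument: the paper simply notes that Parsons' result ``naturally extend[s] to the conservative extension $\RCA + \CIND{\Sigma^0_{n+1}}$'', i.e.\ it derives the corollary by exactly your two-step composition of $\Pi^1_1$-conservativity of $\RCA+\ISn{n+1}$ over $\ISn{n+1}$ (via the $\Delta^0_1$-definable-sets expansion) followed by Proposition~\ref{prop:parsons-arith-to-t}. Your extra care about the $\Delta^0_0$ versus $\Delta_0$ reading and about absorbing $\Sigma^0_{n+1}$-induction with $\Delta^0_1$ set parameters into first-order $\Sigma_{n+1}$-induction is exactly the right level of detail.
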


For the results of Section~\ref{sect:c-sim-t}, it will be useful to have the following normal form of typing derivations:
\begin{proposition}
	[Partial normalisation]
	\label{prop:free-cut-elim}
	Let $t: \vec \sigma \seqar \tau$ be a $\nT n$ derivation, where $\tau$ and each $\sigma_i$ have level $\leq n$. Then there is a $\nT n$-derivation $t':\vec \sigma \seqar \tau$ such that $t^\nmod=t'^\nmod$.
	Moreover, $\nT n \proves t = t' $.
\end{proposition}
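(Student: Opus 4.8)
The plan is to adapt the standard Gentzen-style \emph{free-cut elimination} / \emph{partial cut-elimination} procedure to the combinatory-sequent setting, tracking two things simultaneously: (i) that every intermediate step preserves the denoted functional in the standard model $\nmod$, and (ii) that every intermediate step is provably equality-preserving in $\nT n$ via the equational axioms of Figures~\ref{fig:eq-ax-seq-calc-min} and \ref{fig:rec-axioms}. The key observation licensing the procedure is that, since $\tau$ and all $\sigma_i$ have level $\le n$, every \emph{cut formula} that must be eliminated — namely a cut formula of level $> n$ — can only have been introduced on the right by a $\rightimp$ step (the only rule producing a higher-type conclusion on the right) and consumed on the left by a $\leftimp$ step or by $\wk$/$\cntr$/$\exch$; crucially it can never be the principal formula of a $\rec_\sigma$ step, because $\nT n$ only has recursors with $\level(\sigma)\le n$, nor of $0$ or $\succ$ (those are type $\nat$). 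So the usual key cut-reduction case $\rightimp/\leftimp$ applies, and recursion steps never obstruct it.

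Concretely, I would first put $t$ into a shape where cuts are organised (e.g.\ by a standard measure: the pair (level of cut formula, size of the two subderivations feeding the cut)), and then repeatedly apply the local cut-reduction rewrites:
\begin{itemize}
\item logical key case: a $\cut$ on $\rho\to\sigma$ whose left premiss ends in $\rightimp$ and whose right premiss has the cut formula principal in a $\leftimp$ is replaced by two cuts on the smaller formulas $\rho$ and $\sigma$;
\item commutation cases: a $\cut$ whose cut formula is not principal in the last rule of one premiss is pushed past that rule (including past $\rec_\tau$, $0$, $\succ$, $\wk$, $\cntr$, $\exch$, $\id$), strictly decreasing the measure by the standard argument;
\item the structural cases for $\wk$ and $\cntr$ acting on the cut formula, handled as usual (duplicating or deleting a subderivation).
\end{itemize}
Each such rewrite corresponds to an equation between the associated combinatory terms that is derivable from Figures~\ref{fig:eq-ax-seq-calc-min}--\ref{fig:rec-axioms}: e.g.\ the key case reduces to a $\mathrm S$-combinator-like identity that is already provable because the sequent calculus is combinatory complete, and the commutation cases reduce to rearranging applications of $\cut$, $\exch$, etc., which are exactly the rewrites validated by the axioms. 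Since these axioms hold in $\nmod$ by definition, the denotation is preserved at each step; since they are theorems of $\nT n$, we get $\nT n\proves t=t'$ by chaining the finitely many steps. Termination of the reduction strategy is the content of free-cut elimination and follows from the standard multiset/ordinal measure on cuts — note that, because recursors are never cut formulas here, the measure is the classical one and there is no interaction with the (potentially nonterminating-looking) $\rec$ axioms; we are rewriting \emph{derivations}, not executing the program.

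The main obstacle I anticipate is purely bookkeeping rather than conceptual: making precise the induction that the reduction measure strictly decreases in the presence of the extra rules $0,\succ,\rec_\tau$ (the commutation cases through $\rec_\tau$ must be stated carefully, as $\rec_\tau$ is binary and its conclusion adds the $\nat$ on the left), and being careful that ``free cut'' is the right target — we only need to eliminate cuts on formulas of level $>n$ so that the resulting derivation, call it $t'$, has all sequents built from types of level $\le n$; we do \emph{not} claim full cut-freeness. One should also double-check the edge case where the cut formula is introduced by $\id$ (trivial: the cut disappears) and where $\wk$ introduces it on the right (then the right-premiss subderivation of the cut is discarded, and one uses weakenings to restore the context), so that the whole procedure stays inside $\nT n$ and inside the equational theory.
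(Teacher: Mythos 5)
Your proposal follows essentially the same route as the paper's proof: a partial (free-)cut elimination that only removes cuts on formulas of level $>n$, driven by a multiset measure on high-level cuts, with the $\rightimp$/$\leftimp$ key case (possible precisely because recursors in $\nT n$ never have a cut formula of level $>n$ as principal formula) and commutation/structural cases, each local rewrite being checked both to preserve the $\nmod$-denotation and to be provably equality-preserving in $\nT n$. The one place where your sketch is too optimistic is the claim that the commutation cases are mere rearrangements validated by the combinator axioms: commuting a high-level cut (and likewise contraction) past a $\rec$ step yields derivations whose provable equality in $\nT n$ is verified in the paper by an object-level induction on the recursion parameter, i.e.\ it genuinely uses the quantifier-free induction rule of $\T$ rather than the equational axioms alone — this ingredient is available in $\nT n$, so the argument goes through, but it is more than bookkeeping.
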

Since we could not easily find an explicit statement of this in the literature, a self-contained proof is given in Appendix~\ref{sect:free-cut-elim}.

\subsection{Example: typing the Ackermann-P\'eter function}
\label{sect:ack-type-1-rec}
Let us take a moment to see an example of typing and reasoning within $\T$.
The Ackermann-P\'eter function $A: \Nat \times \Nat \to \Nat$ is defined by the following equations:
\begin{equation}
	\label{eqn:ackermann-equations}
	\begin{array}{rcl}
		A(0,y) & := & y+1\\
		A(x+1,0) &:=& A(x,1)\\
		A(x+1,y+1) &:=& A(x,A(x+1,y))
	\end{array}
\end{equation}
Formally, we may see $A$ as being defined by induction on a lexicographical product order on $\Nat\times \Nat$.
This function may duly be computed by a term of $\T$ by appealing to primitive recursion at type level $1$.
We first define a functional $\mathrm I$ by primitive recursion (at type $\nat$) satisfying:
\begin{equation}
	\label{eqn:iterator-eqns}
	\begin{array}{rcl}
		\mathrm I\, f \, 0 & = & f\, \numeral 1 \\
		\mathrm I\, f \, \succ y & = & f\, (\mathrm I\, f\, y)
	\end{array}
\end{equation}
Formally, $\mathrm I$ may be typed by the following derivation,
\[
\vlderivation{
    \vliin{\rec_\nat}{}{\nat \to \nat , \underline{{\nat}} \seqar \nat}{
        \vliin{\leftimp}{}{\underline{\nat \to \nat} \seqar \nat}{
            \vltr{1}{\seqar \nat}{\vlhy{\ }}{\vlhy{}}{\vlhy{\ }}
        }{
            \vlin{\id}{}{\nat \seqar \nat}{\vlhy{}}
        }
    }{
        \vliin{\leftimp}{}{\underline{\nat \to \nat}, \red{\nat} \seqar \nat}{
            \vlin{\id}{}{\red{\nat} \seqar \nat}{\vlhy{}}
        }{
            \vlin{\wk}{}{\nat, \underline{\red{\nat}} \seqar \nat}{
            \vlin{\id}{}{\nat \seqar \nat}{\vlhy{}}
            }
        }
    }
}
\]
where principal types are underlined and red occurrences of $\red \nat$ correspond to the same input (morally $y$ in \eqref{eqn:iterator-eqns}).
$\T$ proves the defining equations from \eqref{eqn:iterator-eqns} for $\mathrm I$:
\[
\begin{array}{rcll}
     \mathrm I\, f\, 0 & = & \leftimp \, \numeral 1\, \id\, f & \text{by $\rec$ axioms}\\
     & = & \id\, (f\, \numeral 1) & \text{by $\leftimp$ axiom} \\
     &= & f\, \numeral 1 & \text{by $\id$ axiom}
\\
\noalign{\medskip}
     \mathrm I\, f\, \succ y & = & \leftimp\, \id\, (\wk\, \id) \, f \, (\mathrm I\, f\, y) & \text{by $\rec$ axioms}\\
     & = & \wk\, \id\, (f\, (\id\, (\mathrm I\, f\, y)))\, (\mathrm I \, f\, y) & \text{by $\leftimp$ axiom}\\
     &=& \id\, (f\, (\id\, (\mathrm I\, f\, y))) & \text{by $\wk$ axiom}\\
     & = & f\, (\id\, (\mathrm I\, f\, y)) & \text{by $\id$ axiom}\\
     & = & f\, (\mathrm I\, f\, y) &\text{by $\id$ axiom}
\end{array}
\]

From here $\mathrm A $ is obtained by primitive recursion at type $\nat \to \nat$, satisfying:
\begin{equation}
\label{eqn:ackermann-abstracted-eqns}
\begin{array}{rcl}
\mathrm A\, 0 & = & \succ \\
\mathrm A\, \succ x & =& \mathrm I\, (\mathrm A\, x)
\end{array}
\end{equation}
Formally, such $\mathrm A$ may be typed by the following derivation,
\[
\vlderivation{
	\vliin{\rec_{\nat \to \nat}}{}{\underline{\nat} \seqar \nat \to \nat}{
		\vlin{\rightimp}{}{\seqar \underline{\nat \to \nat}}{
		\vlin{\succ}{}{\nat \seqar \nat}{\vlhy{}}	
		}
	}{
		\vlin{\wk}{}{\underline{\nat}, \nat \to \nat \seqar \nat \to \nat}{
		\vlin{\rightimp}{}{\nat \to \nat \seqar \underline{\nat \to \nat}}{
		\vltr{\mathrm I}{\nat \to \nat, \nat \seqar \nat}{\vlhy{\ }}{\vlhy{}}{\vlhy{\ }}
		}
		}
	}
}
\]
where principal types are underlined.

\begin{proposition}
\label{prop:t1-proves-ack-eqns}
$\nT 1$ proves the following equations:
\[
\begin{array}{rcl}
\mathrm A\, 0\, y & = & \succ y \\
\mathrm A\, \succ x \, 0 & = & \mathrm A\, x\, \numeral 1 \\
\mathrm A\, \succ x\, \succ y & = &  \mathrm A\, x\, (\mathrm A\, \succ x \, y)
\end{array}
\]
\end{proposition}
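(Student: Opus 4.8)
The plan is to verify the three Ackermann equations by unfolding the definitions of $\mathrm A$ and $\mathrm I$ through their typing derivations and the equational axioms of $\nT 1$, exactly in the style of the preceding computation for $\mathrm I$. The key point is that everything needed is already available: the $\rec$ axioms of Figure~\ref{fig:rec-axioms}, the axioms for $\id, \wk, \leftimp, \rightimp$ of Figure~\ref{fig:eq-ax-seq-calc-min} (read modulo exchange per Convention~\ref{conv:rules-mod-exch}), and the already-established defining equations \eqref{eqn:iterator-eqns} for $\mathrm I$ and \eqref{eqn:ackermann-abstracted-eqns} for $\mathrm A$. So the first step is simply to record that $\nT 1 \proves \mathrm A\,0 = \succ$ and $\nT 1 \proves \mathrm A\,\succ x = \mathrm I\,(\mathrm A\,x)$, reading these off the displayed derivation for $\mathrm A$ by the same token-by-token calculation used to obtain \eqref{eqn:iterator-eqns}: the $\rec_{\nat\to\nat}$ axioms reduce $\mathrm A\,0$ to $\rightimp\,\succ$ and $\mathrm A\,\succ x$ to $\wk\,(\rightimp\,\mathrm I)\,x$, and then the $\rightimp$ and $\wk$ axioms collapse these to $\succ$ and $\mathrm I\,(\mathrm A\,x)$ respectively (using that $\rightimp$ applied and then re-abstracted is the identity, and that $\wk$ discards its extra $\nat$ argument).

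Granting those two equations, the three goals follow by short calculations. For the first, $\mathrm A\,0\,y = \succ y$ is immediate from $\mathrm A\,0 = \succ$ by congruence of equality (the Leibniz axiom of Figure~\ref{fig:equality-axioms}). For the second, $\mathrm A\,\succ x\,0 = \mathrm I\,(\mathrm A\,x)\,0 = (\mathrm A\,x)\,\numeral 1 = \mathrm A\,x\,\numeral 1$, using $\mathrm A\,\succ x = \mathrm I\,(\mathrm A\,x)$ and the base case of \eqref{eqn:iterator-eqns} with $f := \mathrm A\,x$. For the third, $\mathrm A\,\succ x\,\succ y = \mathrm I\,(\mathrm A\,x)\,\succ y = (\mathrm A\,x)\,(\mathrm I\,(\mathrm A\,x)\,y) = \mathrm A\,x\,(\mathrm A\,\succ x\,y)$, using the successor case of \eqref{eqn:iterator-eqns} again with $f := \mathrm A\,x$, and then rewriting $\mathrm I\,(\mathrm A\,x)\,y$ back to $\mathrm A\,\succ x\,y$. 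Each of these is a finite chain of rewrites using only $\nT 1$ axioms, so $\nT 1$ proves all three.

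I do not expect a genuine obstacle here; this is a bookkeeping verification whose only mild subtlety is keeping the argument lists straight when applying the $\rec$ and $\leftimp$/$\wk$ axioms with the vectors $\vec x$ empty or singleton, and remembering that the defining equations \eqref{eqn:iterator-eqns} and \eqref{eqn:ackermann-abstracted-eqns} were already derived in $\nT 1$ (indeed $\mathrm I$ uses only $\rec_\nat$ and $\mathrm A$ uses $\rec_{\nat\to\nat}$, both of level $\leq 1$, so the whole argument stays inside $\nT 1$). The one place to be careful is that \eqref{eqn:ackermann-abstracted-eqns} is an equation between terms of type $\nat\to\nat$, so passing from it to the type-$\nat$ equations in the statement formally requires applying both sides to the extra argument $y$ and invoking the Leibniz schema; this is routine but worth stating explicitly rather than appealing to extensionality, since we want the equations to hold outright in $\nT 1$.
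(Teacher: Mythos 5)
Your overall route is the same as the paper's: unfold $\mathrm A$ via the $\rec$, $\wk$ and $\rightimp$ axioms and then invoke the iterator equations \eqref{eqn:iterator-eqns}. The one step that does not go through as written is your first one: you claim $\nT 1$ proves the \emph{function-type} equations $\mathrm A\,0 = \succ$ and $\mathrm A\,\succ x = \mathrm I\,(\mathrm A\, x)$ directly from the combinator axioms, explicitly without appealing to $\extensionality$. The $\rec$ axioms do give $\mathrm A\,0 = \rightimp\,\succ$ and (with the $\wk$ axiom) $\mathrm A\,\succ x = \rightimp\,\mathrm I\,(\mathrm A\,x)$, both at type $\nat \to \nat$; but the $\rightimp$ axiom has the form $\rightimp\,t\,\vec x\,x = t\,\vec x\,x$, i.e.\ it only fires once the abstracted argument has been supplied. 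Hence $\rightimp\,\succ = \succ$ and $\rightimp\,\mathrm I\,(\mathrm A\,x) = \mathrm I\,(\mathrm A\,x)$ are not instances of it; without the trailing argument you can only conclude $\rightimp\,\succ\,y = \succ y$ and $\rightimp\,\mathrm I\,(\mathrm A\,x)\,y = \mathrm I\,(\mathrm A\,x)\,y$. The equations \eqref{eqn:ackermann-abstracted-eqns} at type $\nat \to \nat$ genuinely need the rule $\extensionality$ (which, to be fair, is part of $\nT 1$ as the paper defines it, so you could invoke it despite your disclaimer) — but as stated, your derivation of them from the $\rightimp$ axiom alone fails.

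The repair is immediate and is exactly what the paper does: never detach the trailing argument. Prove the applied equations $\mathrm A\,0\,y = \rightimp\,\succ\,y = \succ y$ and $\mathrm A\,\succ x\,y = \wk\,(\rightimp\,\mathrm I)\,x\,(\mathrm A\,x)\,y = \rightimp\,\mathrm I\,(\mathrm A\,x)\,y = \mathrm I\,(\mathrm A\,x)\,y$ (the paper's $(*)$), and feed the latter into \eqref{eqn:iterator-eqns} with $f := \mathrm A\,x$; your calculations of the second and third target equations then go through verbatim, with substitution between applied instances handled by the Leibniz schema as you indicate.
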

\begin{proof}
We have,
\[
\begin{array}{rcll}
    \mathrm A\, 0 \, y & = & \rightimp \, \succ \, y & \text{by $\rec $ axioms}\\
    & = & \succ y \, & \text{by $\rightimp$ axiom}
\\
\noalign{\medskip}
    \mathrm A\, \succ x \, y & = & \wk\, (\rightimp \, \mathrm I)\, x\, (\mathrm A\, x)\, y & \text{by $\rec$ axioms} \\
    & = & \rightimp \, \mathrm I\, (\mathrm A\, x)\, y & \text{by $\wk$ axiom} \\
    & =& \mathrm I\, (\mathrm A\, x)\, y & \text{by $\rightimp$ axiom} \quad (*)
\end{array}
\]
whence we have immediately $\mathrm A\, \succ x\, 0 \ = \ \mathrm A\, x\, \numeral 1$ by \eqref{eqn:iterator-eqns}.
For $y$ non-zero, we have:
\[
\begin{array}{rcll}
	\mathrm A\, \succ x\, \succ y & = & \mathrm I\, (\mathrm A\, x)\, \succ y & \text{by $(*)$ above} \\
	& = & \mathrm A\, x\, (\mathrm I\, (\mathrm A\, x)\, y) & \text{by \eqref{eqn:iterator-eqns}}\\
		& = & \mathrm A\, x\, (\mathrm A\, \succ x\, y) & \text{by $(*)$ above} \qedhere
\end{array}
\]
\end{proof}

The recursors used to type $\mathrm A$ have level $1$.
This is not a coincidence, since primitive recursion at level $0$ (i.e.\ on only natural numbers) computes just the primitive recursive functions:
\begin{fact}
\label{fact:type-0-rec-prim-rec}
If $t:\nat^k \to \nat$ is a term of $\nT 0$ then $t^\nmod : \Nat^k \to \Nat$ is primitive recursive.
\end{fact}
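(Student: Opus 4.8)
The plan is to proceed by structural induction on the $\nT 0$ derivation $t : \vec\sigma \seqar \tau$, showing that $t^\nmod$ is primitive recursive (in the ordinary, number-theoretic sense) whenever $\tau = \nat$ and all $\sigma_i = \nat$. Since, in general, a subderivation of $t$ need not be of this purely first-order shape (an instance of $\rightimp$ introduces an arrow type, and $\leftimp$/$\cut$ may pass through higher types internally), the induction hypothesis must be strengthened. The natural strengthening is: for \emph{every} $\nT 0$ derivation $s : \vec\rho \seqar \rho$, the functional $s^\nmod$ is \emph{primitive recursive in the sense of finite types over $\Nat$}, i.e.\ it lies in the Kleene/Kreisel class $\PRF{}^{0}$ of functionals generated from $0, \succ$, projections, composition and recursion-at-type-$\nat$ only; and then observe that the type 1 (indeed the type $0$-to-$0$) members of that class are exactly the primitive recursive functions, which is the classical fact that primitive recursion at ground type does not leave the primitive recursive functions (this is essentially folklore, going back to Kleene, and can be cited alongside Kreisel's characterisation mentioned after Fact~\ref{fact:normalisation+confluence-of-t}; alternatively it follows from Parsons' Proposition~\ref{prop:parsons-arith-to-t} with $n=0$, since $\ISn 1$-provably total functions are primitive recursive).

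The inductive step is then routine, handled rule by rule using the equational axioms of Figure~\ref{fig:eq-ax-seq-calc-min} and Figure~\ref{fig:rec-axioms} as the definitions of the corresponding functionals in $\nmod$. The structural rules $\id, \exch, \wk, \cntr$ produce functionals obtained from the premiss functional by permuting, deleting or duplicating arguments, so primitive recursiveness (in finite types) is preserved; $\cut$, $\leftimp$ and $\rightimp$ combine the premiss functionals by composition and application, again staying within the class; $0^\nmod$ and $\succ^\nmod$ are base functions; and the crucial case $\rec_\nat$ with premisses $s : \vec\sigma \seqar \tau$ and $t : \vec\sigma, \nat, \tau \seqar \tau$ yields, by Figure~\ref{fig:rec-axioms}, exactly the functional defined by primitive recursion at type $\tau$ from $s^\nmod$ and $t^\nmod$ --- but since $t$ is a $\nT 0$ derivation we have $\level(\tau) \le 0$, so $\tau = \nat$ and this is primitive recursion at ground type, which keeps us inside $\PRF{}^{0}$. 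Chaining the induction back to the original $t : \nat^k \seqar \nat$ gives $t^\nmod : \Nat^k \to \Nat$ in $\PRF{}^{0}$, hence primitive recursive.

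The only real obstacle is the bookkeeping around the strengthened induction hypothesis: one must be careful that all intermediate types arising in subderivations of a $\nT 0$ derivation are themselves of level $\le 0$, so that every recursion encountered really is ground-type recursion. This is immediate from the definition of $\nT n$ (the side condition $\level(\sigma)\le n$ on $\rec_\sigma$ with $n = 0$ forces $\sigma = \nat$), but it is worth stating explicitly, since it is precisely the point that fails for $n \ge 1$ and is what makes the restriction to $\nT 0$ special. I would therefore isolate as a small lemma the closure of $\PRF{}^{0}$ under the operations induced by the sequent rules, and then the proposition follows by the one-line induction together with the cited fact that $\PRF{}^{0}$ has exactly the primitive recursive functions among its type 1 members.
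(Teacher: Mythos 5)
The paper never proves this statement: it is labelled a \emph{Fact} and cited as classical (it is the well-known characterisation of the type 1 section of $\nT{0}$), so there is no in-paper argument to compare against. Judged on its own terms, your proposal has a genuine gap: it is circular. The class $\PRF{}^{0}$ you introduce --- the closure of $0,\succ$ and projections under composition/application at all finite types and recursion at type $\nat$ --- is, by combinatory completeness, precisely the class of functionals denoted by closed $\nT{0}$ terms, so your ``strengthened induction'' merely restates the definition of $\nT{0}$-definability; the entire content is then carried by the claim that the members of type $\nat^k\to\nat$ of this class are primitive recursive, which is literally Fact~\ref{fact:type-0-rec-prim-rec} again. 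Glossing that claim as ``primitive recursion at ground type does not leave the primitive recursive functions'' conflates the trivial first-order closure of PR under the recursion scheme with the genuinely nontrivial point: that higher-type detours (currying via $\rightimp$, cuts and $\leftimp$ on arrow types, higher-type parameters in the context of $\rec_\nat$), all of which \emph{can} occur inside a $\nT{0}$ derivation of $\nat^k\seqar\nat$, add no new type 1 functions. That is exactly what a proof must address, and your induction, as set up, never touches it; relatedly, your parenthetical claim that all intermediate types in a $\nT{0}$ derivation have level $\leq 0$ is false --- only the recursor's value type is constrained. The fallback citation also points the wrong way: Proposition~\ref{prop:parsons-arith-to-t} with $n=0$ interprets $\ISn{1}$ \emph{into} $\nT{0}$; what you need is the converse (type 1 terms of $\nT{0}$ are provably total in $\ISn{1}$, obtained by formalising hereditarily-computable-operations models, as remarked after that proposition) together with Parsons' witnessing theorem that the provably total functions of $\ISn{1}$ are primitive recursive, which is a different result.

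To repair the argument you have two honest options. Either cite the classical result on the type 1 section of $\nT{0}$ outright (Parsons, Schwichtenberg, Troelstra), which is in effect what the paper does by calling it a Fact; or make your structural induction genuine by first eliminating the higher-type detours: apply the partial normalisation of Proposition~\ref{prop:free-cut-elim} (Appendix~\ref{sect:free-cut-elim}) with $n=0$ to a derivation of $\nat^k\seqar\nat$, so that all cuts of level $>0$ are removed. An easy induction upwards from the end-sequent then shows that every sequent occurring is of the form $\nat^m\seqar\nat$ (an all-$\nat$ antecedent blocks $\leftimp$, a $\nat$ succedent blocks $\rightimp$, and the remaining cuts and recursors are on $\nat$), so every subderivation denotes a number-theoretic function, each structural rule, $0$, $\succ$ and $\cut_\nat$ is a primitive recursive operation on such functions, and $\rec_\nat$ is ordinary primitive recursion with number parameters. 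At that point the induction you envisaged is routine and actually proves the Fact.
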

Note that, together with Proposition~\ref{prop:parsons-arith-to-t}, this constitutes a proof that $\ISn 1 $ (or $\RCA$) well-defines just the primitive recursive functions.

\section{A circular version of $\T$}
\label{sect:ct}

We will now move on to the main subject of study in this work: typing `derivations' that are non-wellfounded and their corresponding notion of term.
Let us henceforth write $\T^-$ for the restriction of $\T$ to the language without recursion combinators $\rec_\tau$.

\begin{definition}
	[Conditional combinator]
We introduce a new typing rule $\cond$ for derivations, as well as corresponding axioms, in Figure~\ref{fig:cond-rule+axioms}.
As before, the colouring of type occurrences above will become apparent soon.

\end{definition}
\begin{figure}[h]
\[
\vliinf{\cond}{}{ \purple{\vec \sigma  }, \blue{\nat }\seqar \tau}{\purple{\vec \sigma }\seqar \tau}{ \purple{\vec \sigma  }, \blue{\nat }\seqar \tau}
\qquad\qquad
	\begin{array}{rcl}
		\cond\ s\ t\ \vec x \  0 & = &  s \ \vec x \\
		\cond \ s\ t\ \vec x \ \succ y  & = & t \ \vec x \  y
	\end{array}
	\]
	\caption{Typing rule and axioms for $\cond$ combinators.}
		\label{fig:cond-rule+axioms}
\end{figure}

Again, the interpretation of $\cond$ in the standard model $\nmod$ is uniquely determined by the defining axioms of Figure~\ref{fig:cond-rule+axioms}.
Throughout this section, we will work in the language of $\T^- + \cond$, unless otherwise specified.

\subsection{Non-wellfounded `terms' and `derivations'}

\textbf{Coterms} are generated \emph{coinductively} from constants and variables under typed application.
Formally, we may construe a coterm as a possibly infinite binary tree (of height $\leq \omega$) where each leaf (if any) is labelled by a typed variable or constant and each interior node is labelled by a typed application operation, having type consistent with the types of its children. I.e., an interior node with children of types $\sigma$ and $\sigma \to \tau$, respectively, must have type $\tau$.
 
At the risk of confusion, we expand the range of the metavariables $s,t,$ etc.\ to include coterms as well as terms, clarifying further only in the case of ambiguity.
We adopt the same writing and bracketing conventions as for terms, where it is meaningful, e.g.\ writing $r\, s\, t$ for $(r \comp s)\comp t$.

We will not dwell much on arbitrary coterms, since we will only deal with those induced by our sequent style type system.

\begin{definition}
	[Coderivations]
		A \textbf{coderivation} is some possibly non-wellfounded `derivation' built from the typing rules of $\T^- + \cond$, in a locally correct manner. 
		Formally, a coderivation is a possibly infinite labelled binary tree (of height $\leq \omega$) whose nodes are labelled by rule instances s.t.\ the premisses of a node (if any) match the conclusions of the node's respective children (if any).
		
		We construe coderivations as coterms in the same way as we construed derivations as terms.
		Namely, we coinductively set:
		\begin{itemize}
			\item $\vlderivation{
				\vlin{\mathsf r}{}{\vec \tau \seqar \tau}{
					\vltr{s}{\vec \sigma \seqar \sigma}{\vlhy{ \ }}{\vlhy{}}{\vlhy{\ }}
				}
			}$ is the coterm $\mathsf r\, s$; and,
			\item $\vlderivation{
				\vliin{\mathsf r}{}{\vec \tau \seqar \tau}{
					\vltr{r}{\vec \rho \seqar \rho}{\vlhy{\ }}{\vlhy{ }}{\vlhy{\ }}
				}{
					\vltr{s}{\vec \sigma \seqar \sigma}{\vlhy{\ }}{\vlhy{ }}{\vlhy{\ }}
				}
			}$ is the coterm $\mathsf r\ r\, s$.
		\end{itemize}
		Note that this association is continuous at the level of the underlying trees, so it is indeed well-defined.
		
		Again overloading notation, we will write $t:\vec \sigma \seqar \tau$ if $t$ is a coderivation of the sequent $\vec \sigma \seqar \tau$, and $t:\sigma$ if $t$ is a coterm of type $\tau$. 
		If we need to distinguish $t$ as a term then we will say so explicitly.
\end{definition}

Note that the equational theory induced by Figures~\ref{fig:eq-ax-seq-calc-min}, \ref{fig:rec-axioms} and \ref{fig:cond-rule+axioms} form a Kleene-Herbrand-G\"odel
 style equational specification for coterms (cf., e.g., \cite{kleene:intro-to-metamath}), now understanding the metavariables $s,t$ etc.\ there to range over coterms. 
We may thus view coterms as \emph{partial recursive functionals} in the standard model $\nmod$ of the appropriate type.  More formally:

\begin{definition}
	[Interpretation of coterms as partial functionals]
	\label{dfn:int-of-coterms-as-partial-fns}
	We define a type structure $\nmod_\bot$ by interpretations $\cdot^\nmod_\bot$ and corresponding `totally undefined functionals' $\bot_\sigma$ as follows:
	\begin{itemize}
		\item $\bot_\nat$ is just some fresh element $\bot$.
		\item $\nat^\nmod_\bot$ is $\Nat \cup \{\bot \}$.
		\item $\bot_{\sigma \to \tau} : \sigma^\nmod_\bot \to \tau^\nmod_\bot$ by $a\mapsto \bot_\tau$, for any $a \in \sigma^\nmod_\bot$.
		\item $(\sigma \to \tau)^\nmod_\bot$ is the set of functions $f:\sigma^\nmod_\bot \to \tau^\nmod_\bot$ s.t.\ $f(\bot_\sigma) = \bot_\tau$. 
		\item $=^\nmod_\bot$ is just extensional equality (for each type).
	\end{itemize}
	A \emph{partial functional of type $\sigma$} is just a function in $\sigma^\nmod_\bot$.
	A \emph{(total) functional of type $\sigma$} is just a partial functional $f$ of type $\sigma$ with $f(x) = \bot $ if and only if $x = \bot$.

	We now define the interpretation of coterms in $\nmod_\bot$ as follows:
	\begin{itemize}
	\item If $t: \nat$ then $t^\nmod_\bot = n \in \Nat$ just if $n$ is the unique interpretation of $t$ (under the equations of Figures~\ref{fig:eq-ax-seq-calc-min} and \ref{fig:cond-rule+axioms}) in $\nmod_\bot$. Otherwise $t^\nmod_\bot$ is $\bot$.
	\item If $t : \sigma \to \tau$ and $a \in \sigma^\nmod_\bot$ then $t^\nmod_\bot (a) := (t\, a)^\nmod_\bot$.\footnote{Here we are implicitly using parameters from the model.}
	\end{itemize}

		Note that total functionals of type $\sigma$ are just elements of $\sigma^\nmod$, when restricted to non-$\bot$ arguments.
		Moreover, for any (finite) term $t$ we have immediately that $t^\nmod_\bot = t^\nmod$, when restricted to non-$\bot$ arguments.
		In light of this, we shall henceforth unambiguously write $t^\nmod$ rather than $t^\nmod_\bot$, and simply write $\nmod$ instead of $\nmod_\bot$.
\end{definition}

	We shall omit here the finer details of this interpretation of coterms as partial functionals, since we will give a more formal (and, indeed, formal\emph{ised}) treatment of `regular' coterms and their computational interpretations in Section~\ref{sect:coterm-models}.
	We point the reader to the excellent book \cite{ho-computability} for further details on models of (partial) (recursive) function(al)s.

\smallskip

Let us now consider some relevant examples of coderivations and coterms, at the same time establishing some foundational results. As before, the reader may safely ignore the colouring of type occurrences in what follows.
That will become meaningful later in the section.

\begin{example}
[Extensional completeness at type 1]
\label{ex:ext-comp}
    For \emph{any} $f : \Nat^k \to \Nat$, there is a coderivation $t:\nat^k \seqar \nat$ s.t.\ $t^\nmod = f$.
    To demonstrate this, we proceed by induction on $k$.\footnote{While we may assume $k=1$ WLoG by the availability of sequence (de)coding, the current argument is both more direct and avoids the use of cuts (on non-numerals).}
    If $k=0$ then the numerals clearly suffice.
    Otherwise, suppose $f: \Nat \times \Nat^k \to \Nat$ and write $f_n$ for the projection $\Nat^k \to \Nat$ by $f_n (\vec x) = f(n,\vec x)$.
    We define the coderivation for $f$ as follows:
    \begin{equation}
    \label{eqn:ext-comp}
     \vlderivation{
        \vliin{\cond}{}{\red{\nat}, \vec \nat \seqar \nat}{
            \vltr{f_0}{\vec \nat \seqar \nat}{\vlhy{\quad }}{\vlhy{}}{\vlhy{\quad }}
        }{
            \vliin{\cond}{}{\red{\nat}, \vec \nat \seqar \nat}{
            \vltr{f_1}{\vec \nat \seqar \nat}{\vlhy{\quad}}{\vlhy{}}{\vlhy{\quad}}
        }{
            \vliin{\cond}{}{\red{\nat}, \vec \nat \seqar \nat}{
            \vltr{f_2}{\vec \nat \seqar \nat}{\vlhy{\quad}}{\vlhy{}}{\vlhy{\quad}}
        }{
            \vlin{\cond}{}{\red{\nat}, \vec \nat \seqar \nat}{\vlhy{\vdots}}
        }
        }
        }
        }
    \end{equation}
where the derivations for each $f_n$ are obtained by the inductive hypothesis.
It is not difficult to see that the interpretation of this coderivation in the standard model indeed coincides with $f$.

Notice that, while we have extensional completeness at type 1, we cannot possibly have such a result for higher types by a cardinality argument: there are only continuum many coderivations. 
\end{example}

	\begin{example}
	[Na\"ive simulation of primitive recursion]
	\label{ex:sim-prim-rec}
	Terms of $\T$ may be interpreted as coterms in a straightforward manner.
	The only difficulty is the simulation of the $\rec$ combinators, which may be interpreted by coderivations as follows:
\begin{equation}
\label{eqn:sim-rec-naively-in-ct}
\vlderivation{
		\vliin{\rec}{}{\vec \sigma, \nat \seqar \sigma}{
			\vlhy{\vec \sigma \seqar \sigma}
		}{
			\vlhy{\vec \sigma, \nat, \sigma \seqar \sigma}
		}
	}
	\quad \leadsto \quad
	\vlderivation{
		\vliin{\cond}{\bullet}{\vec \sigma, \blue{\nat} \seqar \sigma}{
			\vlhy{\vec \sigma \seqar \sigma}
		}{
			\vliin{\cut}{}{\vec \sigma , \blue{\nat} \seqar \sigma}{
				\vlin{\cond}{\bullet}{\vec \sigma , \blue{\nat} \seqar \sigma}{\vlhy{\vdots}}
			}{
				\vlhy{\vec \sigma, \nat , \sigma \seqar \sigma}
			}
		}
	}
\end{equation}
	where the occurrences of $\bullet $ indicate roots of identical derivations.
	
	Denoting the RHS of \eqref{eqn:sim-rec-naively-in-ct} above as $\rec'$, we can check that the two sides of \eqref{eqn:sim-rec-naively-in-ct} are indeed equivalent.
	Formally, we show $\rec\, s\, t\, \vec x\, y \ = \ \rec \, s\, t\, \vec x\, y$ by induction on $y$:
	\[
	\begin{array}{rcll}
	\rec' s\, t\, \vec x\, 0 & = & \cond\, s\, (\cut\, (\rec' s\, t) \, t) \,   \vec x\, 0 & \text{by definition of $\rec'$ above} \\
		& = & s\, \vec x & \text{by $\cond$ axioms} \\
		& = & \rec\, s\, t\, \vec x\, 0 & \text{by $\rec$ axioms}
 \\ \noalign{\medskip}
	\rec'  s\, t\, \vec x\, \succ y &=& \cond\, s\, (\cut\, (\rec' s\, t) \, t)\, \vec x\, \succ y & \text{by definition of $\rec'$ above}\\
		& = & \cut\, (\rec' s\, t) \, t\, \vec x\, y  & \text{by $\cond$ axioms} \\
		& = & t\, \vec x\, y\, (\rec'\, s\, t\, \vec x\, y ) & \text{by $\cut$ axiom} \\
		& = & t\, \vec x\, y\, (\rec\, s\, t\, \vec x\, y) & \text{by inductive hypothesis}\\
		& = & \rec\, s\, t\, \vec x\, \succ y & \text{by $\rec$ axioms}
	\end{array}
	\]
	Note that our reasoning here was completely syntactic, indeed only using axioms and rules from Figures~\ref{fig:eq-ax-seq-calc-min}, \ref{fig:ax-t-non-eq} and \ref{fig:cond-rule+axioms}, (understanding metavariables $s,t$ etc.\ in those figures to now range over coterms as well as terms).
	This is no coincidence, and the argument above will actually turn out to be a formal proof in our theory defined later, thus inducing equivalence of $\rec$ and $\rec'$ in all models.
	\end{example}

\subsection{Regularity}

Until now, our coderivations and coterms were potentially non-uniform in structure and, as exemplified in Example~\ref{ex:ext-comp}, comprise a computational model of extreme expressivity.
Naturally, within formal theories, we would prefer to manipulate only finitely presentable objects. To this end we will study a natural fragment in non-wellfounded proof theory:

\begin{definition}
[Regular coderivations and coterms]
A coderivation $t$ is \textbf{regular} (or \textbf{circular}) if it has only finitely many distinct sub-coderivations.
A regular coterm is similarly just one with finitely many distinct sub-coterms.
\end{definition}

Note that a regular coderivation or coterm is indeed finitely presentable, e.g.\ as a finite directed graph, possibly with cycles, or a finite binary tree with `backpointers'.
When dealing with recursion-theoretic matters, we will implicitly assume such a finitary representation.
Further details on such a formalised representation are given in Section~\ref{sect:form-red-seq}.

Once again we have that regular coderivations are regular coterms, and conversely that closed regular coterms may be interpreted as regular coderivations (under $\cut$-as-composition).

One example we have already seen of a regular coderivation is the RHS of \eqref{eqn:sim-rec-naively-in-ct}.
In fact, it turns out that the regular coterms constitute a sufficiently expressive programming language:

\begin{proposition}
[Turing completeness]
\label{prop:turing-completeness}
    The set of regular coderivations of type level 1 is Turing-complete,\footnote{For a model of program execution, we may simply take the aforementioned Kleene-Herbrand-G\"odel model with \emph{equational derivability}, cf.~\cite{kleene:intro-to-metamath}. Note that this coincides with derivability by the axioms thus far presented.}
     i.e.\ $\{t^\nmod\  |\  \text{$t: \nat^k \seqar \nat$ regular}\}$ includes all partial recursive functions on $\Nat$.
\end{proposition}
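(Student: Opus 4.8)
The plan is to exhibit, for an arbitrary partial recursive function $f : \Nat^k \rightharpoonup \Nat$, a regular coderivation $t : \nat^k \seqar \nat$ with $t^\nmod = f$. Rather than work directly with a fixed enumeration of partial recursive functions, I would go through a convenient Turing-complete formalism whose control flow is manifestly ``regular'': namely register machines (or, equivalently, while-programs over a finite set of counters), where the only nontrivial control construct is an unbounded loop testing a register for zero. The key observation is that $\cond$ gives us exactly a zero-test branching combinator, $\succ$ and (a $\T^-+\cond$-definable) predecessor $\pred$ give us increment and decrement, and a cyclic backpointer gives us the unbounded loop. So the strategy is: (1) fix a register machine $M$ computing $f$ on $k$ inputs; (2) build a family of coderivations, one for each program location $\ell$ of $M$, representing ``run $M$ from location $\ell$''; (3) wire these together with backpointers wherever a location is revisited, obtaining a finite (hence regular) coderivation; (4) verify by the equational axioms of Figures~\ref{fig:eq-ax-seq-calc-min} and \ref{fig:cond-rule+axioms} that this coderivation computes $f$, interpreting divergence of $M$ as $\bot$ in the sense of Definition~\ref{dfn:int-of-coterms-as-partial-fns}.

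Concretely, I would first record the auxiliary definable combinators. A predecessor $\pred : \nat \seqar \nat$ with $\pred\, 0 = 0$ and $\pred\, \succ y = y$ is immediate from $\cond$ (take $s = 0$ and $t = \id$). Given a register machine with $r$ registers, a configuration is a tuple $\vec x = (x_1, \dots, x_r) : \nat^r$; I would fix sequent $\nat^r \seqar \nat$ and, for each location $\ell$, define a coderivation $M_\ell : \nat^r \seqar \nat$ coinductively by cases on the instruction at $\ell$: an increment/decrement to register $i$ composes the relevant $\succ$ or $\pred$ into coordinate $i$ (using $\cut$, $\exch$, $\cntr$, $\wk$ to manage the context, per Convention~\ref{conv:rules-mod-exch}) and then calls $M_{\ell'}$ for the successor location; a conditional jump ``if $x_i = 0$ goto $\ell'$ else goto $\ell''$'' is exactly an instance of $\cond$ applied (after an $\exch$ bringing register $i$ to the right) to $M_{\ell'}$ and $M_{\ell''}$; and the halt instruction returns the output register, say via $\exch$ and $\wk$ down to $\nat \seqar \nat$ and then $\id$. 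Since $M$ has finitely many locations, this system of mutually corecursive definitions has finitely many distinct sub-coderivations — each $M_\ell$ for $\ell$ a location, plus finitely many ``plumbing'' nodes — so the resulting coderivation is regular. Finally $t$ is obtained from $M_{\ell_0}$ (initial location) by padding the input $\nat^k$ up to $\nat^r$ with $0$'s for the work registers, again using $\wk$. All of these rules are already in $\T^- + \cond$, and everything lives at type level $1$ since all sequents involved are of the form $\nat^m \seqar \nat$, establishing the type-level-$1$ claim.

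For correctness, I would argue that for any configuration $\vec x$ and location $\ell$, $(M_\ell)^\nmod(\vec x)$ equals the output of $M$ started in state $(\ell, \vec x)$ when $M$ halts, and is $\bot$ otherwise. The halting case is a straightforward induction on the length of the halting computation, using the $\cond$ axioms to ``step'' the zero-tests, the $\cut$/$\exch$/$\id$/$\wk$ axioms to step the register updates, and unfolding the backpointer once per machine step — exactly the style of syntactic calculation already carried out in Example~\ref{ex:sim-prim-rec}. The divergent case follows because if $M$ does not halt from $(\ell, \vec x)$ then the equational specification assigns no numeral to $(M_\ell)\, \vec x$ (every attempt to reduce it to a numeral would have to terminate the infinite computation), so by Definition~\ref{dfn:int-of-coterms-as-partial-fns} its value is $\bot$; here one should be a little careful that the Kleene--Herbrand--G\"odel equational calculus cannot ``accidentally'' derive a numeral, which is where one invokes the confluence/determinism of the induced rewrite system (the content of Section~\ref{sect:coterm-models}, or directly that equational derivability here coincides with the rewrite semantics, as noted in the footnote to the statement).

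The main obstacle I anticipate is not conceptual but bookkeeping: managing the context $\nat^r$ cleanly through $\exch$, $\cntr$, $\wk$ in the definitions of the $M_\ell$ while keeping the set of sub-coderivations genuinely finite, and making the divergence-$\Leftrightarrow$-$\bot$ correspondence precise without circular appeal to a normalisation theorem not yet available at this point in the paper. I would handle the first by adopting Convention~\ref{conv:rules-mod-exch} aggressively and fixing once and for all a ``copy register $i$ to the rightmost position'' macro-derivation (built from $\exch$ and $\cntr$), used as a black box. For the second, I would phrase divergence purely in terms of the equational calculus of Figures~\ref{fig:eq-ax-seq-calc-min}, \ref{fig:rec-axioms}, \ref{fig:cond-rule+axioms} (``$t\,\vec x$ is not provably equal to any numeral''), which is exactly the model of program execution named in the footnote, so that no appeal to Section~\ref{sect:coterm-models} is logically required for this proposition — only the elementary fact that this equational theory is consistent, i.e.\ does not prove $\numeral m = \numeral n$ for $m \neq n$, which follows from the standard model $\nmod$.
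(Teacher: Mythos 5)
Your proposal is correct in substance but takes a genuinely different route from the paper. The paper exploits Kleene's characterisation of the partial recursive functions: primitive recursion is already available from Example~\ref{ex:sim-prim-rec}, so it only needs to add unbounded search, implemented by a single cyclic coderivation $H$ satisfying $H\,x \,=\, \cond\,(f\,x)\,x\,(H\,\succ x)$, i.e.\ \eqref{eqn:blind-search-equation}, verified by a short equational case analysis; this keeps the construction to one extra loop but leans on the earlier encoding of $\rec$. You instead simulate register (Minsky) machines directly, one coderivation per program location, with $\cond$ as the zero-test, a $\cond$-definable predecessor, and backpointers for loops; this is self-contained, makes regularity transparent (finitely many locations), and, like the paper's construction (cf.\ Remark~\ref{rmk:type-0-turing-completeness}), uses only occurrences of $\nat$. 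Two points to tighten. First, $\cond$ hands the \emph{predecessor} of the tested value to its right branch, so a non-decrementing zero-test must restore the register with an extra $\succ$-cut (or simply adopt the increment / test-and-decrement instruction set, for which $\cond$ is literally exact); this is routine plumbing but worth stating. Second, your divergence argument is misphrased as written: consistency of the equational theory (not proving $\numeral m = \numeral n$ for $m \neq n$) does not by itself yield that \emph{no} numeral is derivable for a diverging configuration. Under Definition~\ref{dfn:int-of-coterms-as-partial-fns} the right observation is rather that a diverging run forces no unique value --- any fixed numeral can be consistently assigned to all configurations along the run --- so the interpretation is $\bot$; alternatively one can appeal to the confluence machinery of Section~\ref{sect:coterm-models}. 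Since the paper's own proof glosses exactly the same point (it only verifies the recursion equation for $H$ and does not treat divergence explicitly), this is a presentational caveat rather than a genuine gap, but the consistency fact you cite is not the one doing the work.
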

\begin{proof}
    We have already seen in Example~\ref{ex:sim-prim-rec} that we can encode the primitive recursive functions, so it remains to simulate minimisation, i.e.\ the operation $\mu x (fx=0)$, for a given function $f$. 
    For this, we observe that $\mu x (fx =0)$ is equivalent to $H\, 0$ where 
    \begin{equation}
    \label{eqn:blind-search-equation}
    H\, x \  =\  \cond \ (f\, x)\ x\ (H\, \succ x)
    \end{equation}
    Note that $H$ may be interpreted by the following coderivation:
 \begin{equation}
 \label{eqn:blind-search-derivation}
     \vlderivation{
     \vliin{\cut}{\bullet}{\blue\nat \seqar \nat}{
         \vltr{f}{\blue\nat \seqar \red \nat}{\vlhy{}}{\vlhy{\quad}}{\vlhy{}}
         }{
         \vliin{\cond}{}{\underline{\red\nat}, \blue \nat \seqar \nat}{
             \vlin{\id}{}{\blue \nat\seqar \nat}{\vlhy{}}
             }{
             \vlin{\wk}{}{\underline{\red\nat} , \blue\nat \seqar \nat}{
             \vliin{\cut}{}{\blue \nat \seqar \nat}{
                 \vlin{\succ}{}{\blue \nat \seqar \purple \nat}{\vlhy{}}
                 }{
                 \vlin{\cut}{\bullet}{\purple \nat \seqar \nat}{\vlhy{\vdots}}
                 }
             }
             }
         }
     }
 \end{equation}
    It is intuitive here to think of the blue $\blue \nat$ standing for $x$, the red $\red\nat$ standing for $f(x)$, and the purple $\purple \nat$ standing for $\succ x$.
    
    Working in the standard model $\nmod$,  may show that $H$ indeed satisfies \eqref{eqn:blind-search-equation} as follows.
    We have that:
    \[
    \begin{array}{rcll}
    H\, x & = & \cut \, f\, (\cond\, \id\, (\wk\, (\cut\, \succ \, H)))\, x & \text{by definition of $H$}\\
    & = & \cond\, \id\, (\wk\, (\cut\, \succ \, H))\, (f\, x)\, x & \text{by $\cut$ axiom}\\
    \end{array}
    \]
    Now, we conduct a case analysis on the value of $f(x)$:
    \begin{itemize}
    \item If $f(x) = 0$, then we have $H\, x \ = \ \id\, x\ =\ x$, by the $\cond$ and $\id$ axioms, thus satisfying \eqref{eqn:blind-search-equation}.
    \item If $f(x) = \succ y$, for some $y$, then,
    \[
    \begin{array}{rcll}
    H\, x & = & \wk\, (\cut\, \succ\, H)\, y\, x & \text{by $\cond$ axioms}\\
    & = & \cut\, \succ\, H\, x  &\text{by $\wk$ axiom}\\
    & = & H\, \succ x & \text{by $\cut $ axiom}
    \end{array}
    \] 
    again satisfying \eqref{eqn:blind-search-equation}. \qedhere
    \end{itemize}
\end{proof}

\begin{remark}
[Reasoning over partial functionals]
\label{rmk:reasoning-partial-functionals}
Note, again, that the reasoning above was entirely syntactic, using only axioms thus far presented in Figures~\ref{fig:eq-ax-seq-calc-min}, \ref{fig:rec-axioms},  \ref{fig:ax-t-non-eq} and \ref{fig:cond-rule+axioms}.
    While the coderivation in \eqref{eqn:ack-cyc-der} will not formally be a symbol of our eventual theory $\C$, 
    the reasoning above hints at well-behaved extensions accommodating partially defined coterms.
\end{remark}
\begin{remark}
[Turing completeness at level 0]
\label{rmk:type-0-turing-completeness}
Note that the argument above required coderivations including only occurrences of $\nat$.
This means that the set of type 0 regular coderivations are already a Turing-complete programming language, under derivability via Figures~\ref{fig:eq-ax-seq-calc-min}, \ref{fig:rec-axioms},  \ref{fig:ax-t-non-eq} and \ref{fig:cond-rule+axioms}.
\end{remark}

\subsection{The progressing criterion}
Despite regular coderivations being finitely presentable, they do not necessarily denote totally defined functionals in the standard model $\nmod$, cf.~Proposition~\ref{prop:turing-completeness}, contrary to the norm for terms in formal theories.
In this work we will consider coderivations satisfying a standard `termination criterion' in non-wellfounded proof theory.
First, let us recall some standard structural proof theoretic concepts about (co)derivations.

\begin{definition}
[Immediate ancestry]
\label{dfn:ancestry}
Let $t $ be a (co)derivation. A type occurrence $\sigma^1$ is an \textbf{immediate ancestor}\footnote{This terminology is standard in proof theory, e.g.\ as in \cite{Bus98:handbook-of-pt}.} of a type occurrence $\sigma^2$ in $t$ if $\sigma^1$ and $\sigma^2$ appear in the LHSs of a premiss and conclusion, respectively, of a rule instance and have the same colour in the corresponding rule typeset in Figure~\ref{fig:seq-calc-min}, \ref{fig:rules-0-s-rec} or \ref{fig:cond-rule+axioms}. If $\sigma^1$ and $\sigma^2$ are elements of an indicated list, say $\vec \sigma$, we also require that they are at the same position of the list in the premiss and the conclusion.
Note that, if $\sigma^1$ is an immediate ancestor of $\sigma^2$, they are necessarily occurrences of the same type.
\end{definition}

The notion of immediate ancestor thus defined, being a binary relation, induces a directed graph whose paths will form the basis of our termination criterion.

\begin{definition}
[Threads and progress]
A \textbf{thread} is a maximal path in the graph of immediate ancestry.
A \textbf{$\sigma$-thread} is a thread whose elements are occurrences of the type $\sigma$.
We say that a $\nat$-thread \textbf{progresses} when it is principal for a $\cond$ step (i.e.\ it is the indicated blue $\blue \nat$ in the $\cond$ rule typeset in Figure~\ref{fig:cond-rule+axioms}).
A (infinitely) \textbf{progressing} thread is a $\nat$-thread that progresses infinitely often (i.e.\ it is infinitely often the indicated blue $\blue \nat$ in the $\cond$ rule typeset in Figure~\ref{fig:cond-rule+axioms}.)

A coderivation is \textbf{progressing} if every infinite branch has a progressing thread.
\end{definition}

Note that progressing threads do not necessarily begin at the root of a coderivation, they may begin arbitrarily far into a branch.
In this way, the progressing coderivations are closed under all typing rules.
Note also that arbitrary coderivations may be progressing, not only the regular ones.

\begin{example}
[Extensional completeness at type $1$, revisited]
\label{ex:ext-comp-revisited}
Recalling Example~\ref{ex:ext-comp}, note that the infinite branch marked $\cdots$ in \eqref{eqn:ext-comp} has a progressing thread along the red $\red \nat$s.
Other infinite branches, say through $f_0,f_1, $ etc., will have progressing threads along their infinite branches by an appropriate inductive hypothesis, though these may progress for the first time arbitraryily far from the root of \eqref{eqn:ext-comp}.
\end{example}

As previously mentioned, we shall focus our attention in this work on the regular coderivations.
Let us take a moment to appreciate some previous (non-)examples of regular coderivations with respect to the progressing criterion.

\begin{example}
[Primitive recursion, revisited]
\label{ex:prim-rec-revisited}
Recalling Example~\ref{ex:sim-prim-rec}, notice that the RHS of \eqref{eqn:sim-rec-naively-in-ct} is a progressing coderivation: there is precisely one infinite branch (that loops on $\bullet$) and it has a progressing thread on the blue $\blue \nat$ indicated there.
\end{example}

\begin{example}
[Turing completeness, revisited]
\label{ex:tur-comp-revisited}
Recalling the proof of Proposition~\ref{prop:turing-completeness}, notice that the coderivation given for $H$ in \eqref{eqn:blind-search-derivation} is \emph{not} progressing: the only infinite branch loops on $\bullet$ and immediate ancestry, as indicated by the colouring, admits no thread along the $\bullet$-loop. 
This is no coincidence, as it turns out that the progressing criterion suffices for coderivations to denote \emph{total} functionals in the standard model $\nmod$, as we will show in the next subsection.
\end{example}

One of the most appealing features of the progressing criterion is that, while being rather expressive and admitting many natural programs, e.g.\ as we will see in Section~\ref{sect:ack-cyclic}, it remains effective (for regular coderivations) thanks to well known arguments in automaton theory:

\begin{fact}
[Folklore]
\label{fact:decidability-of-progressivity}
It is decidable whether a regular coderivation is progressing.
\end{fact}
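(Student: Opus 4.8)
The plan is to reduce the decidability of the progressing criterion to the emptiness problem for Büchi automata, which is well known to be decidable. First I would fix a finite presentation of a regular coderivation $t$, say as a finite directed graph $G_t$ whose vertices are the (finitely many) distinct sub-coderivations of $t$ and whose edges record the premiss-to-conclusion relation, each edge labelled by the rule instance applied and by the induced map on type occurrences in the two sequents (i.e.\ the immediate-ancestry relation restricted to that step). The infinite branches of $t$ then correspond exactly to the infinite paths in $G_t$ starting from the vertex for $t$, and a thread along such a branch is a choice, at each step, of an immediate ancestor, i.e.\ a path in the ``unfolded'' graph whose vertices are pairs (sub-coderivation, type occurrence in its end-sequent).

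Next I would build from $G_t$ a finite nondeterministic Büchi automaton $B$ on the alphabet of edge labels (or simply on the vertex set of $G_t$) that accepts precisely those infinite branches which carry \emph{no} progressing $\nat$-thread. The standard trick is to track, in the automaton state, a single candidate $\nat$-thread: the state remembers the current vertex of $G_t$ together with the position of the tracked $\nat$-occurrence in its end-sequent (and a flag recording whether the tracked occurrence has just been principal for a $\cond$ step). Transitions follow the immediate-ancestry relation; the automaton may also nondeterministically ``drop'' the current thread and pick up a new $\nat$-thread later, reflecting that progressing threads need not start at the root. The acceptance (Büchi) condition for ``has a progressing thread'' is that the tracked thread is principal for $\cond$ infinitely often; to get the complement we instead want to recognise branches all of whose threads progress only finitely, so I would take the automaton recognising ``there exists a progressing thread'' and complement it — or, more cleanly, directly observe that ``every infinite branch has a progressing thread'' is a universal condition over branches, and dualise: $t$ is \emph{not} progressing iff there is an infinite branch together with an assignment of threads that is ``eventually never $\cond$-principal'' on every thread tracked along it. Either way one obtains a finite automaton whose language is empty exactly when $t$ is progressing, and emptiness of Büchi automata is decidable (indeed in polynomial time via a nested depth-first search / detection of a reachable accepting state lying on a cycle).

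The key steps, in order, are: (1) present $t$ finitely as $G_t$; (2) define the state space tracking a single $\nat$-thread plus a progress flag, and the transition relation from the immediate-ancestry edges, allowing nondeterministic restart of the tracked thread; (3) fix the Büchi acceptance condition so that the recognised language is ``(branch, thread-assignment) pairs witnessing a \emph{counterexample} to the progressing criterion''; (4) project away the thread-assignment component (another nondeterministic automaton) to get an automaton over branches whose non-emptiness is equivalent to $t$ not being progressing; (5) invoke decidability of Büchi emptiness. The main obstacle — and the only genuinely delicate point — is step (3)/(4): the progressing criterion quantifies universally over branches but existentially over threads \emph{within} a branch, so a naive product construction does not directly give a Büchi automaton. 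One must either appeal to complementation of Büchi automata (available but exponential and somewhat heavy) or, as is standard in cyclic proof theory, argue directly that checking the existence of a branch with \emph{no} progressing thread is itself a Büchi-recognisable condition, using the fact that a regular (ultimately periodic) branch with no progressing thread exists iff \emph{some} branch with no progressing thread exists. Making this ``ultimately periodic witness suffices'' observation precise — essentially that one may restrict attention to lasso-shaped paths in $G_t$ and that along the loop one needs a consistent thread-free behaviour — is where the real content lies, but it is entirely routine by the standard Büchi/automata-on-infinite-objects toolkit, which is why the statement is rightly attributed to folklore.
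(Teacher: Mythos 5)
Your proposal is correct and follows essentially the same route as the paper, which justifies the Fact by observing that the progressing criterion is equivalent to universality (equivalently, emptiness of the complement) of a B\"uchi automaton built from the finite presentation of the coderivation, decidable in polynomial space; your tracked-thread automaton, the flagged $\forall$-branch/$\exists$-thread alternation forcing a complementation step, and the final emptiness check are exactly that argument. The one remark worth making is that the ``ultimately periodic counterexample suffices'' observation is not a way to bypass complementation but a consequence of the $\omega$-regularity it delivers --- this is precisely the nontrivial infinitary combinatorics the paper points to via \cite{KMPS16:buchi-reverse}.
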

This well-known result (see, e.g., \cite{DaxHofLang06} for an exposition for a similar circular system) follows from the fact that the progressing criterion is equivalent to the universality of a B\"uchi automaton of size determined by the (finite) representation of the input coderivation.
This problem is decidable in polynomial space, though the correctness of this algorithm requires nontrivial infinitary combinatorics, as formally demonstrated in \cite{KMPS16:buchi-reverse}.
Nonetheless, a non-uniform version of this problem is formalisable in the weakest of the big-five theories of reverse mathematics:
\begin{proposition}
[\cite{Das19:log-comp-cyc-arith}]
\label{prop:prog-decidability+provability-in-RCA}
For any regular progressing coderivation $t$, $\RCA$ proves that $t$ is progressing.
\end{proposition}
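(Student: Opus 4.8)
The statement is, in essence, \cite{Das19:log-comp-cyc-arith} (proved there for cyclic pre-proofs of arithmetic), and my plan is to reduce to that setting, the point being that the argument there uses only the underlying graph-and-trace structure, which for our coderivations is exactly the immediate-ancestry structure of Definition~\ref{dfn:ancestry}. The guiding idea is that, although ``$t$ is progressing'' is a genuine $\Pi^1_1$ assertion, for a \emph{fixed} regular coderivation it is equivalent to a decidable, purely finitary condition on the finite graph underlying $t$, and the implication from that finitary condition to the $\Pi^1_1$ assertion can be carried out in $\RCA$.

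\textbf{Step 1: a finitary goodness condition.} Since $t$ is regular, identify it with its finite graph $G_t$ (finitely many distinct sub-coderivations, with backpointers), and form the finite monoid $S_t$ of \emph{trace summaries}: a summary records, for a finite path in $G_t$, which $\nat$-threads connect its endpoints and which of them pass through a $\cond$-principal occurrence, and composition of paths induces an associative product on $S_t$. As in the automaton-theoretic proof of Fact~\ref{fact:decidability-of-progressivity}, $t$ is progressing iff the following condition $C(t)$ holds: every idempotent $G \in S_t$ realised by some cyclic path of $G_t$ reachable from the root contains a thread looping from some occurrence back to itself through a $\cond$-principal occurrence. Because $S_t$ and $G_t$ are finite and reachability/realisability stabilise after boundedly many steps, $C(t)$ is equivalent to a bounded ($\Delta^0_0$) sentence about the standardly coded objects $G_t, S_t$; hence, when $t$ is genuinely progressing, $C(t)$ is a true bounded sentence and $\RCA$ (indeed far weaker theories) proves it.

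\textbf{Step 2: $\RCA$ proves $C(t) \to$ ``$t$ progressing''.} Argue in $\RCA$, fixing an arbitrary infinite branch $\beta = (v_0, v_1, \dots)$ of $t$. The map sending $i < j$ to the summary $M_{ij} \in S_t$ of the segment $v_i \cdots v_j$ is $\Delta^0_1(\beta)$, hence available, and satisfies $M_{ik} = M_{ij}\cdot M_{jk}$. Using the infinite pigeonhole principle (provable in $\RCA$ for each fixed finite number of colours), extract positions $i_0 < i_1 < \cdots$ and a single $G \in S_t$ with $M_{i_m i_n} = G$ for all $m < n$; then $G \cdot G = G$, and $G$ is realised by a cyclic path reachable from the root along $\beta$, so $C(t)$ supplies a $\cond$-progressing self-loop $x \xrightarrow{\cond} x$ inside $G$. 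The corresponding thread segments, from $x$ at position $i_m$ to $x$ at position $i_{m+1}$, each through a $\cond$-principal occurrence, concatenate into a $\nat$-thread along $\beta$ that progresses infinitely often. As $\beta$ was arbitrary, $t$ is progressing. Combining with Step 1 yields $\RCA \vdash$ ``$t$ is progressing''.

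\textbf{The main obstacle.} The delicate point is that $\RCA$ is too weak to prove Ramsey's theorem for pairs, which is what ``extract a homogeneous $G$'' in Step 2 naively invokes: the natural proof colours $[\Nat]^2$ by the finite set $S_t$ and takes an infinite homogeneous set. The resolution, following \cite{Das19:log-comp-cyc-arith} (inspired by the reverse mathematics of B\"uchi's theorem \cite{KMPS16:buchi-reverse,KMPS19:buchi-reverse}), exploits precisely the non-uniformity of the statement: for a \emph{fixed} $t$, the monoid $S_t$ and its multiplication table are finite standard parameters, and the $2$-dimensional Ramsey step may be replaced by finitely iterated applications of the $1$-dimensional pigeonhole principle $\mathsf{RT}^1_{<\infty}$ (available in $\RCA$) together with the semigroup identities for $M$. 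The real work is to organise the auxiliary nested sequence of infinite sets, and the induction verifying idempotence of the extracted $G$, so that their existence stays within the comprehension and induction resources of $\RCA$; this is exactly the content of the corresponding argument in \cite{Das19:log-comp-cyc-arith}, and it transfers to coderivations unchanged.
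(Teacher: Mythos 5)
Note first that the paper itself offers no proof of Proposition~\ref{prop:prog-decidability+provability-in-RCA}: it is imported as a black box from \cite{Das19:log-comp-cyc-arith}, so the only meaningful comparison is with the argument of the cited work, and in outline you are reconstructing exactly that route: characterise progressiveness of the fixed regular $t$ by a decidable condition on its finite presentation (every reachable, realised idempotent loop summary in the finite semigroup of thread summaries contains a progressing self-thread), observe that this bounded condition is true and hence $\RCA$-provable, and then prove inside $\RCA$ that it implies progressiveness, the non-uniformity in $t$ being what makes the last step feasible. That outline is sound, but two caveats. First, your Step 2 defers the only genuinely hard point: the claim that the two-dimensional Ramsey extraction ``may be replaced by finitely iterated applications of the pigeonhole principle together with the semigroup identities'' is not a routine remark --- its uniform form is the additive Ramsey theorem studied in \cite{KMPS16:buchi-reverse,KMPS19:buchi-reverse}, which is emphatically not a corollary of iterated pigeonhole over $\RCA$, and organising the fixed-semigroup instance so that it survives on $\RCA$'s induction and comprehension is precisely the technical content of the result you are citing; as written, your proposal is therefore a reduction to \cite{Das19:log-comp-cyc-arith} rather than an independent proof (defensible here, since the paper does the same, but it should be flagged as such rather than presented as an easy fix). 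Second, be careful how ``has a progressing thread'' is formalised: both your homogeneous set $H$ and the concatenated infinite thread are second-order objects, and $\Delta^0_1$-comprehension does not obviously furnish them as sets from the branch; this is exactly why the cited work, and this paper downstream of the Proposition (cf.\ the arithmetical acceptance statement \eqref{eqn:ar-acc} and the discussion around Theorem~\ref{thm:main-non-unif}), work with an arithmetical approximation of progression (``for every $m$ there is a finite thread progressing more than $m$ times'') rather than with threads as sets. Your argument should either exhibit $H$ and the thread by explicit $\Delta^0_1$ definitions with provable invariants, or, more safely, target that arithmetical surrogate, which is also all that is actually used later in the paper.
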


As noted in that work, the above result cannot be strengthened to a uniform one unless $\RCA$ (and so PRA) is inconsistent, by a reduction to G\"odel-incompletness.

\subsection{Progressing coterms denote total functionals}

As outlined in Definition~\ref{dfn:int-of-coterms-as-partial-fns}, coderivations denote partial functionals in the standard model $\nmod$.
In fact, the partial functionals induced by progressing coderivations are indeed totally defined, by adapting well-known infinite descent arguments in non-wellfounded proof theory:
\begin{proposition}
\label{prop:termination}
	If $t:\vec \sigma \seqar \tau$ is a progressing coderivation, then $t^\nmod$ is a well-defined total functional in $(\vec \sigma\to \tau)^\nmod$.
\end{proposition}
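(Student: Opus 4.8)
The plan is to argue by \emph{infinite descent}, in the style of totality proofs for non-wellfounded systems. By the $\nmod_\bot$ interpretation of Definition~\ref{dfn:int-of-coterms-as-partial-fns}, $t^\nmod$ is in any case a well-defined partial functional, so it suffices to show it is total. Writing $\tau = \tau_1 \to \cdots \to \tau_m \to \nat$ and unwinding the Currying conventions, totality of $t^\nmod$ amounts to the statement that, for all total arguments $\vec a$ (filling $\vec\sigma$) and $\vec b$ (filling $\vec\tau$), we have $(t\,\vec a\,\vec b)^\nmod \in \Nat$; so suppose for contradiction that this value is $\bot$ for some total $\vec a, \vec b$.

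First I would call a triple $(u, \vec c, \vec d)$ a \textbf{bad configuration} when $u$ is a sub-coderivation of $t$ with end-sequent $\vec\rho\seqar\rho$ (say $\rho$ of arity $\ell$), $\vec c$ assigns a total element to each LHS type occurrence of $\vec\rho$, $\vec d$ assigns total elements to the $\ell$ argument slots of $\rho$, and $(u\,\vec c\,\vec d)^\nmod = \bot$. By assumption $(t,\vec a,\vec b)$ is bad. The crux is then a case analysis on the final rule of $u$, unfolding the corresponding equational axiom from Figures~\ref{fig:eq-ax-seq-calc-min} and \ref{fig:cond-rule+axioms} (valid in $\nmod$ by construction): (a) if $u$ ends with $\id$, $0$ or $\succ$ then $(u,\vec c,\vec d)$ is \emph{not} bad, since these combinators return a total value on total inputs; (b) otherwise $u$ has an immediate sub-coderivation $u'$ and total $\vec c', \vec d'$ with $(u',\vec c',\vec d')$ bad. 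For the structural rules and $\rightimp$ the valuation $\vec c'$ is a permutation / deletion / duplication / appending of (total) values and $\vec d'$ (un)pushes a pending argument; for $\cond$ one inspects the necessarily-defined value $n$ of the principal $\nat$-occurrence, passing to the left premiss if $n=0$ and to the right premiss with that occurrence updated to $n-1$ otherwise; and for $\cut$ and $\leftimp$ one case-splits on whether the value $r^\nmod(\vec c)$ of the left premiss $r$ is total — if not, a total filling of its remaining arguments witnesses badness on $r$, and if so it provides a fresh total argument for the right premiss.

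Iterating (b) from $(t,\vec a,\vec b)$, and using (a) to see the process never stalls at a leaf, produces an infinite branch $B = u_0 = t, u_1, u_2, \dots$ of $t$ together with total valuations $\vec c_0, \vec c_1, \dots$ along it. Since $t$ is progressing, $B$ carries a progressing $\nat$-thread: for some $i_0$, a sequence of LHS occurrences $\theta_{i_0}, \theta_{i_0+1}, \dots$ (with $\theta_i$ in the end-sequent of $u_i$), linked by immediate ancestry, that is principal for a $\cond$ infinitely often. The displacements in (b) show $\vec c_{i+1}(\theta_{i+1}) = \vec c_i(\theta_i)$ at every non-$\cond$-principal step of the thread and $\vec c_{i+1}(\theta_{i+1}) = \vec c_i(\theta_i) - 1$ at each $\cond$-principal step (precisely because the axiom $\cond\,s\,t\,\vec x\,\succ y = t\,\vec x\,y$ decrements the tracked value). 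Restricting to the infinitely many $\cond$-principal steps then yields an infinite strictly decreasing sequence of natural numbers, which is absurd; hence no bad configuration exists, so $t^\nmod$ is total, i.e.\ $t^\nmod \in (\vec\sigma\to\tau)^\nmod$.

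The step I expect to be the main obstacle is the careful bookkeeping in (b): checking that the valuations stay \emph{total} across $\cut$ and $\leftimp$ — which is exactly what forces the case split on the side value and is the one point where the argument is not merely ``push the badness into a child'' — and aligning the thread positions given by immediate ancestry (Definition~\ref{dfn:ancestry}) with the positions updated by the operational reading of each equational axiom, so that ``principal for a $\cond$'' genuinely corresponds to a decrement of the tracked value. The rest is routine unfolding of the definitions.
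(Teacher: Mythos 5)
Your proposal is correct and follows essentially the same route as the paper's own proof: assume non-totality on some total inputs, inductively construct an infinite branch of non-total sub-coderivations together with total valuations (your ``bad configurations''), and then use a progressing thread to extract a non-increasing sequence of natural numbers that strictly decreases at each $\cond$-principal step, contradicting the well-ordering of $\Nat$. The only cosmetic difference is that you fix the pending arguments $\vec b$ up front and push them at $\rightimp$ steps, whereas the paper chooses witnessing arguments on the fly at the $\rightimp$, $\cut$ and $\leftimp$ cases; the bookkeeping and the level of rigour are otherwise the same.
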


The idea behind this result is to, by contradiction, assume a non-terminating `run' of a progressing coderivation, and thence extract an infintely decreasing sequence of natural numbers from a progressing thread, contradicting the well-ordering property.
We stop short of giving an explicit `operational semantics' here, being beyond the scope of this work. Rather, let us simply note that the totally defined functionals are closed under composition by typing rules (since typing rules are constants interpreted as totally defined functionals themselves). Contrapositively this means that if a coderivation is interpreted by a non-total functional, then so is one of its immediate sub-coderivations.
\begin{proof}
[Proof of Proposition~\ref{prop:termination}]
	Suppose otherwise and let $\vec a \in \vec \sigma^\nmod$ be inputs on which $t^\nmod$ is not well-defined, i.e.\ $t^\nmod(\vec a) = \bot$.
	We may thus inductively construct an infinite branch $(t_i:\vec \sigma_i \seqar \tau_i)_{i \in \omega}$ and associated inputs $(\vec a_i \in \vec \sigma_i^\nmod)$ s.t.\ $t_i^\nmod (\vec a_i) = \bot$ as follows:
	\begin{itemize}
		\item $t_0 = t$ and $\vec a_0 = \vec a$.
		\item If $t_i$ concludes with a $\wk$, $\exch$ or $\cntr$ step then $t_{i+1}$ is the only immediate sub-coderivation and $\vec a_{i+1}$ is just $\vec a_i$ with the appropriate deletion, switch or duplication of arguments.
		\item ($t_i$ cannot conclude with a nullary step $\id$, $0$ or $\succ$, by assumption that $t_i$ is non-total.)
		\item If $t_i$ concludes with a $\cut $ step, as typeset in Figure~\ref{fig:seq-calc-min}, then $t_{i+1}$ is the left sub-coderivation if it is not totally defined on inputs $\vec a_i$; otherwise $t_{i+1}$ is the right sub-coderivation and $\vec a_{i+1} = (\vec a, a)$, for some $a\in \sigma^\nmod$ s.t.\ $t_{i+1}^\nmod (\vec a_i,a) = \bot$. 
		\item If $t_i$ concludes with a $\leftimp $ step, as typeset in Figure~\ref{fig:seq-calc-min}, then $t_{i+1}$ is the left sub-coderivation if it is not totally defined on inputs $\vec a_i$; otherwise $t_{i+1}$ is the right sub-coderivation and $\vec a_{i+1} = (\vec a_i, a)$, for some $a \in \sigma^\nmod$ s.t.\ $t_{i+1}^\nmod (\vec a_i,a) = \bot$.
\item If $t_i$ concludes with a $\rightimp $ step, as typeset in Figure~\ref{fig:seq-calc-min}, then $t_{i+1}$ is the only immediate sub-coderivation and $\vec a_{i+1} = (\vec a_i , a)$, for some $a \in \sigma^\nmod$ s.t.\ $t_{i+1}^\nmod (\vec a_i, a) = \bot$.
\item If $t_i$ concludes with a $\cond$ step and $\vec a_i = (\vec a_i', n)$, then $t_{i+1}$ is the left sub-coderivation if $n=0$ and $\vec a_{i+1} = \vec a_i'$; otherwise, if $n=m+1$, $t_{i+1}$ is the right sub-coderivation and $\vec a_{i+1} = (\vec a_i', m)$.
	\end{itemize}
	
	Now, notice that, since $t$ is progressing, we must have some progressing thread $(\nat^i)_{i\geq k}$ along some tail $(t_i)_{i\geq k}$.
	Writing $n_i$ for the input in $\vec a_i$ corresponding to $N^i$, notice that $(n_i)_{i\geq k}$ is a non-increasing sequence of natural numbers, by construction of $t_i$ and $\vec a_i$.
	Moreover, we have that $n_{i+1}<n_i$ whenever $\nat^i$ is principal for a $\cond$ step, so for infinitely many $i\geq k$ by definition of a progressing thread.
	Thus $(n_i)_{i\geq k}$ has no least element, contradicting the well-ordering property.
\end{proof}

\subsection{The simply typed theory $\C$ and its fragments}
We are finally ready to give the definition of our circular version of System $\T$.

\begin{definition}
[Circular version of $\T$]
The language of
$\C$ extends contains every regular progressing coderivation of $\T^- + \cond$ as a symbol.
We identify `terms' of this language with coterms in the obvious way, and call them (regular) \textbf{progressing coterms}.
$\C$ itself is a STT axiomatised by
the schemata from
 Figures~\ref{fig:eq-ax-seq-calc-min}, \ref{fig:ax-t-non-eq} and \ref{fig:cond-rule+axioms}, now interpreting the metavariables $s,t$ etc.\ there as ranging over coterms.
\end{definition}

The aim of this work is to compare fragments of $\C$ and fragments of $\T$ delineated by type level.
In light of Proposition~\ref{prop:free-cut-elim}, the following definition gives natural circular counterparts of the fragments $\nT n$ of $\T$:
\begin{definition}
[Type level restricted fragments of $\C$]
	$\nC n$ is the fragment of $\C$ restricted to the language containing only coderivations where all types occurring have level $\leq n$.
	$\nC n$ still has constant symbols for each individual constant of $\T^- + \cond$.
\end{definition}

Notice that, despite the fact that coderivations of $\nC n$ may type only level $n+1$ functionals, progressing coterms are closed under application and so, by definition of a STT, $\nC n$ admits `terms' of arbitrary type by composing with the constants and variables of $\T^- + \cond$.
In particular we still have combinatory completeness, as usual.
In what follows, however, it will usually suffice to only consider the coderivations when proving properties of $\nC n$, the generalisation to progressing coterms following by closure under application.

\subsection{Example: Ackermann-P\'eter, revisited}
\label{sect:ack-cyclic}
Let us revisit the example of the Ackermann-P\'eter function from Section~\ref{sect:ack-type-1-rec}.
	Despite the fact that type level 1 recursion is required to type it in $\T$ (cf.~Fact~\ref{fact:type-0-rec-prim-rec}), $A$ may be circularly typed in $\C$ by a coderivation $\mathrm A$ using only the base type $\nat$:\footnote{For convenience we have implemented some branching rules as context-splitting, namely the $\cut$ steps. Formally, there are implicit $\wk$ steps that are not indicated, a convention that we will henceforth adopt for the sake of easing legibility.}
\begin{equation}
\label{eqn:ack-cyc-der}
\vlderivation{
		\vlin{\cntr}{\bullet}{\underline{\purple \nat}, \nat \seqar \nat}{
		\vliin{\cond}{}{\underline{\red \nat} , \blue \nat , \orange \nat \seqar \nat}{
			\vlin{\wk}{}{\underline \nat , \nat \seqar \nat}{
			\vlin{\succ}{}{\nat \seqar \nat}{\vlhy{}}
			}
		}{
			\vliin{\cond}{}{\red \nat, \blue \nat , \underline{ \orange{\nat}} \seqar \nat}{
				\vlin{\wk}{}{\red{\nat}, \underline \nat \seqar \nat}{
				\vliin{\cut}{}{\red \nat \seqar \nat}{
					\vliq{1}{}{\seqar \nat}{\vlhy{}}
				}{
					\vlin{}{\bullet}{\red{\nat}, \nat \seqar \nat}{\vlhy{\vdots (1)}}
				}
				}
			}{
				\vliin{\cut}{}{\red \nat, \blue \nat , \orange \nat \seqar \nat}{
					\vlin{}{\bullet}{\blue \nat,  \orange \nat \seqar \nat}{\vlhy{\vdots (2)}}
				}{
					\vlin{}{\bullet}{\red \nat, \nat \seqar \nat}{\vlhy{\vdots (3)}}
				}
			}
		}
		}	
	}
\end{equation}
	As usual, the occurrences of $\bullet$ above indicate roots of identical coderivations, and we have indicated three distinct threads, coloured red, blue and orange. Note that the purple $\purple \nat$ prefixes both the red and the blue thread.
	Finally note that the red $\red \nat$ thread progresses on every visit to (1) and (3), while the orange $\orange \nat$ thread progresses on every visit to (2).

\begin{proposition}
$\mathrm A$ is progressing and regular, and so is a symbol of $\nC 0$.
\end{proposition}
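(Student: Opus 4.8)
The plan is to establish the two properties separately: regularity is essentially by inspection, while progression needs a short thread analysis.

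\emph{Regularity (and membership in $\nC 0$).} The coderivation $\mathrm A$ in \eqref{eqn:ack-cyc-der} is displayed as a finite rooted graph with exactly three back-edges — the occurrences of $\bullet$ at (1), (2) and (3) — all targeting the root $\cntr$ step, together with the suppressed implicit $\wk$ steps. Unfolding this finite graph, every sub-coderivation of $\mathrm A$ is rooted at one of its finitely many nodes, and two sub-coderivations rooted at the same node are literally equal; hence $\mathrm A$ has only finitely many distinct sub-coderivations and is regular. (If desired one lists them: $\mathrm A$ itself, the outer and inner $\cond$-subderivations, the $\wk$-over-$\succ$ left branch, the two $\cut$-subderivations, the derivation of the numeral $\numeral 1$, and the suppressed weakenings.) Since every type occurring in $\mathrm A$ is $\nat$, of level $0$, it then remains only to check progression to conclude that $\mathrm A$ is a symbol of $\nC 0$.

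\emph{Progression.} First I would characterise the infinite branches: deleting the three sub-coderivations rooted at the $\bullet$-leaves leaves a finite tree, so every infinite branch passes through a $\bullet$-leaf infinitely often, each passage re-entering a fresh copy of $\mathrm A$ at its root, and between consecutive passages taking one of exactly three paths; thus an infinite branch determines an infinite word $w$ over $\{1,2,3\}$ recording which back-edge is taken at each loop. I would then read off from the colouring of \eqref{eqn:ack-cyc-der} the local behaviour of threads along one loop, tracking whether a thread sits, at the conclusion of a copy, at the first argument (the $\purple{\nat}$, which through the root $\cntr$ continues to either the $\red{\nat}$ or the $\blue{\nat}$) or at the second argument (the $\orange{\nat}$). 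The facts to establish are: via $\red{\nat}$ a thread progresses at the outer $\cond$ and survives a (1)- or (3)-loop into the first argument of the next copy; via $\blue{\nat}$ a thread survives a (2)-loop, without progressing, into the first argument of the next copy; and an $\orange{\nat}$-thread progresses at the inner $\cond$ and survives a (2)-loop into the second argument of the next copy.

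With these in hand the global argument splits on $w$. If $w$ has infinitely many $1$s or $3$s, I would build the thread from the root that at each loop follows $\red{\nat}$ through the $\cntr$ on a (1)- or (3)-loop and $\blue{\nat}$ on a (2)-loop; by the local facts it is an infinite (hence maximal) thread that progresses on every (1)- and (3)-loop, so infinitely often. Otherwise $w$ is eventually constantly $2$, and a thread started at the second argument of the copy where that tail begins, following the $\orange{\nat}$-thread thereafter (and extended downward to a maximal thread), is infinite and progresses at every loop. Either way the branch carries a progressing thread, so $\mathrm A$ is progressing. I expect the main obstacle to be precisely the bookkeeping in the local thread analysis: because the $\cut$ steps are written with context-splitting and some $\wk$ steps are suppressed, one must carefully determine, for each of the three loops, which occurrences on the two premises of a $\cut$ are genuine immediate ancestors of occurrences in the looped copy of $\mathrm A$ and which are freshly weakened in. The one genuinely non-obvious point — that on a (2)-loop the thread tracking the first argument must switch from $\red{\nat}$ to $\blue{\nat}$ at the $\cntr$ in order to survive at all — lives exactly there; once immediate ancestry across each single loop is pinned down, the case split above is routine.
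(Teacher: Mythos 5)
Your proof is correct and follows essentially the same route as the paper's: the same per-loop thread facts (a $\red\nat$-thread progresses on loops (1) and (3) and survives them into the first argument of the next copy, an $\orange\nat$-thread progresses on loop (2), and a $\blue\nat$-thread merely survives loop (2)), combined with a case split on which loops an infinite branch traverses infinitely often. Your two-case organisation — infinitely many (1)s or (3)s handled by a single thread switching between $\red\nat$ and $\blue\nat$ at the root $\cntr$, versus an eventual tail of (2)s handled by the $\orange\nat$-thread — simply compresses the paper's seven-case enumeration.
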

	\begin{proof}
	To show that $\mathrm A$ is progressing, we conduct a case analysis on an infinite branch $B$, based on which of the simple loops $(1),(2)$ and $(3)$ are traversed infinitely often:
		\begin{itemize}
		\item $B$ hits only $(1)$ infinitely often. Then there is a progressing thread along the red $\red \nat$.
		\item $B$ hits only $(2)$ infinitely often. Then eventually there is a progressing thread along the orange $\orange \nat$.
		\item $B$ hits only $(3)$ infinitely often. Then eventually there is a progressing thread along the red $\red \nat$.
		\item $B$ hits only (1) and (2) infinitely often. Then eventually there is a progressing thread along the red $\red \nat$ on iterations of (1) (on which it progresses) and along the blue $\blue \nat$ on iterations of (2) (on which it is constant).
		\item $B$ hits only (1) and (3) infinitely often. Then eventually there is a progressing thread along the red $\red \nat$, which progresses on any iteration of (1) or (3).
		\item $B$ hits only (2) and (3) infinitely often. Then eventually there is a progressing thread along the blue $\blue \nat$ on iterations of (2) (on which it is constant) and along the red $\red \nat$ on iterations of (3) (on which it progresses).
		\item $B$ hits all of (1), (2) and (3) infinitely often. Then there is a progressing thread along the red $\red \nat$ on iterations of (1) and (3) (on which it progresses) and the blue $\blue \nat $ on iterations of (2) (on which it is constant).
		\end{itemize}
	Clearly $\mathrm A$ is regular and contains only occurrences of $\nat$, so $\mathrm A$ is indeed a symbol of $\nC 0$.
	\end{proof}

In fact, we may also show that $\nC 0 $ proves the defining equations of $A$ from \eqref{eqn:ackermann-equations}:

\begin{proposition}
\label{prop:ct0-proves-ack-eqns}
$\nC 0 $ proves the following equations:
\[
\begin{array}{rcl}
\mathrm A\, 0\, y & = & \succ y \\
\mathrm A\, \succ x \, 0 & = & \mathrm A\, x\, \numeral 1 \\
\mathrm A\, \succ x\, \succ y & = &  \mathrm A\, x\, (\mathrm A\, \succ x \, y)
\end{array}
\]
\end{proposition}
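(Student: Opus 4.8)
The plan is to mirror the computations in the proofs of Proposition~\ref{prop:t1-proves-ack-eqns} and Proposition~\ref{prop:turing-completeness}: read the coderivation $\mathrm A$ from \eqref{eqn:ack-cyc-der} as a coterm, unfold it once, and simplify using the equational axioms for $\cntr$, $\cond$, $\cut$ and $\wk$ from Figures~\ref{fig:eq-ax-seq-calc-min} and \ref{fig:cond-rule+axioms}. The key point — which is exactly where the circular calculus pays off — is that, since $\mathrm A$ is regular, the $\bullet$-marked leaves \emph{are} the coterm $\mathrm A$ itself, so there is no need for any metalevel induction (nor for an auxiliary iterator as in the $\nT 1$ typing): each of the three equations follows from a single unfolding step. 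Since every coterm appearing below ($\mathrm A$ and each of its subcoterms) involves only the type $\nat$, all the axioms invoked are axioms of $\nC 0$.

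Concretely, reading off \eqref{eqn:ack-cyc-der} and suppressing the $\exch$ steps and the implicit $\wk$ steps permitted by Convention~\ref{conv:rules-mod-exch} (and by the footnote to \eqref{eqn:ack-cyc-der}), the coterm $\mathrm A$ satisfies
\[
\mathrm A \;=\; \cntr\,\Big(\cond\,(\wk\,\succ)\,\big(\cond\,(\wk\,(\cut\,(\wk\,\numeral 1)\,\mathrm A))\,(\cut\,\mathrm A\,\mathrm A)\big)\Big),
\]
where the root $\cntr$ duplicates the first argument, the outer $\cond$ is principal on the first copy of it, the inner $\cond$ is principal on the second argument, and the bottom $\cut\,\mathrm A\,\mathrm A$ is the context-splitting cut whose left premise feeds $\mathrm A$ the pair (second copy of the first argument, second argument) and whose right premise feeds $\mathrm A$ the pair (first argument, cut formula), so that it sends inputs $(r,b,o)$ to $\mathrm A\,r\,(\mathrm A\,b\,o)$. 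Given this, the three equations are obtained as follows. (i) $\mathrm A\,0\,y$: apply the $\cntr$ axiom, then the $\cond$ axiom at $0$, then the $\wk$ axiom, obtaining $\succ y$. (ii) $\mathrm A\,\succ x\,0$: apply $\cntr$, the outer $\cond$ axiom at $\succ x$ (landing in the right premise with the first argument now $x$), the inner $\cond$ axiom at $0$, then the $\wk$ and $\cut$ axioms (using $\wk\,\numeral 1 = \numeral 1$ on any argument), obtaining $\mathrm A\,x\,\numeral 1$. (iii) $\mathrm A\,\succ x\,\succ y$: apply $\cntr$, the outer $\cond$ axiom at $\succ x$, the inner $\cond$ axiom at $\succ y$, then the context-splitting $\cut$ axiom, obtaining $\mathrm A\,x\,(\mathrm A\,\succ x\,y)$.

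The only real work is bookkeeping. One must fix a concrete placement of the suppressed $\exch$ and $\wk$ steps and check that, as a genuine coterm, the bottom context-splitting $\cut$ unfolds to $(r,b,o)\mapsto \mathrm A\,r\,(\mathrm A\,b\,o)$ with exactly the argument positions claimed, and similarly that $\wk\,\succ$ discards the correct ($=$ first) argument and that $\wk\,(\cut\,(\wk\,\numeral1)\,\mathrm A)$ discards the correct ($=$ second) one. This is routine and of the same flavour as the verifications already carried out in Example~\ref{ex:sim-prim-rec} and in the proof of Proposition~\ref{prop:turing-completeness}, and it presents no conceptual obstacle; I would therefore present it compactly in the style of the displayed calculation in Proposition~\ref{prop:t1-proves-ack-eqns}.
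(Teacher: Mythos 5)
Your proposal is correct and follows essentially the same route as the paper's proof: unfold the regular coderivation $\mathrm A$ once into the coterm $\cntr\,(\cond\,\mathrm A_0\,(\cond\,\mathrm A_{10}\,\mathrm A_{11}))$ with the $\bullet$-leaves literally equal to $\mathrm A$, then derive each equation by a single application of the $\cntr$, $\cond$, $\wk$ and $\cut$ axioms, with no induction. The paper simply carries out the bookkeeping you defer (naming the subcoderivations $\mathrm A_0,\mathrm A_1,\mathrm A_{10},\mathrm A_{11}$ and displaying the equational chains), and your sequences of axiom applications for the three equations match its calculations step for step.
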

\begin{proof}
Writing $\mathrm A_0$ and $\mathrm A_1$ for the left and right coderivations, respectively, composed by the lowermost $\cond$ step in \eqref{eqn:ack-cyc-der}, we have:
\begin{equation}
\label{eqn:ackxy-to-cond}
\begin{array}{rcll}
\mathrm A \, x\, y & = & \cntr\, (\cond\, \mathrm A_0 \, \mathrm A_1)\, x \, y & \text{by definition of $\mathrm A$}\\
& = & \cond\, \mathrm A_0\, \mathrm A_1\, x\, x\, y & \text{by $\cond$ axioms}
\end{array}
\end{equation}
From here we obtain the first equation of \eqref{eqn:ackermann-equations}, in $\nC 0$:
\[
\begin{array}{rcll}
\mathrm A\, 0 \, y & = & \mathrm A_0 \, x\, y & \text{by \eqref{eqn:ackxy-to-cond} above and $\cond$ axioms}\\
& = & \wk \, \succ\, x\, y & \text{by definition of $\mathrm A_0$}\\
& = & \succ y & \text{by $\wk$ axiom}
\end{array}
\]

Now writing $\mathrm A_{10}$ and $\mathrm A_{11}$ for the left and right coderivations, respectively, composed by the uppermost $\cond$ step typeset in \eqref{eqn:ack-cyc-der}, we have:
\begin{equation}
\label{eqn:acksxy-to-cond}
\begin{array}{rcll}
\mathrm A\, \succ x\, y & = & \mathrm A_1 \, x\, \succ x\, y & \text{by \eqref{eqn:ackxy-to-cond} above and $\cond$ axioms}\\
& = & \cond\, \mathrm A_{10} \, \mathrm A_{11}\, x\, \succ x\, y & \text{by definition of $\mathrm A_1$}
\end{array}
\end{equation}
From here we obtain the second and third equations of \eqref{eqn:ackermann-equations}, in $\nC 0$:
\[
\begin{array}{rcll}
\mathrm A\, \succ x\, 0 & = & \mathrm A_{10}\, x\, \succ x & \text{by \eqref{eqn:acksxy-to-cond} above and $\cond$ axioms}\\
& = & \wk\, (\cut\, (1\, \mathrm A))\, x\, \succ x & \text{by definition of $\mathrm A_{10}$}\\
& = & \cut\, (1\, \mathrm A)\, x & \text{by $\wk$ axiom}\\
& = & \mathrm A\, x\, \numeral 1 & \text{by $\cut$ axiom}\\
\noalign{\medskip}
\mathrm A\, \succ x\, \succ y & = & \mathrm A_{11}\, x\, \succ x\, y & \text{by \eqref{eqn:acksxy-to-cond} above and $\cond$ axioms}\\
& = & \cut\, (\mathrm A\, \mathrm A)\, x\, \succ x\, y & \text{by definition of $\mathrm A_{11}$}\\
& = & \mathrm A\, x\, (\mathrm A\, \succ x\, y) & \text{by $\cut $ axiom} \qedhere
\end{array}
\]
\end{proof}

Now, let us revisit Section~\ref{sect:ack-type-1-rec}, where we gave a $\T$ term for the Ackermann-P\'eter function based on type 1 recursion. 
Calling that term $\mathrm A'$, temporarily, we now have that $\mathrm A'$ and $\mathrm A$ satisfy the same defining equations from \eqref{eqn:ackermann-equations}.
Working in a combined theory, we may thus deduce their equivalence via a nested induction.

\begin{proposition}
$\nC 0 , \rec_1 \proves\,  \mathrm A\, x\, y \, = \, \mathrm A'x\, y$. 
\end{proposition}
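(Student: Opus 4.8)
The plan is to exploit the fact that the Ackermann--P\'eter recursion has, provably, a unique solution. By Proposition~\ref{prop:ct0-proves-ack-eqns} the coderivation $\mathrm A$ satisfies the three equations of \eqref{eqn:ackermann-equations} in $\nC 0$, and by Proposition~\ref{prop:t1-proves-ack-eqns} the term $\mathrm A'$ satisfies the same equations in $\nT 1$; since $\nC 0, \rec_1$ extends both theories, both facts are available there (as are $\mathrm A$, $\mathrm A'$ and $\mathrm I$, whose recursors have level $\leq 1$). It therefore suffices to derive $\mathrm A\, x\, y = \mathrm A'\, x\, y$ from these equations alone. I would do this by a nested induction, outer on $x$ and inner on $y$. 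The only delicate point is that the ambient schema $\induction$ is quantifier-free, so one cannot naively induct on the formula $\mathrm A\, x\, y = \mathrm A'\, x\, y$ with $y$ free and then instantiate the hypothesis at the term $\mathrm A\, \succ x\, y$ needed in the step. I circumvent this by phrasing the \emph{outer} induction on the functional equality $\mathrm A\, x =_{\nat \to \nat}\mathrm A'\, x$ --- an atomic, hence quantifier-free, formula --- and passing between functional and pointwise equality using the extensionality rule $\extensionality$.

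Concretely, write $\phi(x)$ for the atomic formula $\mathrm A\, x =_{\nat \to \nat} \mathrm A'\, x$ and argue $\proves \phi(x)$ by $\induction$ on $x$. For the base case, Proposition~\ref{prop:ct0-proves-ack-eqns} gives $\mathrm A\, 0\, y = \succ\, y$ for free $y$, so $\extensionality$ yields $\mathrm A\, 0 = \succ$; likewise $\mathrm A'\, 0 = \succ$ by \eqref{eqn:ackermann-abstracted-eqns}, so $\phi(0)$ holds. For the inductive step, assume $\phi(x)$. Using \eqref{eqn:ackermann-abstracted-eqns} and Leibniz we get $\mathrm A'\, \succ x = \mathrm I\,(\mathrm A'\, x) = \mathrm I\,(\mathrm A\, x)$, so it remains to prove $\mathrm A\, \succ x =_{\nat \to \nat}\mathrm I\,(\mathrm A\, x)$; crucially this statement does not mention $\mathrm A'$, so the inner argument is self-contained and the hypothesis $\phi(x)$ is used only in this final recombination.

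For that, by $\extensionality$ it is enough to prove $\mathrm A\, \succ x\, y =_\nat \mathrm I\,(\mathrm A\, x)\, y$ for free $y$, which I do by $\induction$ on $y$. When $y = 0$: the second equation of \eqref{eqn:ackermann-equations} gives $\mathrm A\, \succ x\, 0 = \mathrm A\, x\, \numeral 1$, while \eqref{eqn:iterator-eqns} gives $\mathrm I\,(\mathrm A\, x)\, 0 = \mathrm A\, x\, \numeral 1$. For the successor step: the third equation of \eqref{eqn:ackermann-equations} gives $\mathrm A\, \succ x\, \succ y = \mathrm A\, x\,(\mathrm A\, \succ x\, y)$, the inner hypothesis together with Leibniz rewrites this to $\mathrm A\, x\,(\mathrm I\,(\mathrm A\, x)\, y)$, and \eqref{eqn:iterator-eqns} identifies the latter with $\mathrm I\,(\mathrm A\, x)\, \succ y$. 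Assembling the pieces completes both inductions. The single genuine subtlety, as flagged above, is the interaction of nesting with quantifier-free induction: routing through $\extensionality$ and functional equality is precisely what restores the needed ``universally quantified'' strength of the outer hypothesis while keeping every induction formula atomic, and this is the step I expect to require the most care to state correctly.
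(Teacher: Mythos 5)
Your proposal is correct and follows essentially the same route as the paper's own proof: a nested quantifier-free induction, outer on $x$ phrased as the atomic functional equality $\mathrm A\,x =_{\nat\to\nat} \mathrm A'\,x$ and discharged via $\extensionality$, inner on $y$, drawing only on Propositions~\ref{prop:ct0-proves-ack-eqns} and~\ref{prop:t1-proves-ack-eqns} in the combined theory. The one difference is cosmetic: you route the inner induction through the $\mathrm A'$-free statement $\mathrm A\,\succ x\,y = \mathrm I\,(\mathrm A\,x)\,y$ and invoke the outer hypothesis once at the end (via Leibniz and $(*)$/$\extensionality$ for $\mathrm A'\,\succ x = \mathrm I\,(\mathrm A'\,x)$), whereas the paper inducts directly on $\mathrm A\,\succ x\,y = \mathrm A'\,\succ x\,y$ and uses the outer hypothesis inside the sub-induction; both organisations are equally valid.
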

\begin{proof}
Working inside $\nC 0 + \rec_1$, we show $\mathrm A\, x\, = \, \mathrm A' x$ by induction on $x$. We have,
\[
\begin{array}{rcll}
\mathrm A\, 0 \, y & = & \succ y & \text{by Proposition~\ref{prop:ct0-proves-ack-eqns}}\\
& = & \mathrm A' 0\, y & \text{by Proposition~\ref{prop:t1-proves-ack-eqns}}
\end{array}
\]
so $\mathrm A\, 0 = \mathrm A' 0$ by $\extensionality$.
For the inductive step, we show that $\mathrm A\, \succ x\, y = \mathrm A' \succ x\, y$ by a sub-induction on $y$:
\[
\begin{array}{rcll}
\mathrm A\, \succ x\, 0 & = & \mathrm  A\, x\, \numeral 1 & \text{by Proposition~\ref{prop:ct0-proves-ack-eqns}}\\
 & = & \mathrm A'x\, \numeral 1 & \text{by main inductive hypothesis}\\
 & = & \mathrm A'\succ x\, 0 & \text{by Proposition~\ref{prop:t1-proves-ack-eqns}}\\
 \noalign{\medskip}
 \mathrm A\, \succ x\, \succ y & = & \mathrm A\, x\, (\mathrm A\, \succ x\, y) & \text{by Proposition~\ref{prop:ct0-proves-ack-eqns}}\\
 & = & \mathrm A' x\, (\mathrm A\, \succ x\, y) & \text{by main induction hypothesis}\\
 & = & \mathrm A'x\, (\mathrm A' \succ x\, y) & \text{by sub-induction hypothesis}\\
 & = & \mathrm A' \succ x\, \succ y & \text{by Proposition~\ref{prop:t1-proves-ack-eqns}}
\end{array}
\]
Thus we have $\mathrm A\, \succ x\, = \, \mathrm A' \succ x$ by $\extensionality$, proving the main inductive step as required.
\end{proof}

As we mentioned before, general extensionality is not strictly necessary for many of our results, and this is in particular the case for the result above.
We could have also proceeded under \emph{weak extensionality} by an instance of $\Pi_1$-induction ($\forall y (\mathrm A\, x\, y \, = \, \mathrm A' x\, y)$). Standard witnessing theorems then reduce this to an instance of quantifier-free induction, but such a development is beyond the scope of this work.

\section{$\C$ simulates $\T$, more succinctly}
\label{sect:c-sim-t}

In this section we give a simulation of $\nT{n+1}$ terms of level $\leq n+1$ into $\nC n$, provably satisfying the same equational theory, thus improving on the naive simulation of primitive recursion from Examples~\ref{ex:sim-prim-rec} and \ref{ex:prim-rec-revisited}.
This matches similar results from the setting of arithmetic, \cite{Das19:log-comp-cyc-arith}, and the exposition is entirely proof theoretic.

The main goal is to show the following result:

\begin{theorem}
\label{thm:c-sim-t}
If $t:\sigma$ is a term of $\nT{n+1}$ with $\level(\sigma)\leq n+1$ then there is a $\nC n$ coderivation $t' : \sigma$ s.t.:
\begin{equation}
\label{eqn:prov-eq-t-ct-terms-with-ext}
\nC n, \rec_{n+1} \proves t=t'
\end{equation}
\end{theorem}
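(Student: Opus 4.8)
The plan is to prove Theorem~\ref{thm:c-sim-t} by a syntactic translation on typing derivations, refining the naive simulation of primitive recursion from Examples~\ref{ex:sim-prim-rec} and~\ref{ex:prim-rec-revisited}. A level-$(n{+}1)$ recursor $\rec_\tau$ has $\tau=\vec\rho\to\nat$ with every $\level(\rho_i)\leq n$, and the key idea is to simulate it not by looping and cutting on the whole type $\tau$ (which only lands in $\nC{n+1}$), but by a regular progressing coderivation that loops at type $\nat$ while carrying the arguments $\vec\rho$ around the loop, so that every type occurring has level $\leq n$.

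I would first invoke the partial normalisation result, Proposition~\ref{prop:free-cut-elim}: since $t$ is closed with $\level(\sigma)\leq n+1$, we may replace $t$ by an $\nT{n+1}$-provably equal derivation in free-cut-normal form, in which every type occurrence has level $\leq n+1$. I would then define, by recursion on such a derivation $d:\vec\sigma\seqar\tau$ (writing $\tau=\vec\rho\to\nat$), a $\nC n$ coderivation $d'$ of the ``uncurried'' sequent $\Gamma,\vec\rho\seqar\nat$, where $\Gamma$ is $\vec\sigma$ with its level-$(n{+}1)$ hypotheses flattened away; the recursion carries, for each flattened hypothesis, a \emph{handler} --- either a finite coderivation to be inlined (as at a $\beta$-redex: this is the case when the hypothesis was introduced by $\leftimp$ or a $\cut$) or a back-edge to the root of an enclosing translated recursor (when the hypothesis is the recursion variable of a recursor step premise). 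The rules of $\T^-$ translate in the obvious structural way ($\rightimp$ and $\leftimp$ absorb arguments, $\cut$ composes or inlines, $\id$ on a high type becomes its handler), and $0,\succ,\cond$ are kept. The substantive clause is $d=\rec_\tau\,s\,t$ with $\level(\tau)=n+1$: here $d'$ branches at its root on the recursion argument $m$ via $\cond$, returning $s'\,\vec x\,\vec z$ when $m=0$ and, when $m=k+1$, running the translation of the step premise $t:\vec\sigma,\nat,\tau\seqar\tau$ with its $\tau$-variable assigned the back-edge to the root of $d'$ supplied with index $k$ (keeping the $\nat$-index available by commuting weakenings down) and arguments $\vec z$. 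Since $t$ is finite and in normal form, its $\tau$-variable is consumed only by a chain of $\leftimp$ steps peeling off $\vec\rho$, so this redirection is locally well typed, keeps all types at level $\leq n$, and produces a \emph{regular} coderivation; recursors nested inside $s$ or $t$ are handled identically and contribute their own loops.

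I would then verify the two correctness conditions. For progressiveness, every back-edge of $d'$ arises from a translated recursor and, around any such loop, the $\nat$-occurrence tracking that recursion's index is principal for the $\cond$ step at the loop root; so by a case analysis on which back-edges are traversed infinitely often, as in Example~\ref{ex:prim-rec-revisited} and Section~\ref{sect:ack-cyclic}, every infinite branch carries a progressing thread, and $d'$ is a symbol of $\nC n$. For the equational part, by Proposition~\ref{prop:free-cut-elim} it suffices to prove in $\nC n,\rec_{n+1}$ that each $d$ equals $d'$ once both are applied to enough arguments (the handlers being read off as the corresponding functionals); this goes by induction on $d$, the only non-routine case being the recursor clause, where one shows by quantifier-free induction on the recursion argument --- exactly as in Example~\ref{ex:sim-prim-rec}, but one type level lower, using the $\cond$ axioms and one unfolding of the back-edge --- that $d'$ satisfies the defining equations of $\rec_\tau$. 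The extensionality rule $\extensionality$ then lifts these equalities between fully-applied terms to equalities between the terms, so composing over $d$ gives $\nC n,\rec_{n+1}\proves t=t'$ for $t'$ the coderivation built from the normalised form of $t$.

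The step I expect to be the main obstacle is the recursor clause together with the progressiveness bookkeeping: one must make precise the ``redirect every application of the recursion variable to a back-edge'' operation on free-cut-normal derivations and prove simultaneously that it (i) is locally well typed with all types of level $\leq n$, (ii) preserves regularity even under nesting of recursors, (iii) preserves the equational behaviour, and (iv) yields \emph{genuinely progressing} threads --- i.e.\ the index thread must survive intact through the translated step premise rather than being severed by a passive rule. The analogous level-lowering argument for cyclic arithmetic in~\cite{Das19:log-comp-cyc-arith} indicates this is feasible, if bookkeeping-heavy.
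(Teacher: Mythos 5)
Your proposal follows essentially the same route as the paper: after partial normalisation (Proposition~\ref{prop:free-cut-elim}), the paper's Lemma~\ref{lem:ded-thm-realised} performs exactly your translation --- high-level hypotheses become oracle initial sequents (your ``handlers''), the conclusion type is uncurried to base type, a level-$(n{+}1)$ recursor becomes a $\cond$-rooted loop whose back-edge is substituted for the recursion hypothesis, and an explicit thread condition on the carried low-level parameters is what secures progressiveness through the translated step premise. The equational invariant is likewise verified by object-level (quantifier-free) induction on the recursion argument and lifted to an equation between terms via $\extensionality$, so your plan matches the paper's proof.
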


The idea is to rely on the partial normalisation result, Proposition~\ref{prop:free-cut-elim}, to work with a normal form of $\T$ derivations, and then translate into $\C$ coderivations under a constructive realisation of the deduction theorem (for typing derivations).

\subsection{Derivations with `oracles'}

We consider (co)derivations with fresh intial sequents of the form $\vlinf{f}{}{\sigma_1, \dots, \sigma_n \seqar \tau}{}$, denoting a functional $f$ of type $\sigma_1 \to \cdots \to \sigma_n \to \tau$, as expected.
We write,
\[
\toks0={0.7}
\vlderivation{
\vltrf{t}{\vec \sigma \seqar \tau}{\vlhy{}}{
	\vlin{f_i}{i}{\vec \sigma_i \seqar \tau_i}{\vlhy{}}
}{\vlhy{}}{\the\toks0}
}
\]
for a (co)derivation $t$ of $\vec \sigma \seqar \tau$ with initial sequents among $\vlinf{f_i}{i}{\vec \sigma_i \seqar \tau_i}{}$, with $i$ varying over some fixed range.
We distinguish $f_i$ from variable symbols, since coderivations until now are closed coterms; instead it is more pertinent to think of them as some fresh constant `oracle' symbols.
We thus use metavariables $f,g,$ etc.\ for these new initial sequents.
The interpretation of such (co)derivations into the standard model $\nmod$ is as expected, with (co)derivations now computing (partial) functionals with respect to oracles for each $f_i$.

\subsection{Constructive realisation of the deduction theorem}
We describe how to `realise' a version of the \emph{deduction theorem} for typing derivations in order to lower type level. 
The key feature of this translation is that recursion is simulated by a cyclic derivation in a succinct way, in terms of abstration complexity.

As notation throughout this section, if $\vec f = f_1, \dots, f_n$ then we may simply write $(\vec f\, \vec x)$ for $(f_1\, \vec x)\, (f_2\, \vec x)\, \dots\, (f_n\, \vec x)$, to lighten the syntax.
Our key intermediate result is the following lemma:

\begin{lemma}
\label{lem:ded-thm-realised}
Let $t:\vec \sigma , \vec \nat \seqar \tau$ be  a $\nT{n+1}$ derivation, i.e.,
\[
\vltreeder{t}{\vec \sigma, \vec \nat \seqar \tau}{\ }{\quad }{\ }
\]
with all $\nat$s indicated, s.t.\ all types occurring in $t$ have level $\leq n+1$.

Write $\tau = \vec \tau \to \nat$ and $\sigma_i = \vec \sigma_i \to \nat$. For each $\vec \rho$ of levels $\leq n$ there is a $\nC n$ coderivation $\rhotrans t {\vec\rho} (\vec f) : \vec \rho, \vec \nat , \vec \tau \seqar \nat$ using initial sequents $\vlinf{f_i}{i}{\vec \rho, \vec \sigma_i \seqar \nat}{}$, i.e.,
\[
\toks0={0.7}
\vlderivation{
	\vltrf{ \rhotrans{ t }{\vec \rho}   (\vec f) }{\vec \rho, \vec \nat , \vec \tau \seqar \nat}{\vlhy{}}{
		\vlin{f_i}{i}{\vec \rho,\vec \sigma_i \seqar \nat}{\vlhy{}}
	}{\vlhy{}}{\the\toks0}
}
\]
such that,
\begin{equation}
\label{eqn:ded-thm-invariant}
\nC n, \rec_{n+1} \proves \rhotrans t {\vec \rho}(\vec f)\, \vec x\, \vec y\, \vec z \  = \ t\, (\vec f\, \vec x)\, \vec y\, \vec z
\end{equation}
Moreover, for any $j$, there is a $\rho_j$-thread from the $\rho_j$ in the conclusion of $\rhotrans t {\vec \rho} (\vec f)$ to the $\rho_j$ in any occurrence of the initial sequent $f_i$. 
\end{lemma}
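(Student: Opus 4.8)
The plan is to prove Lemma~\ref{lem:ded-thm-realised} by induction on the structure of the $\nT{n+1}$ derivation $t$, building the coderivation $\rhotrans{t}{\vec\rho}(\vec f)$ simultaneously with the equational invariant \eqref{eqn:ded-thm-invariant} (provable in $\nC n,\rec_{n+1}$) and with the thread condition. The organising idea is a constructive ``deduction theorem'' move: rather than feeding a sub-result of a high-level type (a hypothesis $\sigma_i$, a cut formula, or the recursion type $\kappa$) as an argument, we uncurry it all the way down to $\nat$ and instead pass an oracle $f_i:\vec\rho,\vec\sigma_i\seqar\nat$. This is precisely what drops the abstraction level by one: every type occurring in $\rhotrans{t}{\vec\rho}(\vec f)$ is a sub-part of some $\sigma_i$, of $\tau$, or of $\vec\rho$, all of level $\leq n$ (e.g.\ in the $\leftimp$-case the introduced $\rho\to\mu$ has level $\leq n+1$, so the residual cut formula $\rho$ has level $\leq n$, while $\mu=\vec\mu\to\nat$ contributes only the level-$\leq n$ types $\vec\mu$).

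For the structural rules $\exch,\wk,\cntr$, for $\id$, and for the constants $0,\succ$ the translation is immediate: act directly on the $\nat$-part of the context, and duplicate/delete/permute the oracles $f_i$ when the rule acts on the $\vec\sigma$-part; an identity $\id:\sigma\seqar\sigma$ on an arrow type simply \emph{is} the oracle $f_1$. For $\cut$, $\leftimp$ and $\rightimp$ we recurse on the immediate subderivations, but with a twist whenever a subderivation produces a non-$\nat$ value internally: its oracle inside the translation of the other subderivation must see the full $\nat$-context $\vec\nat$ (and, for $\rightimp$, also the freshly discharged type $\rho$). We therefore invoke the induction hypothesis on that subderivation with an \emph{enlarged} parameter list ($\vec\rho,\vec\nat$, resp.\ $\vec\rho,\rho$), still of level $\leq n$; we then build the needed oracle as a small $\nC n$ derivation (a $\cut$ composing the re-curried translation of the relevant subderivation with the incoming oracle, or, for $\rightimp$, just the canonical ``apply'' derivation $\rho,\vec\rho'\seqar\nat$), substitute it for the fresh initial sequent in the host coderivation, and reconcile the duplicated $\nat$-context by contraction. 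Substituting a progressing coderivation for an oracle preserves both progressiveness and regularity --- an infinite branch that meets a substituted copy infinitely often must stay inside it, since oracle leaves are nullary --- and does not raise type level, so the result is in $\nC n$; the invariant follows from the corresponding equational axioms, the relevant hypotheses and $\extensionality$; and the thread condition is obtained by concatenating the ``Moreover'' thread of the host with that of the substituted coderivation, using that all our gadgets copy the $\vec\rho$-prefix of each sequent verbatim.

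The crux is the $\rec$-case, $t=\rec_\kappa\,s\,t_0$ with $s:\vec\sigma,\vec\nat\seqar\kappa$ and $t_0:\vec\sigma,\vec\nat,\kappa\seqar\kappa$ (the recursion $\nat$ being the last of $\vec\nat$). Here we define $\rhotrans{t}{\vec\rho}(\vec f)$ \emph{self-referentially} as a $\cond$ step on the recursion $\nat$ whose left premise is $\rhotrans{s}{\vec\rho}(\vec f)$ and whose right premise comes from the induction hypothesis on $t_0$ taken with enlarged parameter list $\vec\rho^+:=\vec\rho,\vec\nat$: in $\rhotrans{t_0}{\vec\rho^+}(\vec f',g)$ we plug the (weakened) incoming oracles for $\vec f'$, plug $\rhotrans{t}{\vec\rho}(\vec f)$ \emph{itself} for the oracle $g$ standing for $\kappa$, and contract the duplicated $\vec\nat$-block. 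This typechecks precisely because $g$ needs exactly the parameters $\vec\rho,\vec\nat$ --- the latter carrying the recursion counter's predecessor --- to recover the recursive call $\rec_\kappa\,s\,t_0\,(\vec f\,\vec x)\,\vec y\,m'$, and $\rhotrans{t}{\vec\rho}(\vec f)$ has exactly type $\vec\rho,\vec\nat,\vec\kappa\seqar\nat$. The invariant \eqref{eqn:ded-thm-invariant} is then verified by induction on the recursion argument (an instance of the $\induction$ schema of $\nC n$), using the $\cond$ axioms, the $\rec$ axioms for $\rec_{n+1}$, and the hypothesis for $s$.

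The main obstacle is showing this self-referential coderivation is \emph{progressing}; this is exactly what the ``Moreover'' clause is for, so it must be maintained in every case, not just $\rec$. For the $\rec$-case: an infinite branch that eventually stays in a copy of $\rhotrans{s}{\vec\rho}(\vec f)$, or stays in the host without revisiting $g$, is progressing by the hypothesis for $s$ (resp.\ $t_0$); any other infinite branch passes through $g$-occurrences --- hence through the root $\cond$ of a fresh copy of $\rhotrans{t}{\vec\rho}(\vec f)$ --- infinitely often, and along it we build a progressing thread by alternately (i) stepping from a root $\cond$'s principal $\nat$ into its right premise --- a progressing step --- and (ii) following, through the contraction and then through the host, the $\nat$-thread given by the ``Moreover'' clause for the component of $\vec\rho^+$ corresponding to the recursion counter, which lands exactly on the principal $\nat$ of the next copy's root $\cond$. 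This thread progresses at least once per loop, and regularity together with the $\leq n$ type bound is then routine. Finally Theorem~\ref{thm:c-sim-t} follows by first applying partial normalisation (Proposition~\ref{prop:free-cut-elim}) to bound the internal type levels of the given $\nT{n+1}$ term by $n+1$, then invoking the lemma with $\vec\rho$ empty, re-currying the output via $\rightimp$-steps, and appealing to $\extensionality$.
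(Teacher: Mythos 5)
Your proposal is correct and follows essentially the same route as the paper's proof: induction on the derivation with oracles standing in for higher-type hypotheses, enlarged parameter lists ($\vec\rho,\vec\nat$ for $\cut$/$\leftimp$, the discharged type for $\rightimp$ with the canonical apply gadget), the self-referential $\cond$ construction with the back-edge substituted for the recursion-type oracle in the $\rec$ case, progressiveness secured by the ``Moreover'' thread clause, and the invariant verified by object-level induction using the $\cond$, $\rec$ and structural axioms together with $\extensionality$. The only deviations are cosmetic bookkeeping (e.g.\ where the recursion counter sits in the context, and the compressed description of which oracle gadget is used in the plain $\cut$ versus $\leftimp$ cases), not differences of substance.
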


Before giving the proof let us set up some further notation to lighten the exposition as much as possible.
\begin{itemize}
\item We shall sometimes suppress the initial sequent arguments of a coderivation when it is unambiguous, e.g.\ writing $\rhotrans t {\vec \rho}$ instead of $\rhotrans t{\vec \rho}(\vec f)$ etc. We will only do this when the initial sequents, or approrpriate substituted (co)derivations, are explicitly typeset.
\item We shall write $\mathsf r^*$ for the reflexive transitive closure of a rule $\mathsf r$.
\item As in the statement of the lemma, we shall typically assume that a type, say, $\tau$ has the form $\vec \tau \to \nat$ and so on.
\item Variables $\vec x,x$ will typically correspond to types $\vec \rho, \rho$, variables $\vec y, y$ to $\vec \nat, \nat$, and $\vec z\, z$ to $\vec \tau, \tau$.
\end{itemize}

\begin{proof}
[Proof of Lemma~\ref{lem:ded-thm-realised}]
We proceed by induction on the structure of $t$, and refer to an inductive hypothesis for a smaller derivation $s$ by $\IH(s)$.

If $t$ is the initial sequent $\vlinf{\id}{}{\sigma \seqar \sigma}{}$ then:
\begin{itemize}
\item If $\sigma = \nat$, then $\rhotrans t{\vec \rho} $ is:
\[
\vlderivation{
	\vliq{\wk}{}{\vec \rho, \nat \seqar \nat}{
	\vlin{\id}{}{\nat \seqar \nat}{\vlhy{}}
	}
}
\]
There are no new initial sequents so also no thread condition to check. 
To verify \eqref{eqn:ded-thm-invariant}, we have:
\[
\begin{array}{rcll}
\rhotrans t {\vec \rho}\, \vec x\, y & = & \wk^*\, \id\, \vec x\, y & \text{by definition of $\rhotrans t{\vec \rho}$}\\
& = & \id\, y & \text{by $\wk$ axioms}\\
& = & t\, y & \text{by definition of $t$}
\end{array}
\]
\item Otherwise $\rhotrans t {\vec \rho}$ is just: 
\[
\vlinf{f}{}{\vec \rho , \vec \sigma \seqar \nat }{}
\] 
The required threading property is immediate, and to verify \eqref{eqn:ded-thm-invariant} we have:
\[
\begin{array}{rcll}
\rhotrans t{\vec \rho}(f)\, \vec x\, \vec y & = & f\, \vec x\, \vec y & \text{by definition of $\rhotrans t{\vec \rho}$. }\\
& = & \id\, (f\, \vec x)\, \vec y & \text{by $\id$ axiom} \\
& = & t\, (f\, \vec x)\, \vec y & \text{by definition of $t$}
\end{array}
\]
\end{itemize}

If $t$ is the initial sequent $\vlinf{0}{}{\seqar \nat}{}$, then $\rhotrans t{\vec \rho}$ is:
\[
\vlderivation{
	\vliq{\wk}{}{\vec \rho \seqar \nat}{
	\vlin{0}{}{\seqar \nat}{}
	}
}
\]
There are no new initial sequents, so no threading property to check.
To verify \eqref{eqn:ded-thm-invariant} we have:
\[
\begin{array}{rcll}
\rhotrans t{\vec \rho}\, \vec x & = & \wk^*\, 0\, \vec x &\text{by definition of $\rhotrans t{\vec \rho}$}\\
 & = & 0 & \text{by $\wk$ axioms}\\
 & = & t & \text{by definition of $t$}
\end{array}
\]

If $t$ is the initial sequent $\vlinf{\succ}{}{\nat \seqar \nat}{}$ then $\rhotrans t{\vec \rho}$ is:
\[
\vlderivation{
	\vliq{\wk}{}{\vec \rho, \nat \seqar \nat}{
	\vlin{\succ}{}{\nat \seqar \nat}{}
	}
}
\]
There are no new initial sequents, so no threading property to check.
To verify \eqref{eqn:ded-thm-invariant} we have:
\[
\begin{array}{rcll}
\rhotrans t{\vec \rho}\, \vec x\, y & = & \wk^*\, \succ \, \vec x\, y &\text{by definition of $\rhotrans t{\vec \rho}$}\\
 & = & \succ\, y & \text{by $\wk$ axioms}\\
 & = & t\, y & \text{by definition of $t$}
\end{array}
\]

If $t $ concludes with a weakening step,
\[
\vlderivation{
	\vlin{\wk}{}{\vec \sigma , \pi, \vec \nat\seqar \tau}{
	\vltr{s}{\vec \sigma, \vec \nat \seqar \tau}{\vlhy{\  }}{\vlhy{\ }}{\vlhy{\ }}
	}
}
\]
then:
\begin{itemize}
\item If $\pi = \nat$ then we define $\rhotrans t{\vec \rho} (\vec f)$ by just commuting with the weakening step:
\[
\toks0={.8}
\vlderivation{
	\vlin{\wk}{}{\vec \rho, \vec \sigma, \nat , \vec \nat , \vec \tau \seqar \nat}{
	\vltrf{\rhotrans s{\vec \rho}(\vec f)}{\vec \rho, \vec \sigma, \vec \nat, \vec \tau \seqar \nat}{\vlhy{\quad }}{\vlhy{\quad}}{\vlhy{\quad}}{\the\toks0}
	}
}
\]
The threading property is readily obtained from the inductive hypothesis, and we verify \eqref{eqn:ded-thm-invariant} as follows:
\[
\begin{array}{rcll}
\rhotrans t {\vec \rho} (\vec f)\, \vec x\, y\, \vec y\, \vec z & = & \wk\, \rhotrans s {\vec \rho} (\vec f)\, \vec x\, y\, \vec y\, \vec z & \text{by definition of $\rhotrans t{\vec \rho}$}\\
& = & \rhotrans s{\vec \rho}(\vec f)\, \vec x\, \vec y\, \vec z & \text{by $\wk$ axiom} \\
& = & s\, (\vec f\, \vec x)\, \vec y\, \vec z & \text{by $\IH(s)$}\\
& = & \wk\, s\, (\vec f\, \vec x)\, y\, \vec y\, \vec z & \text{by $\wk$ axiom}\\
& = & t\, (\vec f\, \vec x)\, y \, \vec y \, \vec z & \text{by definition of $t$}
\end{array}
\]
\item Otherwise, $\rhotrans t{\vec \rho} (\vec f,g)$ is simply just $\rhotrans s{\vec \rho} (\vec f)$ (the initial sequents $f:\vec \pi\seqar \nat$ are never used), and the required properties are inherited directly from the inductive hypothesis.
\end{itemize}

If $t$ concludes with a contraction step,
\[
\vlderivation{
	\vlin{\cntr}{}{\vec \sigma, \pi, \vec \nat \seqar \tau}{
	\vltr{s}{\vec \sigma, \pi, \pi, \vec \nat \seqar \tau}{\vlhy{\  }}{\vlhy{\ }}{\vlhy{\ }}
	}
}
\]
then:
\begin{itemize}
\item If $\pi = \nat$ then we define $\rhotrans t{\vec \rho}(\vec f)$ by just commuting with the contraction step:
\[
\toks0={.8}
\vlderivation{
	\vlin{\cntr}{}{\vec \rho, \vec \sigma, \nat , \vec \nat , \vec \tau \seqar \nat}{
	\vltrf{\rhotrans s{\vec \rho}(\vec f)}{\vec \rho, \vec \sigma, \nat, \nat , \vec \nat, \vec \tau \seqar \nat}{\vlhy{\quad }}{\vlhy{\quad}}{\vlhy{\quad}}{\the\toks0}
	}
}
\]
The threading property is readily obtained from the inductive hypothesis, and we verify \eqref{eqn:ded-thm-invariant} as follows:
\[
\begin{array}{rcll}
\rhotrans t {\vec \rho} (\vec f)\, \vec x\, y\, \vec y\, \vec z & = & \cntr\, \rhotrans s {\vec \rho} (\vec f)\, \vec x\, y\, \vec y\, \vec z & \text{by definition of $\rhotrans t{\vec \rho}$}\\
& = & \rhotrans s{\vec \rho}(\vec f)\, \vec x\, y\, y\, \vec y\, \vec z & \text{by $\cntr$ axiom} \\
& = & s\, (\vec f\, \vec x)\, y\, y\, \vec y\, \vec z & \text{by $\IH(s)$}\\
& = & \cntr\, s\, (\vec f\, \vec x)\, y\, \vec y\, \vec z & \text{by $\cntr$ axiom}\\
& = & t\, (\vec f\, \vec x)\, y \, \vec y \, \vec z & \text{by definition of $t$}
\end{array}
\]
\item Otherwise, $\rhotrans t{\vec \rho} (\vec f, g)$ is just $\rhotrans{s}{\vec \rho}(\vec f, g,g)$ and the threading property is immediate from the inductive hypothesis. 
\eqref{eqn:ded-thm-invariant} is also easily verified:
\[
\begin{array}{rcll}
\rhotrans t{\vec \rho} (\vec f, g)\, \vec x\, \vec y\, \vec z & = & \rhotrans s{\vec \rho}(\vec f, g, g)\, \vec x\, \vec y \, \vec z & \text{by definition of $t$}\\
& = & s\, (\vec f\, \vec x)\, (g\, \vec x)\, (g\, \vec x)\, \vec y\, \vec z & \text{by $\IH(s)$}\\
& = & \cntr\, s\, (\vec f\, \vec x)\,  (g\, \vec x)\, \vec y\, \vec z & \text{by $\cntr$ axiom}\\
& = & t\, (\vec f\, \vec x)\,  (g\, \vec x)\, \vec y\, \vec z & \text{by definition of $t$}
\end{array}
\]
\end{itemize}

If $t$ concludes with a cut step,
\[
\vlderivation{
	\vliin{\cut}{}{\vec \sigma, \vec \nat \seqar \tau}{
		\vltr{r}{\vec \sigma, \vec \nat \seqar \pi}{\vlhy{\ }}{\vlhy{\ }}{\vlhy{\ }}
	}{
		\vltr{s}{\vec \sigma, \pi, \vec \nat \seqar \tau}{\vlhy{\ }}{\vlhy{\ }}{\vlhy{\ }}
	}
}
\]
then:
\begin{itemize}
\item If $\pi = \nat$ then we define $\rhotrans t{\vec \rho}(\vec f)$ by just commuting with the cut step:
\[
\toks0={.8}
\toks1={.8}
\vlderivation{
	\vliin{\cut_\nat}{}{\vec \rho, \vec \nat, \vec \tau \seqar \nat}{
		\vltrf{\rhotrans r{\vec \rho}(\vec f)}{\vec \rho, \vec \nat \seqar \nat}{\vlhy{\quad}}{\vlhy{\quad}}{\vlhy{\quad}}{\the\toks0}
	}{
		\vltrf{\rhotrans s{\vec \rho}(\vec f)}{\vec \rho, \nat, \vec \nat , \vec \tau \seqar \nat}{\vlhy{\quad}}{\vlhy{\quad}}{\vlhy{\quad}}{\the\toks1}
	}
}
\]
The threading property is readily obtained from the inductive hypotheses and we verify \eqref{eqn:ded-thm-invariant} as follows:
\[
\begin{array}{rcll}
\rhotrans t{\vec \rho} (\vec f)\, \vec x\, \vec y\, \vec z & = & \rhotrans s{\vec \rho}(\vec f)\, \vec x\, (\rhotrans r{\vec \rho}(\vec f) \, \vec x\, \vec y)\, \vec y\, \vec z & \text{by $\cut$ axiom} \\
& = & s\, (\vec f\, \vec x)\, (r\, (\vec f\, \vec x)\, \vec y)\, \vec y\, \vec z & \text{by $\IH(r)$ and $\IH(s)$} \\
& = & t\, (\vec f\, \vec x)\, \vec y\, \vec z & \text{by $\cut$ axiom}
\end{array}
\]
\item Otherwise, we define $\rhotrans t{\vec \rho}(\vec f)$ as:
\[
\toks0={.4}
\toks1={.8}
\vlderivation{
	\vliq{\cntr}{}{ \vec \rho, \vec \nat , \vec \tau \seqar \nat }{
	\vltrf{\rhotrans s{\vec \rho, \vec \nat}}{\vec \rho, \vec \nat, \vec \nat , \vec \tau \seqar \nat}{
		\vliq{\wk}{}{\vec \rho, \vec \nat, \vec \sigma_i \seqar \nat}{
		\vlin{f_i}{i}{\vec \rho, \vec \sigma_i \seqar \nat}{\vlhy{}}
		}
	}{
		\vlhy{}
	}{
		\vltrf{\rhotrans r{\vec \rho}(\vec f)}{\vec \rho, \vec \nat , \vec \pi \seqar \nat}{\vlhy{\quad}}{\vlhy{\quad}}{\vlhy{\quad}}{\the\toks1}
	}{\the\toks0}
	}
}
\]
The threading property is readily obtained from the inductive hypotheses, and we verify \eqref{eqn:ded-thm-invariant} as follows:
\[
\begin{array}{rcll}
\rhotrans t{\vec \rho}\, \vec x\, \vec y\, \vec z & = & \rhotrans s{\vec \rho, \vec \nat} (\wk^* \vec f , \rhotrans r{\vec \rho} (\vec f))\, \vec x\, \vec y\, \vec y\, \vec z & \text{by $\cntr$ axioms}\\
& = & s\, (\wk^* \vec f\, \vec x\, \vec y)\, (\rhotrans r{\vec \rho}(\vec f)\, \vec x\, \vec y )\, \vec y\, \vec z & \text{by $\IH(s)$}\\
& = & s\, (\wk^* \vec f\, \vec x\, \vec y)\,  (r\, (\vec f\, \vec x)\, \vec y)\, \vec y\, \vec z & \text{by $\IH(r)$ and $\extensionality$}\\
& = & s\, (\vec f\, \vec x)\, (r\, (\vec f\, \vec x)\, \vec y)\, \vec y\, \vec z & \text{by $\wk$ axioms and $\extensionality$}\\
& = & t\, (\vec f\, \vec x)\, \vec y\, \vec z & \text{by definition of $t$}
\end{array}
\]
\end{itemize}

If $t$ concludes with a right-implication step,
\[
\vlderivation{
	\vlin{\rightimp}{}{\vec \sigma, \vec \nat \seqar \pi \to \tau}{
	\vltr{s}{\vec \sigma, \pi , \vec \nat \seqar \tau}{\vlhy{\ }}{\vlhy{\ }}{\vlhy{\ }}
	}
}
\]
then:
\begin{itemize}
\item If $\pi=\nat$ then $\rhotrans t{\vec \rho}(\vec f)$ is just:
\[
\toks0={.8}
\vlderivation{
	\vliq{\exch}{}{\vec \rho, \vec \nat, \nat, \vec \tau \seqar \nat}{
	\vltrf{\rhotrans s{\vec \rho}(\vec f)}{\vec \rho, \nat, \vec \nat , \vec \tau \seqar \nat}{\vlhy{\quad}}{\vlhy{\quad}}{\vlhy{\quad}}{\the\toks0}
	}
}
\]
The threading property is readily obtained from the inductive hypothesis and \eqref{eqn:ded-thm-invariant} is easily verified:
\[
\begin{array}{rcll}
\rhotrans t{\vec \rho}(\vec f)\, \vec x\, \vec y\, y\, \vec z & = & \rhotrans s{\vec \rho}(\vec f)\, \vec x\, y\, \vec y\, \vec z & \text{by $\exch$ axioms}\\
& = & s\, (\vec f\, \vec x)\, y\, \vec y\, \vec z & \text{by $\IH(s)$}\\
& = & t\, (\vec f\, \vec x)\, \vec y\, y\, \vec z & \text{by $\rightimp$ axiom}
\end{array}
\]
\item Otherwise $\level (\pi)\leq n$ and we define $\rhotrans t{\vec \rho} (\vec f)$ as:
\[
\toks0={.4}
\vlderivation{
	\vliq{\exch}{}{\vec \rho, \vec \nat , \pi, \vec \tau \seqar \nat}{
	\vltrf{\rhotrans s{\vec \rho, \pi}}{\vec \rho, \pi, \vec \nat ,\vec \tau \seqar \nat}{
		\vlin{\wk}{}{\vec \rho, \pi, \vec \sigma_i \seqar \nat}{
		\vlin{f_i}{i}{\vec \rho, \vec \sigma_i \seqar \nat}{\vlhy{}}
		}
	}{
		\vlhy{}
	}{
		\vliq{\wk}{}{\vec \rho, \pi, \vec \pi \seqar \nat}{
		\vliq{\apply}{}{\pi, \vec \pi \seqar \nat}{\vlhy{}}
		}
	}{\the\toks0}
	}
}
\]
where $\apply$ is a simple (finite) derivation satisfying:
\begin{equation}
\label{eqn:apply-eqn}
\apply\, z\, \vec w \ = \ z\, \vec w
\end{equation}
The threading property is readily obtained from the inductive hypothesis, and we verify \eqref{eqn:ded-thm-invariant} as follows:
\[
\begin{array}{rcll}
\rhotrans t{\vec \rho}\, \vec x\, \vec y\, z\, \vec z & = & \rhotrans s{\vec \rho, \pi} (\wk\, \vec f , \wk^* \apply) \, \vec x\, z\, \vec y\, \vec z & \text{by $\exch$ axioms} \\
& = & s\, (\wk\, \vec f\, \vec x\, z)\, (\wk^*\apply\, \vec x\, z)\, \vec y\, \vec z & \text{by $\IH(s)$} \\
& = & s\, (\vec f\, \vec x) \, (\apply\, z)\, \vec y\, \vec z & \text{by $\wk$ axioms and $\extensionality$}\\
& = & s\, (\vec f\, \vec x)\, z\, \vec y\, \vec z & \text{by \eqref{eqn:apply-eqn} and $\extensionality$}\\
& = & t\, (\vec f\, \vec x)\, \vec y\, z\, \vec z & \text{by $\rightimp$ axiom}
\end{array}
\]
\end{itemize}

If $t$ concludes with a left-implication step,
\[
\vlderivation{
	\vliin{\leftimp}{}{\vec \sigma, \tau' \to \pi, \vec \nat \seqar \tau}{
		\vltr{r}{\vec \sigma , \vec \nat \seqar \tau'}{\vlhy{\ }}{\vlhy{\ }}{\vlhy{\ }}
	}{
		\vltr{s}{\vec \sigma, \pi, \vec \nat \seqar \tau}{\vlhy{\ }}{\vlhy{\ }}{\vlhy{\ }}
	}
}
\]
then:
\begin{itemize}
\item If $\pi = \nat$ then we define $\rhotrans t{\vec \rho} (\vec f,g)$ as:
\[
\toks0={0.8}
\vlderivation{
	\vliin{\cut}{}{\vec \rho, \vec \nat , \vec \tau \seqar \nat}{
		\vliq{\rightimp}{}{\vec \rho, \vec \nat \seqar \tau'}{
		\vltrf{\rhotrans r{\vec \rho}(\vec f)}{\vec \rho, \vec \nat, \vec \tau'\seqar \nat}{\vlhy{\quad}}{\vlhy{\quad}}{\vlhy{\quad}}{\the\toks0}
		}
	}{
		\vliin{\cut}{}{\vec \rho, \tau', \vec \nat , \vec \tau \seqar \nat}{
			\vlin{g}{}{\vec \rho, \tau' \seqar \nat}{\vlhy{}}
		}{
			\vltrf{\rhotrans s{\vec \rho}(\vec f)}{\vec \rho, \nat, \vec \nat, \vec \tau \seqar \nat}{\vlhy{\quad}}{\vlhy{\quad}}{\vlhy{\quad}}{\the\toks0}
		}
	}
}
\]
The threading property is readily obtained from the inductive hypotheses and we verify \eqref{eqn:ded-thm-invariant} as follows:
\[
\begin{array}{rcll}
\rhotrans t{\vec \rho} (\vec f,g)\, \vec x\, \vec y\, \vec z &  = & \cut\, g\, \rhotrans s{\vec \rho}(\vec f)\, \vec x\, (\rightimp^*\, \rhotrans r{\vec \rho}(\vec f)\, \vec x\, \vec y)\, \vec y\, \vec z & \text{by $\cut$ axiom} \\
& = & \rhotrans s{\vec \rho}(\vec f)\, \vec x\, (g\, \vec x\, (\rightimp^*\, \rhotrans r{\vec \rho}(\vec f)\, \vec x\, \vec y))\, \vec y\, \vec z & \text{by $\cut$ axiom}\\
& =& \rhotrans s{\vec \rho}(\vec f)\, \vec x\, (g\, \vec x\, ( \rhotrans r{\vec \rho}(\vec f)\, \vec x\, \vec y))\, \vec y\, \vec z & \text{by $\rightimp$ axioms and $\extensionality$}\\
& = & s\, (\vec f\, \vec x)\, (g\, \vec x\, ( \rhotrans r{\vec \rho}(\vec f)\, \vec x\, \vec y))\, \vec y\, \vec z & \text{by $\IH(s)$}\\
& = & s\, (\vec f\, \vec x)\, (g\, \vec x\, (r\, (\vec f\, \vec x)\, \vec y))\, \vec y\, \vec z & \text{by $\IH(r)$ and $\extensionality$}\\
& = & t\, (\vec f\, \vec x)\, (\vec g\, \vec x)\, \vec y\, \vec z & \text{by $\leftimp$ axiom}
\end{array}
\]
\item Otherwise we define $\rhotrans t{\vec \rho}(\vec f,g)$ as:
\[
\toks0={.3}
\toks1={.8}
\vlderivation{
	\vliq{\cntr}{}{\vec \rho, \vec \nat, \vec \tau \seqar \nat}{
	\vltrf{\rhotrans s{\vec \rho, \vec \nat}}{\vec \rho, \vec \nat, \vec \nat, \vec \tau \seqar \nat}{
		\vliq{\wk}{}{\vec \rho, \vec \nat, \vec \sigma_i \seqar \nat}{
		\vlin{f_i}{i}{\vec \rho, \vec \sigma_i \seqar \nat}{\vlhy{}}
		}
	}{
		\vlhy{}
	}{
		\vliin{\cut}{}{\vec \rho, \vec \nat, \vec \pi \seqar \nat}{
			\vliq{\rightimp}{}{\vec \rho, \vec \nat \seqar \tau' }{
			\vltrf{\rhotrans r{\vec \rho} (\vec f)}{\vec \rho, \vec \nat, \vec \tau'\seqar \nat}{\vlhy{\quad}}{\vlhy{\quad}}{\vlhy{\quad}}{\the\toks1}
			}
		}{
			\vliq{\wk}{}{\vec \rho , \vec \nat, \tau', \vec \pi \seqar \nat}{
			\vlin{g}{}{\vec \rho, \tau', \vec \pi \seqar \nat}{\vlhy{}}
			}
		}
	}{\the\toks0}
	}
}
\]
The threading property is readily obtained from the inductive hypotheses and we verify \eqref{eqn:ded-thm-invariant} as follows:
\[
\begin{array}{rcll}
\rhotrans t{\vec \rho}(\vec f,g)\, \vec x\, \vec y\, \vec z & = & \rhotrans s{\vec \rho, \vec \nat} ( \wk^*\vec f, \cut (\rightimp^* \rhotrans r{\vec \rho}(\vec f)) (\wk^* g) )\, \vec x\, \vec y\, \vec y\, \vec z & \text{by $\cntr$ axioms}\\
& = & s\, (\wk^*\vec f\, \vec x\, \vec y)\, (\cut (\rightimp^* \rhotrans r{\vec \rho}(\vec f)) (\wk^* g) \vec x\, \vec y)\, \vec y\, \vec z & \text{by $\IH(s)$}\\
& = & s\, (\vec f\, \vec x)\, (\cut (\rightimp^* \rhotrans r{\vec \rho}(\vec f)) (\wk^* g) \vec x\, \vec y)\, \vec y\, \vec z & \text{by $\wk$ axioms and $\extensionality$}\\
& = & s\, (\vec f\, \vec x) (\wk^* g\, \vec x\, \vec y\, (\rightimp^* \rhotrans r{\vec \rho}(\vec f)\, \vec x\, \vec y))\, \vec y \, \vec z & \text{by $\cut$ axiom}\\
& = & s\, (\vec f\, \vec x) (g\, \vec x\,  (\rightimp^* \rhotrans r{\vec \rho}(\vec f)\, \vec x\, \vec y))\, \vec y \, \vec z & \text{by $\wk$ axioms and $\extensionality$} \\
& = & s\, (\vec f\, \vec x) (g\, \vec x\,  ( \rhotrans r{\vec \rho}(\vec f)\, \vec x\, \vec y))\, \vec y \, \vec z & \text{by $\rightimp$ axioms and $\extensionality$}\\
& = & s\, (\vec f\, \vec x)\, (g\, \vec x\, (r\, (\vec f\, \vec x) \, \vec y))\, \vec y \, \vec z & \text{by $\IH(r)$ and $\extensionality$}\\
& = & t\, (\vec f\, \vec x)\, (g\, \vec x)\, \vec y\, \vec z & \text{by $\leftimp $ axiom}
\end{array}
\]
\end{itemize}

Finally, if $t$ concludes with a recursion step,
\[
\vlderivation{
	\vliin{\rec}{}{\vec \sigma , \vec \nat, \nat \seqar \tau}{
		\vltr{r}{\vec \sigma , \vec \nat \seqar \tau}{\vlhy{\ }}{\vlhy{\ }}{\vlhy{\ }}
	}{
		\vltr{s}{\vec \sigma, \tau, \vec \nat, \nat \seqar \tau}{\vlhy{\ }}{\vlhy{\ }}{\vlhy{\ }}
	}
}
\]
then:
\begin{itemize}
\item If $\tau = \nat$ (so $\vec \tau$ is empty) then we define $\rhotrans t{\vec \rho}$ similarly to before in Example~\ref{ex:sim-prim-rec},
\begin{flushleft}
\begin{equation}
\label{eqn:rec-to-cyc-low-comp}
\toks0={.8}
\vlderivation{
	\vliin{\cond}{\bullet}{\vec \rho, \vec \nat, \blue{\nat} \seqar \nat}{
		\vltrf{\rhotrans r{\vec \rho} (\vec f)}{\vec \rho, \vec \nat \seqar \nat}{\vlhy{\quad}}{\vlhy{\quad}}{\vlhy{\quad}}{\the\toks0}
	}{
		\vliin{\cut}{}{\vec \rho, \vec \nat, \blue{\nat} \seqar \nat}{
			\vlin{\cond}{\bullet}{\vec \rho, \vec \nat, \blue{\nat} \seqar \nat}{\vlhy{\vdots}}
		}{
			\vltrf{\rhotrans s{\vec \rho}(\vec f)}{\vec \rho, \nat, \vec \nat, \nat \seqar \nat}{\vlhy{\quad}}{\vlhy{\quad}}{\vlhy{\quad}}{\the\toks0}
		}
	}
}
\end{equation}
\end{flushleft}
where $\bullet$ marks roots of identical sub-coderivations.
The threading property is readily obtained from the inductive hypotheses. 
For progressiveness, any infinite branch either just loops on $\bullet$ indefinitely, in which case there is a progressing thread along the blue $\blue \nat$, or is eventually in just $\rhotrans r{\vec \rho}(\vec f)$ or $\rhotrans s{\vec \rho}(\vec f)$, in which case there is a progressing thread by the inductive hypotheses.
To verify \eqref{eqn:ded-thm-invariant} we shall show,
\begin{equation}
\label{eqn:ind-to-cyc-rec-n-ih}
\rhotrans t{\vec \rho}(\vec f)\, \vec x\, \vec y\, y \ = \ t\, (\vec f\, \vec x)\, \vec y\, y
\end{equation}
by (object-level) induction on $y$:
\[
\begin{array}{rcll}
\rhotrans t{\vec \rho}(\vec f)\, \vec x\, \vec y\, 0 & = & \rhotrans r{\vec f}(\vec f)\, \vec x\, \vec y & \text{by $\cond$ axiom}\\
& = & r\, (\vec f\, \vec x)\, \vec y& \text{by $\IH(r)$} \\
& = & t\, (\vec f\, \vec x)\, \vec y\, 0 & \text{by $\rec$ axioms}\\
\noalign{\medskip}
\rhotrans t{\vec \rho}(\vec f)\, \vec x\, \vec y\, \succ y & = & \cut\, \rhotrans t{\vec \rho}(\vec f)\, \rhotrans s{\vec \rho}(\vec f)\, \vec x\, \vec y\, y & \text{by $\cond$ axioms}\\
& = & \rhotrans s{\vec \rho}(\vec f)\, \vec x\, (\rhotrans t{\vec \rho}(\vec f)\, \vec x\, \vec y\, y)\, \vec y\, y & \text{by $\cut$ axiom}\\
& = & s\, (\vec f\, \vec x)\, (\rhotrans t{\vec \rho}(\vec f)\, \vec x\, \vec y\, y)\, \vec y\, y & \text{by $\IH(s)$}\\
& = & s\, (\vec f\, \vec x)\, (t\, (\vec f\, \vec x)\, \vec y\, y)\, \vec y\, y & \text{by inductive hypothesis \eqref{eqn:ind-to-cyc-rec-n-ih}}\\
& = & t\, (\vec f\, \vec x)\, \vec y\, \succ y & \text{by $\rec$ axioms}
\end{array}
\]
\item Otherwise we define $\rhotrans t{\vec \rho}(\vec f)$ to be,
\begin{equation}
\label{eqn:rec-to-cyc-high-comp}
\toks0={.8}
\toks1={.3}
\vlderivation{
	\vliin{\cond}{\bullet}{\vec \rho , \vec \nat, \blue \nat, \vec \tau \seqar \nat}{
		\vltrf{\rhotrans r{\vec \rho}(\vec f)}{\vec \rho, \vec \nat, \vec \tau \seqar \nat}{\vlhy{\quad}}{\vlhy{\quad}}{\vlhy{\quad}}{\the\toks0}
	}{
		\vliq{\cntr}{}{\vec \rho, \vec \nat, \blue{\nat }, \vec \tau \seqar \nat}{
		\vltrf{\rhotrans s{\vec \rho, \vec \nat , \blue{\nat}}}{\vec \rho, \vec \nat , \blue{\nat }, \vec \nat , \nat, \vec \tau \seqar \nat}{
			\vliq{\wk}{}{\vec \rho, \vec \nat, \blue{\nat }, \vec \sigma_i \seqar \nat}{
			\vlin{f_i}{i}{\vec \rho, \vec \sigma_i \seqar \nat}{\vlhy{}}
			}
		}{
			\vlhy{}
		}{
			\vlin{\cond}{\bullet}{\vec \rho, \vec \nat, \blue{\nat}, \vec \tau \seqar \nat}{\vlhy{\vdots}}
		}{\the\toks1}
		}
	}
}
\end{equation}
where $\bullet$ marks roots of identical sub-coderivations.
The threading property is readily obtained from the inductive hypotheses.
For progressiveness, notice that any infinite branch that hits $\bullet$ infinitely often will have a progressing thread along the blue $\blue \nat$, thanks to the threading property from the inductive hypothesis for $s$.
Any other infinite branch is eventually in $\rhotrans r{\vec \rho}(\vec f)$ or $\rhotrans{s}{\vec \rho, \vec \nat, \nat}$, so progressiveness follows from the inductive hypotheses. 

To verify \eqref{eqn:ded-thm-invariant} we show that,
\begin{equation}
\label{eqn:ind-to-cyc-rec-ih}
\rhotrans t{\vec \rho}(\vec f)\, \vec x\, \vec y\, y\, \vec z \ = \ t\, (\vec f\, \vec x)\, \vec y\, y\, \vec z
\end{equation}
by (object-level) induction on $y$:
\[
\begin{array}{rcll}
\rhotrans t{\vec \rho}(\vec f)\, \vec x\, \vec y\, 0 \, \vec z & = & \rhotrans r{\vec \rho}(\vec f)\, \vec x\, \vec y\, \vec z & \text{by $\cond$ axioms}\\
& = & r\, (\vec f\, \vec x)\, \vec y\, \vec z & \text{by $\IH(r)$}\\
& = & t\, (\vec f\, \vec x)\, \vec y\, 0\, \vec z & \text{by $\rec$ axioms}\\
\noalign{\medskip}
\rhotrans t{\vec \rho}(\vec f)\, \vec x\, \vec y\, \succ y\, \vec z & = & \cntr^*\, \rhotrans s{\vec \rho, \vec \nat , \nat}(\wk^*\vec f, \rhotrans t{\vec \rho}(\vec f))\, \vec x\, \vec y\, y\, \vec z & \text{by $\cond$ axioms}\\
& = & \rhotrans s{\vec \rho, \vec \nat , \nat}(\wk^*\vec f, \rhotrans t{\vec \rho}(\vec f))\, \vec x\, \vec y\, y\, \vec y\, y\, \vec z & \text{by $\cntr$ axioms}\\
& = & s\, (\wk^* \vec f\, \vec x\, \vec y\, y)\, (\rhotrans t{\vec \rho}(\vec f)\, \vec x\, \vec y\, y)\, \vec y\, y\, \vec z & \text{by $\IH(s)$}\\
& = & s\, (\vec f\, \vec x)\, (\rhotrans t{\vec \rho}(\vec f)\, \vec x\, \vec y\, y)\, \vec y\, y\, \vec z & \text{by $\wk$ axioms and $\extensionality$} \\
& = & s\, (\vec f\, \vec x)\, (t\, (\vec f \, \vec x)\, \vec y\, y)\, \vec y\, y\, \vec z & \text{by inductive hypothesis \eqref{eqn:ind-to-cyc-rec-ih}}\\
& = & t\, (\vec f\, \vec x)\, \vec y\, \succ y\, \vec z & \text{by $\rec$ axioms} \qedhere
\end{array}
\]
\end{itemize}
\end{proof}

\subsection{$\nC n$ simulates $\nT{n+1}$}

We are now ready to prove the main result of this section, which essentially boils down to an instance of the main Lemma in the previous subsection.

\begin{proof}
[Proof of Theorem~\ref{thm:c-sim-t}]
Without loss of generality assume $t$ is a derivation of $\seqar \sigma$ and apply Lemma~\ref{lem:ded-thm-realised} with $\vec \rho$ empty, to obtain a coderivation $\rhotrans t \emptyset: \vec \sigma \seqar \nat$ s.t.:
\begin{equation}
\label{eqn:thm-csimt-use-of-lemma}
 \nC n, \rec_{n+1} \proves \ t\, \vec x\, =\, \rhotrans t \emptyset\, \vec x
\end{equation}
Note that $\rhotrans t \emptyset$ uses no new initial sequents since the antecedent of the conclusion of $t$ is empty.
Now we define $t'$ as:
\[
\vlderivation{
	\vliq{\rightimp}{}{\seqar \sigma}{
	\vltr{\rhotrans t \emptyset}{\vec \sigma \seqar \nat}{\vlhy{\ }}{\vlhy{\ }}{\vlhy{\ }}
	}
}
\]
We verify \eqref{eqn:prov-eq-t-ct-terms-with-ext} as follows, working inside the theory:
\[
\begin{array}{rll}
& t\, \vec x \, = \, \rhotrans t \emptyset\, \vec x & \text{by \eqref{eqn:thm-csimt-use-of-lemma}}\\
\therefore & t\, \vec x \, = \, \rightimp^*\rhotrans t \emptyset\, \vec x & \text{by $\rightimp$ axioms} \\
\therefore & t\, \vec x\, = \, t'\vec x & \text{by definition of $t'$}\\
\therefore & t = t' & \text{by $\extensionality$} \qedhere
\end{array}
\]
\end{proof}

\section{Coterm-based models of $\T$ and $\C$}
\label{sect:coterm-models}

Our ultimate goal is to establish a converse to the main result of the previous section, which we shall demonstrate in the next two sections.
Before that we need to introduce some type structures arising from our formulation of $\C$. 
Ultimately, we will reduce the simulation of $\nC n$ in $\nT{n+1}$ to the provability in an appropriate arithmetic theory that certain structures are models of $\C$ .

We cannot formalise the standard model $\nmod$ in arithmetic for cardinality reasons, however there are natural models of partial recursive functionals that can be formalised, namely the \emph{hereditarily recursive} and the \emph{hereditarily effective} operations of finite type (see, e.g., \cite{ho-computability}).

Since regular coterms form a Turing-complete programming language (cf.~Proposition~\ref{prop:turing-completeness}), the domains of our structures will simply be classes of regular coterms.
Viewed as Kleene-Herbrand-G\"odel equational specifications (see, e.g., \cite{kleene:intro-to-metamath}), note that the corresponding notion of \emph{computation} is subsumed by provable equality of coterms by rules and axioms of $\C$.
In fact, we will distill from $\C$ a suitable fragment of provable equality, namely that induced by the respective rewriting system.
This will eventually give rise to an intensional model, whence an extensional one can be obtained by, as usual, taking the extensional collapse.

\subsection{Reduction and conversion of coterms}
The reduction relation $\reduces$ on coterms is defined by orienting all the equations in Figures~\ref{fig:eq-ax-seq-calc-min}, \ref{fig:rec-axioms} and \ref{fig:cond-rule+axioms} left-to-right and taking closure under substitution and contexts.
Formally:

\begin{definition}
[Reduction and conversion]
$\reduces$ is the least relation on (co)terms satisfying the reductions in Figure~\ref{fig:reduction-rules}, where the types of each rule label are as indicated in Figures~\ref{fig:seq-calc-min}, \ref{fig:rules-0-s-rec} and \ref{fig:cond-rule+axioms},\footnote{Again there is no reason, other than for ease of legibility, that we write terms variables $s,t$ etc.\ for the premiss inputs and variables $x,y$ etc.\ for the antecedent inputs. Under substitution, it makes no difference.} and closed under substitution and contexts.

We write $\conv$ for the reflexive, symmetric, transitive closure of $\reduces$, and freely use standard rewriting theoretic terminology and notations for these relations.
We sometimes write $\reduces_\sigma$ for the restriction of $\reduces$ to coterms of type $\sigma$, and $\conv_\sigma$ for the restriction of $\conv$ to coterms of type $\sigma$.
\end{definition}

\begin{figure}[h]
\[
\begin{array}{rcl}
\id \ x & \reduces & x \\
\exch \ t \ \vec x\  x \ y\ \vec y & \reduces & t \ \vec x \ y \ x \ \vec y \\
\wk \ t  \ \vec x \ x & \reduces & t \ \vec x \\
\cntr \ t\ \vec x \ x & \reduces & t \ \vec x \ x \ x \\
\cut \ s\ t\ \vec x & \reduces & t \ \vec x \ (s \ \vec x ) \\
\leftimp \ s\ t\ \vec x\ y & \reduces & t \ \vec x\ (y\ (r\ \vec x) ) \\
\rightimp \ t \ \vec x \ x & \reduces & t\ \vec x\ x \\
\noalign{\smallskip}
\rec\ s\ t\ \vec x\ 0 & \reduces & s\ \vec x\\
\rec\ s\ t\ \vec x\ \succ y & \reduces & t\ \vec x\ (\rec\ s\ t\ \vec x \ y)\\
\noalign{\smallskip}
\cond \ s\ t\ \vec x\ 0 & \reduces & s \ \vec x \\
\cond\ s\ t\ \vec x\ \succ y & \reduces & t\ \vec x\ y
\end{array}
\]
\caption{Reduction rules for (co)terms}
\label{fig:reduction-rules}
\end{figure}

We shall address metamathematical matters w.r.t.\ formalising reduction and reduction sequences shortly, but first let us examine some basic mathematical properties of reduction.

\subsection{On normality and numerality}
\label{sect:on-normality-and-numerality}
The term model for $\T$, due to Tait \cite{Tait:67:normalisation-of-t-+-bar-rec-typ01}, may be obtained by proving normalisation and confluence of the rewrite system in Figure~\ref{fig:reduction-rules} for \emph{terms}, and taking the (unique) normal forms of closed terms as the domain of the type structure.
Crucial to this approach is the property that the only closed normal forms of terms of type $\nat$ are the numerals, which allows induction in $\T$ to be reduced to induction at the meta-level.

Since regular coterms are Turing-complete, cf.~Proposition~\ref{prop:turing-completeness},\footnote{Notice from that proof that Turing-completeness is retained under $\reduces$ as a model of execution.} one cannot hope for such a normalisation result, and so our models will be obtained by restricting to normalising coterms.
Normalisation for progressing coterms at type $\nat$ follows by showing that they are indeed elements of the type structure.
We will address confluence shortly, adapting known proofs to the non-wellfounded setting.
However, we point out that the final point, that closed normal forms of type $\nat$ are just the numerals, fails for coterms:

\begin{remark}
[Non-numeral normal coterms of type $\nat$]
\label{rmk:non-numeral-normal}
There are closed $\reduces$-normal coterms of type $\nat$ that are not numerals, e.g.\ the coterm $H$ with,
\[
H = \cond \, 0 \, 1\, H
\]
that returns $0$ or $1$, depending on whether it itself is $0$ or non-zero. Clearly $H$ is undefined in the standard model, though may be consistently interpreted by $0$ or $1$ in extensions of it. 
Note that $H$ is even a \emph{regular} counterexample to numerality of closed $\reduces$-normal coterms, and so the $\reduces$-normal regular coterms will be too large a domain for the type structures we later define.
For this reason, our type structures will restrict the interpretation of $\nat$ to (classes of) coterms that normalise to a numeral.
\end{remark}

\subsection{Formalising reduction sequences of regular coterms}
\label{sect:form-red-seq}
Before continuing, let us make some comments about our arithmetisation of reduction sequences.
First and foremost, to avoid unnecessary technicalities, all the coterms we consider later will be regular, and so can be coded by natural numbers.
Thus all quantification over them is strictly first-order.
In fact, many of our results go through in a more general setting since finite reduction sequences may be coded by finite data, but such a treatment is beyond the scope of this work.
In what follows we shall be rather brief, outlining only the main ideas and proof methods behind the results we need.

While equality for arbitrary coterms is $\Pi^0_1$, we should justify that it is recursive for regular coterms.
Namely, we should argue that we may decide if two \emph{presentations} of regular coterms (as finite labelled graphs) represent the same coterm (as an infinite labelled tree), i.e.\ are \emph{bisimilar}.

For $v \in \{0,1\}^*$ and a coterm $t$, construed as a labelled binary tree, let us temporarily write $t_v$ for the sub-coterm of $t$ rooted at position $v$, and $t[v]$ for the rule instance at position $v$.
For a regular presentation $G$ of a coterm, a finite rooted labelled graph, write $G(v)$ for the node of $G$ reached by following the word $v$ along its edge relation.
Note that all of these notions are provably computable in $t$, $G$ and $v$ in $\RCA$.

\begin{proposition}
\label{obs:bisim-rca}
Bisimilarity of rooted finite labelled directed graphs is provably recursive in $\RCA$.
\end{proposition}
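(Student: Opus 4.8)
The plan is to decide bisimilarity by the standard partition-refinement (greatest fixed point) algorithm, and to verify its correctness using only the induction available in $\RCA$, namely $\Sigma^0_1$-induction.

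First I would fix the set-up. Let $G$, $H$ be finite rooted labelled directed graphs, rooted at $r$, $s$ respectively; since here the labels are rule instances, each node has a fixed out-degree and its out-edges are indexed (say as left premiss and right premiss). Recall that a bisimulation is a relation $R \subseteq V(G)\times V(H)$ such that whenever $(u,v)\in R$ the nodes $u$, $v$ carry the same label and for every edge index $i$ the $i$-th successors of $u$ and $v$ are again $R$-related, and that $G$, $H$ are bisimilar if some bisimulation relates $(r,s)$. Observe that a candidate $R$ is finite data, bounded in size by $|V(G)|\cdot|V(H)|$, so the property of being a bisimulation is a bounded formula and bisimilarity is $\Sigma^0_1$ with a built-in bound, hence a priori decidable; what the proposition additionally asks for is a clean decision procedure whose correctness $\RCA$ can prove, and partition refinement supplies one with explicit bounds.

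Next I would define the non-increasing sequence $R_0 \supseteq R_1 \supseteq \cdots$ of subsets of $V(G)\times V(H)$: let $R_0$ relate all pairs of identically-labelled nodes, and let $R_{k+1}$ consist of those $(u,v)\in R_k$ all of whose indexed successors are $R_k$-related. Each $R_k$ is a finite object, uniformly primitive recursive in $G$, $H$ and $k$ — using that the relevant graph notions are provably computable, as already noted — and the procedure $\mathrm{Bisim}(G,H)$ outputs $1$ just if $(r,s)\in R_M$, where $M := |V(G)|\cdot|V(H)|$. The remaining work, all formalisable in $\RCA$, has three steps. (i) $R_M = R_{M+1}$: since the sequence is non-increasing with $|R_0|\le M$ and each strict inclusion drops the cardinality by at least one, there is a least $N \le M$ with $R_N = R_{N+1}$; a $\Sigma^0_1$-induction along the sequence then gives $R_\ell = R_N$ for all $\ell \ge N$, so in particular $R_M = R_{M+1}$, and hence $R_M$ is a bisimulation directly from its defining clause. (ii) Every bisimulation $R$ satisfies $R \subseteq R_k$ for all $k$: by $\Sigma^0_1$-induction on $k$ with $R$ a parameter — for $k = 0$ because bisimulations respect labels, and for the induction step because the bisimulation clauses together with $R \subseteq R_k$ force $R \subseteq R_{k+1}$; in particular $R \subseteq R_M$. (iii) Combining (i) and (ii): $(r,s)\in R_M$ holds iff $(r,s)$ lies in some bisimulation, i.e.\ iff $G$ and $H$ are bisimilar. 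Thus $\RCA$ proves that $\mathrm{Bisim}$ is total and that $\mathrm{Bisim}(G,H) = 1$ holds precisely when $G$ and $H$ are bisimilar.

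I do not anticipate a genuine obstacle: the mathematical content is routine partition refinement together with a pigeonhole bound on the number of refinement steps. The two points that need care are, first, phrasing the refinement clause so that it matches exactly the intended notion of bisimilarity for these labelled graphs — in particular, that sharing a label forces the same arity and that indexed successors are matched up — and, second, checking that the two inductions invoked, namely stabilisation of the sequence and minimality of $R_M$ among bisimulations, are over $\Sigma^0_1$ formulas with number- and finite-set parameters, hence within the strength of $\RCA$.
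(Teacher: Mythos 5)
Your partition-refinement argument is correct as far as it goes, and it is in fact a sharper treatment of the algorithmic half than the paper gives: the paper merely observes that one may ``blindly search'' over all relations $R\subseteq V(G)\times V(H)$ (a bounded search, hence recursive), whereas you exhibit the greatest-fixed-point computation $R_0\supseteq R_1\supseteq\cdots$, prove stabilisation within $|V(G)|\cdot|V(H)|$ steps, and verify maximality of the limit among bisimulations, all by $\Sigma^0_1$-induction on finite data, which is unproblematic in $\RCA$.

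The gap is in what ``bisimilar'' means here. You take it, by definition, to be the existence of a bisimulation relating the roots, so your step (iii) is immediate; but in the paper the proposition is introduced to decide whether two finite presentations \emph{represent the same coterm}, i.e.\ have equal unfoldings as infinite labelled trees, and that is the notion the paper calls bisimilarity. Under that reading the substantive content of the proof is precisely the $\RCA$-provable equivalence between the (a priori $\Pi^0_1$) statement ``$G$ and $H$ have the same unfolding'' and the (bounded, hence $\Delta^0_1$) statement ``there exists a bisimulation relating the roots'': one direction by induction on positions $v\in\{0,1\}^*$, showing $(G(v),H(v))\in R$ for any bisimulation $R$, so that labels agree everywhere; the other by building a bisimulation from the root by adding pairs $(G(v),H(v))$, terminating within $|G||H|+1$ steps by the finite pigeonhole principle. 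Your proposal never establishes this bridge, and without it the proposition cannot deliver its intended consequence, Corollary~\ref{cor:eq-reg-coterms-rca}, that equality of regular coterms is provably $\Delta^0_1$ in $\RCA$. The fix is short: append the two-directional argument just sketched (both directions are $\Sigma^0_1$-inductions over finite data, so they sit comfortably alongside what you already have).
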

\begin{proof}
For recursivity, just blindly search for a bisimulation relation between the two vertex sets (in exponential time).
Now we need to show that two finite graphs $G$ and $H$ are bisimilar if and only if there is a bisimulation between them:
\begin{itemize}
\item Suppose $R$ is a bisimulation between $G$ and $H$. We prove by induction on node position $v\in \{0,1\}^*$ that $(G(v),H(v)) \in R$. From here, by definition of a bisimulation, any two nodes related by $R$ have the same label, and so the unfoldings of $G$ and $H$ are equal.
\item Suppose $G$ and $H$ have the same unfolding. We inductively construct a bisimulation $R$ from the root by continually adding $(G(v),H(v))$ to $R$. We terminate when we hit a pair that is already in $R$, which will happen in at most $|G||H|+1$ steps, by the \emph{finite} pigeonhole principle.\footnote{Note that, while the usual \emph{infinite} pigeonhole principle is not provable in $\RCA$, it is easy to see that the finite one is provable in $\RCA$ (and even weaker theories), by a straightforward $\Sigma^0_1$ induction.}\qedhere
\end{itemize}
\end{proof}

\begin{corollary}
\label{cor:eq-reg-coterms-rca}
Equality for regular coterms is provably $\Delta^0_1$ in $\RCA$.
\end{corollary}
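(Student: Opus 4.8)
The plan is to obtain this directly from Proposition~\ref{obs:bisim-rca}. First I would make explicit the coding convention already alluded to above: a regular coterm is coded by (a natural number coding) a finite rooted labelled directed graph $G$, whose vertices are labelled by the constants, variables and typed application operations occurring in the coterm, and whose edge relation records the immediate subterm structure; the coterm itself is recovered as the unfolding of $G$ into a (possibly infinite) labelled binary tree. Two such presentations $G$ and $H$ then denote the same regular coterm precisely when their unfoldings coincide as labelled trees. The content of the corollary is exactly that this last relation, read as a predicate on the codes of $G$ and $H$, is provably $\Delta^0_1$ in $\RCA$.

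Next I would observe that ``$G$ and $H$ have the same unfolding'' is exactly bisimilarity of $G$ and $H$ as rooted finite labelled directed graphs. This is not quite a definitional triviality, but the work is already contained in the proof of Proposition~\ref{obs:bisim-rca}, which establishes, provably in $\RCA$, that two finite labelled graphs have the same unfolding if and only if there is a bisimulation relating their roots: the forward direction by $\Sigma^0_1$ induction on node positions $v \in \{0,1\}^*$ showing $(G(v),H(v))$ lies in the given bisimulation, and the backward direction by incrementally building a bisimulation from the root and terminating, via the finite pigeonhole principle, after at most $|G||H|+1$ steps.

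Finally, Proposition~\ref{obs:bisim-rca} itself gives that the existence of such a bisimulation is provably recursive in $\RCA$: one blindly searches (in exponentially bounded space, hence via a $\Sigma^0_1$ formula with a provably equivalent $\Pi^0_1$ form) for a bisimulation between the two vertex sets, and $\RCA$ proves this search correct and terminating. Composing these observations, the relation ``$G$ and $H$ present the same regular coterm'' is provably equivalent in $\RCA$ to both a $\Sigma^0_1$ and a $\Pi^0_1$ formula, i.e.\ it is $\Delta^0_1$, as required. I do not expect any genuine obstacle: the substance is entirely in Proposition~\ref{obs:bisim-rca}, and the only thing to record is the identification of coterm equality with bisimilarity of presentations. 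The one minor point worth spelling out is that the local type-consistency side conditions on presentations (each application node having children whose types match appropriately) constitute a finite, hence $\Delta^0_1$, check on the code of $G$, so they do not affect the complexity of the equality relation.
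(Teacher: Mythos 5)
Your argument is exactly the paper's: the corollary is stated there as an immediate consequence of Proposition~\ref{obs:bisim-rca}, identifying equality of regular coterms (as unfoldings of finite rooted labelled graph presentations) with bisimilarity of those presentations, which that proposition shows is provably recursive in $\RCA$. Your additional remarks on coding and the type-consistency check are fine but add nothing beyond what the paper leaves implicit.
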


Now, we better show that reduction preserves certain properties of coterms, not least regularity.
More generally, since we will work with a certain class of regular coterms, we should make sure that this class is closed under reduction.

\begin{proposition}
[$\RCA$]
\label{prop:red-cont}
If $s\reduces t$ then $t$ is finitely composed of sub-coterms of $s$:
\begin{equation}
\label{eqn:red-cont}
\exists \text{ a finite term } r(x_1,\dots,x_n).\, \exists \langle v_1,\dots, v_n\rangle .\,  t = {r(s_{v_1}, \dots, s_{v_n})}
\end{equation}
\end{proposition}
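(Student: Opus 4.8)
The statement asserts a "continuity" property of single-step reduction: the result $t$ of one reduction step on $s$ is built (as a finite term, i.e.\ by finitely many applications) out of subcoterms of $s$. The natural approach is induction on the definition of $\reduces$, which was generated by the base reduction rules of Figure~\ref{fig:reduction-rules} under closure under substitution and contexts. For the closure-under-contexts cases, if $s = a \comp b$ and the redex lies in $a$ (say $a \reduces a'$ with $a'$ satisfying \eqref{eqn:red-cont} relative to $a$, hence relative to $s$ since $\mathrm{sub}(a)\subseteq\mathrm{sub}(s)$), then $t = a' \comp b$; taking the finite term $r'(\vec x, y) := r(\vec x) \comp y$ and appending the position of $b$ to the list of positions gives the required witness. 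The symmetric case (redex in $b$) is analogous. So the whole content is in the base cases, one for each reduction rule.

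First I would handle the base cases, all of which are uniform in spirit. For each rule of Figure~\ref{fig:reduction-rules}, the right-hand side is manifestly a finite term applied to a bounded number of arguments, each of which is one of $s, t, \vec x, x, y, \vec y$ — that is, a subcoterm occurring on the left-hand side — with the sole exception of the $\rec$-rules and $\cond$-rules where the recursion descends on the numeral argument. Concretely, $\cond\, s\, t\, \vec x\, \succ y \reduces t\, \vec x\, y$: here $t$, each component of $\vec x$, and $y$ are all subcoterms of the left side (note $y$ is a subcoterm of $\succ y$ which is a subcoterm of the whole), so the finite term is $r(z_0, z_1, \dots, z_k, w) = z_0\, z_1 \cdots z_k\, w$ and the positions point to the respective subterms. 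The $\rec\, s\, t\, \vec x\, \succ y \reduces t\, \vec x\, (\rec\, s\, t\, \vec x\, y)$ case is the one where the right side reuses the \emph{whole} redex with $y$ in place of $\succ y$; but $\rec$, $s$, $t$, $\vec x$, $y$ are all subcoterms of the left side, so again a suitable finite term (with $\rec$ as a leaf constant) and position list works. The $\leftimp$ rule is similar once one recalls (Convention~\ref{conv:rules-mod-exch}, and the axiom $\leftimp\, s\, t\, \vec x\, y \reduces t\, \vec x\, (y\, (s\, \vec x))$ — here the variable named $r$ in Figure~\ref{fig:reduction-rules} is one of the arguments) that all pieces appearing on the right are arguments on the left. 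Since a redex can be rewritten by at most the bounded-arity base rules, in each case $n$ and the finite term $r$ are bounded by an absolute constant, and everything is primitive recursive in the codes.

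The metamathematical wrapper is routine but worth a sentence: since we restrict to regular coterms, subcoterms, positions, and "being a finite term built from given subcoterms" are all coded by natural numbers, and the relevant operations (extracting $s_v$, forming $r(s_{v_1},\dots,s_{v_n})$, checking equality via Corollary~\ref{cor:eq-reg-coterms-rca}) are provably $\Delta^0_1$; the induction on the generation of $\reduces$ is a $\Sigma^0_1$ induction on the length of a derivation of $s \reduces t$ from the base rules and context/substitution closure, which is available in $\RCA$. I do not expect any genuine obstacle here; the only mild care needed is bookkeeping — being precise about what "finite term $r(x_1,\dots,x_n)$" means (a term of $\T^-+\cond$ over fresh variables $x_1,\dots,x_n$, no combinators needed beyond constants already present) and checking that substitution closure, which replaces a variable $x$ by an arbitrary coterm $u$ throughout $s$, preserves \eqref{eqn:red-cont}: if $s[u/x]\reduces t[u/x]$ comes from $s\reduces t$ with $t = r(s_{v_1},\dots,s_{v_n})$, then $t[u/x] = r(s[u/x]_{v_1},\dots)$ since substitution commutes with taking subterms at fixed positions (the positions $v_i$ of $s$ are still valid positions of $s[u/x]$ with the substituted subterm there). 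That is the one spot where one must think for a moment; everything else is a finite case check.
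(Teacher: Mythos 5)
Your proof is correct and takes essentially the same route as the paper: the paper simply exhibits the witnesses directly, namely the subcoterms indicated in the contractum of the rule from Figure~\ref{fig:reduction-rules} together with the ``comb'' of siblings along the (finite) path to the redex, which is exactly what your per-rule base-case check plus the closure-under-contexts induction constructs incrementally. The difference is purely presentational (explicit primitive recursive witnesses versus a $\Sigma^0_1$ induction on the depth of the redex), and both are unproblematic in $\RCA$.
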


This result holds for arbitrary coterms and, indeed, the existential quantifiers may be explicitly witnessed by primitive recursive functions in terms of $s$ and $t$.
We can take $s_{v_1},\dots ,s_{v_n}$ to include the coderivations indicated in the contractum of a reduction in Figure~\ref{fig:reduction-rules}, as well as the `comb' of the redex of the reduction in $s$, i.e.\ the siblings of all the nodes in the path leading to the redex.
$r(\vec x)$ is now the finite term induced by the contracta and this comb.\footnote{Note that there is no need for weak K\"onig's lemma here, since combs may be explicitly proved to induce finite trees in $\RCA$. In any case, $\WKL$ is arithmetically conservative over $\RCA$.}

As an immediate consequence of the above proposition we have, for arbitrary coterms:
\begin{corollary}
[$\RCA$]
\label{cor:pres-fin-props-by-reduction}
Suppose $s \reduces t$.
\begin{enumerate}
\item\label{item:red-pres-fin-var} If $s$ has only finitely many variable occurrences then so does $t$.
\item\label{item:red-pres-fin-red} If $s$ has only finitely many redexes then so does $t$.
\item\label{item:red-pres-reg} If $s$ is regular then so is $t$.
\item\label{item:red-pres-prog} If $s$ is progressing then so is $t$.
\end{enumerate}
\end{corollary}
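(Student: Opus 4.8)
The plan is to derive all four parts uniformly from the ``grafting'' presentation supplied by Proposition~\ref{prop:red-cont}: whenever $s \reduces t$ we have $t = r(s_{v_1},\dots,s_{v_n})$, where $r(x_1,\dots,x_n)$ is a \emph{finite} term and each $s_{v_i}$ is a complete subtree of $s$; in other words $t$ is obtained by grafting finitely many sub-coterms of $s$ onto the finitely many leaf-slots of a finite term. I would then check, one by one, that each of the four properties is preserved by this operation, reading off the required bounds from the (primitive recursive) witnesses for $r$ and $\langle v_1,\dots,v_n\rangle$ that Proposition~\ref{prop:red-cont} already provides. Since all coterms in sight are regular, hence coded by naturals, and the arguments are finitary, the whole development stays within $\RCA$.

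For part~\ref{item:red-pres-fin-var}: every variable occurrence of $t$ lies either in the finite term $r$, or inside one of the finitely many grafted copies of some $s_{v_i}$, and each $s_{v_i}$, being a subtree of $s$, has no more variable occurrences than $s$. For part~\ref{item:red-pres-fin-red}: a redex of $t$ is rooted either at one of the finitely many nodes of the skeleton $r$, or at a node inside some $s_{v_i}$ --- in which case its entire redex-spine stays within the complete subtree $s_{v_i}$, so it is already a redex of $s$; either way there are only finitely many. For part~\ref{item:red-pres-reg}: a sub-coterm of $t$ is rooted at a node which is either one of the finitely many nodes of $r$, or a node of some $s_{v_i}$, and in the latter case it is literally a sub-coterm of $s$; so $t$ has at most $|r| + (\text{number of distinct sub-coterms of }s)$ many distinct sub-coterms.

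The hard part will be part~\ref{item:red-pres-prog}. Given an infinite branch $B$ of $t$: since $r$ is finite, $B$ meets only finitely many skeleton nodes, so from some point on it runs entirely inside a single grafted copy of some $s_{v_i}$; prepending the finite path to $v_i$ exhibits this tail of $B$ as (a tail of) an infinite branch of $s$. As $s$ is progressing, that branch of $s$ carries a progressing $\nat$-thread; only finitely many of its progress points can lie above $v_i$, so its restriction to the subtree $s_{v_i}$ is again a progressing thread, and since this subtree occurs unchanged inside $t$ this yields a progressing thread along $B$. I expect it to be cleanest to phrase the implication contrapositively inside $\RCA$: an infinite branch of $t$ with no progressing thread is transported, through the grafting, to an infinite branch of $s$ with no progressing thread. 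The two points that need genuine care are this branch/thread transfer and the bookkeeping that the grafting leaves thread structure intact; everything else reduces to routine finite combinatorics (bounded search, finite pigeonhole, $\Sigma^0_1$-induction).
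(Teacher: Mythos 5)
Your proposal is correct and takes exactly the route the paper intends: the paper derives Corollary~\ref{cor:pres-fin-props-by-reduction} as an immediate consequence of the grafting decomposition of Proposition~\ref{prop:red-cont}, and your argument simply fills in the finite case analysis (skeleton nodes versus grafted sub-coterms) together with the branch/thread transfer for progressiveness that the paper leaves implicit. No gaps.
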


Restricting now to regular coterms, we obtain the analogue of Proposition~\ref{prop:red-cont} for reduction \emph{sequences} by $\Sigma^0_1$-induction:
\begin{proposition}[$\RCA$]
\label{prop:red-seq-cont}
If $s$ is regular and $s\reduces^*t$, then \eqref{eqn:red-cont}. 
\end{proposition}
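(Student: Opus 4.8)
The plan is to prove this by induction on the length of the reduction sequence $s = s_0 \reduces s_1 \reduces \cdots \reduces s_k = t$, using Proposition~\ref{prop:red-cont} at each step. The induction is $\Sigma^0_1$ because the statement \eqref{eqn:red-cont} is $\Sigma^0_1$ (it asserts the existence of a finite term $r$ and a finite tuple $\langle v_1,\dots,v_n\rangle$ such that $t = r(s_{v_1},\dots,s_{v_n})$; equality of regular coterms is $\Delta^0_1$ by Corollary~\ref{cor:eq-reg-coterms-rca}, and the whole thing is bounded by the sizes of the codes involved), so $\Sigma^0_1$-induction available in $\RCA$ suffices. The base case $k=0$ is trivial: take $r(x_1) = x_1$ and $v_1 = \epsilon$, so $t = s = s_\epsilon$.

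For the inductive step, suppose $s \reduces^* s_{k}$ with $s_{k} = r(s_{v_1},\dots,s_{v_n})$ for a finite term $r$ and positions $v_1,\dots,v_n$ in $s$, and suppose $s_{k} \reduces t$. First I would note that by Corollary~\ref{cor:pres-fin-props-by-reduction}\eqref{item:red-pres-reg}, each $s_k$ along the sequence is regular (so the hypothesis of Proposition~\ref{prop:red-cont} is maintained). Applying Proposition~\ref{prop:red-cont} to $s_k \reduces t$ gives a finite term $q(y_1,\dots,y_m)$ and positions $w_1,\dots,w_m$ in $s_k$ with $t = q((s_k)_{w_1},\dots,(s_k)_{w_m})$. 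Now each sub-coterm $(s_k)_{w_j}$ sits inside $s_k = r(s_{v_1},\dots,s_{v_n})$, so it is either (a) a sub-coterm of the finite `scaffold' $r$ with some of the $s_{v_i}$ plugged in — hence itself finitely composed of sub-coterms of $s$ by composing with $r$ — or (b) entirely inside one of the plugged-in copies of $s_{v_i}$, i.e.\ equal to $s_{v_i \cdot u}$ for the appropriate residual position $u$, hence a sub-coterm of $s$. In either case $(s_k)_{w_j}$ is a finite composition of sub-coterms of $s$; substituting these compositions into $q$ yields the desired finite term exhibiting $t$ as a finite composition of sub-coterms of $s$. All of this bookkeeping — locating $w_j$ relative to the decomposition of $s_k$, computing residual positions — is primitive recursive in the codes, so it is unproblematic in $\RCA$.

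The main obstacle, such as it is, is purely bookkeeping: one must be careful that the composite term built at each step stays genuinely \emph{finite} (its size is bounded in terms of $|r|$, $|q|$, and the sizes of the finitely many contracta from Figure~\ref{fig:reduction-rules}), so that the $\Sigma^0_1$-induction hypothesis is correctly a statement about existence of finite data with a computable bound, not an unbounded search. Since the contracta appearing in any single reduction step are drawn from a fixed finite schema and the `comb' of a redex in a regular coterm is explicitly a finite tree in $\RCA$ (as already remarked after Proposition~\ref{prop:red-cont}), these bounds are available, and the induction goes through.
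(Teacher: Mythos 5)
Your argument is correct and follows essentially the same route as the paper, which obtains the proposition by $\Sigma^0_1$-induction on the length of the reduction sequence, applying Proposition~\ref{prop:red-cont} at each step; your inductive step merely spells out the compositional bookkeeping that the paper leaves implicit. (One minor remark: the explicit size bounds you flag as "the main obstacle" are not actually needed, since $\Sigma^0_1$-induction in $\RCA$ accommodates unbounded existential quantifiers.)
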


Again, we could also deduce a similar result for arbitrary coterms, specifying reduction sequences as finite lists of redex positions, but formally developing this is beyond the scope of this work.
As expected, we obtain the same properties of Corollary~\ref{cor:pres-fin-props-by-reduction} for regular reduction sequences:

\begin{corollary}
[$\RCA$]
\label{cor:pres-fin-props-by-rtc-reduction}
Suppose $s$ is regular and $s\reduces^*t$. Then the conditions \eqref{item:red-pres-fin-var}, \eqref{item:red-pres-fin-red}, \eqref{item:red-pres-reg} and \eqref{item:red-pres-prog} from Corollary~\ref{cor:pres-fin-props-by-reduction} hold.
\end{corollary}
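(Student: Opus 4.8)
The plan is to derive the result directly from Proposition~\ref{prop:red-seq-cont}, mirroring exactly how the single-step Corollary~\ref{cor:pres-fin-props-by-reduction} follows from Proposition~\ref{prop:red-cont}. Since $s$ is regular and $s\reduces^* t$, Proposition~\ref{prop:red-seq-cont} provides a finite term $r(x_1,\dots,x_n)$ and positions $v_1,\dots,v_n$ with $t = r(s_{v_1},\dots,s_{v_n})$; moreover this data is obtained explicitly (indeed primitive recursively) from the codes of $s$ and $t$, so the whole argument stays within $\RCA$. Each $s_{v_i}$ is a sub-coterm of $s$, hence inherits whichever of the properties \eqref{item:red-pres-fin-var}--\eqref{item:red-pres-reg} $s$ has: a sub-coterm of a coterm with only finitely many variable occurrences (resp.\ redexes, resp.\ distinct sub-coterms) again has only finitely many. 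Grafting the $s_{v_i}$ into the finite term $r$ adds only finitely many variable occurrences and only finitely many new redexes — those straddling a grafting point are determined by $r$ together with the topmost rules of the $s_{v_i}$, of which there are finitely many — and yields again a regular coterm, since its distinct sub-coterms are those of the $s_{v_i}$ together with the finitely many coterms of the form $r'(s_{v_j},\dots)$ for $r'$ a subterm of $r$. This settles \eqref{item:red-pres-fin-var}, \eqref{item:red-pres-fin-red} and \eqref{item:red-pres-reg}.

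For \eqref{item:red-pres-prog} the argument must be structural rather than via decidability of the progressing criterion, since the latter is provable in $\RCA$ only non-uniformly (cf.\ Fact~\ref{fact:decidability-of-progressivity} and the discussion after Proposition~\ref{prop:prog-decidability+provability-in-RCA}). Two observations suffice, both provable in $\RCA$ by direct inspection of branches and the immediate-ancestry graph. First, a sub-coderivation of a progressing coderivation is progressing: any infinite branch $B$ of $s_{v_i}$ extends to the infinite branch $v_i\cdot B$ of $s$, and a progressing thread of $v_i\cdot B$, being infinite and progressing infinitely often while only finitely much of it can lie above position $v_i$, restricts to a progressing thread along $B$. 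Second, a finite combination $r(u_1,\dots,u_n)$ of progressing coderivations is progressing: as $r$ is finite, every infinite branch has a tail lying entirely within some $u_i$, hence carrying a progressing thread of $u_i$, and this is a progressing thread of $r(u_1,\dots,u_n)$ because progressing threads may begin arbitrarily deep in a branch. Applying the first observation to each $s_{v_i}$ and then the second to $t = r(s_{v_1},\dots,s_{v_n})$ gives that $t$ is progressing whenever $s$ is.

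As an alternative route one could instead induct on the length of the reduction sequence $s\reduces^* t$, applying Corollary~\ref{cor:pres-fin-props-by-reduction} at each step; this requires only $\Sigma^0_1$-induction, as the induction formula for each of the four properties is of low complexity in the (first-order) codes of the regular coterms involved. Either way, I expect the only real point needing care is item \eqref{item:red-pres-prog}: one should not appeal to the (correct, but uniformly $\RCA$-unprovable) decision procedure for progressiveness, but keep the reasoning structural as above — the two facts used there being essentially the content of the earlier observations that the progressing coderivations are closed under the typing rules and that progressing threads need not begin at the root.
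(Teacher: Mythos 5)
Your proposal is correct and matches the paper's (implicit) treatment: the paper states this corollary as an immediate consequence of Proposition~\ref{prop:red-seq-cont}, exactly as the single-step Corollary~\ref{cor:pres-fin-props-by-reduction} follows from Proposition~\ref{prop:red-cont}, and your decomposition argument (together with the structural, non-decision-procedure handling of item \eqref{item:red-pres-prog}) is precisely the intended elaboration. The alternative route by $\Sigma^0_1$-induction on reduction length that you mention is equally consistent with the paper's framing, so nothing further is needed.
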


\subsection{Confluence of reduction}
In order to obtain basic metamathematical properties of the coterm models we later consider, we need to know that our model of computation is \emph{deterministic}, so that coterms have unique interpretations.
There are various ways to prove this in arithmetic, but we will approach it in terms of \emph{confluence} in rewriting theory.

We will need to formalise our argument within $\RCA$ for later results,
comprising an additional contribution to the literature,
bounding the logical strength of confluence for finitary and certain infinitary rewrite systems operating with regular coterms.
To facilitate this formalisation, all the coterms we consider in this section will be finite applications of regular coderivations to variables and constants (`\farc s' for short) which, by Proposition~\ref{prop:red-seq-cont}, are closed under reduction sequences, provably in $\RCA$.
\farc s are not necessarily progressing but, since coderivations are closed and have no redexes or variables, \farc s have at most finitely many variables or redexes, provably in $\RCA$ (each application operation can add at most one variable and redex).

The main goal of this subsection is to prove the following:

\begin{theorem}
	[Church-Rosser, $\RCA$]
	\label{thm:cr}
	Let $t:\sigma$ be a \farc.
	If $t_0 \unreducess t \reduces^* t_1$ then there is $t':\sigma$ such that $t_0 \reduces^* t'\unreducess t_1$.
\end{theorem}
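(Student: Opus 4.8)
The plan is to prove confluence via the standard \emph{parallel reduction} technique of Tait and Martin-Löf, adapted to the setting of \farc s and formalised in $\RCA$. First I would define a parallel reduction relation $\parred$ on \farc s, which allows simultaneous contraction of a set of (non-overlapping as well as nested) redexes in a single step: the clauses are the obvious ones, with $t \parred t$ for atoms, $t\,s \parred t'\,s'$ whenever $t\parred t'$ and $s\parred s'$, and for each reduction rule from Figure~\ref{fig:reduction-rules} a clause that contracts the head redex while reducing the immediate subterms in parallel (e.g.\ $\cond\, s\, t\, \vec x\, \succ y \parred t'\, \vec x'\, y'$ whenever $s\parred s'$, $t\parred t'$, $\vec x \parred \vec x'$, $y\parred y'$, and similarly for $\rec$, $\cut$, $\leftimp$, etc.). One checks the sandwiching $\reduces \subseteq \parred \subseteq \reduces^*$, so that $\parred^*$ and $\reduces^*$ have the same transitive closure; hence it suffices to prove the \emph{diamond property} for $\parred$, from which confluence of $\reduces^*$ follows by the usual tiling argument (a straightforward $\Sigma^0_1$ induction on the dimensions of the tiling grid, available in $\RCA$).

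The heart of the argument is the diamond property: if $t_0 \parder t \parred t_1$ then there is $t'$ with $t_0 \parred t' \parder t_1$. The cleanest route is to define, for each \farc\ $t$, its \emph{complete development} $t^\star$ (the result of contracting, in parallel, \emph{all} redexes present in $t$), and prove the \emph{triangle lemma}: if $t \parred u$ then $u \parred t^\star$. Since $\parred$ is reflexive, $t\parred t^\star$, and the triangle lemma immediately gives the diamond property with $t' = t^\star$. The triangle lemma itself is proved by induction on the structure of $t$ — but here is the first place where the non-wellfounded setting bites: $t$ is not wellfounded as a tree, so we cannot induct on it directly. This is exactly what Proposition~\ref{prop:red-seq-cont} (and Corollary~\ref{cor:pres-fin-props-by-rtc-reduction}) is for: a \farc\ has only \emph{finitely many} redexes and variable occurrences, provably in $\RCA$, and $t^\star$ is obtained by a single simultaneous contraction of that finite redex set. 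So I would induct instead on the (finite) number of redexes of $t$, or equivalently on the finite `skeleton' term $r(\vec x)$ such that $t = r(s_{v_1},\dots,s_{v_n})$ with the $s_{v_i}$ redex-free, given by Proposition~\ref{prop:red-seq-cont}. Each $\parred$-step acts within this finite skeleton (a redex can only `descend' finitely, creating finitely many new redexes, again by Corollary~\ref{cor:pres-fin-props-by-reduction}), so the induction is genuinely finitary and the usual case analysis on the shape of the contracted redex goes through. Care is needed with the substitution clauses: one needs the standard substitution lemma, that $t\parred t'$ and $\vec s \parred \vec s'$ imply $t[\vec s/\vec x]\parred t'[\vec s'/\vec x']$, which is again proved by induction on the finite skeleton of $t$.

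The main obstacle, then, is \emph{not} the combinatorics of the diamond property — those are routine once parallel reduction is set up — but rather ensuring that every inductive argument is carried out on finite data so that it is formalisable in $\RCA$. Concretely: (i) coterms and reductions must be coded by numbers, which is fine for \farc s by the discussion of Section~\ref{sect:form-red-seq}; (ii) the relation $\parred$, defined co-/inductively in general, must be shown to be $\Delta^0_1$ on \farc s — this follows since a $\parred$-step is determined by a finite choice of redex positions together with recursion on the finite skeleton, and equality of \farc s is $\Delta^0_1$ by Corollary~\ref{cor:eq-reg-coterms-rca}; (iii) the tiling argument converting the diamond property into full confluence is an induction on a finite $m\times n$ grid and hence a $\Sigma^0_1$-induction, available in $\RCA$. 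Assembling these, one obtains: $t_0 \unreducess t \reduces^* t_1$ implies $t_0\parred^* t$-chain meeting $t_1\parred^*$-chain at a common $t'$, and since $\parred^* \subseteq \reduces^*$ and each intermediate term is a \farc\ (Corollary~\ref{cor:pres-fin-props-by-rtc-reduction}), we get $t_0 \reduces^* t' \unreducess t_1$ with $t':\sigma$, as required.
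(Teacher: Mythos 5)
Your overall strategy coincides with the paper's up to the point where the real work happens: both set up a parallel reduction $\parred$ on \farc s, establish the sandwiching $\reduces\ \subseteq\ \parred\ \subseteq\ \reduces^*$, reduce Church--Rosser to a diamond-type property for $\parred$, and then tile. Where you genuinely diverge is in how the diamond property is obtained. You propose Takahashi-style \emph{complete developments}: define $t^\star$ by simultaneously contracting the finitely many redexes of $t$ (using the finite-skeleton decomposition of Proposition~\ref{prop:red-seq-cont}) and prove the triangle lemma $t \parred u \implies u \parred t^\star$ by induction on that finite skeleton. The paper deliberately avoids this route (it notes that Takahashi's method ``could potentially be adapted, since we are working with \farc s which have only finitely many redexes'') and instead proves the diamond property by a $\Sigma^0_1$-induction on the sizes of the \emph{derivations} witnessing $s \parred t_0$ and $s \parred t_1$ (Lemma~\ref{lem:dia-prop-parred}), with a small auxiliary head-decomposition lemma and the substitution lemma. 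Your route buys the stronger triangle property and a single structural induction over the skeleton, at the cost of having to define $t^\star$, justify its well-definedness on non-wellfounded coterms, and arrange the decomposition so that the hanging sub-coterms are redex-free (which Proposition~\ref{prop:red-seq-cont} does not state but which the finite-redex property makes available); the paper's route needs no development machinery at all, only finite derivations as induction measures, which is why it is particularly painless to formalise in $\RCA$.

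There is, however, one concrete point at which your proposal as written would break: you take reflexivity of $\parred$ only ``for atoms''. In the non-wellfounded setting this is not enough. A \farc\ contains coderivations, which are infinite trees, so there is no \emph{finite} derivation of $t \parred t$ for a redex-free coderivation $t$ from atom-reflexivity and the application clause. Consequently even the inclusion $\reduces\ \subseteq\ \parred$ fails: to simulate a single $\reduces$-step, say $\cut\,s\,t\,\vec x \reduces t\,\vec x\,(s\,\vec x)$ occurring inside a \farc, your head-contraction clause needs $s \parred s$ and $t \parred t$ for the coderivation arguments $s,t$, and these are underivable. The paper flags exactly this: reflexivity must be postulated for \emph{arbitrary} \farc s, not just variables and constants, ``since we cannot finitely derive the former from the latter''. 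The fix is easy and you half-anticipate it yourself when you later describe a $\parred$-step as ``a finite choice of redex positions together with recursion on the finite skeleton'' (equivalently: add $t \parred t$ for every \farc\ as a primitive clause, as the paper does); but without that correction the sandwiching, the reflexivity needed for $t \parred t^\star$, and the tiling all collapse, so it is a gap you must close explicitly rather than leave to ``the obvious clauses''.
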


To some extent, we follow a standard approach to proving this result.
In particular we aim to find a relation $\vartriangleright$ such that:
\begin{enumerate}
    \item $s\reduces t $ $\implies$ $s\vartriangleright t$ $\implies$ $s\reduces^*t$ ; and,
    \item $\vartriangleright$ satisfies the `diamond' property: if $t_0 \vartriangleleft t \vartriangleright t_1$ then there is $t'$ such that $t_0 \vartriangleright t' \vartriangleleft t_1$. 
\end{enumerate} 
The first property ensures that the reflexive transitive closure of $\reduces $ and $\vartriangleright$ coincide, i.e. $\reduces^* = \vartriangleright^*$. The second property then ensures confluence of $\reduces^*$

Since coterms are infinite (and, moreover, non-wellfounded), we must carry out our argument without appeal to induction on \emph{term structure}, ruling out standard arguments due to Tait and Martin-L\"of (cf., e.g., \cite{HindleySeldin:combinators}). 
Approaches of Takahashi in \cite{takahashi95:par-red-cr-via-complete-developments} and others that rely on \emph{complete developments} could potentially be adapted, since we are working with \farc s which have only finitely many redexes.
However, instead, we perform
an argument by induction on reduction \emph{length}, as in, e.g., \cite{pfenning92:proof-of-cr-in-lf}, which also seems to require less machinery from rewriting theory.

\begin{definition}
	[Parallel reduction]
	We define the relation $\parred$ on \farc s as follows:
	\begin{enumerate}
		\item\label{item:parred-refl} $t\parred t$ for any \farc\ $t$. 
		\item\label{item:parred-nest} For a reduction step $\mathsf r \, \vec t \reduces r(\vec t)$, if each $t_i \parred t_i' $ then we have $\mathsf r\, \vec t \parred r(\vec t' )$.
		\item\label{item:parred-nest-succ} For a reduction step $\mathsf r\, \vec t\, \succ s\, \reduces\, r(\vec t,s)$ (i.e.\ a $\rec$ or $\cond$ successor step), if each $t_i \parred t_i' $ and $s \parred s'$ then we have $\mathsf r\, \vec t\, \succ s\, \parred\, r(\vec t',s')$.
		\item\label{item:parred-par} If $s\, \parred s' $ and $t \parred t' $ then $s\,t \parred s'\, t'$.
	\end{enumerate}
\end{definition}

Note that we really do seem to require clause \eqref{item:parred-refl}, $t \parred t$, for arbitrary \farc s $t$, not just variables and constants, since we cannot finitely derive the former from the latter.

\begin{proposition}
[$\RCA$]
\label{prop:red-in-parred-in-rtcred}
$s\reduces t\ \implies \ s\parred t\ \implies \ s\reduces^* t $.
\end{proposition}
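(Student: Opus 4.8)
The plan is to prove the two implications of the proposition separately, in each case by an induction that avoids recursion on the (infinite, non-wellfounded) structure of the coterms involved. For $s\reduces t\implies s\parred t$ I would induct on the reduction context surrounding the contracted redex; for $s\parred t\implies s\reduces^* t$ I would induct on a derivation witnessing $s\parred t$. Both of these are inductions over finite data, so they will be available in $\RCA$. The only point requiring a little care is the observation --- already implicit in the discussion of \farc s preceding Theorem~\ref{thm:cr} --- that every redex occurring in a \farc\ sits beneath a \emph{finite} application context: coderivations themselves are closed and redex-free, and each application operation contributes at most one new redex, so the path from the root of a \farc\ down to any redex passes through only finitely many applications (cf.\ Proposition~\ref{prop:red-seq-cont} and Corollary~\ref{cor:pres-fin-props-by-rtc-reduction}).

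For the first implication, given $s\reduces t$ I would write $s=C[\rho]$ and $t=C[\rho']$, where $\rho$ is the contracted redex, $\rho'$ its contractum and $C$ the surrounding application context, whose depth $|C|$ is bounded primitive-recursively by the number of application nodes in $s$. The argument is then an induction on $|C|$. In the base case $s=\rho$ is itself a redex, of the shape $\mathsf r\,\vec u$ or $\mathsf r\,\vec u\,\succ w$ (a $\rec$ or $\cond$ successor step); instantiating clause~\ref{item:parred-nest} (resp.\ clause~\ref{item:parred-nest-succ}) and using clause~\ref{item:parred-refl} to supply $u_i\parred u_i$ (and $w\parred w$) gives $s\parred t$ directly. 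In the inductive step $C$ factors as $(C'[\,\cdot\,])\,p$ or $p\,(C'[\,\cdot\,])$ for a strictly shorter context $C'$ and a \farc\ $p$; by the inductive hypothesis $C'[\rho]\parred C'[\rho']$, and a single application of clause~\ref{item:parred-par}, together with $p\parred p$, closes the case.

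For the second implication I would induct on the height of a derivation of $s\parred t$, splitting on its last clause. Clause~\ref{item:parred-refl} is immediate, since then $t=s$. For clauses~\ref{item:parred-nest} and~\ref{item:parred-nest-succ}, $s$ is a redex $\mathsf r\,\vec t$ (resp.\ $\mathsf r\,\vec t\,\succ u$) whose one-step contractum is $r(\vec t)$ (resp.\ $r(\vec t,u)$), and its immediate arguments parallel-reduce; the inductive hypothesis gives $t_i\reduces^* t_i'$ (and $u\reduces^* u'$), and then, reducing the finitely many occurrences of each argument in the finite term $r$ one at a time and using closure of $\reduces$ under contexts, $r(\vec t)\reduces^* r(\vec t')$ (resp.\ $r(\vec t,u)\reduces^* r(\vec t',u')$); composing with the initial reduction step yields $s\reduces^* t$. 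Clause~\ref{item:parred-par} is handled by reducing each factor of the application separately, via the inductive hypothesis and closure under contexts.

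The hard part, such as it is, is not mathematical but metamathematical: making explicit in $\RCA$ that a one-step reduction of a \farc\ is witnessed by a concrete finite object (the context $C$ together with the occurrence of the redex), so that the first induction is an instance of $\Sigma^0_1$-induction, and that a $\parred$-derivation is a finite labelled tree, so the second is too. These are routine given the closure properties of \farc s under reduction already obtained. Note that neither confluence nor normalisation is used here: this proposition is precisely the lemma ensuring $\reduces^*=\parred^*$, which the forthcoming diamond property for $\parred$ will then exploit to derive Theorem~\ref{thm:cr}.
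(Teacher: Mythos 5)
Your proposal is correct, and both implications go through in $\RCA$ as you describe; the difference from the paper is organisational rather than substantive. The paper does not argue the first implication from scratch: it proves the Substitution Lemma (Lemma~\ref{lem:substitution-finitary}) first, since it is needed later for the diamond property, and then observes that instantiating it with $s=s'$ yields context-closure of $\parred$, from which the proposition ``follows immediately'' (the $\parred\subseteq\reduces^*$ direction is left implicit). You instead prove the needed context-closure directly, by induction on the depth of the finite application context surrounding the contracted redex --- which is essentially the same finite-data induction that appears as the subinduction (on the maximal depth of an $x$-occurrence) inside the paper's proof of the Substitution Lemma --- and you spell out the second implication by induction on the $\parred$-derivation, using closure of $\reduces$ under contexts and the finiteness of the contractum pattern $r(\vec x)$. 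What your route buys is self-containedness and an explicit treatment of the direction the paper elides; what it costs is a small duplication of work, since the Substitution Lemma has to be proved anyway. One cosmetic caveat: your bound ``by the number of application nodes in $s$'' should refer to the finitely many application nodes of the \farc\ presentation (those sitting above the coderivation, constant and variable components), not to all application nodes of the underlying infinite coterm; with the paper's stated fact that coderivations are closed and redex-free, that is exactly what makes the context finite and the induction an instance of $\Sigma^0_1$-induction.
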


The proof of this result is not difficult, but before giving an argument let us point out a particular consequence that we will need, obtained by $\Sigma^0_1$-induction on the length of reduction sequences:
\begin{corollary}
[$\RCA$]
\label{cor:parred-refltrans-red-refltrans}
$s\reduces^* t \ \iff \ s\parred^* t$
\end{corollary}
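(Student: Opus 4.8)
The plan is to prove both implications by induction on the length of the relevant finite rewrite sequence, using Proposition~\ref{prop:red-in-parred-in-rtcred} at each step together with the fact that $\reduces^*$ and $\parred^*$ are by definition reflexive and transitive.

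For the forward implication, suppose $s\reduces^* t$, witnessed by a finite sequence $s = s_0 \reduces s_1 \reduces \cdots \reduces s_n = t$ of \farc s; this sequence is coded by a natural number, since each $s_i$ is regular (hence codable) and \farc s are closed under reduction by Proposition~\ref{prop:red-seq-cont}. I would show, by induction on $k\leq n$, that $s_0 \parred^* s_k$: the case $k=0$ is reflexivity of $\parred^*$, and for the step, $s_0 \parred^* s_k$ by the induction hypothesis and $s_k \parred s_{k+1}$ by the first implication of Proposition~\ref{prop:red-in-parred-in-rtcred}, so $s_0 \parred^* s_{k+1}$ by transitivity; taking $k=n$ gives $s\parred^* t$. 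The induction formula ``$s_0 \parred^* s_k$'' asserts the existence of a coded finite $\parred$-sequence, hence is $\Sigma^0_1$, so this is an instance of $\Sigma^0_1$-induction and is available in $\RCA$. The backward implication is entirely symmetric: given $s = s_0 \parred s_1 \parred \cdots \parred s_n = t$, induction on $k\leq n$ shows $s_0 \reduces^* s_k$, using reflexivity of $\reduces^*$ at $k=0$ and, at the step, the second implication of Proposition~\ref{prop:red-in-parred-in-rtcred} ($s_k \reduces^* s_{k+1}$) together with transitivity of $\reduces^*$.

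There is no real obstacle here; the only points to verify are bookkeeping ones. First, that the \farc s appearing along these sequences remain codable as natural numbers, which is exactly Corollary~\ref{cor:pres-fin-props-by-rtc-reduction} (closure of \farc s under reduction sequences). Second, that the relevant induction instances are genuinely $\Sigma^0_1$ rather than of higher complexity: since equality of regular coterms is $\Delta^0_1$ by Corollary~\ref{cor:eq-reg-coterms-rca}, both $\reduces$ and $\parred$ are decidable on \farc s, so ``there is a witnessing sequence of length $m$'' is $\Sigma^0_1$ uniformly in $m$, and hence $\RCA$'s $\Sigma^0_1$-induction schema suffices for both directions.
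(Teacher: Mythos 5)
Your proof is correct and follows the same route as the paper: the paper obtains this corollary from Proposition~\ref{prop:red-in-parred-in-rtcred} precisely by $\Sigma^0_1$-induction on the length of the witnessing reduction (resp.\ parallel-reduction) sequence, which is exactly your argument. The bookkeeping remarks about codability and the $\Sigma^0_1$ complexity of the induction formula are fine (indeed, it suffices that both relations are $\Sigma^0_1$, so the decidability claim, while plausible, is not even needed).
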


Even though it is not necessary to prove the proposition above, we shall first prove the following useful lemma since we will use it later:

\begin{lemma}
	[Substitution, $\RCA$]
	\label{lem:substitution-finitary}
Suppose $t\parred t'$. If $s\parred s'$ then $s[t/x] \parred s'[t'/x]$, for a variable $x$ of the same type as $t$ and $t'$.
\end{lemma}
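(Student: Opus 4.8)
The plan is to prove the statement by induction on the derivation of $s \parred s'$, following the four clauses in the definition of $\parred$. The base type issue that usually causes trouble here — needing induction on term structure — is sidestepped because clause~\eqref{item:parred-refl} gives $t \parred t$ for arbitrary \farc s, so we never need to decompose an infinite coterm; every generating step of $\parred$ involves only finitely much data. Throughout I would fix a variable $x$ of the same type as $t, t'$, and note first the elementary fact that substitution commutes with the syntactic operations in the obvious way: $(\mathsf r\, \vec u)[t/x] = \mathsf r\, (\vec u[t/x])$, $(u_1 u_2)[t/x] = (u_1[t/x])(u_2[t/x])$, and for a contractum pattern $r(\vec u)$ built from the premiss inputs and the `comb', $r(\vec u)[t/x] = r(\vec u[t/x])$. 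These are all provable in $\RCA$ since they concern only the local graph structure of regular coterms.

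The induction proceeds as follows. If $s \parred s'$ is by clause~\eqref{item:parred-refl}, so $s' = s$, then I must show $s[t/x] \parred s[t'/x]$; this is a sub-claim that itself needs an argument, since $s[t/x]$ and $s[t'/x]$ differ at the (possibly many, but for \farc s only finitely many) occurrences of $x$ in $s$, and at each such occurrence we replace $t$ by $t'$ using $t \parred t'$. I would handle this by a secondary induction on the number of free occurrences of $x$ in $s$ (which is finite for a \farc, provably in $\RCA$), peeling off one occurrence at a time and using clauses~\eqref{item:parred-refl} and~\eqref{item:parred-par} to propagate $t \parred t'$ up through the application structure above that occurrence — or, more cleanly, argue directly that $s[t/x] \parred s[t'/x]$ by exhibiting the required $\parred$-derivation, built by grafting copies of the derivation of $t \parred t'$ at the leaves where $x$ sat, with clause~\eqref{item:parred-par} gluing the (finite) application spine back together. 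If $s \parred s'$ is by clause~\eqref{item:parred-par}, with $s = s_1 s_2$, $s' = s_1' s_2'$, $s_1 \parred s_1'$, $s_2 \parred s_2'$, then by the induction hypothesis $s_i[t/x] \parred s_i'[t'/x]$ for $i = 1,2$, and clause~\eqref{item:parred-par} gives $s_1[t/x]\, s_2[t/x] \parred s_1'[t'/x]\, s_2'[t'/x]$, i.e.\ $s[t/x] \parred s'[t'/x]$ using the commutation of substitution with application.

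For clause~\eqref{item:parred-nest}: $s = \mathsf r\, \vec u$ and $s' = r(\vec u')$ with $u_i \parred u_i'$. Since substitution commutes with $\mathsf r$ and with the contractum pattern $r(-)$, and since the reduction step $\mathsf r\, \vec v \reduces r(\vec v)$ is closed under substitution (it is a rule scheme with term metavariables), we have $s[t/x] = \mathsf r\, (\vec u[t/x])$ and $s'[t'/x] = r(\vec u'[t'/x])$; the induction hypothesis gives $u_i[t/x] \parred u_i'[t'/x]$, and clause~\eqref{item:parred-nest} applies to yield $s[t/x] \parred s'[t'/x]$. Clause~\eqref{item:parred-nest-succ} (the $\rec$/$\cond$ successor case) is identical except that the constructor $\succ$ on the distinguished numeric argument also commutes with substitution (as $x$ has a type and $\succ$ acts on $\nat$; if $x$ has type $\nat$ and the numeric argument is literally $x$, one still has $(\succ x)[t/x] = \succ t$ and $\parred$ applies since $t \parred t'$ by hypothesis), so the same bookkeeping goes through.

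The only genuinely delicate point — and the one I would expect to be the main obstacle — is the reflexivity case~\eqref{item:parred-refl} together with its interaction with the formalisation: one must verify in $\RCA$ that a \farc\ has only finitely many occurrences of the variable $x$ (this follows from the remarks after Proposition~\ref{prop:red-seq-cont}, each application adding at most one variable occurrence), so that the grafting construction producing a $\parred$-derivation of $s[t/x] \parred s[t'/x]$ is a finite object and the secondary induction is a legitimate $\Sigma^0_1$-induction. Everything else is routine structural bookkeeping, and since the generating relation $\parred$ on \farc s is arithmetically definable and each derivation is finite, the whole argument is formalisable in $\RCA$.
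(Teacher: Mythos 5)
Your proposal is correct and follows essentially the same route as the paper: an induction on the derivation of $s\parred s'$, with clauses \eqref{item:parred-nest}, \eqref{item:parred-nest-succ} and \eqref{item:parred-par} handled exactly as you describe, and the reflexivity case \eqref{item:parred-refl} handled by descending to the finitely many $x$-occurrences — the paper formalises precisely your ``grafting'' alternative as a subinduction on the maximal depth of an $x$-occurrence in $s$, reassembling with clause \eqref{item:parred-par}. Do discard the ``peel off one occurrence at a time'' variant, though: replacing occurrences one by one only yields an iterate of $\parred$ (which is not transitive, so this gives $s[t/x]\parred^* s[t'/x]$ rather than a single parallel step), whereas the depth/grafting argument yields the required single $\parred$-step.
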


In what follows, we may write $d:s \reduces^* t$ or $d:s\parred t$ or even $d:s\parred^* t$ to indicate that $d$ is a (finite) derivation witnessing the respective relation.

\begin{proof}
We show $d: s\parred s' \, \implies \,  s[t/x] \parred s'[t'/x]$ by $\Sigma^0_1$-induction on the structure of the derivation $d:s \parred s'$.

If $d$ is obtained by just \eqref{item:parred-refl}, i.e.\ $s' = s$, then we show $s[t/x] \parred s[t'/x]$ by a subinduction on the maximum depth of an $x$-occurrence in $s$. 
(Recall that \farc s have only finitely many variable occurrences.)
\begin{itemize}
\item if $s$ is just the variable $x$ then $s[t/x] \parred s'[t'/x]$ by assumption of $t\parred t'$.
\item if $s$ is just a variable $y \neq x$ then $s[t/x] = y \parred y = s'[t'/x]$ by \eqref{item:parred-refl}.
\item if $s$ is just a constant symbol $\mathsf r$ then $s[t/x] = \mathsf r \parred \mathsf r = s'[t'/x]$ by \eqref{item:parred-refl}.
\item if $s = s_0s_1$ then all the (finitely many) occurrences of $x$ in $s_0$ and $s_1$ have lower depth, so we have $s_0[t/x]\parred s_0[t'/x]$ and $s_1[t/x] \parred s_1[t'/x]$ by the inductive hypothesis, whence $s[t/x] \parred s[t'/x]$ by \eqref{item:parred-par}.
\end{itemize}

If $d$ ends by \eqref{item:parred-nest} we can write $s =\, \mathsf r\, \vec s$ and $s'= \, r(\vec s')$ s.t.\ $\mathsf r\, \vec s \reduces r(\vec s)$ is a $\mathsf r$-reduction and each $s_i \parred s_i'$.
By the inductive hypothesis we have that $s_i[t/x]\parred s_i' [t'/x]$. 
We also have that $ \mathsf r\, \vec s[t/x] \reduces r(\vec s[t/x]) $ is an $\mathsf r$-reduction, where $r$ is variable-free (by inspection of the reduction rules). 
Thus we have $s[t/x] = \, \mathsf r\, \vec s[t/x]\,  \parred \, r( \vec s'[t/x])\, = s'[t/x]$ by \eqref{item:parred-nest}.

(The argument when $d$ ends by \eqref{item:parred-nest-succ} is similar to that of \eqref{item:parred-nest}.)

If $d$ ends by \eqref{item:parred-par} we can write $s = s_0s_1$ and $s'=s_0's_1'$ s.t.\ $s_0 \parred s_0'$ and $s_1\parred s_1'$. 
By the inductive hypothesis we have $s_0[t/x] \parred s_0'[t/x]$ and $s_1[t/x] \parred s_1'[t/x]$, whence $s[t/x] \parred s'[t/x]$ by \eqref{item:parred-par}.
\end{proof}

Notice that Proposition~\ref{prop:red-in-parred-in-rtcred} now follows immediately, by simply instantiating the Lemma above with $s = s'$ to deduce context-closure of $\parred$.

We can now turn to proving the required `diamond property' for $\parred$:

\begin{lemma}
	[Diamond property of $\parred$, $\RCA$]
	\label{lem:dia-prop-parred}
	Suppose $t_0 \parder s \parred t_1$. Then there is some $u $ with $t_0 \parred u \parder t_1$.
\end{lemma}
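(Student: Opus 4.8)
The plan is to prove this by induction on (the size of) a derivation $d\colon s\parred t_0$ witnessing the left reduction — formally a $\Sigma^0_1$-induction on $|d|$ inside $\RCA$ — with a nested case analysis on the last rule applied in a derivation $e\colon s\parred t_1$ of the right reduction. Throughout, I will use freely that every \farc\ in sight has only finitely many redexes and variable occurrences, by Corollary~\ref{cor:pres-fin-props-by-rtc-reduction} (and its single-step precursor), which is what keeps all the syntactic bookkeeping below first-order and hence formalisable in $\RCA$.

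Two auxiliary facts will be used repeatedly. First, a \emph{spine lemma}: if $w\parred w'$ and $w=\mathsf r\,\vec b$ is an application whose head is a combinator applied to strictly fewer arguments than its reduction rule requires — so that neither $w$ nor any $\succ$-variant of it is a redex — then $w'=\mathsf r\,\vec{b'}$ with each $b_i\parred b_i'$, witnessed by sub-derivations of the given one. This is immediate by induction on the derivation of $w\parred w'$, since the only clauses that can apply to a non-redex application are \eqref{item:parred-refl} and \eqref{item:parred-par}. Second, a \emph{congruence fact}: for any fixed finite, variable-free term $r(\vec x)$, if $u_i\parred u_i'$ for each $i$, then $r(\vec u)\parred r(\vec{u'})$; this follows from the substitution Lemma~\ref{lem:substitution-finitary} applied with the reflexive reduction $r\parred r$, iterated over the variables of $r$ (or directly by a short induction on the finite term $r$, using \eqref{item:parred-par} and \eqref{item:parred-refl}).

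The routine cases are then as follows. If $d$ is an instance of \eqref{item:parred-refl} we may take $u:=t_1$; symmetrically if $e$ ends with \eqref{item:parred-refl}. If $d$ and $e$ both end with \eqref{item:parred-par}, write $s=s_0s_1$, $t_0=t_0^0\,t_0^1$, $t_1=t_1^0\,t_1^1$ with $s_j\parred t_0^j$ and $s_j\parred t_1^j$; the induction hypothesis on the sub-derivations of $d$ yields $u_j$ with $t_0^j\parred u_j\parder t_1^j$, so $u:=u_0u_1$ works by \eqref{item:parred-par}. If $d$ and $e$ both contract the (necessarily unique) head redex of $s$, via \eqref{item:parred-nest}, or both via \eqref{item:parred-nest-succ}, then $s=\mathsf r\,\vec a$ (resp.\ $\mathsf r\,\vec a\,\succ b$), $t_0=r(\vec{a'})$ and $t_1=r(\vec{\hat a})$ for the common contractum-context $r$, with $a_i\parred a_i'$, $a_i\parred\hat a_i$ (and $b\parred b'$, $b\parred\hat b$); the induction hypothesis on the argument sub-derivations of $d$ gives $u_i$ (and $u_b$) with $a_i'\parred u_i\parder\hat a_i$, and the congruence fact then gives $t_0\parred r(\vec u)\parder t_1$ (resp.\ with $u_b$ in the last slot).

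The main obstacle — and the only genuinely delicate case — is the \emph{overlap}, where one derivation, say $d$, contracts the head redex of $s$ via \eqref{item:parred-nest} or \eqref{item:parred-nest-succ}, while the other, $e$, merely decomposes $s=s_0s_1$ via \eqref{item:parred-par} (and, symmetrically, the roles exchanged, which is a sub-case of ``$d$ ends with \eqref{item:parred-par}'' above). Here the redex of $s$ has the form $\mathsf r\,a_1\cdots a_n$ or $\mathsf r\,a_1\cdots a_n\,\succ b$, so $s_0$ is the non-redex application $\mathsf r\,a_1\cdots a_{n-1}$ (resp.\ $\mathsf r\,a_1\cdots a_n$) and $s_1=a_n$ (resp.\ $s_1=\succ b$); applying the spine lemma to the $\parred$-derivation of $s_0$ recovers $a_i\parred\hat a_i$ — and, since $\succ$ heads no redex, $b\parred\hat b$ — with $t_1=\mathsf r\,\vec{\hat a}$ (resp.\ $\mathsf r\,\vec{\hat a}\,\succ\hat b$) being \emph{again} a redex of exactly the same shape. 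The induction hypothesis, applied to the argument sub-derivations of $d$ against these recovered reductions, yields $u_i$ (and $u_b$) with $a_i'\parred u_i\parder\hat a_i$ (and $b'\parred u_b\parder\hat b$); taking $u:=r(\vec u)$ (resp.\ $r(\vec u,u_b)$), we get $t_0\parred u$ by the congruence fact and $t_1\parred u$ directly by clause \eqref{item:parred-nest} (resp.\ \eqref{item:parred-nest-succ}). Since every appeal to the induction hypothesis is to a sub-derivation of $d$ paired with an arbitrary companion reduction, the induction on $|d|$ is well-founded, and — all the data being finite by Corollary~\ref{cor:pres-fin-props-by-rtc-reduction} — the argument goes through in $\RCA$.
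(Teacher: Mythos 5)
Your proof is correct and follows essentially the same route as the paper: your ``spine lemma'' is Proposition~\ref{prop:head-par-red}, your ``congruence fact'' is the repeated application of Lemma~\ref{lem:substitution-finitary}, and the case analysis (reflexivity, both parallel, both head-contractions, and the head-vs-parallel overlap resolved via the spine lemma) matches the paper's. The only cosmetic difference is that you induct on $|d|$ with an arbitrary companion reduction, where the paper phrases it as a simultaneous induction on $\min(|d_0|,|d_1|)$; both are well-founded for the same reason, namely that the head decomposition yields strictly smaller derivations.
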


Before giving the proof, it will be useful to have the following intermediate result:

\begin{proposition}
[$\RCA$]
\label{prop:head-par-red}
Suppose $d: \, \mathsf r\, \vec s\, \parred \, t$, and there is no redex in $\mathsf r\, \vec s$ involving $\mathsf r$. 
There are some $\vec t$ s.t.\ $t = \mathsf r\, \vec t$ and, for each $i$, some $d_i: \, s_i \parred t_i$ for some $d_i<d$.
\end{proposition}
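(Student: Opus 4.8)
The plan is to argue by induction on a well-founded measure of the derivation $d$, proceeding by a case analysis on its last rule. The measure I will use orders finite parallel-reduction derivations lexicographically, first by their number of nodes and then by the size of their subject \farc\ (recall that \farc s are finite applications of regular coderivations to variables and constants, hence have a well-defined finite size); the second component is needed only to handle the reflexivity case. Throughout I write $\mathsf r\, \vec s = \mathsf r\, s_1 \cdots s_n$.

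First I would observe that, under the hypothesis that $\mathsf r\, \vec s$ contains no redex involving $\mathsf r$, the last rule of $d$ can be neither \eqref{item:parred-nest} nor \eqref{item:parred-nest-succ}: in either case the subject of $d$ would have to be, literally, the left-hand side of a reduction step whose head constant is $\mathsf r$, i.e.\ a redex involving $\mathsf r$, contradicting the hypothesis. So the last rule of $d$ is \eqref{item:parred-refl} or \eqref{item:parred-par}. If it is \eqref{item:parred-refl}, then $t = \mathsf r\, \vec s$, so I set $t_i := s_i$ and take $d_i$ to be the one-node derivation of $s_i \parred s_i$ by \eqref{item:parred-refl}; since $s_i$ is a proper sub-\farc\ of $\mathsf r\, \vec s$, this $d_i$ has strictly smaller measure than $d$. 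If the last rule is \eqref{item:parred-par}, then necessarily $n \geq 1$; we may write $\mathsf r\, \vec s = (\mathsf r\, s_1 \cdots s_{n-1})\, s_n$, and $d$ splits into strictly smaller subderivations $d' : \mathsf r\, s_1 \cdots s_{n-1} \parred t'$ and $d_n : s_n \parred t_n$ with $t = t'\, t_n$. Since $\mathsf r\, s_1 \cdots s_{n-1}$ still contains no redex involving $\mathsf r$ — any such redex would occur verbatim in $\mathsf r\, \vec s$ — the inductive hypothesis applies to $d'$ and yields $t' = \mathsf r\, t_1 \cdots t_{n-1}$ together with derivations $d_i : s_i \parred t_i$ with $d_i$ smaller than $d'$, hence than $d$, for $i < n$. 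Then $t = \mathsf r\, t_1 \cdots t_{n-1}\, t_n = \mathsf r\, \vec t$ is of the required form with all $d_i < d$. (The extreme case $n = 0$, i.e.\ $\mathsf r\, \vec s = \mathsf r$, is subsumed: only \eqref{item:parred-refl} can apply and the claim is then vacuous.)

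The only points requiring a little care are the choice of measure — the reflexivity step does not shrink the node count, which is precisely why the size of the subject \farc\ is folded in as a secondary component — and the exclusion of clauses \eqref{item:parred-nest} and \eqref{item:parred-nest-succ}, which rests on the simple observation that a redex occurring in any prefix $\mathsf r\, s_1 \cdots s_k$ of $\mathsf r\, \vec s$ is also a redex of $\mathsf r\, \vec s$; this same observation is what lets the no-head-redex hypothesis pass to the subderivation $d'$ in the \eqref{item:parred-par} case. As everything in sight concerns only finite data — \farc s and their finite parallel-reduction derivations — and the induction is carried out along a natural-number measure, the argument is straightforwardly formalisable in $\RCA$.
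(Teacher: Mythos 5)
Your proof is correct and takes essentially the same route as the paper's: clauses \eqref{item:parred-nest} and \eqref{item:parred-nest-succ} are excluded by the no-head-redex hypothesis, the outermost application is peeled off via clause \eqref{item:parred-par}, and one recurses on the prefix $\mathsf r\, s_1\cdots s_{n-1}$ — the paper merely phrases this as $\Sigma^0_1$-induction on the length of $\vec s$ rather than on a derivation measure. If anything you are slightly more careful than the paper, whose step case asserts that only clause \eqref{item:parred-par} can apply and thus tacitly elides the reflexivity case that your lexicographic tiebreaker (derivation size, then subject size) is designed to handle.
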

\begin{proof}
We proceed by $\Sigma^0_1$-induction on the length of $\vec s$.
If $\vec s$ is empty, then only clause \eqref{item:parred-refl} applies to $\mathsf r\, \vec s$, so we may set $\vec t$ empty too.

Suppose $\mathsf r\, \vec s\, s\, \parred t$, and there is no redex involving $\mathsf r$. Then only clause \eqref{item:parred-par} applies to $\mathsf r\, \vec s\, s$, so we have $t=t_0t_1$ and some $d_0 :\,  \mathsf r\, \vec s\, \parred t_0$ and $d_1: s \parred t_1$ with $d_0,d_1 < d$.
By inductive hypothesis there are $\vec t_0$ s.t.\ $t_0 \, =\,  \mathsf r\, \vec t_0$ and some $d_{0i} : s_i \parred t_{0i}$ with $d_{0i} < d_0$.
Thus we have $t\, = \, \mathsf r\, \vec t_0\, t_1$ and $d_{0i} : s_i \parred t_{0i}$ and $d_1: s \parred t_1$ with $d_{0i},d_1 < d$, as required.
\end{proof}

We are now ready to prove the diamond property for $\parred$:
\begin{proof}
[Proof of Lemma~\ref{lem:dia-prop-parred}]
	We proceed by simultaneous induction on the structure of the reductions $t \parred t_0$ and $t \parred t_1$.
	Formally we show,
	\[
	\exists s' .\, ((d_0 : s\parred t_0\ \text{ and } \ d_1:s\parred t_1) \implies (t_0 \parred s' \text{ and } t_1 \parred s' ))
	\]
	by $\Sigma^0_1$-induction on $\min(|d_0|,  |d_1|)$.
	
	\begin{itemize}
		\item If $t_0 = s$ (i.e.\ $d_0$ is just \eqref{item:parred-refl}) then we have $t_0 \parred t_1 \parder t_1$, by assumption and reflexivity, and we are done.
		\item Similarly if $t_1 = s$ (i.e.\ $d_1$ is just \eqref{item:parred-refl}), so we may henceforth assume that $t_0 \neq t \neq t_1$, and $d_0$ and $d_1$ conclude with one of the clauses \eqref{item:parred-nest},\eqref{item:parred-nest-succ} or \eqref{item:parred-par}.
		\item If $d_0$ and $d_1$ both end by clause \eqref{item:parred-nest}, then we have $s = \mathsf r\, \vec s$ and $t_0 = r(\vec s_0)$ and $t_1 = r(\vec s_1)$, for some reduction step $\mathsf r\, \vec s \, \reduces\, r(\vec s)$ and some $\vec s_0 , \vec s_1$ s.t.\ $s_i \parred s_{0i}$ and $ s_i \parred s_{1i}$.
		By the inductive hypothesis we have $\vec s'$ s.t.\ $s_{0i} \parred s_i'$ and $s_{1i} \parred s_i'$.
		Thus by repeatedly applying Lemma~\ref{lem:substitution-finitary} we have $t_0 = r(\vec s_0) \parred r(\vec s')$ and $t_1 = r(\vec s_1)\parred r(\vec s')$, so we may set $s' = r(\vec s')$.
		\item (similarly if $d_0$ and $d_1$ end with clause \eqref{item:parred-nest-succ}) 
		\item (it is not possible for one to end with clause \eqref{item:parred-nest} and the other to end by \eqref{item:parred-nest-succ}, since in all cases precisely one reduction step applies at the head.)
		\item If $d_0$ ends by clause \eqref{item:parred-nest} and $d_1$ ends by clause \eqref{item:parred-par}, then from $d_0$ we have $\mathsf r, r$ s.t.\ $s\, = \, \mathsf r\, \vec s\, u \, \reduces r(\vec s, u)$ and some $\vec s_0,u_0$ with $s_i \parred s_{0i}$ and $u \parred u_0$ with $t_0 = r(\vec s_0,u_0)$.
		From $d_1$ we further have some $t', u_1$ s.t.\ $t_1 = t'u_1$ and $\mathsf r\, \vec s\, \parred t'$ and $u \parred u_1$.
		By Proposition~\ref{prop:head-par-red} we have some $\vec s_1$ s.t.\ $t'= \mathsf r\, \vec s_1$ and smaller derivations of $s_i \parred s_{1i}$.
		By the inductive hypothesis we have $\vec s', u'$ s.t.\ $s_{0i} \parred s_i'\parder s_{1i}$ and $u_0 \parred u'\parder u_1$.
		Thus, by repeatedly applying Lemma~\ref{lem:substitution-finitary} we have that $t_0 = r(\vec s_0, u_0) \parred r(\vec s',u')$, and by \eqref{item:parred-nest} we have $t_1 = \mathsf r\, \vec s_1\, u_1\, \parred \, r(\vec s', u')$, so it suffices to set $t' = r(\vec s',u')$.
		\item (the case when $d_0$ ends by clause \eqref{item:parred-nest} and $d_1$ ends by clause \eqref{item:parred-par} is symmetric to the one above)
		\item If both $d_0$ and $d_1$ end by clause \eqref{item:parred-par}, then we have $s = s_0s_1$, $t_0 = r_0 r_1$, $t_1 = u_0u_1$ s.t.\ $r_0 \parder s_0 \parred u_0$ and $r_1 \parder s_1 \parred u_1$.
		By inductive hypothesis we have $s_0'$ and $s_1'$ s.t.\ $r_0\parred s_0' \parder u_0$ and $r_1 \parred s_1'\parder u_1$.
		Thus we have that $t_0 = r_0r_1 \parred s_0's_1'\parder u_0u_1$ by \eqref{item:parred-par}, so we may set $s'= s_0's_1'$.

	\end{itemize}
\end{proof}

\begin{proposition}
	[Weighted CR for $\parred$, $\RCA$]
	If $t_0 \parder^m t \parred^n t_1$ then there is some $t'$ with $t_0 \parred^n t' \parder^m t_1$.
\end{proposition}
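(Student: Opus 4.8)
The plan is to run the standard ``diamond implies Church--Rosser'' tiling argument, using the diamond property of $\parred$ (Lemma~\ref{lem:dia-prop-parred}) as the basic square-completion step and filling an $m\times n$ grid of parallel-reduction squares. Concretely, I would first isolate the $m=1$ case as a \emph{strip lemma}: if $t_0 \parder s \parred^n t_1$ then there is $t'$ with $t_0 \parred^n t' \parder t_1$, proved by $\Sigma^0_1$-induction on $n$. For $n=0$ we have $t_1 = s$ and take $t' = t_0$. For the step, split $s \parred^n t_1$ as $s \parred r \parred^{n-1} t_1$; apply Lemma~\ref{lem:dia-prop-parred} to $t_0 \parder s \parred r$ to get $u$ with $t_0 \parred u$ and $r \parred u$; then apply the induction hypothesis to $u \parder r \parred^{n-1} t_1$ to get $t'$ with $u \parred^{n-1} t'$ and $t_1 \parred t'$; finally concatenate $t_0 \parred u \parred^{n-1} t'$, which together with $t_1 \parred t'$ gives $t_0 \parred^n t' \parder t_1$.

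With the strip lemma in hand, the proposition follows by $\Sigma^0_1$-induction on $m$. For $m=0$ we have $t = t_0$ and take $t' = t_1$. For the step, split $t_0 \parder^m t$ as $t_0 \parder^{m-1} s \parder t$ (peeling off the first parallel-reduction step $t \parred s$); apply the strip lemma to $s \parder t \parred^n t_1$ to obtain $v$ with $s \parred^n v$ and $t_1 \parred v$; then apply the induction hypothesis to $t_0 \parder^{m-1} s \parred^n v$ to obtain $t'$ with $t_0 \parred^n t'$ and $v \parred^{m-1} t'$. Concatenating $t_1 \parred v \parred^{m-1} t'$ yields $t' \parder^m t_1$, and together with $t_0 \parred^n t'$ this is exactly $t_0 \parred^n t' \parder^m t_1$, as required.

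As for the metamathematics: all the data involved --- the \farc s, the finite derivations witnessing single $\parred$ steps, and the finite chains witnessing $\parred^k$ --- are coded by natural numbers, and by Corollary~\ref{cor:eq-reg-coterms-rca} together with the closure of \farc s under parallel reduction (Corollary~\ref{cor:pres-fin-props-by-rtc-reduction}) the relations $\parred$ and $\parred^k$ are $\Delta^0_1$ on \farc s; hence the two inductions above are genuinely on natural-number parameters with $\Sigma^0_1$ induction formulas, and every intermediate coterm produced remains a \farc. The only real care required is bookkeeping the directions of $\parred$ versus $\parder$ and the step counts in the concatenations; I expect this routine index-chasing, rather than anything conceptual, to be the main nuisance, since all the substantive content has already been packaged into Lemma~\ref{lem:dia-prop-parred}.
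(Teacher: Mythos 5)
Your proof is correct, and it is essentially the same tiling argument as the paper's, but with a cleaner decomposition. The paper proves the statement by a single $\Sigma^0_1$-induction on $m$: it peels off the step $t_0 \parder t_0'$, applies the inductive hypothesis to the remaining $m\times n$ block, and then closes the final $1\times n$ strip by a second appeal to ``the inductive hypothesis.'' You instead isolate that strip as a separate lemma, proved by induction on $n$ directly from the diamond property (Lemma~\ref{lem:dia-prop-parred}), and then induct on $m$. This buys you something real: the paper's second appeal is to the statement at $m=1$, which is not literally available at the first inductive step (proving the case $m+1=1$ from $m=0$), and the diamond lemma is never explicitly invoked in that proof --- your strip lemma is exactly the missing ingredient that makes the step rigorous. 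Your metamathematical remarks are also in order: the witnessing derivations and reduction chains are finite objects coded by numbers, the relevant relations are $\Delta^0_1$ on the coterms in question, and closure under (parallel) reduction keeps every intermediate coterm in the class, so both inductions are bona fide $\Sigma^0_1$-inductions in $\RCA$, just as in the paper.
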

\begin{proof}
We show,
	\[
	 (d_0 : t \parred^m  t_0  \text{ and } d_1 : t\parred^n t_1) \, \implies \, \exists t'  ( t_0 \parred^n t' \, \text{ and } \, d_1': t_1 \parred^m t' )
	\]
	by $\Sigma^0_1$-induction on $ m = |d_0|$.
	\begin{itemize}
		\item If $m = 0$ then $t_0 =t$ and we may simply set $t' = t_1$, whence we have that $t_0 = t \parred^n t'$ by assumption and $t' = t_1\parder^0 t_1$.
		\item Now, suppose $t_0 \parder t_0' \parder^m t \parred^n t_1$. By the inductive hypothesis we have that 
		$$t_0' \parred^n t'' \parder^m t_1$$ for some $t_1'$. This means in particular that $t_0 \parder t_0' \parred^n t_1'$, so we have again from the inductive hypothesis a $t' $ with 
		$$t_0 \parred^n t' \parder t_1'$$
		Putting these together we indeed have $t_0 \parred^n t' \parder^{m+1} t_1$, as required.\qedhere
	\end{itemize}
\end{proof}

The following corollary is immediate:

\begin{corollary}
	[CR for $\parred$, $\RCA$]
	\label{cor:cr-parred}
		If $t_0 \parder^* t \parred^* t_1$ then there is $t'$ s.t.\ $t_0 \parred^* t' \parder^* t_1$.
\end{corollary}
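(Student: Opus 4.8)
The plan is to derive this directly from the preceding Proposition (Weighted CR for $\parred$), since the reflexive transitive closures $\parder^*$ and $\parred^*$ are by definition existential over finite iteration counts. First I would unpack the hypothesis $t_0 \parder^* t \parred^* t_1$: by definition of the reflexive transitive closure there are natural numbers $m, n$ and finite parallel-reduction derivations witnessing $t\parred^m t_0$ and $t \parred^n t_1$, i.e.\ $t_0 \parder^m t \parred^n t_1$. These witnesses are read off directly from the given finite derivations, so no search or induction is needed at this stage; in particular the extraction is available in $\RCA$ (indeed it is primitive recursive in the codes of the reduction sequences).

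Next I would invoke Weighted CR for $\parred$ with these $m, n$: it yields some $t'$ with $t_0 \parred^n t'$ and $t_1 \parred^m t'$, i.e.\ $t_0 \parred^n t' \parder^m t_1$. Finally I would forget the exact iteration counts: from $t_0 \parred^n t'$ we get $t_0 \parred^* t'$, and from $t_1 \parred^m t'$ we get $t_1 \parred^* t'$, which is exactly $t_0 \parred^* t' \parder^* t_1$, as required.

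There is essentially no obstacle in this last step; all the genuine combinatorics — the local diamond property (Lemma~\ref{lem:dia-prop-parred}) and the ``tiling'' of an $m \times n$ grid of diamonds by $\Sigma^0_1$-induction on $m$ — has already been carried out in the Weighted CR proposition. The only point worth recording is that the whole argument stays within $\RCA$, since it uses only the already-$\RCA$-proved Weighted CR statement together with trivial manipulation of finite reduction-sequence codes.
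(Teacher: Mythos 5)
Your proposal is correct and matches the paper's intent exactly: the paper states this corollary as immediate from the Weighted CR proposition, and your unpacking of $\parder^*$, $\parred^*$ into explicit iteration counts $m,n$, applying Weighted CR, and then discarding the counts is precisely that argument, carried out within $\RCA$ as required.
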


We may finally conclude CR for $\reduces$:
\begin{proof}
	[Proof of Theorem~\ref{thm:cr}]
	Suppose $t_0 \unreduces^* s \reduces^* t_1$. Then, by Corollary~\ref{cor:parred-refltrans-red-refltrans} we have $t_0 \parder^* s \parred^* t_1$.
	By Corollary~\ref{cor:cr-parred} above, we have some $s' $ with $t_0 \parred^* s' \parder^* t_1$, whence $t_0\reduces^* s' \unreduces t_1$ by Corollary~\ref{cor:parred-refltrans-red-refltrans} again.
\end{proof}

From here we derive other properties:
\begin{corollary}
[$\RCA$]
\label{cor:un-for-red-and-conv}
	We have the following:
\begin{enumerate}
	\item\label{item:peaks-and-valleys} (Peaks and valleys) $s\conv t $ if and only if $  \exists r.\, s\reduces^* r \unreduces^* t$. 
	\item\label{item:un-red} (UN for $\reduces$) If $s_0 $ and $s_1$ are normal with $s_0 \unreduces^* t \reduces^* s_1$, then $s_0 = s_1$.
	\item\label{item:un-conv} (UN for $\conv$) If $s_0 $ and $s_1$ are normal with $s_0 \conv t \conv s_1$, then $s_0= s_1$.
\end{enumerate}	
\end{corollary}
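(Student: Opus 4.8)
The plan is to derive all three items as routine consequences of the Church--Rosser property, Theorem~\ref{thm:cr}, being mildly careful about the formalisation in $\RCA$. Throughout we work with \farc s, which by Corollary~\ref{cor:pres-fin-props-by-rtc-reduction} are closed under $\reduces^*$, so Theorem~\ref{thm:cr} is always available on the coterms arising below.

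For \eqref{item:peaks-and-valleys}, the right-to-left direction is immediate since $\reduces^*\,\subseteq\,\conv$, $\unreduces^*\,\subseteq\,\conv$, and $\conv$ is transitive. For the left-to-right direction, recall that $s\conv t$ means there is a finite ``zig-zag'' $s = u_0, u_1, \dots, u_k = t$ with, for each $i<k$, either $u_i \reduces u_{i+1}$ or $u_{i+1}\reduces u_i$. I argue by $\Sigma^0_1$-induction on $k$ that there is $r$ with $s\reduces^* r \unreduces^* t$ (i.e.\ $s\reduces^* r$ and $t\reduces^* r$). For $k=0$ take $r = s = t$. For the inductive step, the induction hypothesis applied to $u_0,\dots,u_{k-1}$ gives $r'$ with $s\reduces^* r'$ and $u_{k-1}\reduces^* r'$. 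If $u_{k-1}\reduces u_k = t$, then $u_{k-1}\reduces^* r'$ and $u_{k-1}\reduces^* t$, so Theorem~\ref{thm:cr} yields $r$ with $r'\reduces^* r$ and $t\reduces^* r$; then $s\reduces^* r'\reduces^* r$ and $r$ works. If instead $u_k = t\reduces u_{k-1}$, then $t\reduces^* u_{k-1}\reduces^* r'$, so $r = r'$ already works.

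For \eqref{item:un-red}: the hypothesis says $t\reduces^* s_0$ and $t\reduces^* s_1$, so by Theorem~\ref{thm:cr} there is $r$ with $s_0\reduces^* r$ and $s_1\reduces^* r$. Since $s_0$ is normal it has no redex, so the only reduction sequence out of it is empty and $s_0 = r$; likewise $s_1 = r$, whence $s_0 = s_1$ (equality of the regular coterms in question being decided by Corollary~\ref{cor:eq-reg-coterms-rca}). Finally \eqref{item:un-conv} follows by combining the previous items: from $s_0\conv t\conv s_1$ and transitivity of $\conv$ we get $s_0\conv s_1$, so by \eqref{item:peaks-and-valleys} there is $r$ with $s_0\reduces^* r$ and $s_1\reduces^* r$; as $s_0,s_1$ are normal this forces $s_0 = r = s_1$ exactly as in the proof of \eqref{item:un-red}.

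There is no serious mathematical obstacle here: this is the classical ``Church--Rosser implies uniqueness of normal forms'' argument. The only points requiring care are bookkeeping ones, namely that the zig-zag in \eqref{item:peaks-and-valleys} is handled by $\Sigma^0_1$-induction on its (coded) length, that all intermediate coterms remain \farc s so that Theorem~\ref{thm:cr} applies, and that the identifications ``$s$ normal and $s\reduces^* r$ $\implies$ $s = r$'' are legitimate given the $\Delta^0_1$-decidability of equality of regular coterms from Corollary~\ref{cor:eq-reg-coterms-rca}.
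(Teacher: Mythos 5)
Your proof is correct and follows essentially the same route as the paper: item \eqref{item:peaks-and-valleys} by induction on the length of the conversion zig-zag, invoking Theorem~\ref{thm:cr} exactly in the critical case where the last link is a forward reduction, and items \eqref{item:un-red} and \eqref{item:un-conv} as immediate consequences of confluence, normality and \eqref{item:peaks-and-valleys}. The extra bookkeeping you supply ($\Sigma^0_1$-induction on the coded zig-zag, closure of \farc s under reduction) is consistent with the paper's intended formalisation in $\RCA$.
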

\begin{proof}
[Proof sketch]
For \eqref{item:peaks-and-valleys}, the right-to-left implication is obvious, so we prove
 the left-to-right implication by induction on the length of a derivation $s\conv t$.
The critical case is when we have $s\conv t' \reduces t$ (or equivalently $s \unreduces s' \conv t$).
By the inductive hypothesis we have some $r'$ s.t.\ $s \reduces^* r' \unreduces^* t'$.
This means that we have $r' \unreduces^* t' \reduces t$ so by confluence, Theorem~\ref{thm:cr}, we have some $r $ s.t.\ $r' \reduces^* r \unreduces^* t$.

\eqref{item:un-red} is immediate from Theorem~\ref{thm:cr} and the definition of normal form.

\eqref{item:un-conv} is immediate from \eqref{item:peaks-and-valleys}, the definition of normal form and \eqref{item:un-red}.
\end{proof}

\subsection{Structure of finitely applied coderivations, constants and variables}

Before presenting the main type structures of this section, let us take a moment to note that we have now proven enough to show that the set of \farc s, under conversion, forms a model of $\T$ without induction and extensionality, even provably within $\RCA$.

\begin{definition}
[\Farc\ structure]
We write $\FARC$ for the type structure of \farc s, defined as follows:
\begin{itemize}
\item $\sigma^\FARC := \{ t: \sigma \ |\  \text{$t$ is a farc} \}$.
\item $\mathsf r^\FARC$ is just $\mathsf r$, for each constant $\mathsf r$.
\item $t \comp^\FARC s$ is just $ts$.
\item $=^\FARC_\sigma$ is just $\conv_\sigma$.
\end{itemize}
\end{definition}

All the structures we consider in this section will be substructures of $\FARC$ that are moreover closed under conversion.
Thus they will inherit the (quantifier-free) theory of $\FARC$, to which end the following result is naturally indispensable:

\begin{theorem}
[$\RCA$]
\label{thm:farcs-are-a-model-of-t-wo-ext}
$\FARC$ is a model of $\T \setminus \{ \extensionality,\induction \}$.
I.e.\ each $\T$ term $t$ of type $\sigma $ is in $\sigma^\FARC$, and moreover $\FARC$ satisfies the axioms of Figures~\ref{fig:equality-axioms}, \ref{fig:eq-ax-seq-calc-min}, \ref{fig:rec-axioms}, \ref{fig:cond-rule+axioms} and the axioms \eqref{item:ax-0-min} and \eqref{item:ax-succ-inj} of Figure~\ref{fig:ax-t-non-eq}.
\end{theorem}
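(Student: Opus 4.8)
The plan is to verify each requirement in turn, using the machinery already established in this section. First, that every $\T$ term $t$ of type $\sigma$ lies in $\sigma^\FARC$: a $\T$ term is built from constants and variables by application, and is in particular a finite application of (finite, hence trivially regular) coderivations, constants and variables — so it is a \farc\ by definition, with the right type by construction. The closure of $\FARC$ under $\comp^\FARC$ is immediate since $ts$ is a \farc\ whenever $t$ and $s$ are (application adds at most one node), and the type discipline is respected by the definition of typed application. For the reflexivity and Leibniz axioms of Figure~\ref{fig:equality-axioms}: reflexivity holds because $\conv_\sigma$ is reflexive by definition (it is the reflexive-symmetric-transitive closure of $\reduces$); Leibniz holds because $\conv$ is a congruence — it is closed under contexts by construction of $\reduces$ and hence substitutivity of equals into atomic formulas $s \conv_\sigma t$ is exactly context-closure, and this propagates through the Boolean connectives of quantifier-free $\phi$ in the standard way.

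Next, the equational axioms of Figures~\ref{fig:eq-ax-seq-calc-min}, \ref{fig:rec-axioms} and \ref{fig:cond-rule+axioms}: each such axiom $\ell = r$ is, by the very definition of $\reduces$ in Figure~\ref{fig:reduction-rules}, an instance of a reduction rule closed under substitution, so $\ell \reduces r$ and hence $\ell \conv_\sigma r$; thus $\FARC$ satisfies the equation. Here one should note that \farc s are closed under reduction sequences, provably in $\RCA$, by Proposition~\ref{prop:red-seq-cont} and Corollary~\ref{cor:pres-fin-props-by-rtc-reduction}, so all the intermediate terms remain in the domain and the argument is legitimately internal to $\RCA$. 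Finally, for axioms \eqref{item:ax-0-min} ($\cnot\, \succ x = 0$) and \eqref{item:ax-succ-inj} ($\succ x = \succ y \cimp x = y$) of Figure~\ref{fig:ax-t-non-eq}: these are the genuinely non-trivial points, since they assert \emph{non-}equalities and an implication between equalities, and so require that conversion actually \emph{separates} certain coterms. This is exactly where the confluence results of the previous subsection are indispensable. For \eqref{item:ax-0-min}: the coterms $\succ t$ and $0$ are both $\reduces$-normal at the head (no reduction rule has $\succ$ or $0$ as its outermost symbol with these shapes), and by Corollary~\ref{cor:un-for-red-and-conv}\eqref{item:un-conv} (UN for $\conv$) no two distinct normal forms are convertible — more carefully, one argues that any common reduct of $\succ t$ and $0$ would, by confluence (Theorem~\ref{thm:cr}), force $0 \reduces^* \succ t'$ for some $t'$, which is impossible since $0$ is normal and is not of the form $\succ(\cdots)$. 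For \eqref{item:ax-succ-inj}, one argues contrapositively: if $\succ s \conv \succ t$ then by Corollary~\ref{cor:un-for-red-and-conv}\eqref{item:peaks-and-valleys} there is a common reduct $r$ with $\succ s \reduces^* r \unreduces^* \succ t$; since $\succ$ is never the head of a redex, by (an iterate of) Proposition~\ref{prop:head-par-red}-style head analysis every reduct of $\succ s$ has the form $\succ s'$ with $s \reduces^* s'$, and likewise for $\succ t$, so $r = \succ u$ with $s \reduces^* u \unreduces^* t$, hence $s \conv t$.

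The main obstacle is the last paragraph: establishing that $\reduces$ cannot rewrite ``under'' a leading $\succ$ in a way that would collapse $\succ s \conv \succ t$ to a non-trivial identification, i.e.\ the head-normal-form analysis for the constructor $\succ$ and for $0$. This is where the confluence package (Theorem~\ref{thm:cr}, Corollary~\ref{cor:un-for-red-and-conv}, Proposition~\ref{prop:head-par-red}) is doing real work, rather than being a routine unwinding of definitions; without confluence one cannot rule out that $\succ s$ and $0$ share a reduct via some devious non-wellfounded detour. Everything else — membership of $\T$ terms in the domain, closure under application, the equality and reduction axioms — is essentially bookkeeping, taking care only that each step is visibly formalisable in $\RCA$ using the cited propositions about preservation of finiteness/regularity under reduction.
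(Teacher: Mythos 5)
Your proposal is correct and follows essentially the same route as the paper's proof: $\T$ terms are \farc s by definition, the equational axioms hold because they are oriented as reductions, the Leibniz property follows from context-closure of $\conv$ (the paper additionally notes that \farc s contain only finitely many occurrences of the substituted coterm, so the replacement is a finite iteration of context-closure and transitivity), and the two number-theoretic axioms are verified exactly as you do, via peaks-and-valleys/unique normal forms together with the observation that no reduction rule has $\succ$ at the head. The only cosmetic difference is your appeal to a Proposition~\ref{prop:head-par-red}-style head analysis, where the paper simply runs a $\Sigma^0_1$-induction along the reduction sequence to extract $s \reduces^* u' \unreduces^* t$ from $\succ s \reduces^* u \unreduces^* \succ t$.
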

\begin{proof}
For each $\T$ term $t$ of type $\sigma$ we have $t \in \sigma^\FARC$ simply by definition of a \farc, so we continue to verify the axioms.

The axioms governing the constants,
i.e.\ those from Figures~\ref{fig:eq-ax-seq-calc-min}, \ref{fig:rec-axioms} and \ref{fig:cond-rule+axioms}, follow immediately from the definitions of $\reduces$ and $\conv$, cf.~Figure~\ref{fig:reduction-rules}.

For the equality axioms from Figure~\ref{fig:equality-axioms}:
\begin{itemize}
\item Reflexivity of $\conv$ follows by definition.
\item 
For the Leibniz property, suppose that $t \conv t'$ and $r(t) \conv s(t)$.
Then we have $r(t)\conv r(t')$ and $s(t) \conv s(t')$ by closure of $\conv$ under contexts, and the fact that \farc s have only finitely many redexes, so there are only finitely many occurrences of $t$ in $r(t)$ and $s(t)$ (cf.~Corollary~\ref{cor:pres-fin-props-by-reduction}). Thus we have $r(t')\conv s(t') $ by transitivity of $\conv$.
By symmetry, we also have that if $r(t')\conv s(t') $ then $ r(t) \conv s(t) $, and so in general $\phi(t) \equiv \phi(t')$ for any formula $\phi$.
\end{itemize}

For the first two number-theoretic axioms from Figure~\ref{fig:ax-t-non-eq}:
\begin{itemize}
\item $\cnot \succ 0 \conv 0$ follows from confluence, in particular unique normal forms cf.~Corollary~\ref{cor:un-for-red-and-conv}: both $\succ 0 $ and $0$ are normal but are not identical.
\item Suppose $\succ s \conv \succ t$, so $\succ s \reduces^* u\unreduces^* \succ t$, by Corollary~\ref{cor:un-for-red-and-conv}.
Notice that no reduction rule has $\succ$ at the head, so by $\Sigma^0_1$ induction we can extract reduction sequences $s \reduces^* u' \unreduces^* t $ for some $u' $ with $u = \succ u'$. Thus indeed $s \conv t$. \qedhere
\end{itemize}
\end{proof}

\subsection{Hereditarily total coterms under conversion}
We are now ready to present the type structures that will allow us to obtain a simulation of $\nC n$ within $\nT{n+1}$.
The structure that we present in this subsection is essentially the \emph{hereditrarily recursive operations} of finite type, but where we adopt \farc s under conversion as the underlying model of computation, cf.~Proposition~\ref{prop:turing-completeness}.

\begin{definition}
	[Hereditarily total \farc s]
	We define the following sets of \farc s:
	\begin{itemize}
		\item $\hr\nat := \{ t:\nat \  |\  \exists n \in \Nat. \ t \conv \numeral n \}$
		\item $\hr{\sigma \to \tau} := \{ t: \sigma \to \tau \ | \ \forall s \in \hr\sigma . \ t\, s\, \in \hr \tau \}$
	\end{itemize}
	    We write $\hr n$ for the union of all $\hr{\sigma}$ with $\level (\sigma) \leq n$.
\end{definition}

Note that it is immediate from the definition that each $\hr\sigma$ contains only closed \farc s of type $\sigma$.

Notice that, by the confluence result of the previous subsection, namely by Corollary~\ref{cor:un-for-red-and-conv}, if $t\conv \numeral n$ then $n\in \Nat $ is unique and in fact $t \reduces^* \numeral n$ (provably in $\RCA$).
In this way we can view every element of $\hr \nat$ as computing a unique natural number by means of reduction.

\begin{fact}
\label{fact:log-comp-hr-preds}
$\hr \nat$ is $\Sigma^0_1$, and
if $\level(\sigma) = n>0$ then $\hr\sigma$ is $\Pi^0_{n+1}$.
\end{fact}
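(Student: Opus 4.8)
The plan is to treat the two claims in turn: the bound for $\hr\nat$ falls out directly from confluence, and the bound for $\hr\sigma$ then follows by induction on $\level(\sigma)$, the only real work being the arithmetical‑hierarchy bookkeeping. For $\hr\nat$: by Corollary~\ref{cor:un-for-red-and-conv}, since numerals are $\reduces$‑normal, $t\conv\numeral n$ holds iff $t\reduces^*\numeral n$, so $t\in\hr\nat$ iff there is a finite $\reduces$‑sequence of \farc s starting at $t$ and ending at a numeral. Now \farc s have only finitely many redexes and are closed under reduction sequences, provably in $\RCA$ (Proposition~\ref{prop:red-seq-cont}, Corollary~\ref{cor:pres-fin-props-by-rtc-reduction}), and equality of regular coterms is $\Delta^0_1$ (Corollary~\ref{cor:eq-reg-coterms-rca}); hence ``being a one‑step $\reduces$‑reduct of the \farc\ $w$'' is $\Delta^0_1$, and likewise ``being a valid $\reduces$‑sequence of \farc s whose last term is a numeral'' is $\Delta^0_1$ in the code of the sequence. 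So $\hr\nat$ is $\Sigma^0_1$. (The side condition ``$t$ is a \farc\ of type $\nat$'' is itself $\Delta^0_1$, local correctness of a finite coderivation graph being decidable, and I will not mention it again.)

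For the inductive claim, write $\sigma=\sigma_1\to\cdots\to\sigma_k\to\nat$ with $k\ge 1$, so $n:=\level(\sigma)=1+\max_i\level(\sigma_i)\ge 1$. Since \farc s are coded by numbers, writing $\vec s=(s_1,\dots,s_k)$,
\[
t\in\hr\sigma \ \iff\ \forall s_1\cdots\forall s_k\,\Big(\ \bigwedge_{i=1}^{k}\big(s_i\in\hr{\sigma_i}\big)\ \to\ t\,s_1\cdots s_k\in\hr\nat\ \Big),
\]
and by the induction hypothesis each $s_i\in\hr{\sigma_i}$ is $\Sigma^0_1$ if $\level(\sigma_i)=0$ and $\Pi^0_{\level(\sigma_i)+1}$ if $\level(\sigma_i)\ge 1$, while $t\,s_1\cdots s_k\in\hr\nat$ is $\Sigma^0_1$ by the first part. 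If $n=1$ then every $\sigma_i=\nat$, so the matrix has the shape $(\Sigma^0_1\to\Sigma^0_1)$, i.e.\ $\Pi^0_1\vee\Sigma^0_1$; using the $\RCA$‑provable equivalence $\forall x\,\varphi\vee\exists y\,\psi\equiv\forall x\,\exists y\,(\varphi\vee\psi)$ this is $\Pi^0_2$, and prefixing the quantifier block $\forall\vec s$ leaves it $\Pi^0_2=\Pi^0_{n+1}$. If $n\ge 2$ then $\Sigma^0_1\subseteq\Pi^0_2\subseteq\Pi^0_n$ and each $\Pi^0_{\level(\sigma_i)+1}\subseteq\Pi^0_n$, so every conjunct $s_i\in\hr{\sigma_i}$ is $\Pi^0_n$, hence so is $\bigwedge_i(s_i\in\hr{\sigma_i})$; then the matrix, being of the form $\big(\neg\bigwedge_i(\cdots)\big)\vee(t\vec s\in\hr\nat)$, is $\Sigma^0_n\vee\Sigma^0_n=\Sigma^0_n$, and the block $\forall\vec s$ in front gives $\Pi^0_{n+1}$. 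The closure properties of $\Sigma^0_k$ and $\Pi^0_k$ under finite conjunction, finite disjunction, negation at the dual level, prenexing and tuple‑coding used here are standard and formalisable in $\RCA$.

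The main obstacle is exactly this hierarchy bookkeeping at $n=1$: because $\Sigma^0_1$ is not closed under negation, naively bounding $\Pi^0_1\vee\Sigma^0_1$ by $\Sigma^0_2$ and then prefixing $\forall\vec s$ would only yield $\Pi^0_3$, so one genuinely needs the identity $\forall x\,\varphi\vee\exists y\,\psi\equiv\forall x\,\exists y\,(\varphi\vee\psi)$ to collapse the matrix to $\Pi^0_2$; one should also double‑check that the requisite tuple‑coding lemmas really go through in $\RCA$, though they need only $\Sigma^0_1$‑induction. All the computational ingredients — the reduction of $\conv$ to $\reduces^*$ via Church--Rosser (Theorem~\ref{thm:cr}) and decidability of single‑step $\reduces$ on \farc s — are already in place from the preceding subsections, so no further rewriting‑theoretic work is needed.
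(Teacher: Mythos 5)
Your proof is correct and takes essentially the same route as the paper: unfold the definition of $\hr{}$ and count quantifiers by induction on the type, the only delicate point being the collapse to $\Pi^0_2$ at level $1$ (the paper does this one arrow at a time rather than uncurrying all $k$ arguments at once, but that is cosmetic). The only other difference is that you establish $\Sigma^0_1$-ness of $\hr\nat$ by routing $\conv$ through Church--Rosser and $\reduces^*$, which is harmless but unnecessary, since $\conv$ on \farc s is already $\Sigma^0_1$ outright (existence of a finite conversion sequence), which is all the paper uses.
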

\begin{proof}
$\hr\nat$ is $\Sigma^0_{1}$ just since $\conv$ is $\Sigma^0_1$. Furthermore, expanding out the definition:
$$\hro{\sigma \to \tau}(t) \iff \forall s . (s \in \hro \sigma  \implies  t s \in \hro \tau)$$
We proceed by the induction on the structure of $\sigma$.

In the base case, when $\sigma = \tau = \nat$, we have that $\hro\sigma $ and $\hro\tau$ are $\Sigma^0_1$, and so $\hro{\sigma \to \tau}$ is indeed $\Pi^0_2$, as required.

For the inductive hypothesis, we have that $\hro\sigma$ and $\hro\tau$ are $\Pi_n$ and $\Pi_{n+1}$ respectively, and so again $\hro{\sigma\to \tau}$ is indeed $\Pi_{n+1}$.
\end{proof}

\begin{proposition}
[Closure properties of $\hr{}$]
\label{prop:hr-closure-props}
Suppose $\level(\sigma)<n$ and $\level (\tau)\leq n$. 
Then we have the following, provably in $\RCA$:
\begin{enumerate}
\item\label{item:hr-closed-under-composition} If $s \in \hr \sigma$ and $t \in \hr{\sigma \to \tau}$ then $ts \in \hr \tau$. {($\hr{}$ closed under application)}
\item\label{item:hr-closed-under-conversion} If $t \in \hr \tau$ and $t\conv t'$ then $t'\in \hr\tau$. {($\hr{}$ closed under conversion)}
\end{enumerate}
\end{proposition}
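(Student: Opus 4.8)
The plan is to dispatch item~\eqref{item:hr-closed-under-composition} by simply unfolding a definition, and to prove item~\eqref{item:hr-closed-under-conversion} by a meta-level induction on the structure of $\tau$, obtaining for each fixed type a proof that formalises in $\RCA$.

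For item~\eqref{item:hr-closed-under-composition}: by definition $\hr{\sigma\to\tau}=\{t:\sigma\to\tau \mid \forall s'\in\hr\sigma.\ ts'\in\hr\tau\}$, so the hypothesis $t\in\hr{\sigma\to\tau}$ is literally the sentence $\forall s'\,(s'\in\hr\sigma\to ts'\in\hr\tau)$, and instantiating $s'$ with the given $s\in\hr\sigma$ yields $ts\in\hr\tau$. This is a purely logical step, so nothing beyond first-order reasoning (available in $\RCA$) is needed, for any fixed $\sigma$ and $\tau$.

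For item~\eqref{item:hr-closed-under-conversion} I would argue by induction on the structure of $\tau$; since $\tau$ is fixed this induction is external, producing a separate $\RCA$-proof for each $\tau$ with $\level(\tau)\leq n$, and so raises no metamathematical issue. In the base case $\tau=\nat$: from $t\in\hr\nat$ we have $t\conv\numeral m$ for some $m\in\Nat$, and since $\conv$ is an equivalence relation (being the reflexive, symmetric and transitive closure of $\reduces$), the hypothesis $t\conv t'$ gives $t'\conv t\conv\numeral m$, hence $t'\in\hr\nat$; only symmetry and transitivity of $\conv$ are used. In the inductive step $\tau=\tau_1\to\tau_2$: given $t\in\hr{\tau_1\to\tau_2}$ and $t\conv t'$, fix an arbitrary $s\in\hr{\tau_1}$. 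First, since $\conv$ is closed under application contexts (so $t\conv t'$ implies $ts\conv t's$), a closure which follows from the analogous closure of $\reduces$ by a $\Sigma^0_1$-induction on the length of a conversion and hence holds in $\RCA$, we get $ts\conv t's$. Next, $ts\in\hr{\tau_2}$ directly from $t\in\hr{\tau_1\to\tau_2}$, $s\in\hr{\tau_1}$ and the definition of $\hr{\tau_1\to\tau_2}$. Finally, the inductive hypothesis for $\tau_2$ (available since $\level(\tau_2)\leq\level(\tau_1\to\tau_2)\leq n$), applied to $ts\in\hr{\tau_2}$ and $ts\conv t's$, yields $t's\in\hr{\tau_2}$; as $s\in\hr{\tau_1}$ was arbitrary, $t'\in\hr{\tau_1\to\tau_2}$, as required.

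There is essentially no hard step here: the proof is routine, using only elementary logic, the equivalence-relation properties of $\conv$, and its context-closure (obtained in $\RCA$ by a $\Sigma^0_1$-induction from the context-closure of $\reduces$). The single point to keep in mind is that the conclusion ``$t'\in\hr\tau$'' presupposes that $t'$ is itself a \farc; this is in force because throughout this section $\conv$ is taken within $\FARC$, i.e.\ the statement quantifies over \farc s $t$ and $t'$. With that understood, the induction above closes without further complication.
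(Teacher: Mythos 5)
Your proposal is correct and follows essentially the same route as the paper: item (1) by unfolding the definition of $\hr{\sigma\to\tau}$, and item (2) by meta-level induction on the structure of $\tau$, using symmetry/transitivity of $\conv$ at base type and closure of $\conv$ under application contexts together with the inductive hypothesis at arrow types. The extra remarks on formalising context-closure in $\RCA$ and on $t'$ being a \farc\ are fine but not substantively different from the paper's argument.
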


Note that provability within $\RCA$ above is \emph{non-uniform} in $\sigma$ and $\tau$, i.e.\ $ \RCA$ proves the statements for each particular $\sigma$ and $\tau$.

\begin{proof}
\eqref{item:hr-closed-under-composition} is immediate from the definition of the sets $\hr{\sigma}$.
\eqref{item:hr-closed-under-conversion} follows by (meta-level) induction on the structure of $\tau$:
    \begin{itemize}
        \item The base case, when $\tau = \nat$, follows from symmetry and transitivity of $\conv$.\footnote{Note here that it was important to take conversion, $\conv$, which is symmetric, rather than reduction, $\reduces^*$, for the definition of $\hr\nat$.}
        \item Suppose $\tau = \sigma \to \tau'$ and let $s \in \hro \sigma$. Since $t \conv t'$ we also have $t\, s \ \conv\  t' s$, by closure of $\conv$ under contexts, and so $t's \, \in \hro{\tau'}$, by the inductive hypothesis. Thus $t' \in \hro\tau$, as required.\qedhere
    \end{itemize}
\end{proof}
 
\noindent
These properties justify defining the following type structure:
\begin{definition}
    [$\hr{}$ structure]
    We simply write $\hr{}$ for the type structure defined as follows:
    \begin{itemize}
    \item $\sigma^\hr{}$ is $\hr \sigma$.
    \item $\mathsf r^\hr{}$ is just $\mathsf r$ for each constant $\mathsf r$.
    \item $t\comp^\hr{} s$ is just $ts$.
    \item $=_\sigma^{\hr{}}$ is $\conv_\sigma$.
    \end{itemize}
\end{definition}

Ultimately we will show that this structure constitutes a model of $\C \setminus \extensionality$ (i.e.\ without extensionality).
In fact, in Section~\ref{sect:t-sim-c} we will prove this within an appropriate theory of arithmetic, so that we may also extract terms of $\T$ that are equivalent under conversion.

Note that $\hr{}$ is a substructure of $\FARC$, in the model-theoretic sense, and so inherits its quantifier-free theory (over the respective domains).
The key feature of $\hr{}$ over $\FARC$ is that it satisfies induction, provably in $\RCA$ in the following sense:

\begin{lemma}
[Induction for $\hr{}$, $\RCA$]
\label{lem:ind-for-hr-in-rca}
Suppose $r(x)$ and $s(x)$ are \farc s.
If $r(0) \conv s(0)$ and $\forall t \in \hr \nat. (r(t) \conv s(t) \implies r(\succ t) \conv s(\succ t))$, then $\forall t \in \hr \nat . r(t) \conv s(t)$.
\end{lemma}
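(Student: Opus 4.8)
The plan is to reduce the claim to ordinary arithmetical induction on $\Nat$, exploiting the fact that every element of $\hr\nat$ converts to a genuine numeral. The key preliminary observation is that, since $r(x)$ and $s(x)$ are \farc s, each contains only finitely many occurrences of $x$ (cf.\ Corollary~\ref{cor:pres-fin-props-by-reduction}), so closure of $\conv$ under contexts yields: whenever $t\conv t'$ we have $r(t)\conv r(t')$ and $s(t)\conv s(t')$, obtained by replacing one occurrence at a time along a finite chain (this is exactly the reasoning used for the Leibniz property in the proof of Theorem~\ref{thm:farcs-are-a-model-of-t-wo-ext}). In particular, if $t\in\hr\nat$ with $t\conv\numeral n$, then $r(t)\conv r(\numeral n)$ and $s(t)\conv s(\numeral n)$, so by symmetry and transitivity of $\conv$ we have $r(t)\conv s(t)$ if and only if $r(\numeral n)\conv s(\numeral n)$.

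Hence it suffices to prove $\forall n\in\Nat.\ r(\numeral n)\conv s(\numeral n)$. This is a statement of a single number variable whose matrix $r(\numeral n)\conv s(\numeral n)$ is $\Sigma^0_1$ (with $r,s$ as parameters), since $\conv$ is $\Sigma^0_1$ (cf.\ Fact~\ref{fact:log-comp-hr-preds}); thus it may be established by $\Sigma^0_1$-induction on $n$, which is available in $\RCA$. The base case $r(\numeral 0)\conv s(\numeral 0)$ is exactly the hypothesis $r(0)\conv s(0)$, as $\numeral 0 = 0$. For the inductive step, assume $r(\numeral n)\conv s(\numeral n)$; since $\numeral n\in\hr\nat$ (it converts to itself, being a numeral), the second hypothesis applied with $t:=\numeral n$ gives $r(\succ\numeral n)\conv s(\succ\numeral n)$, i.e.\ $r(\numeral{n+1})\conv s(\numeral{n+1})$, since $\succ\numeral n=\numeral{n+1}$.

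Finally, given an arbitrary $t\in\hr\nat$, pick $n$ with $t\conv\numeral n$ (such $n$ exists by definition of $\hr\nat$; it is even unique by confluence, Corollary~\ref{cor:un-for-red-and-conv}, though uniqueness is not needed). Then $r(t)\conv r(\numeral n)\conv s(\numeral n)\conv s(t)$ by the preliminary observation together with the $\Sigma^0_1$-induction just carried out, so $r(t)\conv s(t)$, as required.

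The one point requiring care is the logical complexity of the induction: a naive induction directly on ``$\forall t\in\hr\nat.\,r(t)\conv s(t)$'' would call for $\Pi^0_2$-induction, which is not available in $\RCA$; the context-closure reduction to a $\Sigma^0_1$ predicate of a single numeral variable is precisely what keeps the argument inside $\RCA$. Everything else — that $n\mapsto\numeral n$ and $n\mapsto\succ\numeral n$ are primitive recursive, that each $\numeral n$ is a \farc lying in $\hr\nat$, and that $\conv$ on \farc s is $\Sigma^0_1$ — is routine.
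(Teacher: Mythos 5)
Your proof is correct and follows essentially the same route as the paper's: a $\Sigma^0_1$-induction on $n$ establishing $r(\numeral n)\conv s(\numeral n)$ for all numerals (using the hypotheses with $t:=\numeral n$), followed by a transfer to arbitrary $t\in\hr\nat$ via $t\conv\numeral n$ and the Leibniz/context-closure property of $\conv$ for \farc s. Your explicit remark on why this keeps the induction at $\Sigma^0_1$ rather than $\Pi^0_2$ is a nice clarification of what the paper leaves implicit, but the argument is the same.
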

\begin{proof}
This is essentially `forced' by the definition of $\hr\nat$, reducing the statement to induction in $\RCA$.
Assuming the premisses, in particular we have $r(\numeral n) \conv s(\numeral n) $ implies $r(\succ \numeral{n}) \conv s(\succ \numeral n)$ for any $n\in \Nat$, and so by $\Sigma^0_1$-induction on $n$ we have $\forall n \in \Nat.\,  r(\numeral n) \conv s(\numeral n)$.
Now, suppose $t \in \hr\nat$. Then by definition we have $t\conv \numeral n$ for some $n\in \Nat$, and so indeed $r(t) \conv s(t)$ by the Leibniz property (inherited from $\FARC$, cf.~Theorem~\ref{thm:farcs-are-a-model-of-t-wo-ext})
\end{proof}

Notice that induction for arbitrary quantifier-free formulas may be duly reduced to the case of equational formulas in the usual way, interpreting Boolean connectives as operations on (co)terms.
To conclude that $\hr{}$ actually constitutes a model of $\T$ (and later of $\C$), without extensionality, we will further have to show that it can interpret each term of $\T$ (and later coterm of $\C$).
For $\T$, this follows from well-known standard results:

\begin{proposition}
\label{prop:hr-models-t-wo-ext}
$\hr{}$ is a model of $\T \setminus \extensionality$.
\end{proposition}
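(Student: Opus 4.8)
The plan is to build on the results already in hand. The structure $\hr{}$ interprets every constant of $\T$ (as well as $\cond$) as itself, and by Proposition~\ref{prop:hr-closure-props} its domains are closed under application and under conversion; hence once we know each such constant lies in the appropriate $\hr\sigma$, $\hr{}$ is a substructure of $\FARC$ in the model-theoretic sense, closed under $\conv$. It therefore inherits the whole \emph{universal} part of the theory of $\FARC$ recorded in Theorem~\ref{thm:farcs-are-a-model-of-t-wo-ext}: the equality axioms of Figure~\ref{fig:equality-axioms}, the combinator, recursor and conditional axioms of Figures~\ref{fig:eq-ax-seq-calc-min}, \ref{fig:rec-axioms} and \ref{fig:cond-rule+axioms}, and the number-theoretic axioms \eqref{item:ax-0-min} and \eqref{item:ax-succ-inj} of Figure~\ref{fig:ax-t-non-eq}. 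Thus only two things remain to check: that each constant of $\T$ (and $\cond$) is hereditarily total, and that the induction schema \eqref{item:induction-schema-t} holds.

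For induction: by the Leibniz property inherited from $\FARC$, if $t\in\hr\nat$ then $t\conv\numeral n$ for the unique $n$ with $t\reduces^*\numeral n$, so $\phi(t)$ and $\phi(\numeral n)$ are equivalent for every formula $\phi$. Hence quantifier-free induction over $\hr\nat$ reduces to meta-level induction over $\Nat$ by exactly the argument proving Lemma~\ref{lem:ind-for-hr-in-rca} (alternatively, first reduce quantifier-free induction to the equational case in the usual way and apply Lemma~\ref{lem:ind-for-hr-in-rca} verbatim).

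For hereditary totality: $0\conv\numeral 0\in\hr\nat$, and $\succ\in\hr{\nat\to\nat}$ since $t\conv\numeral n$ forces $\succ t\conv\numeral{n+1}$. Each structural combinator $\id,\exch,\wk,\cntr,\cut,\leftimp,\rightimp$, applied to hereditarily total arguments of the right types, reduces by its rule in Figure~\ref{fig:reduction-rules} to a coterm built from those arguments by application alone; that coterm lies in $\hr{}$ by closure under application (Proposition~\ref{prop:hr-closure-props}(1)), and closure under conversion (Proposition~\ref{prop:hr-closure-props}(2)) transfers this back to the original redex. For $\cond$: given hereditarily total $s,t,\vec a$ and $n\in\hr\nat$ with $n\conv\numeral m$, the coterm $\cond\, s\, t\, \vec a\, n$ converts to $s\, \vec a\in\hr\tau$ if $m=0$ and to $t\, \vec a\, \numeral k\in\hr\tau$ if $m=k+1$. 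For $\rec_\tau$: given hereditarily total $s:\vec\sigma\to\tau$, $t:\vec\sigma\to\nat\to\tau\to\tau$, $\vec a$ and $n\in\hr\nat$ with $n\conv\numeral m$, we show $\rec\, s\, t\, \vec a\, \numeral m\in\hr\tau$ by meta-level induction on $m$: the base case is $\rec\, s\, t\, \vec a\, 0\conv s\, \vec a\in\hr\tau$, and the step uses $\rec\, s\, t\, \vec a\, \succ\numeral m\conv t\, \vec a\, \numeral m\,(\rec\, s\, t\, \vec a\, \numeral m)$ with the inductive hypothesis and closure under application; then $\rec\, s\, t\, \vec a\, n\in\hr\tau$ by closure under conversion.

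The only genuine subtlety — and it is mild — is the $\rec_\tau$ case: one must pass from an arbitrary numeral-valued argument $n\in\hr\nat$ to its actual value $\numeral m$ via conversion \emph{before} recursing at the meta-level, and observe that, precisely because extensionality is not being demanded, hereditary totality of $\rec_\tau$ on such numeral arguments is exactly what is needed. The remainder is a routine transcription of the classical verification that the hereditarily recursive operations form a model of $\T$, now over the \farc{} model of computation of this section; indeed essentially this same argument will be replayed inside a fragment of arithmetic in Section~\ref{sect:t-sim-c}.
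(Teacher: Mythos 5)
Your proof is correct, and while its skeleton matches the paper's (inherit the quantifier-free axioms from $\FARC$ via Theorem~\ref{thm:farcs-are-a-model-of-t-wo-ext}, handle induction via Lemma~\ref{lem:ind-for-hr-in-rca}, then establish hereditary totality of $\T$ terms), you handle the key totality step by a genuinely different route. The paper's sketch reduces totality to Tait's classical normalisation-plus-confluence result for $\T$ (Fact~\ref{fact:normalisation+confluence-of-t}) together with the fact that closed normal terms of type $\nat$ are numerals: applying a $\T$ term to hereditarily total arguments is claimed to convert to a numeral. You instead verify, constant by constant, that every combinator of $\T$ (and $\cond$) lies in the appropriate $\hr\sigma$, using the closure properties of Proposition~\ref{prop:hr-closure-props} for the structural rules and a meta-level induction on the numeral value of the recursion argument for $\rec_\tau$; closure under application then covers all terms. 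Your route is more self-contained --- it never invokes Fact~\ref{fact:normalisation+confluence-of-t}, and it sidesteps a slight looseness in the paper's sketch, namely that the arguments $\vec t\in\hr{\vec\tau}$ are \farc s rather than $\T$ terms, so Tait's theorem does not literally apply to $t\,\vec t$ without further comment --- and it is essentially the classical hereditarily-recursive-operations verification, which aligns well with the non-uniform formalisation in $\RCA+\CIND{\Sigma^0_{n+1}}$ remarked immediately after the proposition and with the relativised totality argument later replayed for $\C$ in Section~\ref{sect:t-sim-c}. What the paper's appeal to Tait buys in exchange is brevity and an explicit signpost that this is the step carrying the logical strength (consistency of $\T$), which in your presentation is hidden inside the meta-induction for $\rec_\tau$ at higher types. (One cosmetic note: your $\rec$ step follows the equational axioms of Figure~\ref{fig:rec-axioms}, which keep the numeral argument to $t$, rather than the reduction rule as typeset in Figure~\ref{fig:reduction-rules}; that is the intended reading, and the argument is unaffected either way.)
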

\begin{proof}[Proof sketch]
Given Theorem~\ref{thm:farcs-are-a-model-of-t-wo-ext} and Lemma~\ref{lem:ind-for-hr-in-rca} above, it remains to show that for each closed term $t$ of type $\tau$ in $\T$, we indeed have that $t \in \hr \tau$. This part of the argument is in fact what requires significant logical complexity, since it implies the consistency of $\T$ (and so also $\PA$ and $\ACA$), and is what cannot be directly transferred to the coterm setting.
Given a term of type $\tau = \vec \tau \to \nat$ and $\vec t \in \hr{\vec \tau}$, we have that $t\, \vec t$ converts to a unique numeral, thanks to Tait's result Fact~\ref{fact:normalisation+confluence-of-t}, and since the only normal terms of type $\nat$ are numerals.
Thus $t \, \vec t \, \in \hr\nat$, and so $t \in \hr \tau$ by repeatedly unfolding its definition.
\end{proof}

In fact this proof can be formalised non-uniformly in the following sense: for each term $t$ of type $\tau$ with $\level(\sigma) \leq n$, we have $\RCA + \CIND{\Sigma^0_{n+1}} \proves \hr \tau (t)$.
We will see a similar situation for membership of $\nC n$ coterms in $\hr n$ later, but with the quantifier complexity of induction increased by $1$. 

\subsection{Modelling extensionality via collapse of conversion}
One issue with the $\hr{}$ structure is that equality is not extensional, and so it is not closed under $\extensionality$.
In particular two distinct coderivations may compute the same function, but are already in normal form if we have not fed any inputs.
A very simple example already at type 1 are the coderivations $\cond\, 0\, 0\, : \nat \to \nat \to \nat$ vs $\cond'\, 0\, 0\, : \nat \to \nat \to \nat$, where $\cond$ examines the first input and $\cond' $ examines the second input. 
Both return $0$, but the two coderivations are not convertible.
Naturally more complex and nontrivial examples abound, e.g.\ distinct programs for sorting, etc.

However, we may recover an extensional equality relation in a standard way thanks to the notion of extensional collapse from higher-order computability theory, cf.~\cite{ho-computability}.

\begin{definition}
[Extensional equality and $\he{}$-structure]
We define the following relations $\exteq_\sigma$ on coterms.
\begin{itemize}
\item $\exteq_\nat$ is just $\conv_\nat$.
\item $t \exteq_{\sigma \to \tau} t'$ if $\forall s \in \hr{\sigma}.\ t\,s \exteq t's$
\end{itemize} 
We define the type structure $\he{}$ just as $\hr{}$, but with $=_\sigma$ interpreted by $\exteq_\sigma$.
\end{definition}

The price to pay for extensionality, however, is high: the equality relation is no longer semi-recursive, and its logical complexity grows with type level.
\begin{fact}
\label{fact:exteq-log-comp}
$\exteq_\nat $ is $\Sigma^0_1$ and,
if $\level(\tau) = n>0$, then $\exteq_\tau$ is $\Pi^0_{n+1}$.
\end{fact}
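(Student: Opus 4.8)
The statement to prove is Fact~\ref{fact:exteq-log-comp}: that $\exteq_\nat$ is $\Sigma^0_1$ and that $\exteq_\tau$ is $\Pi^0_{n+1}$ when $\level(\tau) = n > 0$. This is a direct analogue of Fact~\ref{fact:log-comp-hr-preds} (the complexity of the $\hr{}$ predicates), and the proof should follow essentially the same induction on type structure, with the extra wrinkle that the definition of $\exteq_{\sigma\to\tau}$ quantifies over $\hr\sigma$ (not over $\exteq_\sigma$-classes), so I need to track the complexity of $\hr{}$ alongside.

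The plan is as follows. First, the base case: $\exteq_\nat$ is by definition just $\conv_\nat$, and since $\conv$ is the reflexive–symmetric–transitive closure of the recursive relation $\reduces$ on regular coterms (which are coded by naturals, with equality $\Delta^0_1$ by Corollary~\ref{cor:eq-reg-coterms-rca}), $\conv$ is $\Sigma^0_1$ — one existentially quantifies over a finite conversion sequence. This gives the first claim. For the inductive step, I would unfold $$t \exteq_{\sigma\to\tau} t' \iff \forall s\,(s \in \hr\sigma \implies t\,s \exteq_\tau t's).$$ By Fact~\ref{fact:log-comp-hr-preds}, $\hr\sigma$ is $\Sigma^0_1$ if $\level(\sigma)=0$ and $\Pi^0_{\level(\sigma)+1}$ if $\level(\sigma)>0$. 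By the induction hypothesis $\exteq_\tau$ is $\Sigma^0_1$ if $\level(\tau)=0$ and $\Pi^0_{\level(\tau)+1}$ if $\level(\tau)>0$. The matrix $s\in\hr\sigma \implies t\,s\exteq_\tau t's$ is thus (negation of $\hr\sigma$) or ($\exteq_\tau$), i.e.\ $\Pi^0_{\level(\sigma)+1}$-or-weaker disjuncted with $\Pi^0_{\level(\tau)+1}$-or-weaker; taking the maximum of the two (using $\level(\sigma)+1 \le \level(\sigma\to\tau) = n$ and $\level(\tau) \le n-1$, so $\level(\tau)+1 \le n$, hence $\exteq_\tau$ is at worst $\Pi^0_n$), the matrix is $\Pi^0_n$. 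Prefixing the universal quantifier $\forall s$ keeps us at $\Pi^0_n$ when $n\ge 1$... wait, one must be careful: $\Pi^0_n$ prefixed by $\forall$ is $\Pi^0_n$ only if $n\ge 1$; and we want $\Pi^0_{n+1}$ as the final bound, which is weaker, so this is fine. Actually the cleanest phrasing: the matrix is arithmetic of level at most $n$ (in the $\Pi^0_n \cup \Sigma^0_n$ sense), hence $\Pi^0_{n+1}$ after the universal quantifier, which is exactly the claimed bound.

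Concretely I would split into the two cases mirroring the proof of Fact~\ref{fact:log-comp-hr-preds}: (a) $n = 1$, i.e.\ $\tau = \vec\tau \to \nat$ with all $\tau_i = \nat$ — then $\hr{\tau_i}$ and $\exteq_\nat$ are both $\Sigma^0_1$, so the matrix is $\Pi^0_1$, and $\exteq_\tau$ is $\Pi^0_2 = \Pi^0_{n+1}$; (b) $n > 1$ — here $\exteq_\tau$ for $\tau = \sigma\to\tau'$ has, by IH on $\tau'$ (level $\le n-1$, so $\exteq_{\tau'}$ is $\Sigma^0_1$ if $\level(\tau')=0$ else $\Pi^0_{\level(\tau')+1} \subseteq \Pi^0_n$) and by the complexity of $\hr\sigma$ (level $\le n-1$, so $\hr\sigma$ is $\Sigma^0_1$ or $\Pi^0_{\level(\sigma)+1}\subseteq \Pi^0_n$), a matrix in $\Pi^0_n$, and one universal quantifier yields $\Pi^0_{n+1}$. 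I do not expect any real obstacle here; the only thing requiring a little care is the quantifier-counting bookkeeping and making sure the inductive hypothesis is stated for all types of lower level, not just for the immediate subtype $\tau'$ — in particular because $\hr\sigma$ appears with $\sigma$ possibly of level up to $n-1$, so one must invoke Fact~\ref{fact:log-comp-hr-preds} rather than only the induction hypothesis about $\exteq$. A secondary minor point worth a sentence is that everything is genuinely first-order arithmetic because all coterms in sight are regular, hence coded by naturals (as noted in Section~\ref{sect:form-red-seq}), so "$\forall s$" ranges over $\Nat$ and the classification into the arithmetic hierarchy is literal rather than schematic.
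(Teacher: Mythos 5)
Your proof follows essentially the same route as the paper's: induction on the structure of the type, unfolding the definition of $\exteq_{\sigma\to\tau}$, invoking Fact~\ref{fact:log-comp-hr-preds} for the complexity of the relativising predicate $\hr\sigma$, and counting quantifiers, with the base case $\exteq_\nat = \conv_\nat$ being $\Sigma^0_1$.

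One piece of your level bookkeeping is wrong as stated, though it does not sink the argument. Writing $\tau = \sigma\to\tau'$ with $\level(\tau)=n$, you claim $\level(\tau')\le n-1$; but since $\level(\sigma\to\tau')=\max(1+\level(\sigma),\level(\tau'))$, only the \emph{domain} is forced below $n$ — the codomain can have level exactly $n$ (e.g.\ $\nat\to(\nat\to\nat)$ at $n=1$, or $\nat\to((\nat\to\nat)\to\nat)$ at $n=2$). So the inductive hypothesis only gives $\exteq_{\tau'}\in\Pi^0_{n+1}$ in general, not $\Pi^0_n$, and your intermediate claim that the matrix $s\in\hr\sigma\cimp t\,s\exteq_{\tau'}t's$ is $\Pi^0_n$ fails in those cases. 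The conclusion survives for the reason you yourself gesture at elsewhere: the matrix is at worst $\Sigma^0_n\cor\Pi^0_{n+1}$, hence $\Pi^0_{n+1}$, and prefixing the universal number quantifier $\forall s$ keeps it $\Pi^0_{n+1}$, which is exactly the claimed bound. Your case split (a)/(b) should be repaired accordingly: in (b) the codomain may have level $n$, and in (a) your uncurried reading $\vec\nat\to\nat$ works only because you implicitly unfold $\exteq$ through all the arguments at once, which is fine but should be said. (The paper's own proof is similarly terse on this point, asserting $\exteq_\sigma\in\Pi^0_n$ for the codomain, but in both cases the universal quantifier absorbs the extra level.) Your closing remark that regularity makes all quantification genuinely first-order is correct and matches the paper's setup.
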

\begin{proof}
The case of $\nat$ follows immediately from the definition, and for $\level(\tau)>0$ we proceed by induction on the structure of $\tau = \rho \to \sigma$ with $\level(\rho)<n$ and $\level (\sigma)\leq n$.
\begin{itemize}
\item If $\rho = \sigma = \nat$, so $n=1$, then $t\exteq_\tau t'$ iff $\forall s (s \in \hr \nat \, \cimp \, ts \conv t's )$. We have $s \in \hr\nat $ is $\Sigma^0_1$ by Fact~\ref{fact:log-comp-hr-preds} and $\conv$ is also $\Sigma^0_1$, so indeed $t \exteq_\tau t'$ is $\Pi^0_2$.
\item For the inductive step, $t\exteq_\tau t'$ iff $\forall r (r \in \hr\rho\, \cimp \, tr \exteq_\sigma t'r)$. We have $r \in \hr \rho $ is $\Pi^0_n$ by Fact~\ref{fact:log-comp-hr-preds} and $\exteq_\sigma$ is $\Pi^0_{n}$ by inductive hypothesis, so indeed $t\exteq_\tau t'$ is $\Pi^0_{n+1}$. \qedhere
\end{itemize}
\end{proof}

Note that we have the following closure properties in $\he{}$:
\begin{proposition}
\label{prop:he-closure-conds}
Suppose $t \in \hr\tau$. Then:
\begin{enumerate}
\item  if $t \conv_\tau t' $ then $ t \exteq_\tau t'$. ($\exteq$ coarser than $\conv$)
\item if $t\exteq_\tau t'$ then $t'\in \hr\tau$. ($\hr{}$ closed under $\exteq$)
\end{enumerate}
\end{proposition}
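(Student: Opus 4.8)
Suppose $t \in \hr\tau$. Then:
(1) if $t \conv_\tau t'$ then $t \exteq_\tau t'$ ($\exteq$ coarser than $\conv$); and
(2) if $t\exteq_\tau t'$ then $t'\in \hr\tau$ ($\hr{}$ closed under $\exteq$).

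\begin{proof}
The plan is to prove both statements by (meta-level) induction on the structure of $\tau$, running the two inductions in parallel since the inductive step of each refers to the inductive hypothesis of the other at a smaller type.

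For the base case $\tau = \nat$, recall that $\exteq_\nat$ is just $\conv_\nat$ by definition. So (1) is trivial, and (2) follows from closure of $\hr\nat$ under conversion, i.e.\ Proposition~\ref{prop:hr-closure-props}\eqref{item:hr-closed-under-conversion} (equivalently, directly from symmetry and transitivity of $\conv$): if $t\conv\numeral n$ and $t\conv t'$ then $t'\conv\numeral n$, so $t'\in\hr\nat$.

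For the inductive step, write $\tau = \rho \to \sigma$ with $\level(\rho) < \level(\tau)$ and $\level(\sigma)\leq\level(\tau)$. For (1): suppose $t\in\hr\tau$ and $t\conv_\tau t'$. Let $s\in\hr\rho$. Then $t\,s\conv_\sigma t'\,s$ by closure of $\conv$ under contexts, and $t\,s\in\hr\sigma$ since $t\in\hr\tau$. By the inductive hypothesis for (1) at type $\sigma$ we get $t\,s\exteq_\sigma t'\,s$. As $s\in\hr\rho$ was arbitrary, $t\exteq_\tau t'$ by definition of $\exteq_{\rho\to\sigma}$. For (2): suppose $t\in\hr\tau$ and $t\exteq_\tau t'$. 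Let $s\in\hr\rho$. Then $t\,s\exteq_\sigma t'\,s$ by definition of $\exteq_\tau$, and $t\,s\in\hr\sigma$ since $t\in\hr\tau$; by the inductive hypothesis for (2) at type $\sigma$ we obtain $t'\,s\in\hr\sigma$. Since $s\in\hr\rho$ was arbitrary, $t'\in\hr\tau$ by definition of $\hr{\rho\to\sigma}$.
\end{proof}

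The only subtlety worth flagging is that $\exteq_\tau$ for $t,t'\notin\hr\tau$ need not behave well (it is not, a priori, an equivalence relation on all of $\FARC$), which is exactly why both clauses are stated under the hypothesis $t\in\hr\tau$; the induction never needs to leave the hereditarily total coterms, so no difficulty arises. This is routine and I do not expect any real obstacle; the mild care needed is only in keeping the two mutually-referring inductions aligned on the type structure.
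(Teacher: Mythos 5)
Your proof is correct and follows essentially the same route as the paper's: induction on the structure of $\tau$, with the base case reduced to $\exteq_\nat = \conv_\nat$ and closure of $\hr\nat$ under conversion, and the inductive step applying the premiss to an arbitrary $s \in \hr\rho$ and invoking the inductive hypothesis at $\sigma$ (your explicit check that $t\,s \in \hr\sigma$, needed to apply the hypothesis-laden inductive statement, is a point the paper leaves implicit but is the same argument).
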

\begin{proof}
    We proceed by induction on the structure of $\tau$.
    \begin{itemize}
        \item If $\tau = \nat$ then the statements are immediate from the equivalence of $\conv_\nat$ and $\exteq_\nat$ and symmetry/transitivity.
        \item Suppose $\tau = \rho \to \sigma$ and let $r \in \hr \rho$. 
        \begin{enumerate}
        \item If $t \conv_\tau t'$ then also $t\, r \ \conv \ t'r$, by closure of $\conv$ under contexts. Therefore $t\, r \exteq_\sigma t'r$, by the inductive hypothesis, and so $t \exteq_\tau t'$.
        \item If $t\exteq_\tau t'$ then $t\,r\ \exteq_\sigma \ t' r$, by definition.
        Therefore $t'r \in \hr\sigma$ by the inductive hypothesis,
        and so
        $t' \in \hr \tau$. \qedhere
        \end{enumerate}
    \end{itemize}
\end{proof}

Consequently, $\he{}$ is in turn a substructure of $\hr{}$ and inherits its quantifier-free theory.
Notice that this does not a priori cover satisfaction of induction, but nonetheless,
as promised, we have the following:

\begin{proposition}
\label{prop:he-models-t}
$\he{}$ is a model of $\T$.
\end{proposition}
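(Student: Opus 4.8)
The plan is to show that $\he{}$ satisfies all the axioms and rules of $\T$, building on the fact (Proposition~\ref{prop:he-closure-conds}) that $\he{}$ is a substructure of $\hr{}$ and hence inherits its quantifier-free theory, together with Proposition~\ref{prop:hr-models-t-wo-ext}. Since $\he{}$ and $\hr{}$ have the same domains and the same interpretation of constants and application, and differ only in that $=_\sigma$ is interpreted by $\exteq_\sigma$ rather than $\conv_\sigma$, the work reduces to two things: (i) verifying that the quantifier-free axioms that mention equality still hold when we read $=$ as $\exteq$ — i.e.\ the equality axioms of Figure~\ref{fig:equality-axioms}, the equational axioms of Figures~\ref{fig:eq-ax-seq-calc-min}, \ref{fig:rec-axioms}, \ref{fig:cond-rule+axioms}, and the number-theoretic axioms \eqref{item:ax-0-min}, \eqref{item:ax-succ-inj}; and (ii) verifying the two rules $\extensionality$ and $\induction$, which are genuinely new obligations not covered by substructure-inheritance.

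First I would dispense with (i). All the purely equational axioms (the combinator axioms, $\rec$ axioms, $\cond$ axioms) are instances $u = v$ where in fact $u \conv v$ holds in $\FARC$, hence $u \exteq v$ by Proposition~\ref{prop:he-closure-conds}(1) applied at the appropriate type (noting both sides lie in $\hr{}$ whenever the free variables are instantiated by elements of $\hr{}$, which is the only case the semantics requires). Reflexivity of $\exteq$ is immediate from reflexivity of $\conv$ and the definition. For the Leibniz schema, one argues by induction on the structure of the formula $\phi$, the atomic case reducing to: if $t\exteq_\sigma t'$ and $r(t)\exteq_\rho s(t)$ then $r(t')\exteq_\rho s(t')$; this follows because $\exteq$ is a congruence with respect to application (directly from the definition of $\exteq_{\sigma\to\tau}$, chasing inputs down to the base type) together with symmetry and transitivity of $\exteq$, which themselves follow by a routine induction on types. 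The axioms $\cnot\succ x = 0$ and $\succ x = \succ y \cimp x = y$ at type $\nat$ are exactly the ones already verified for $\FARC$ in Theorem~\ref{thm:farcs-are-a-model-of-t-wo-ext}, since $\exteq_\nat$ is by definition just $\conv_\nat$.

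Next, the rule $\extensionality$: if $\he{} \models s\,\vec x = t\,\vec x$ then $\he{} \models s = t$. Unwinding, the premise says that for all $\vec a$ in the appropriate $\hr{}$-domains, $s\,\vec a \exteq_\nat t\,\vec a$ (taking $\vec x$ long enough that the common type is $\nat$); but this is precisely the definition of $s \exteq t$. So $\extensionality$ holds essentially by construction — this is the whole point of passing to the extensional collapse. For $\induction$, given $\he{}\models\phi(0)$ and $\he{}\models\phi(x)\cimp\phi(\succ x)$ we must derive $\he{}\models\phi(t)$ for every $\nat$-term $t$; reducing quantifier-free $\phi$ to an equation $r(x)\exteq_\nat s(x)$ via Boolean-connective-as-operation coding, and using that every element of $\hr\nat$ converts to a unique numeral (Corollary~\ref{cor:un-for-red-and-conv}), this follows by meta-level induction on that numeral exactly as in Lemma~\ref{lem:ind-for-hr-in-rca}, since $\exteq_\nat = \conv_\nat$. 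Finally, for $\he{}$ to be a genuine model of $\T$ we must check that every closed $\T$ term $t:\tau$ lies in $\tau^{\he{}} = \hr\tau$ — but this is already established in Proposition~\ref{prop:hr-models-t-wo-ext}, since the domains of $\he{}$ and $\hr{}$ coincide. The main obstacle is really just the bookkeeping for the Leibniz schema under $\exteq$: one must be careful that $\exteq$ is a congruence for application at all types, which requires the type induction proving symmetry, transitivity and substitutivity of $\exteq$ simultaneously — none of which is deep, but it is the one place where a naive appeal to "inherited from $\hr{}$" does not suffice, because $\hr{}$'s equality is the finer relation $\conv$.
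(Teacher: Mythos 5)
Your overall route matches the paper's: inherit what you can from $\FARC$/$\hr{}$ via $\conv\subseteq\exteq$ (Proposition~\ref{prop:he-closure-conds}), observe that $\extensionality$ is forced by the very definition of $\exteq$, and verify $\induction$ by a meta-level induction on the numeral to which an element of $\hr\nat$ converts, exactly as in Lemma~\ref{lem:ind-for-hr-in-rca} (now at complexity $\Pi^0_{k+1}$). The one place where you go beyond the paper is your direct verification of the Leibniz schema for $\exteq$, and that is where your argument has a gap. You assert that $\exteq$ is a congruence for application ``directly from the definition of $\exteq_{\sigma\to\tau}$, chasing inputs down to the base type''. The definition only yields congruence in the \emph{function} position: $t\exteq_{\sigma\to\tau}t'$ gives $t\,s\exteq t'\,s$ for $s\in\hr\sigma$. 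What your induction on $\phi$ actually needs (already for an atomic $\phi$ with a free higher-type variable, e.g.\ $\phi(x)\equiv y\,x=w$) is congruence in the \emph{argument} position: $a\exteq_\sigma b$ implies $c\,a\exteq_\tau c\,b$ for an \emph{arbitrary} $c\in\hr{\sigma\to\tau}$. A type induction does not deliver this: in the base case $\tau=\nat$ with $\sigma$ of higher type, the claim amounts to $c\,a\conv c\,b$ whenever $a$ and $b$ merely agree extensionally, i.e.\ that every hereditarily total \farc\ is an extensional operation. That is precisely the nontrivial point separating the hereditarily recursive from the hereditarily effective operations; it is not a consequence of the definitions and would need its own argument (e.g.\ a logical-relations or continuity-style argument over coterms), which you do not supply.

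The paper's proof sidesteps this because it only ever invokes Leibniz for substitution of \emph{convertible} terms: in the induction step one has $t\conv\numeral n$, whence $r(t)\conv r(\numeral n)$ and $s(t)\conv s(\numeral n)$ by closure of $\conv$ under contexts (Theorem~\ref{thm:farcs-are-a-model-of-t-wo-ext}), and then $\conv\subseteq\exteq$ together with symmetry and transitivity of $\exteq$ closes the step. So either prove separately that every element of $\hr{}$ respects $\exteq$ (if you want the full Leibniz schema for $\exteq$ with open terms argued your way), or restrict, as the paper implicitly does, to the $\conv$-instances of Leibniz that the verification actually requires; as written, the parenthetical justification you give would fail at exactly the step described above.
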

\begin{proof}
[Proof sketch]
Since $\he{}$ is a substructure of $\FARC$, it remains to verify $\extensionality$ and $\induction$.

Extensionality is `forced' by the definition of $\exteq$. Suppose $t, t'\in \hr{\sigma \to \tau}$ and for all $s \in \hr \sigma$ we have $ts \exteq t's$. 
Then by definition we have $t \exteq t'$.

We may show that $\he{}$ satisfies induction in a similar way as we did for $\hr{}$, Lemma~\ref{lem:ind-for-hr-in-rca}, though note that this is no longer possible in $\RCA$, due to the increasing complexity of $\exteq_\tau$ with the level of $\tau$.
Supose $r(0) \exteq s(0)$ and for all $t\in \hr \nat$ we have $r(t) \exteq s(t) \implies r(\succ t) \exteq s(\succ t)$.
In particular we have that $r(\numeral n) \exteq s(\numeral n) \implies r(\succ \numeral n) \exteq s(\succ \numeral n)$, for any $n \in \Nat$.
We thus have for all $n\in \Nat$ that $r(\numeral n) \exteq s(\numeral n)$ by $\Pi^0_{k+1}$-induction on $n$ (where $\level(r),\level(s)\leq k$).
Now, suppose $t\in \hr\nat$. Then by definition we have $t \conv \numeral n$, and so $t\exteq \numeral n$, for some $n\ \in \Nat$, so indeed $r(t) \exteq s(t)$ by the Leibniz property (inherited from $\FARC$, Theorem~\ref{thm:farcs-are-a-model-of-t-wo-ext}).
\end{proof}

\section{From $\nC n$ to $\nT{n+1}$, via arithmetisation of models}
\label{sect:t-sim-c}
In this section we will present a converse result to that of Section~\ref{sect:c-sim-t}, i.e.\ that terms of $\nT {n+1}$ may simulate terms of $\nC n$, satisfying the same type 1 quantifier-free theory.
The methodology of Section~\ref{sect:c-sim-t} was entirely proof theoretic, thanks in part to the well-foundedness of the $\T$-terms that were simulated.
A priori, we do not admit a similar methodology for the converse direction, due to the non-wellfoundedness of coterms, and so we employ a model-theoretic approach.
Namely, we will show that the type structures $\hr{}$ and $\he{}$ introduced in the previous section indeed constitute models of $\C \setminus \extensionality$ and $\C$, respectively.
In fact, we will formalise the membership of $\nC n$ coterms in $\hr {n+1}$ within the theory $\RCA + \CIND{\Sigma^0_{n+2}}$ (non-uniformly), whence we obtain explicit equivalent terms of $\nT{n+1}$ by proof mining.

Throughout this section we continue to work only with coterms that are finite applications of coderivations, variables and constants (\farc s). We will work mainly within $\RCA + \CIND{\Sigma^0_{n+1}}$ or $\RCA + \CIND{\Sigma^0_{n+2}}$,
to be clearly indicated at the appropriate points.

\subsection{Canonical branches of non-total coterms}
As for $\T$, the main difficulty in showing that $\hr{}$ or $\he{}$ is a model of $\C$ is in showing that each $\C$ coterm is, indeed, interpreted in the structure. 
For us, this will amount to a formalised proof of the totality of (regular) progressing coterms.
Our approach will be to import a suitable version of the proof of Proposition~\ref{prop:termination} but relativise all the quantifiers, there in the standard model, to their respective domains in $\hr{}$.

First let us note that $\hr{}$ is closed under the typing rules of $\C$:
\begin{observation}
\label{obs:preservation-of-hr}
	Consider a rule instance as follows, with $k\leq 2$:
	\[
	\vliiinf{\mathsf r}{}{\vec \sigma \seqar \tau}{\vec \sigma_1 \seqar \tau_1}{\cdots}{\vec \sigma_k \seqar \tau_k}
	\]
	If $t_i \in \hr{\vec \sigma_i \to \tau_i}$ then $\mathsf r \ t_1 \ \cdots \ t_n  \in \hr{\vec \sigma \to \tau}$.
\end{observation}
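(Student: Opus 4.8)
The plan is to verify the statement rule by rule, checking in each case that the combinator $\mathsf r$ (viewed as an element of $\FARC$, hence of $\hr{}$ if it lands in the right set) maps hereditarily total inputs to hereditarily total outputs. The key observation is that $\hr{}$ is already known to contain every term of $\T \setminus \extensionality$ and to be closed under application and conversion (Proposition~\ref{prop:hr-closure-props}), so the only genuinely new rule relative to $\T$ is $\cond$; the sequent-calculus combinators $\id,\exch,\wk,\cntr,\cut,\leftimp,\rightimp,0,\succ$ are all $\T^-$ terms and so are automatically in the appropriate $\hr\sigma$. Thus the proof reduces to: (i) a uniform bookkeeping argument that, given $t_i \in \hr{\vec\sigma_i \to \tau_i}$, the composite $\mathsf r\, t_1\cdots t_k$ applied to arbitrary arguments in $\hr{\vec\sigma}$ reduces (by the reduction rules of Figure~\ref{fig:reduction-rules}) to a coterm built by application from the $t_i$'s and those arguments, and (ii) closure of $\hr{}$ under application and conversion to conclude that this composite lies in $\hr\tau$.

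Concretely, I would proceed as follows. First, peel $\tau = \vec\tau \to \nat$ and fix arbitrary $\vec x \in \hr{\vec\sigma}$ and $\vec z \in \hr{\vec\tau}$; the goal is that $\mathsf r\, \vec t\, \vec x\, \vec z \in \hr\nat$, i.e.\ that it converts to a numeral. By Proposition~\ref{prop:hr-closure-props}\eqref{item:hr-closed-under-conversion} it suffices to exhibit a conversion of $\mathsf r\,\vec t\,\vec x\,\vec z$ to some coterm already known to be in $\hr\nat$. For the structural rules this is a one-step calculation using the corresponding axiom in Figure~\ref{fig:eq-ax-seq-calc-min}: e.g.\ $\wk\, t\, \vec x\, x \conv t\,\vec x$, $\cntr\, t\, \vec x\, x \conv t\,\vec x\,x\,x$, $\cut\, s\, t\, \vec x \conv t\,\vec x\,(s\,\vec x)$, $\leftimp\, r\, s\, \vec x\, y \conv s\,\vec x\,(y\,(r\,\vec x))$, and in each case the right-hand side is an application of the hypotheses $t_i \in \hr{\cdots}$ to arguments in $\hr{\cdots}$, hence in $\hr\nat$ by repeated use of closure under application; the $\id$, $0$, $\succ$ cases are trivial. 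The one case requiring the structure of $\hr\nat$ itself is $\cond$: here $\cond\, s\, t\, \vec x\, w$ with $w\in\hr\nat$ means $w\conv \numeral m$ for some $m\in\Nat$ (by the confluence corollary, Corollary~\ref{cor:un-for-red-and-conv}), and then either $m=0$ and $\cond\, s\, t\,\vec x\,\numeral 0 \conv s\,\vec x$, or $m = k+1$ and $\cond\, s\, t\,\vec x\, \numeral{k+1} \conv t\,\vec x\,\numeral k$; in both cases the result is again an application of a hypothesis to arguments in the right $\hr{}$-sets. Invoking closure under conversion (for $w\conv\numeral m$ propagates to $\cond\, s\,t\,\vec x\, w \conv \cond\, s\, t\,\vec x\,\numeral m$ by closure of $\conv$ under contexts) finishes this case.

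I do not expect a serious obstacle here: the observation is essentially immediate once the machinery of Section~\ref{sect:coterm-models} is in place, and indeed the statement is phrased as an \textbf{Observation} rather than a lemma or proposition. The only mild subtlety worth flagging in the write-up is that closure of $\hr{}$ under application and conversion (Proposition~\ref{prop:hr-closure-props}) is \emph{non-uniform} in the types $\sigma,\tau$, so the proof is carried out for each fixed rule instance with its fixed types, not uniformly in a single $\RCA$ argument; this matches the way the later sections use it. I would therefore present the proof as: reduce to the type-$\nat$ case by unfolding the definition of $\hr{\vec\sigma\to\tau}$; handle all the $\T^-$ combinators by their defining reduction together with closure under application; handle $\cond$ by a case split on the normal form of the last argument; and in every case close up with Proposition~\ref{prop:hr-closure-props}. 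Since the analogous statement with $\hr{}$ replaced by $\he{}$ will be needed later, I would also remark that exactly the same argument goes through using the closure properties of $\he{}$ recorded in Proposition~\ref{prop:he-closure-conds}.
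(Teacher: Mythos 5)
Your proof is correct and follows essentially the route the paper intends: the paper treats the Observation as immediate "by simple inspection of the rules", and the detailed case-by-case verification it gives shortly afterwards (in the proof of Proposition~\ref{prop:non-tot-branch-well-defined}, in contrapositive, $\hr{}$-relativised form) uses exactly your ingredients — the reduction rules of Figure~\ref{fig:reduction-rules}, closure of $\hr{}$ under application and conversion (Proposition~\ref{prop:hr-closure-props}), and the numeral case split via Corollary~\ref{cor:un-for-red-and-conv} for $\cond$.
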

This follows by simple inspection of the rules of $\nC{}$, and we shall indeed give a proof of a more refined statement shortly.
As a consequence, by contraposition, any coderivation $\notin \hr{}$ must induce an infinite branch of coderivations $\notin \hr{}$, similarly to the proof of Propostion~\ref{prop:termination}.
The next definition formalises a canonical such branch, as induced by an input on which the coderivation is non-hereditarily-total.
We shall present just the definition of the branch first, and then argue that it is well-defined, for each explicit $\nT n$ coderivation, in $\RCA + \CIND{\Sigma^0_{n+2}}$.

\begin{definition}
	[Branch generated by a non-total input]
	\label{dfn:canonical-non-total-branch}
	Let $t_0:\vec \sigma_0 \seqar \tau_0$ be a coderivation and 
	let $\vec s_0 \in \hro{\vec \sigma_0}$ s.t.\ $t \,\vec s \, \notin \hro\tau$.
	We define the branch $(t_i: \vec \sigma_i \seqar \tau_i)_{i\geq 0}$ and inputs $\vec s_i \in \hro{\vec \sigma_i}$, \emph{generated} by $t_0$ and $\vec s_0$ as follows, always satisfying the invariant $t_i \, \vec s_i \, \notin \hro{\tau_i}$:
	\begin{enumerate}
		\item\label{item:can-br-initial} ($t_i$ cannot be an initial sequent).
		\item\label{item:can-br-wk} Suppose $t_i = \vlderivation{
		\vlin{\wk}{}{\vec \sigma, \sigma \seqar \tau }{\vltr{t}{\vec \sigma \seqar \tau}{\vlhy{\ }}{\vlhy{}}{\vlhy{\ }}}
		}$ and $\vec s_i = (\vec s,s)$.
		Then $t_{i+1} := t$ and $\vec s_{i+1} := \vec s$.
		\item\label{item:can-br-ex} Suppose $t_i = \vlderivation{
			\vlin{\exch}{}{\vec \rho,\rho, \sigma,  \vec \sigma \seqar \tau}{
			\vltr{t}{\vec \rho,  \sigma, \rho,\vec \sigma \seqar \tau }{\vlhy{\ }}{\vlhy{}}{\vlhy{\ }}
			}
		}$ and $\vec s_i = (\vec r, r, s, \vec s)$. Then $t_{i+1} := t$ and $\vec s_{i+1} := (\vec r, s, r, \vec s)$.
		\item\label{item:can-br-cntr} Suppose $t_i = 
		\vlderivation{
			\vlin{\cntr}{}{\vec \sigma, \sigma \seqar \tau}{
			\vltr{t}{\vec \sigma, \sigma, \sigma \seqar \tau}{\vlhy{\ }}{\vlhy{}}{\vlhy{\ }}
			}
		}
		$ and $\vec s_i = (\vec s, s)$. Then $t_{i+1}:= t$ and $\vec s_{i+1} := (\vec s, s, s)$.
		\item\label{item:can-br-cut} Suppose $t_i = 
		\vlderivation{
		\vliin{\cut}{}{\vec \sigma \seqar \tau}{
			\vltr{t}{\vec \sigma \seqar \rho}{\vlhy{\ }}{\vlhy{}}{\vlhy{\ }}
		}{
			\vltr{t'}{\vec \sigma, \rho \seqar \tau}{\vlhy{\ }}{\vlhy{}}{\vlhy{\ }}
		}
		}
		$ and $\vec s_i = \vec s$.
		Then if $t \, \vec s \ \in \hr \rho$ then $t_{i+1}:= t' $ and $\vec s_{i+1} : = (\vec s, t\, \vec s)$. Otherwise, $t_{i+1}:= t$ and $\vec s_{i+1} := \vec s$.
		\item\label{item:can-br-leftimp} Suppose $t_i = 
		\vlderivation{
		\vliin{\leftimp}{}{\vec \sigma, \rho \to \sigma \seqar \tau }{
			\vltr{t}{\vec \sigma \seqar \rho}{\vlhy{\ }}{\vlhy{}}{\vlhy{\ }}	
		}{
			\vltr{t'}{\vec \sigma, \sigma \seqar \tau}{\vlhy{\ }}{\vlhy{ }}{\vlhy{\ }}
		}
		}
		$
		and $\vec s_i = (\vec s, s)$.
		If $t\, \vec s \ \in \hr \rho$ then $t_{i+1}:= t' $ and $\vec s_{i+1}:= (\vec s, s\, (t\, \vec s))$.
		Otherwise $t_{i+1}:= t$ and $\vec s_{i+1}:= \vec s$.
		\item\label{item:can-br-rightimp} Suppose $t_i = 
		\vlderivation{
		\vlin{\rightimp}{}{\vec \sigma \seqar \sigma \to \tau}{
		\vltr{t}{\vec \sigma, \sigma \seqar \tau}{\vlhy{\ }}{\vlhy{}}{\vlhy{\ }}
		}
		}
		 $ and $\vec s_i = \vec s$. Let $s $ be the least\footnote{Recall that, strictly speaking, we assume all our objects are coded by natural numbers in the ambient theory (here fragments of second-order arithmetic). Thus we may always find a `least' object satisfying a property when one exists, by induction on that property.} 
		 element of $\hr \sigma$ such that $t\, \vec s\, s\, \notin \hr\tau$.
		We set $t_{i+1}:= t$ and $\vec s_{i+1} := (\vec s, s)$.
		\item\label{item:can-br-cond} Suppose $t_{i} = 
		\vlderivation{
		\vliin{\cond}{}{\vec \sigma, \nat \seqar \tau}{
			\vltr{t}{\vec \sigma \seqar \tau}{\vlhy{\ }}{\vlhy{}}{\vlhy{\ }}
		}{
			\vltr{t'}{\vec \sigma , \nat \seqar \tau}{\vlhy{\ }}{\vlhy{}}{\vlhy{\ }}
		}
		}
		$ and $\vec s_i = (\vec s ,r)$. 
		If $r\conv 0$ then $t_{i+1} := t$ and $\vec s_{i+1} := \vec s$.
		Otherwise, if $r \conv \succ \numeral n$, then $t_{i+1} := t'$ and $\vec s_{i+1} := (\vec s, \numeral n)$.
	\end{enumerate}
\end{definition}

Note that certain arbitrary choices from the proof of Proposition~\ref{prop:termination} have been made canonical in the definition above.
We also define inputs for $\cut$ and $\leftimp$ directly, to ease the formalisation in arithmetic. 

\begin{proposition}
\label{prop:non-tot-branch-well-defined}
	Let $t_0 : \vec \sigma_0 \seqar \tau_0$ be a fixed coderivation in which all types occurring have level $\leq n$.
	$\RCA + \CIND{\Sigma^0_{n+2}}$ proves the following:
	if $\vec s_0 \in \hr{\vec \sigma_0}$ s.t.\ $t_0\, \vec s_0 \, \notin \hr{\tau_0}$ then the branch $(t_i)_i$ and inputs $(\vec s_i)_i$ generated by $t_0$ and $s_0$ are $\Delta^0_{n+2}$-well-defined.
\end{proposition}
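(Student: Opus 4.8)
The plan is to relativise, essentially line for line, the infinite-descent construction behind Proposition~\ref{prop:termination}: replace every quantifier over a type $\sigma$ there by a quantifier over $\hr\sigma$, and replace the appeals to the standard model by appeals to the facts that $\FARC$, hence $\hr{}$ as a substructure of it, satisfies all the conversion axioms (Theorem~\ref{thm:farcs-are-a-model-of-t-wo-ext}) and that $\hr{}$ is closed under application and under conversion (Proposition~\ref{prop:hr-closure-props}). Since we work throughout with \farc s, closed under application, every $t_i$ is a sub-coderivation of the fixed $t_0$ and every $\vec s_i$ is a \farc, so these ambient facts always apply. The substantive work is then twofold: first, to verify that each clause of Definition~\ref{dfn:canonical-non-total-branch} determines $(t_{i+1},\vec s_{i+1})$ \emph{uniquely} from $(t_i,\vec s_i)$ while preserving the invariant $\vec s_i \in \hr{\vec \sigma_i}$ and $t_i\,\vec s_i \notin \hr{\tau_i}$; second, to bound the arithmetical complexity of the resulting step relation so that $\RCA + \CIND{\Sigma^0_{n+2}}$ is enough both to choose the witnesses appearing in the construction and to form the iteration.

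For the preservation step, fix $i$ and assume the invariant. First, $t_i$ cannot be an initial sequent: $\id\,s\conv s$, $0\conv\numeral 0$ and $\succ\numeral m\conv\numeral{m+1}$, so each of $\id$, $0$, $\succ$ is hereditarily total by closure of $\hr{}$ under conversion, which together with $t_i\,\vec s_i\notin\hr{\tau_i}$ excludes clause~\eqref{item:can-br-initial}. For $\wk$, $\exch$, $\cntr$ the successor input is the evident rearrangement of $\vec s_i$ and the invariant is transported using the corresponding conversion axiom of Figure~\ref{fig:reduction-rules} and closure of $\conv$ under contexts. For $\cut$ and $\leftimp$ one case-splits on whether the left premiss applied to the inherited inputs lies in $\hr\rho$ (the cut, resp.\ minor, type, of level $\leq n$): in the positive case the $\cut$ (resp.\ $\leftimp$) conversion axiom together with Proposition~\ref{prop:hr-closure-props}\eqref{item:hr-closed-under-conversion} pushes the failure of hereditary totality into the right premiss; in the negative case it is already witnessed by the left premiss. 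For $\cond$, the invariant gives $\vec s_i = (\vec s, r)$ with $r\in\hr\nat$, hence $r\conv\numeral m$ for a \emph{unique} $m$ by confluence (Corollary~\ref{cor:un-for-red-and-conv}), and the $\cond$ axioms determine the successor; uniqueness of $m$ is exactly what makes clause~\eqref{item:can-br-cond} deterministic. The only genuinely non-routine clause is~\eqref{item:can-br-rightimp}: here $\tau_i=\sigma\to\tau$, and $t_i\,\vec s_i\notin\hr{\sigma\to\tau}$ unfolds, by the very definition of $\hr{\sigma\to\tau}$, to $\exists s\in\hr\sigma.\ t_i\,\vec s_i\,s\notin\hr\tau$; a \emph{least} such $s$ then exists by the $\Sigma^0_{n+1}$ least-number principle, which is available since $\level(\sigma)<n$ makes ``$s\in\hr\sigma$'' a $\Pi^0_n$ condition and $\level(\tau)\leq n$ makes ``$t_i\,\vec s_i\,s\notin\hr\tau$'' a $\Sigma^0_{n+1}$ condition (Fact~\ref{fact:log-comp-hr-preds}), and their conjunction is $\Sigma^0_{n+1}$.

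Having shown the map $(t_i,\vec s_i)\mapsto(t_{i+1},\vec s_{i+1})$ is single-valued and total on inputs meeting the invariant, inspecting the clauses shows its graph is a Boolean combination of the guards above -- each of which is $\Sigma^0_{n+1}$ or $\Pi^0_{n+1}$ -- hence $\Delta^0_{n+2}$. One then proves by $\Sigma^0_{n+2}$-induction on $k$, available in $\RCA+\CIND{\Sigma^0_{n+2}}$, that there is a unique finite sequence $\langle(t_0,\vec s_0),\dots,(t_k,\vec s_k)\rangle$ extending $(t_0,\vec s_0)$ and respecting this step relation and the invariant; the resulting $\Delta^0_{n+2}$-definable total function $i\mapsto(t_i,\vec s_i)$ is the required branch. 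I expect the main obstacle to be the complexity bookkeeping rather than any conceptual difficulty: one must confirm that the $\hr{}$-membership guards really stay at levels $\Sigma^0_{n+1}/\Pi^0_{n+1}$ (so the step relation is $\Delta^0_{n+2}$ and not higher), that $\Sigma^0_{n+1}$-LNP -- and nothing stronger -- is what clause~\eqref{item:can-br-rightimp} needs, and that $\Sigma^0_{n+2}$-induction suffices both for that least-number principle and for forming the iteration, so that the ambient theory can indeed be taken to be exactly $\RCA+\CIND{\Sigma^0_{n+2}}$.
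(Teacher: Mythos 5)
Your proposal is correct and follows essentially the same route as the paper's proof: the same clause-by-clause case analysis using the conversion axioms, closure of $\hr{}$ under application and conversion, confluence for the $\cond$ case, and $\Sigma^0_{n+1}$-minimisation for the $\rightimp$ case, with the same complexity bookkeeping via Fact~\ref{fact:log-comp-hr-preds} and induction at level $\Sigma^0_{n+2}$. The only (cosmetic) difference is packaging: the paper defines the graph relation $\branch$ globally, proving determinism by $\Pi^0_{n+2}$-induction and totality together with the invariant $t_i\,\vec s_i\notin\hr{\tau_i}$ by $\Sigma^0_{n+2}$-induction, whereas you iterate a single-valued $\Delta^0_{n+2}$ step relation and build finite approximating sequences, which amounts to the same argument.
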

\begin{proof}
Let us write $\branch ( i, (t_0, \vec s_0), (t_i, \vec s_i) )$ for ``$t_i$ and $\vec s_i$ are the $i$\textsuperscript{th} sequent and input tuple generated by $t_0$ and $\vec s_0$''.
Notice that the construction of $t_i$ and $\vec s_i$ itself is recursive in $\hr n$, $t_0$ and $\vec s_0$, and so $\branch$ is certainly recursion-theoretically $\Delta^0_{n+2}(t_0, \vec s_0)$, by appealing to Fact~\ref{fact:log-comp-hr-preds}.
To formally prove that $\branch$ is $\Delta^0_{n+2}$ \emph{inside} our theory, it suffices to show determinism:
\[
\forall i . \forall (t_i, \vec s_i), (t_i', \vec s_i') . 
\left(
\begin{array}{rl}
& \branch ( i, (t_0, \vec s_0), (t_i, \vec s_i) ) \cand \branch ( i, (t_0, \vec s_0), (t_i', \vec s_i') )\\
\implies & t_i = t_i' \cand \vec s_i = \vec s_i'
\end{array}
\right)
\]
Writing $\branch$ syntactically as a $\Sigma^0_{n+2}$ formula, the above may be directly proved by $\Pi^0_{n+2}$-induction on $i$, appealing to the cases of Definition~\ref{dfn:canonical-non-total-branch} above.

It remains to show that the construction is total, i.e.\ that each $(t_i,\vec s_i)$ actually exists.
In fact we will simultaneously prove this and the inductive invariant of the construction, so the formula,
\begin{equation}
\label{eqn:ind-inv-non-constr-br}
 \exists (t_i,\vec s_i) . (\branch(i, (t_0, \vec s_0), (t_i, \vec s_i)) \cand t_i\, \vec s_i\, \notin \hr{\tau_i})
\end{equation}
by induction on $i$.
Note that, since $\level(\tau_i) \leq n$ we have that $\hr{\tau_i}$ is $\Pi^0_{n+1}$ by Fact~\ref{fact:log-comp-hr-preds}, and so $t_i\vec s_i \notin \hr{\tau_i}$ is $\Sigma^0_{n+1}$, whereas $\branch(i, (t_0, \vec s_0), (t_i, \vec s_i))$ is $\Delta^0_{n+2}$ as already mentioned.
Thus the inductive invariant in \eqref{eqn:ind-inv-non-constr-br} is indeed $\Sigma^0_{n+2}$.

First, to justify
\eqref{item:can-br-initial},
let us consider the possible initial sequents:
\begin{itemize}
\item For the $0$ rule: we have $0 \in \hr\nat$ by definition;
\item For the $\succ$ rule: if $t\in \hr\nat$, then $t \conv \numeral n$ for some $n \in \Nat$, by definition of $\hr\nat$, and so also $\succ t \conv \succ \numeral n $, by closure of $\conv$ under contexts. Hence $\succ t \in \hr \nat$.
\item For an $\id_\sigma$ rule: 
 if $s \in \hr\sigma $ then $\id\, s \, \conv \, s$ by $\id$ reduction. Hence $\id\, s\, \in \hr\sigma$.
\end{itemize}

Now, the base case, for $i=0$, follows by the assumption on $t_0$ and $\vec s_0$, so let us assume that $\branch(i, (t_0, \vec s_0), (t_i, \vec s_i))$ and $t_i\, \vec s_i \, \notin \hr{\tau_i}$. 
We will witness the existential of the inductive invariant with the coderivation $t_{i+1}$ and inputs $\vec s_{i+1}$ as given in Definition~\ref{dfn:canonical-non-total-branch} above (justifying their existence when necessary), showing $t_{i+1}\, \vec s_{i+1}\, \notin\hr{\tau_{i+1}}$.
We shall also adopt the same notation for inputs and types as in Definition~\ref{dfn:canonical-non-total-branch}.

For \eqref{item:can-br-wk}, the $\wk$ case, we have:
\[
\begin{array}{rll}
& t_i \, \vec s_i \notin \hr \tau & \text{by inductive hypothesis}\\
\therefore & \wk\, t\, \vec s\, s\, \notin \hr {\tau_i}& \text{by definitions}\\
\therefore & t\, \vec s\, \notin \hr \tau & \text{by $\reduces_\wk$ and closure of $\hr\tau$ under $\conv$}\\
\therefore & t_{i+1}\, \vec s_{i+1} \, \notin \hr{\tau_{i+1}} & \text{by definitions}
\end{array}
\]

For \eqref{item:can-br-ex}, the $\exch$ case, we have:
\[
\begin{array}{rll}
& t_i\, \vec s_i \, \notin \hr {\tau_i} & \text{by inductive hypothesis}\\
\therefore & \exch\, t\, \vec r\, r\, s\, \vec s\, \notin \hr\tau & \text{by definitions}\\
\therefore & t\, \vec r\, s\, r\, \vec s\, \notin \hr\tau & \text{by $\reduces_\exch$ and $\because$ $\hr\tau$ closed under $\conv$} \\
\therefore & t_{i+1}\, \vec s_{i+1} \, \notin \hr{\tau_{i+1}} & \text{by definitions}
\end{array}
\]

For \eqref{item:can-br-cntr}, the $\cntr$ case, we have:
\[
\begin{array}{rll}
& t_i \, \vec s_i \, \notin \hr {\tau_i} & \text{by inductive hypothesis}\\
\therefore & \cntr\, t\, \vec s\, s\, \notin \hr\tau & \text{by definitions}\\
\therefore & t\, \vec s\, s\, s\, \notin \hr\tau & \text{by $\reduces_\cntr$ and $\because$ $\hr\tau$ closed under $\conv$ }\\
\therefore & t_{i+1}\, \vec s_{i+1}\, \notin \hr{\tau_{i+1}} & \text{by definitions}
\end{array}
\]

For \eqref{item:can-br-cut}, the $\cut$ case, assume without loss of generality that $t\, \vec s\, \in \hr\tau$.
We have:
\[
\begin{array}{rll}
& t_i \, \vec s_i \, \notin \hr{\tau_i} & \text{by inductive hypothesis}\\
\therefore & \cut\, t\, t' \vec s\, \notin \hr\tau & \text{by definitions}\\
\therefore & t' \vec s\, (t\, \vec s)\, \notin \hr\tau & \text{by $\reduces_\cut$ and $\because$ $\hr\tau$ closed under $\conv$}\\
\therefore & t_{i+1}\, \vec s_{i+1}\, \notin \hr{\tau_{i+1}} & \text{by definitions}
\end{array}
\]

For \eqref{item:can-br-leftimp}, the $\leftimp$ case, assume without loss of generality that $t\, \vec s\, \in \hr\tau$, and so also $s\, (t\, \vec s)\,  \in \hr\sigma$ by Proposition~\ref{prop:hr-closure-props}.
We have:
\[
\begin{array}{rll}
& t_i \, \vec s_i \, \notin \hr{\tau_i} & \text{by inductive hypothesis}\\
\therefore & \leftimp\, t\, t' \vec s\, s\, \notin \hr\tau & \text{by definitions}\\
\therefore & t' \vec s\, (s\, (t\, \vec s))\, \notin \hr\tau & \text{by $\reduces_\leftimp$ and $\because$ $\hr\tau$ closed under $\conv$}\\
\therefore & t_{i+1}\, \vec s_{i+1}\, \notin \hr{\tau_{i+1}} & \text{by definitions}
\end{array}
\] 

For \eqref{item:can-br-rightimp}, the $\rightimp$ case, we have:
\[
\begin{array}{rll}
& t_i \, \vec s_i \, \notin \hr{\tau_i} & \text{by inductive hypothesis}\\
\therefore & \rightimp\, t\, \vec s\, \notin \hr{\sigma \to \tau} & \text{by definitions}\\
\therefore & \exists s' \in \hr\sigma .\ \rightimp \, t\, \vec s\, s' \notin \hr\tau & \text{by definition of $\hr{\sigma\to\tau}$}\\
\therefore & \exists s' \in \hr\sigma .\ t\, \vec s\, s' \notin \hr\tau & \text{by $\reduces_\rightimp$ and $\because$ $\hr\tau$ closed under $\conv$}\\
\therefore & t\, \vec s\, s\, \notin \hr\tau & \text{$\because$ $s$ is well-defined by $\Sigma^0_{n+1}$-minimisation}\\
\therefore & t_{i+1}\, \vec s_{i+1} \, \notin \hr{\tau_{i+1}} & \text{by definitions}
\end{array}
\] 
In the penultimate step, note that we have from the inductive hypothesis $\exists s (s \in \hr\sigma \, \cand\, t\, \vec s\, s \, \notin \hr \tau )$, where $\level(\sigma)<n$ and $\level (\tau)\leq n$. Thus $(s \in \hr\sigma \, \cand\, t\, \vec s\, s \, \notin \hr \tau )$ is indeed $\Sigma^0_{n+1}$, by Fact~\ref{fact:log-comp-hr-preds}, and so $\Sigma^0_{n+1}$-minimisation applies.

For \eqref{item:can-br-cond}, the $\cond$ case, note by the inductive hypothesis we have $r \in \hr\nat$ so by definition of $\hr\nat$ and confluence, namely Corollary~\ref{cor:un-for-red-and-conv}, we have that $r$ converts to a unique numeral.
Thus the two cases considered by the definition of $t_{i+1}$ and $\vec s_{i+1}$ are exhaustive and exclusive, and we consider each separately.

If $r \conv 0$ then we have:
\[
\begin{array}{rll}
& t_i\, \vec s_i \notin \hr{\tau_i} & \text{by inductive hypothesis}\\
\therefore & \cond \, t\, t' \vec s\, r \, \notin \hr \tau & \text{by definitions}\\
\therefore & \cond\, t\, t' \vec s\, 0 \, \notin \hr\tau & \text{by assumption and $\because$ $\hr\tau$ closed under $\conv$}\\
\therefore & t\, \vec s\, \notin \hr\tau & \text{by $\reduces_\cond$ and $\because$ $\hr\tau$ closed under $\conv$}\\
\therefore & t_{i+1}\, \vec s_{i+1}\, \notin \hr{\tau_{i+1}} & \text{by definitions}
\end{array}
\]

If $r \conv \succ \numeral n$ then we have:
\[
\begin{array}{rll}
& t_i\, \vec s_i \notin \hr{\tau_i} & \text{by inductive hypothesis}\\
\therefore & \cond \, t\, t' \vec s\, r \, \notin \hr \tau & \text{by definitions}\\
\therefore & \cond\, t\, t' \vec s\, \succ \numeral n\, \notin \hr \tau & \text{by assumption and $\because$ $\hr\tau$ closed under $\conv$}\\
\therefore & t' \vec s\, \numeral n \, \notin \hr\tau & \text{by $\reduces_\cond$ and $\because$ $\hr\tau$ closed under $\conv$}\\
\therefore & t_{i+1}\, \vec s_{i+1}\, \notin \hr{\tau_{i+1}} & \text{by definitions}
\end{array}
\]

This concludes the proof.
\end{proof}

\subsection{Progressing coterms are hereditarily total}
We are now ready to show that progressing coterms are hereditarily total, i.e.\ that they belong to $\hr{}$ (and $\he{}$).
Now that we have formalised the infinite `non-total' branches of the proof of Proposition~\ref{prop:termination}, relativised to the type structure $\hr{}$, we continue to formalise the remainder of the argument.

\begin{lemma}
	[$\RCA$]
	\label{lem:mon-dec-thread}
	Let $t_0 : \vec \sigma_0 \seqar \tau_0$ and $\vec s_0 \in \hr{\vec \sigma_0}$ be a coderivation and inputs s.t.\ $t_0\, \vec s_0\, \notin \hr{\tau_0}$.
	Furthermore
	let $(t_i: \vec \sigma_i \seqar \tau_i)_i$ and $\vec s_i \in \hr{\vec \sigma_i}$ be a branch and inputs generated by $t_0$ and $\vec s_0$, satisfying Definition~\ref{dfn:canonical-non-total-branch}.

Suppose some $\nat$-occurrence $\nat^{i+1} \in \vec \sigma_{i+1}$ is an immediate ancestor of some $\nat$-occurrence $\nat^i \in \vec \sigma_i$.
Write $s_i \in \vec s_i$ for the coterm in $\hr \nat$ corresponding to $\nat^i$, and similarly 
$s_{i+1} \in \vec s_{i+1}$ for the coterm $s_{i+1} \in \hr \nat$ corresponding to $\nat^{i+1}$.

If $s_i \conv \numeral n_i$ and $s_{i+1} \conv \numeral n_{i+1}$, for some $n_i,n_{i+1} \in \Nat$, then:
	\begin{enumerate}
	    \item\label{item:mon-dec-thread-non-inc} $n_i \geq n_{i+1}$.
	    \item\label{item:mon-dec-thread-strict-dec} If $\nat^i$ is principal for a $\cond$ step, then $n_i > n_{i+1}$.
	\end{enumerate}
\end{lemma}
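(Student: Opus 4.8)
The plan is a finite case analysis on the rule $\mathsf r$ with which $t_i$ concludes. By item~\eqref{item:can-br-initial} of Definition~\ref{dfn:canonical-non-total-branch} this is not an initial sequent, so $\mathsf r$ ranges over $\wk,\exch,\cntr,\cut,\leftimp,\rightimp$ and $\cond$. For each such $\mathsf r$ I would read off the immediate-ancestry edges from the coloured occurrences in Figures~\ref{fig:seq-calc-min} and~\ref{fig:cond-rule+axioms}, pair this with the clause of Definition~\ref{dfn:canonical-non-total-branch} describing how $t_{i+1}$ and $\vec s_{i+1}$ are built from $t_i$ and $\vec s_i$, and track the coterm attached to $\nat^i$ as it passes up to $\nat^{i+1}$. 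The only arithmetical ingredient is uniqueness of numeral normal forms, i.e.\ item~\eqref{item:un-conv} of Corollary~\ref{cor:un-for-red-and-conv}: if $u\conv\numeral m$ and $u\conv\numeral{m'}$ then $m=m'$.

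For every $\mathsf r$ other than $\cond$, the observation is that whenever $\nat^{i+1}$ in a premiss is an immediate ancestor of $\nat^i$ in the conclusion, both are `context' occurrences (the purple $\vec\sigma$) and, by the matching clause among items~\eqref{item:can-br-wk}--\eqref{item:can-br-rightimp}, receive literally the same coterm: the freshly introduced occurrences -- the weakened $\sigma$ of $\wk$, the cut-formula of $\cut$, and the $\sigma$-occurrences figuring in $\rightimp$ and $\leftimp$ -- either sit on only one side of the rule or are not of type $\nat$ (e.g.\ the $\rho\to\sigma$ of $\leftimp$), so they never form an immediate-ancestry pair; and for $\cntr$ both duplicated premiss occurrences are immediate ancestors of the single occurrence in the conclusion and inherit its input (item~\eqref{item:can-br-cntr}). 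Hence $s_{i+1}=s_i$, so $n_i=n_{i+1}$ by uniqueness of numeral normal forms, and part~\eqref{item:mon-dec-thread-strict-dec} holds vacuously since $\mathsf r$ is not a $\cond$ step.

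The case $\mathsf r=\cond$, typeset as in Figure~\ref{fig:cond-rule+axioms}, carries the only genuine content. Writing $\vec s_i=(\vec s,r)$ with $r$ the coterm at the blue $\blue\nat$ position (so $r\conv\numeral{n_i}$ by hypothesis), I would split on whether $\nat^i$ lies in the context or is the blue $\blue\nat$. In the former subcase $\nat^{i+1}$ again inherits the same context coterm in whichever premiss the branch descends to (item~\eqref{item:can-br-cond}), giving $n_i=n_{i+1}$, and $\nat^i$ is not principal so part~\eqref{item:mon-dec-thread-strict-dec} is vacuous. In the latter subcase $\nat^i$ is principal for the step; since the blue $\nat$ occurs only in the conclusion and the right premiss, the existence of the immediate ancestor $\nat^{i+1}$ forces the branch into the right premiss, and by item~\eqref{item:can-br-cond} -- whose two alternatives are mutually exclusive, as $0$ and $\succ\numeral n$ are distinct normal forms -- this rules out $r\conv 0$, so $r\conv\succ\numeral n=\numeral{n+1}$ for some $n$, with $\vec s_{i+1}=(\vec s,\numeral n)$ and hence $n_{i+1}=n$. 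Together with $r\conv\numeral{n_i}$, uniqueness of numeral normal forms gives $n_i=n+1$, so $n_i=n_{i+1}+1$; this establishes $n_i\geq n_{i+1}$ in general and $n_i>n_{i+1}$ when $\nat^i$ is principal for a $\cond$ step. Since the case split is finite and uses only elementary closure and confluence properties of $\conv$, the whole argument goes through in $\RCA$.

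I do not anticipate a real obstacle: the lemma is precisely the relativisation, to the domain $\hr\nat$, of the infinite-descent bookkeeping inside the proof of Proposition~\ref{prop:termination}, with strict decrease in $\Nat$ replaced by comparison of numeral normal forms. The one point needing care is the rule-by-rule check that the fresh $\nat$-occurrences -- weakening formulas, cut formulas, the occurrences bound by $\rightimp$ and $\leftimp$, and the new blue $\blue\nat$ supplied to the $\rightimp$-branch in item~\eqref{item:can-br-rightimp} -- never carry an immediate-ancestry edge down into the conclusion, so that they cannot appear as $\nat^{i+1}$; once that is verified the numeral arithmetic is immediate.
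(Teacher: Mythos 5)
Your proposal is correct and takes essentially the same route as the paper, whose entire proof is that the lemma ``follows directly from inspection of Definition~\ref{dfn:canonical-non-total-branch} and confluence of conversion (Corollary~\ref{cor:un-for-red-and-conv})'': your rule-by-rule case analysis is exactly that inspection spelled out, with uniqueness of numeral normal forms supplying the arithmetic comparison. The only cosmetic quibble is that for $\exch$ and $\cntr$ the relevant ancestry pairs are the coloured non-context occurrences rather than the purple context, but Definition~\ref{dfn:canonical-non-total-branch} still assigns them literally the same coterm (after the swap, resp.\ duplication), so your conclusion is unaffected.
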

\begin{proof}
Follows directly from inspection of Definition~\ref{dfn:canonical-non-total-branch}, and confluence of conversion, namely Corollary~\ref{cor:un-for-red-and-conv}.
\end{proof}

In order to complete our formalisation of the totality argument, we actually have to use an arithmetical approximation of thread progression that nonetheless suffices for our purposes.
The reason for this is that, even though non-total branches are well-defined by Proposition~\ref{prop:non-tot-branch-well-defined}, we do not a priori have access to them as \emph{sets} in extensions of $\RCA$ by fragments of arithmetical induction, and so the lack of progressing threads along them does not directly contradict the fact that a coderivation is progressing.
Notice that this is not an issue in the presence of arithmetical comprehension, i.e.\ in $\ACA$, but in that case logical complexity of defined sets is not a stable notion: all of arithmetical comprehension reduces to $\Pi^0_1$-comprehension.

A similar issue underlies the notion of `arithmetical acceptance' for a non-deterministic automaton in \cite{Das19:log-comp-cyc-arith}.
The fact that our arithmetical approximation suffices is borne out by the `moreover' clause in the following result:

\begin{proposition}
	[$\RCA$]
	\label{prop:no-prog-thread-non-rec-branch}
	Suppose $t_i$ and $\vec s_i$ are as in Lemma~\ref{lem:mon-dec-thread}.
	Any $\nat$-thread along $(t_i)_i$ is not progressing.
	Moreover, $\forall k .  \exists m . $ any $\nat $-thread from $t_k$ progresses $\leq m$ times.
\end{proposition}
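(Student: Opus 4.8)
The plan is to replay, inside $\RCA$, the infinite-descent step at the end of the proof of Proposition~\ref{prop:termination}, but now tracking the \emph{numeral values} of the coterms attached to a $\nat$-thread rather than literal natural-number inputs. Crucially this needs no set existence beyond what is already given: a $\nat$-thread along $(t_i)_i$ is treated as a hypothesised object, exactly as the branch itself is in Lemma~\ref{lem:mon-dec-thread}, and the only nontrivial induction invoked is the $\Sigma^0_1$-induction underlying the \emph{finite} pigeonhole principle, which (unlike its infinite form) is available in $\RCA$.

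First I would fix a $\nat$-thread $(\nat^i)_{i\ge k}$ along $(t_i)_i$, and for each $i\ge k$ let $s_i$ be the element of $\vec s_i$ corresponding to $\nat^i$. Since the branch satisfies $\vec s_i\in \hr{\vec\sigma_i}$, each $s_i$ lies in $\hr\nat$, so $s_i\conv \numeral{n_i}$ for some $n_i\in\Nat$; this $n_i$ is unique by uniqueness of normal forms (Corollary~\ref{cor:un-for-red-and-conv}), so $i\mapsto n_i$ is a well-defined function, provably in $\RCA$. Lemma~\ref{lem:mon-dec-thread} then says exactly that $(n_i)_{i\ge k}$ is non-increasing, and that $n_i>n_{i+1}$ at every step $i$ where the thread progresses, i.e.\ where $\nat^i$ is principal for a $\cond$ step.

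The bound now reads off directly. If the thread progresses at steps $k\le i_1<i_2<\cdots<i_\ell$, then chaining the two clauses of Lemma~\ref{lem:mon-dec-thread} — non-increase over $[k,i_1]$ and over each $[i_j{+}1,\,i_{j+1}]$, together with a strict drop at each $i_j$ — gives $n_k\ge n_{i_1}>n_{i_2}>\cdots>n_{i_\ell}\ge 1$, so the $n_{i_j}$ are $\ell$ distinct elements of $\{1,\dots,n_k\}$ and hence $\ell\le n_k$; this last step is just the fact that a strictly descending sequence of naturals bounded by $N$ has length $\le N$, provable in $\RCA$ by $\Sigma^0_1$-induction. In particular any $\nat$-thread along $(t_i)_i$ progresses only finitely often, so it is not progressing, which is the first assertion. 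For the `moreover', fix $k$ and let $m$ be the largest $n\in\Nat$ such that some $\nat$-component of $\vec s_k$ converts to $\numeral n$ (and $m:=0$ if $\vec\sigma_k$ has no $\nat$-occurrences); this exists since $\vec s_k\in\hr{\vec\sigma_k}$ is a finite list each of whose $\nat$-components converts to a numeral. Any $\nat$-thread from $t_k$ has its occurrence at $t_k$ among the $\nat$-occurrences of $\vec\sigma_k$, with value at most $m$, so by the bound just established it progresses at most $m$ times.

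I do not anticipate a genuine mathematical obstacle: the combinatorics is the standard infinite-descent bound, and the delicate interaction with reduction and conversion has already been absorbed into Lemma~\ref{lem:mon-dec-thread}. The one point deserving care is the metamathematical issue flagged before the statement — the argument must not covertly quantify over the branch, or over a thread, as a set obtained by comprehension (it quantifies over a hypothetical thread instead), and it must use the finitary pigeonhole/descent principle rather than the infinite pigeonhole principle, which fails over $\RCA$. This uniform-in-$k$ bound on progressions is exactly the substitute for `the non-total branch has no progressing thread' that remains usable in the weak fragments of the next subsection, which is why the write-up should stress the `moreover' clause.
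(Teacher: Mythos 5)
Your argument is correct and follows essentially the same route as the paper: extract the unique numeral values $n_i$ via confluence (Corollary~\ref{cor:un-for-red-and-conv}), use Lemma~\ref{lem:mon-dec-thread} to get non-increase with strict drops at progress points, bound the number of progressions along any finite thread by the value at its start via induction along the thread, and set $m$ to be the maximum $\nat$-value among $\vec s_k$ for the `moreover' clause, from which the first clause follows a fortiori. The only nitpick is that the final progress point of a thread may carry value $0$ (the thread can end there, when the branch goes into the left premiss of the $\cond$), so your chain only yields $\ell\le n_k+1$ rather than $\ell\le n_k$; this is immaterial, since any bound suffices for the statement.
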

\begin{proof}
We shall prove only the `moreover' clause, the former following a fortiori.
First, suppose we have a (finite) $\nat$-thread $(\nat^i)_{i=k}^l$ beginning at $t_k$.
Let $s_i \in \vec s_i$ be the corresponding input of $\nat^i$ for $1\leq i\leq l$, and let each $r_i \conv \numeral n_i$, for unique $n_i\in \Nat$, by definition of $\hr\nat$ and confluence, Corollary~\ref{cor:un-for-red-and-conv}.
Letting $m$ be the number of times that $(\nat^i)_{i=1}^l$ progresses, we may show by induction on $l$ that $n_l \leq n_k - m$, using Lemma~\ref{lem:mon-dec-thread} for the inductive steps.

	Now, to prove the `moreover' statement, fix some $k$ and let $\vec \nat^k \subseteq \vec \sigma_k$ exhaust the $\nat$ occurrences in $\vec \sigma_k$.
Let $\vec r_k \subseteq \vec s_k$ be the corresponding inputs, and write $\vec n_k$ for the unique natural numbers such that each $r_{ki} \conv \numeral{n}_{ki}$, by definition of $\hr\nat$ and confluence, Corollary~\ref{cor:un-for-red-and-conv}.
We may now simply set $m :=  \max \vec n_k$, whence no thread from $t_k$ may progress more than $m$ times by the preceding paragraph.
\end{proof}

Finally, we are ready to show that coterms of $\C$ are indeed interpreted in the type structures we have presented.
It is here that we will have to make use of the fact that, for each regular coderivation, $\RCA$ proves whether it progresses and, moreoever, we shall take advantage of the aforementioned implied arithmetical approximation of progression in order to deduce the necessary contradiction without invoking additional set existence principles.

\begin{theorem}
\label{thm:main-non-unif}	Let $t: \vec \sigma \seqar \tau$ be a progressing coderivation containing only types of level $\leq n$ (i.e.\ a $\nC n$-coderivation). 
	Then $\RCA + \CIND{\Sigma^0_{n+2}} \proves t \in \hr{\vec \sigma \to \tau}$.
\end{theorem}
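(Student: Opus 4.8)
The plan is to argue by contradiction, importing into arithmetic the infinite-descent argument behind Proposition~\ref{prop:termination}, but with every quantifier relativised to the type structure $\hr{}$. All reasoning takes place inside $\RCA + \CIND{\Sigma^0_{n+2}}$, and the argument is non-uniform in $t$, since it rests on the non-uniform Propositions~\ref{prop:prog-decidability+provability-in-RCA} and \ref{prop:non-tot-branch-well-defined}. First I would unfold the conclusion: $t \in \hr{\vec \sigma \to \tau}$ fails just if there is an input tuple $\vec s_0 \in \hr{\vec \sigma}$ with $t\,\vec s_0 \notin \hr \tau$ (such a tuple exists by classical logic, the defining formulas of the $\hr{}$-sets being arithmetical of bounded complexity by Fact~\ref{fact:log-comp-hr-preds}).

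Assuming such $\vec s_0$, I would invoke Proposition~\ref{prop:non-tot-branch-well-defined} — this is exactly where $\CIND{\Sigma^0_{n+2}}$ is needed — applied with $t_0 := t$, noting that all types occurring in $t$ have level $\leq n$ since $t$ is a $\nC n$-coderivation. This gives the canonical branch $(t_i : \vec \sigma_i \seqar \tau_i)_i$ and inputs $(\vec s_i \in \hr{\vec \sigma_i})_i$ generated by $t$ and $\vec s_0$, together with the invariant $t_i\,\vec s_i \notin \hr{\tau_i}$ for all $i$; in particular, by clause~\eqref{item:can-br-initial} of Definition~\ref{dfn:canonical-non-total-branch} no $t_i$ is an initial sequent, so $(t_i)_i$ is a genuine infinite branch of $t$, and it is $\Delta^0_{n+2}$-definable from $t$ and $\vec s_0$. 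Then Proposition~\ref{prop:no-prog-thread-non-rec-branch} applies: along this branch no $\nat$-thread is progressing and, crucially, its `moreover' clause yields, for each $k$, a bound $m$ on the number of times any $\nat$-thread from $t_k$ can progress. This is precisely an arithmetical witness to the failure of the progressing criterion along $(t_i)_i$ that can be stated without ever treating the branch as a set. Finally I would derive the contradiction: by Proposition~\ref{prop:prog-decidability+provability-in-RCA}, $\RCA$ — hence the ambient theory — proves that $t$ is progressing, and in its arithmetical formulation this statement rules out exactly such an arithmetically-definable infinite branch supplied with the bounding data $k \mapsto m$. Hence no non-total input $\vec s_0$ can exist, i.e.\ $t \in \hr{\vec \sigma \to \tau}$.

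The main obstacle is precisely this last reconciliation step. The branch $(t_i)_i$ is only $\Delta^0_{n+2}$-definable, and such branches are not available as sets over $\RCA + \CIND{\Sigma^0_{n+2}}$, so the bare fact that $(t_i)_i$ carries no progressing thread does not literally instantiate the form of `$t$ is progressing' proved by Proposition~\ref{prop:prog-decidability+provability-in-RCA}. One must check that the arithmetical version of progression is nonetheless strong enough to be contradicted by the weaker `arithmetical approximation' of thread progression furnished by the `moreover' clauses of Proposition~\ref{prop:no-prog-thread-non-rec-branch} (and, upstream, the monotone-decreasing behaviour from Lemma~\ref{lem:mon-dec-thread}). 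This is the analogue of the `arithmetical acceptance' subtlety from \cite{Das19:log-comp-cyc-arith}, and getting the bookkeeping of logical complexities right — that $\branch$ is $\Delta^0_{n+2}$, that $t_i\,\vec s_i \notin \hr{\tau_i}$ is $\Sigma^0_{n+1}$, and that the requisite bounding function exists — is where the real care is required; the remaining steps are routine given Propositions~\ref{prop:non-tot-branch-well-defined}, \ref{prop:no-prog-thread-non-rec-branch} and \ref{prop:prog-decidability+provability-in-RCA}.
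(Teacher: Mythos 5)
Your overall strategy is exactly the paper's: assume a non-total input $\vec s_0\in\hr{\vec\sigma}$, generate the canonical $\Delta^0_{n+2}$-well-defined branch via Proposition~\ref{prop:non-tot-branch-well-defined}, and play Proposition~\ref{prop:no-prog-thread-non-rec-branch} off against provable progressiveness (Proposition~\ref{prop:prog-decidability+provability-in-RCA}). But the step you yourself flag as ``the main obstacle'' and leave to be ``checked'' is precisely the content of the paper's proof, and your proposal does not supply it. The claim that ``in its arithmetical formulation'' the statement that $t$ is progressing ``rules out exactly such an arithmetically-definable infinite branch supplied with the bounding data $k\mapsto m$'' is the thing that needs an argument: progressiveness quantifies over branches and threads as second-order objects, the generated branch is not available as a set in $\RCA+\CIND{\Sigma^0_{n+2}}$, and no particular ``arithmetical formulation'' is exhibited in your text that could be instantiated at it.

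The paper's resolution is concrete and two-fold. First, from Proposition~\ref{prop:prog-decidability+provability-in-RCA} one derives, still in $\RCA$ and for an arbitrary branch given as a \emph{set parameter}, the purely arithmetical-in-the-branch consequence \eqref{eqn:ar-acc}: there is some $k$ such that for every $m$ there is a \emph{finite} $\nat$-thread from $t_k$ progressing more than $m$ times. Extracting this finite-thread approximation from the existence of an infinite progressing thread is itself a nontrivial step (cf.\ the argument of Proposition 6.2 of \cite{Das19:log-comp-cyc-arith}), and it is the positive counterpart of the `moreover' clause you cite. Second, since \eqref{eqn:ar-acc} is arithmetical in the branch parameter, conservativity of $\RCA$ over first-order induction with that parameter, followed by substitution of the $\Delta^0_{n+2}$ definition of the generated branch, shows that $\RCA+\CIND{\Sigma^0_{n+2}}$ proves \eqref{eqn:ar-acc} for the defined branch; this contradicts the `moreover' clause of Proposition~\ref{prop:no-prog-thread-non-rec-branch} ($\forall k\,\exists m$ bounding progress) outright. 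Without producing \eqref{eqn:ar-acc} and this transfer argument, your proof stops exactly where the real work begins, so as written it has a genuine gap at its decisive step.
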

\begin{proof}
First, by Proposition~\ref{prop:prog-decidability+provability-in-RCA} (from \cite{Das19:log-comp-cyc-arith}), we have that $\RCA$ proves that $t$ is progressing. 
Consequently $\RCA$ proves that, for any branch $(t_i)_i$, there is some $k$ s.t.\ there are arbitrarily often progressing finite threads beginning from $t_k$:\footnote{The argument for this is similar to that of Proposition 6.2 from \cite{Das19:log-comp-cyc-arith}.} 
\begin{equation}
\label{eqn:ar-acc}
\text{$\exists k .  \forall m . $ there is a (finite) $\nat$-thread from $t_k$ progressing $>m$ times}
\end{equation}
Note that this statement is purely arithmetical in $(t_i)_i$ and so, if $(t_i)_i$ is $\Delta^0_{n+2}$-well-defined, then in fact $\RCA + \CIND{\Sigma^0_{n+2}}$ proves \eqref{eqn:ar-acc}, by conservativity over $I\Sigma_{n+2}((t_i)_i)$ and then substitution of the $\Delta_{n+2}$-definition of $(t_i)_i$.

Now, working inside $\RCA+ \CIND{\Sigma^0_{n+2}}$, suppose for contradiction that $\vec s \in \hr{\vec \sigma}$ s.t.\ $t\, \vec s\, \notin \hr{\tau}$.
By Proposition~\ref{prop:non-tot-branch-well-defined}, we can $\Delta^0_{n+2}$-well-define the branch $(t_i)_i$ {generated} by $t$ and $\vec s$.
Thus we indeed have \eqref{eqn:ar-acc}, contradicting Proposition~\ref{prop:no-prog-thread-non-rec-branch}.
\end{proof}

\begin{corollary}
$\hr{}$ is a model of $\C \setminus \extensionality$, and $\he{}$ is a model of $\C$.
\end{corollary}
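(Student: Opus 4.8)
The plan is to read the corollary off Theorems~\ref{thm:farcs-are-a-model-of-t-wo-ext} and \ref{thm:main-non-unif} together with the induction and closure results of Section~\ref{sect:coterm-models}, so that essentially no new work is needed. First I would treat $\hr{}$. Since $\hr{}$ is a substructure of $\FARC$ in the model-theoretic sense — it is closed under application and under conversion by Proposition~\ref{prop:hr-closure-props}, and equality is interpreted as $\conv$ in both — it inherits the quantifier-free theory of $\FARC$. Hence, by Theorem~\ref{thm:farcs-are-a-model-of-t-wo-ext}, it validates the equality axioms of Figure~\ref{fig:equality-axioms}, the combinator axioms of Figure~\ref{fig:eq-ax-seq-calc-min}, the $\cond$ axioms of Figure~\ref{fig:cond-rule+axioms}, and the axioms \eqref{item:ax-0-min} and \eqref{item:ax-succ-inj} of Figure~\ref{fig:ax-t-non-eq}. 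These are exactly the axioms of $\C\setminus\extensionality$ apart from $\induction$; the latter holds for equational instances by Lemma~\ref{lem:ind-for-hr-in-rca}, and for an arbitrary quantifier-free $\phi$ one reduces to meta-level induction over $\Nat$ using the Leibniz axiom and the fact that every element of $\hr\nat$ converts to a unique numeral.

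Next I would check that $\hr{}$ actually interprets the language of $\C$, i.e.\ that every regular progressing coderivation lies in the appropriate $\hr\sigma$. This is precisely Theorem~\ref{thm:main-non-unif}: since $\RCA+\CIND{\Sigma^0_{n+2}}$ is sound, each progressing coderivation $t:\vec\sigma\seqar\tau$ genuinely lies in $\hr{\vec\sigma\to\tau}$, and closure of $\hr{}$ under application (Proposition~\ref{prop:hr-closure-props}) extends this to all progressing coterms. Combining the two paragraphs gives that $\hr{}$ is a model of $\C\setminus\extensionality$. For $\he{}$ I would argue in parallel: it has the same domains as $\hr{}$, hence interprets every symbol of $\C$ for the same reason; by Proposition~\ref{prop:he-closure-conds} the relation $\exteq$ is coarser than $\conv$, $\hr{}$ is closed under $\exteq$, and $\exteq_\nat=\conv_\nat$, so $\he{}$ inherits the quantifier-free theory of $\hr{}$; and the argument of Proposition~\ref{prop:he-models-t} shows $\he{}$ additionally satisfies $\induction$ (now via $\Pi^0_{\level(\tau)+1}$-induction, or simply meta-induction) and the extensionality rule $\extensionality$, the latter being forced by the definition of $\exteq$ at arrow types. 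Hence $\he{}$ is a model of $\C$.

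I do not expect any single step to be a genuine obstacle: the substantive difficulty — arithmetising the totality of progressing coterms inside a fixed fragment of arithmetic — has already been discharged as Theorem~\ref{thm:main-non-unif}, and the remainder is bookkeeping about the inheritance of theories between $\FARC$, $\hr{}$ and $\he{}$. The only places calling for a little care are (i) making explicit that a structure ``satisfies $\extensionality$'' means its arrow-type equality relations are extensional, which $\exteq$ is by construction, and (ii) checking that the reduction of the $\induction$ schema — now for formulas in the larger language of $\C$ — to ordinary induction over $\Nat$ goes through via the Leibniz axiom and numeral-normalisation without reference to the particular constants occurring in $\phi$.
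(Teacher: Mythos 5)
Your proposal is correct and follows essentially the same route as the paper: the paper's proof likewise observes that the axioms of $\C\setminus\extensionality$ and $\C$ are already satisfied in $\hr{}$ and $\he{}$ (inherited via Propositions~\ref{prop:hr-models-t-wo-ext} and \ref{prop:he-models-t}, which package the $\FARC$ theorem, the induction lemmas and the closure/extensionality facts you cite), and then invokes Theorem~\ref{thm:main-non-unif} together with soundness to conclude that every progressing coterm is interpreted in the structure. Your write-up simply unpacks those propositions into their ingredients, which is fine.
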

\begin{proof}
All the axioms of $\C\setminus \extensionality$ and $\C $ are already satisfied in $\hr{}$ and $\he{}$ respectively, inherited from Propositions~\ref{prop:hr-models-t-wo-ext} and \ref{prop:he-models-t} respectively.
Thus the result follows immediately from Theorem~\ref{thm:main-non-unif} above (and soundness).
\end{proof}

\begin{corollary}
If $t:\vec \nat \seqar \nat$ is a progressing coterm of $\nC n$, then there is a $\nT{n+1}$-term $t:\vec \nat \to \nat$ such that $\interp{t'} = \interp t$.
\end{corollary}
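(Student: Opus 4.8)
The plan is to feed the totality result of this section, Theorem~\ref{thm:main-non-unif}, through the proof-mining machinery of Parsons (Corollary~\ref{cor:rca+isn+1-to-tn}), using the confluence results of Section~\ref{sect:coterm-models} to package the computation of $t$ into a $\Delta^0_0$ matrix.

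First I would apply Theorem~\ref{thm:main-non-unif} to the progressing coterm $t : \vec\nat \seqar \nat$, a $\nC n$-coderivation, to obtain
\[
\RCA + \CIND{\Sigma^0_{n+2}} \proves t \in \hr{\vec\nat \to \nat} .
\]
Unfolding the definition of $\hr{\vec\nat\to\nat}$, and recalling that every numeral lies in $\hr\nat$, this gives $\RCA + \CIND{\Sigma^0_{n+2}} \proves \forall \vec x\, \exists m\, (t\, \numeral{\vec x} \conv \numeral m)$. By confluence (Corollary~\ref{cor:un-for-red-and-conv}), and since numerals are $\reduces$-normal, $t\,\numeral{\vec x} \conv \numeral m$ is $\RCA$-provably equivalent to the existence of a reduction sequence $t\,\numeral{\vec x} \reduces^* \numeral m$. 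I would then Skolemize the reduction-sequence witness into the existential: let $A(\vec x, z)$ be the arithmetic formula ``$z = \tuple{m,d}$ and $d$ codes a reduction sequence witnessing $t\,\numeral{\vec x} \reduces^* \numeral m$'', using the standard primitive-recursive coding of one-step reduction on regular coterms supplied by Section~\ref{sect:form-red-seq} (cf.\ Corollary~\ref{cor:eq-reg-coterms-rca} and Proposition~\ref{prop:red-seq-cont}); here the code $\code t$ of the fixed coterm $t$ is just a numeral parameter, so $A$ is $\Delta^0_0$. We thus have $\RCA + \CIND{\Sigma^0_{n+2}} \proves \forall \vec x\, \exists z\, A(\vec x, z)$.

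Next I would apply Corollary~\ref{cor:rca+isn+1-to-tn} with $n+1$ in place of $n$, so that $\RCA+\CIND{\Sigma^0_{n+2}}$ (conservatively $\ISn{n+2}$) interprets into $\nT{n+1}$: there is a $\nT{n+1}$ term $u$ with $\nT{n+1} \proves A(\vec x, u\, \vec x)$. Using combinatory completeness (Fact~\ref{fact:comb-comp-ks}) together with the fact that pairing and projection are $\nT 0$ terms, define $t'$ to be a $\nT{n+1}$ term of type $\vec\nat\to\nat$ with $t'\,\vec x = \proj 1{u\,\vec x}$. To verify $\interp{t'} = \interp t$: since $\nmod$ is a sound model of $\nT{n+1}$, we have $\nmod \models A(\vec x, u^\nmod(\vec x))$ for all $\vec x \in \Nat$, i.e.\ $u^\nmod(\vec x)$ genuinely codes a pair $\tuple{m,d}$ with $d$ a reduction sequence $t\,\numeral{\vec x} \reduces^* \numeral m$. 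As $\reduces$ is sound for $\nmod$ and $\interp{\numeral m} = m$, and since $\interp t$ is a well-defined total functional by Proposition~\ref{prop:termination}, it follows that $\interp t(\vec x) = \interp{t\,\numeral{\vec x}} = m = \proj 1{u^\nmod(\vec x)} = \interp{t'}(\vec x)$, which is exactly $\interp{t'} = \interp t$ (should one want $t'$ presented as a sequent-style coderivation rather than a combinator term, one passes to it by the usual combinatory translation).

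I do not expect a serious obstacle here beyond Theorem~\ref{thm:main-non-unif} itself, which does the genuinely hard work. The two steps that will still require care are (a) checking that $A$ is genuinely $\Delta^0_0$ — this rests on the representation-level facts from Section~\ref{sect:form-red-seq} that one-step reduction of regular coterms, and verification of a coded reduction sequence, are primitive recursive — and (b) matching the induction-strength indices across the two uses of proof mining, i.e.\ that $\RCA + \CIND{\Sigma^0_{n+2}}$ is conservative over $\ISn{n+2}$ and that this interprets into $\nT{n+1}$. Both are routine given the results already established.
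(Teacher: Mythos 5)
Your proposal is correct and follows essentially the same route as the paper: apply Theorem~\ref{thm:main-non-unif}, unwind $t\in\hr{\vec\nat\to\nat}$ to the $\Pi^0_2$ statement $\forall\vec m\,\exists n\,(t\,\vec{\numeral m}\conv\numeral n)$, extract a witnessing $\nT{n+1}$ term via Parsons-style proof mining, and conclude by soundness together with the relation of $(\hr{},\conv)$ to the standard model. Your explicit packaging of the reduction-sequence witness into a $\Delta^0_0$ matrix only spells out a step the paper leaves implicit when invoking extraction on a $\Sigma^0_1$ matrix.
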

\begin{proof}
By Theorem~\ref{thm:main-non-unif} we have, in particular, that:
\begin{equation}
\label{eqn:type1-hr-unwound}
\RCA + \CIND{\Sigma^0_{n+2}} \proves \forall \vec m \exists n \, t\vec {\numeral m} \conv \numeral n
\end{equation}
Since this is a $\Pi^0_2$ theorem, we may apply extraction, Proposition~\ref{prop:parsons-arith-to-t}, to obtain the required term $t'$ of $\nT{n+1}$ so that $t'\vec m$ witnesses the existential quantifier in \eqref{eqn:type1-hr-unwound}.
The result now follows by soundness of $\RCA + \CIND{\Sigma^0_{n+2}}$ and since $(\hr,\conv) $ is a substructure of the standard model $\nmod$.
\end{proof}

\subsection{Interpretation of $\nC n (- \extensionality)$ into $\nT{n+1} (-\extensionality)$}
We may now realise our model-theoretic results as bona fide interpretations of fragments of $\C$ into fragments of $\T$.
As a word of warning, coterms of $\C$ in this section, when operating inside $\T$, should formally be understood by their G\"odel {codes}, i.e.\ in this section $\T$ is `one meta-level higher' than $\C$.
Until now we have been formalising the metatheory of $\C$ within second-order arithmetic, and so arithmetising its syntax as natural numbers.
Since we will here invoke program extraction from these fragments of $\ACA$ to fragments of $\T$ to interpret $\C$, the same coding carries over.
At the risk of confusion, we shall suppress this formality in the statements of results that follow, in line with the exposition so far.

\begin{theorem}
If $\nC n \setminus \extensionality \proves s=t$ then $\nT{n+1} \setminus \extensionality \proves s \conv t$.
\end{theorem}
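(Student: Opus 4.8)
The statement is a "soundness" result: provable equality in $\nC n \setminus \extensionality$ transfers to provable equality of codes in $\nT{n+1} \setminus \extensionality$. The natural strategy is to pass through the coterm-based model $\hr{}$ of the previous section, exploiting that we have, by Theorem~\ref{thm:main-non-unif}, that $\RCA + \CIND{\Sigma^0_{n+2}}$ proves every $\nC n$-coderivation lies in its hereditarily total set, and by Corollary~\ref{cor:rca+isn+1-to-tn} that $\Pi^0_2$ theorems of $\RCA + \CIND{\Sigma^0_{n+1}}$ are witnessed by $\nT n$-terms (here we will need the level-$(n+1)$ version). Concretely, first I would observe that the derivation of $s = t$ in $\nC n \setminus \extensionality$ is a finite object, mentioning only finitely many coterms, all of which are \farc s built from finitely many $\nC n$-coderivations; so it suffices to reason about this finite fragment. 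The key reduction: since extensionality is absent, provable equality of coterms in $\nC n \setminus \extensionality$ is exactly provable equality in the equational+induction theory, which $\hr{}$ models (Proposition~\ref{prop:hr-models-t-wo-ext} gives the $\T$ fragment without $\extensionality$, and Theorem~\ref{thm:main-non-unif} upgrades membership to $\nC n$ coderivations). Working inside $\RCA + \CIND{\Sigma^0_{n+2}}$, a proof of $s = t$ in $\nC n \setminus \extensionality$ yields $\RCA + \CIND{\Sigma^0_{n+2}} \proves s \conv t$ — but we must be careful: $\conv$ is $\Sigma^0_1$, so this is a $\Sigma^0_1$-statement about the (coded) coterms $s,t$, which does not immediately give a $\nT{n+1}$-term.

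**Refining to the type-1 case and extraction.** The cleanest route, mirroring the previous corollaries, is to restrict attention to the type-1 quantifier-free theory: reduce the general claim to the case where $s, t : \vec\nat \to \nat$, using that any STT equality at higher type can be tested by feeding numeral inputs (this is exactly where the absence of full extensionality is delicate — see below). For $s, t : \vec\nat \seqar \nat$ progressing, Theorem~\ref{thm:main-non-unif} gives $\RCA + \CIND{\Sigma^0_{n+2}} \proves \forall \vec m \exists k\, (s\,\vec{\numeral m} \conv \numeral k)$ and likewise for $t$; combining with $\RCA + \CIND{\Sigma^0_{n+2}} \proves s \conv t$ (from the given $\nC n\setminus\extensionality$ derivation, since that derivation's axioms are all satisfied in $\hr{}$ provably in $\RCA+\CIND{\Sigma^0_{n+2}}$) we get $\RCA + \CIND{\Sigma^0_{n+2}} \proves \forall \vec m \exists k\, (s\,\vec{\numeral m}\conv \numeral k \cand t\,\vec{\numeral m}\conv \numeral k)$, a $\Pi^0_2$ theorem. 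By (the level-$(n+1)$ instance of) Corollary~\ref{cor:rca+isn+1-to-tn} and proof mining, we extract a $\nT{n+1}$-term computing this common value; tracking the extracted witnessing proofs in fact gives $\nT{n+1}\setminus\extensionality \proves s \conv t$ — here "$\conv$" is the $\nT{n+1}$-decidable predicate on codes, whose defining equations $\nT{n+1}$ proves, so the proof-mined witnesses constitute a genuine $\nT{n+1}\setminus\extensionality$-proof.

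**The main obstacle.** The delicate point is the reduction from arbitrary-type equality to the type-1 case \emph{without} full extensionality. In $\nC n\setminus\extensionality$, from $\proves s = t$ at type $\vec\sigma\to\nat$ one can derive $\proves s\,\vec x = t\,\vec x$ (by Leibniz), but one cannot in general go back from pointwise equality to $s = t$; so the statement $\proves s = t$ is genuinely stronger than its type-1 shadow, and I must argue it is nonetheless transferred. The resolution is that we do \emph{not} need to recover $\proves s=t$ in $\nT{n+1}$ by first proving it pointwise; rather, the proof of $s=t$ in $\nC n\setminus\extensionality$ is \emph{itself} a finite object, each step of which is an instance of an axiom of $\C\setminus\extensionality$ (Figures~\ref{fig:eq-ax-seq-calc-min}, \ref{fig:ax-t-non-eq}, \ref{fig:cond-rule+axioms}), every one of which — read as a statement about codes and the decidable predicate "$\conv$" — is provable in $\nT{n+1}\setminus\extensionality$, \emph{provided} we can verify the totality instances of $\induction$ used. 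This is exactly what Theorem~\ref{thm:main-non-unif} buys us: the quantifier-free induction steps of the $\nC n\setminus\extensionality$-proof, when interpreted over codes, become instances of $\Sigma^0_{n+2}$-induction in arithmetic (because verifying the relevant coterms lie in $\hr\sigma$ is $\Pi^0_{n+1}$), which after proof mining becomes available in $\nT{n+1}$. So the real work is a careful bookkeeping argument: translate the $\nC n\setminus\extensionality$-derivation step-by-step into an $\nT{n+1}\setminus\extensionality$-derivation about codes, where the only nontrivial steps (induction) are handled by importing Theorem~\ref{thm:main-non-unif} via Corollary~\ref{cor:rca+isn+1-to-tn}. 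I expect the main obstacle to be making precise the coding of $\C$-syntax inside $\T$ and checking that the decidable predicate "$\conv$ between codes" — together with its equational properties and the confluence facts of Section~\ref{sect:coterm-models} — is available already in $\nT{n+1}\setminus\extensionality$ (rather than requiring extensionality or a stronger fragment), which requires the $\RCA$-formalised confluence theorem (Theorem~\ref{thm:cr}) and Corollary~\ref{cor:un-for-red-and-conv} to be read off at the appropriate level.
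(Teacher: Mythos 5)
Your central detour is both unnecessary and unsound for the statement at hand. The conclusion concerns the purely syntactic relation $\conv$ on (codes of) coterms, which is $\Sigma^0_1$ regardless of the types of $s$ and $t$; this is exactly why the present theorem carries no type-level restriction, in contrast with the extensional ($\exteq$) version that follows it. Restricting to $s,t:\vec\nat\seqar\nat$ and extracting a $\nT{n+1}$-term computing the common value of $s\,\vec{\numeral m}$ and $t\,\vec{\numeral m}$ establishes only extensional agreement on numerals; ``tracking the extracted witnessing proofs'' of that $\Pi^0_2$ statement cannot yield $\nT{n+1}\setminus\extensionality\proves s\conv t$, since two coterms may agree on all numeral inputs without being interconvertible, and the extracted witnesses are witnesses of normalisation, not of any conversion between $s$ and $t$. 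Likewise the worry about reducing higher-type equality to its ``type-1 shadow'' is a red herring here: no such reduction is needed, because the target is conversion, not pointwise equality.

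The correct argument is the one you state only parenthetically (and gesture at again in your final paragraph), and it is essentially the paper's: working in $\RCA+\CIND{\Sigma^0_{n+2}}$, membership of the relevant coderivations in $\hr{}$ (Theorem~\ref{thm:main-non-unif}), together with Theorem~\ref{thm:farcs-are-a-model-of-t-wo-ext} and Lemma~\ref{lem:ind-for-hr-in-rca}, gives $s\conv t$ by $\Sigma^0_1$-induction on the given $\nC n\setminus\extensionality$ proof. Note that the object-level induction steps reduce, via Lemma~\ref{lem:ind-for-hr-in-rca}, to $\Sigma^0_1$-induction on the numeral value; the $\Sigma^0_{n+2}$-induction is spent on $\hr{}$-membership, not on the induction steps themselves as you suggest. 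Since the resulting statement (uniformly in the proof $P$, as a reflection principle, or for the concrete $P$) is $\forall\exists$ with decidable matrix, a single application of extraction (Proposition~\ref{prop:parsons-arith-to-t}, Corollary~\ref{cor:rca+isn+1-to-tn}) yields a $\nT{n+1}$-term witnessing a conversion derivation, whence $\nT{n+1}\setminus\extensionality\proves s\conv t$ after substituting the concrete proof. Your alternative of translating the circular proof ``step by step'' into $\nT{n+1}\setminus\extensionality$, invoking extraction separately at each induction step, is not worked out: each induction instance read over codes has a $\Sigma^0_1$ matrix, and stitching per-step witnesses together inside a quantifier-free theory is precisely what the single uniform extraction accomplishes; carried out carefully, it collapses to the argument above.
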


\begin{proof}
Let us work in $\RCA+ \CIND{\Sigma^0_{n+2}}$.
 By Theorem~\ref{thm:main-non-unif} we have that $s,t \in \hr\sigma$, so suppose that $\nC n \setminus \extensionality \proves s=t$ (which is a $\Sigma^0_1$ relation).
 Now, invoking Theorem~\ref{thm:farcs-are-a-model-of-t-wo-ext} and Lemma~\ref{lem:ind-for-hr-in-rca}, we indeed have that $s \conv t$, by $\Sigma^0_1$-induction on the $\nC n \setminus \extensionality$ proof of $s=t$.
 
 Now, invoking the extraction theorem, Proposition~\ref{prop:parsons-arith-to-t}, for the above paragraph, we can extract a $\nT {n+1}$-term $d(\cdot)$ witnessing the following `reflection' principle:
 \[
 \nT{n+1} \setminus \extensionality \proves \text{``$P$ is a $\nC n \setminus \extensionality$ proof of $s=t$''} \ \cimp \ d(P): s \conv t
 \]
 We may duly substitute a concrete $\nC n \setminus \extensionality$ proof $P$ of $s=t$ into the above principle to conclude that $\nT{n+1} \setminus \extensionality \proves s \conv t$, as required.
\end{proof}

We also have an analogous statement in the presence of the extensionality rule though, as expected, we must restrict to only type $1$ equations for the interpretation to hold:
\begin{theorem}
Let $\level(s)= \level(t)\leq 1$. If $\nC n \proves s=t$ then $\nT{n+1} \proves  s \exteq t$.
\end{theorem}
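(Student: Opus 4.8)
The plan is to mimic the proof of the preceding (extensionality-free) theorem, but working with the structure $\he{}$ in place of $\hr{}$ and the relation $\exteq$ in place of $\conv$, recalling that $\he{}$ is a model of $\C$. As before, the whole argument is internalised in $\RCA + \CIND{\Sigma^0_{n+2}}$ so that a $\nT{n+1}$-term can be pulled out by proof mining. I take the domain to be the $\hr{}$-structure: by Theorem~\ref{thm:main-non-unif} we have $s,t\in\hr\sigma$, so (having level $\leq 1$) $s\,\vec{\numeral m}$ and $t\,\vec{\numeral m}$ convert to unique numerals for every $\vec m$, and by extraction from the $\Pi^0_2$ consequence of Theorem~\ref{thm:main-non-unif} there are $\nT{n+1}$-terms $\hat s,\hat t$ with $\nT{n+1}\proves s\,\vec{\numeral m}\conv\numeral{\hat s\,\vec m}$ and likewise for $t$.

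The core is a soundness lemma, to be proved in $\RCA+\CIND{\Sigma^0_{n+2}}$: if $P$ is a $\nC n$-proof of an equation $u=v$ all of whose formulas mention only terms of type level $\leq n+1$, then $u\exteq v$. This goes by induction on $P$, verifying that each $\nC n$-rule and axiom is valid in $\he{}$. The combinator and $\cond$ axioms hold because the corresponding reductions give $\conv$, hence $\exteq$ by Proposition~\ref{prop:he-closure-conds}(1), once one notes the subterms lie in $\hr{}$ (Theorem~\ref{thm:main-non-unif} together with Proposition~\ref{prop:hr-closure-props}); reflexivity and Leibniz hold since $\exteq$ is a congruence on $\hr{}$-terms (a meta-level induction on type, as in Proposition~\ref{prop:he-closure-conds}); the base-type number-theoretic axioms reduce to the corresponding facts about $\conv_\nat$, exactly as in Theorem~\ref{thm:farcs-are-a-model-of-t-wo-ext}; the extensionality rule $\extensionality$ is valid by the very definition of $\exteq$; and the $\induction$ rule is treated as in Proposition~\ref{prop:he-models-t}, consuming $\Pi^0_{k+1}$-induction on $\mathbb N$ for an instance at type level $k$, which is available since $k\leq n+1$. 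Since $\exteq_\tau$ is $\Pi^0_{\level(\tau)+1}$ by Fact~\ref{fact:exteq-log-comp} and all types occurring in $P$ have level $\leq n+1$, the running invariant ``the endsequent of the current subderivation of $P$ holds in $\he{}$'' is $\Sigma^0_{n+2}$, so the induction on $P$ is legitimate in $\CIND{\Sigma^0_{n+2}}$.

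To apply this lemma to a given $\nC n$-proof of $s=t$, I would first bring it into the required shape — all occurring terms of type level $\leq n+1$ — and I expect \emph{this} to be the main obstacle. It should follow from a partial-normalisation result for $\C$-proofs of equations, in the spirit of Proposition~\ref{prop:free-cut-elim}: a ``cut-free'' proof of a type-$\leq 1$ equation between coterms built from $\nC n$-coderivations ought to mention only those coderivations and their subcoderivations (all of level $\leq n$) and base constants at the types occurring in the endsequent (level $\leq n+1$). Establishing such a cut-elimination-flavoured statement for the equational calculus of $\C$, or alternatively a Luckhardt-style extensionality-elimination pushing all equations of the proof down to low type, is where the genuine work lies; without it neither the induction strength nor the level of the extracted $\T$-term is controlled, and the theorem's quantitative content ($n+1$, not more) is lost.

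Granting the normalisation, the conclusion follows exactly as in the extensionality-free case. From the hypothesis $\nC n\proves s=t$ we obtain a concrete proof $P$, which we normalise as above; the soundness lemma, formalised, gives in $\RCA+\CIND{\Sigma^0_{n+2}}$ that $s\exteq_1 t$, and since $\exteq_\nat=\conv_\nat$ and $s\,\vec{\numeral m}\conv\numeral{\hat s\,\vec m}$, $t\,\vec{\numeral m}\conv\numeral{\hat t\,\vec m}$, unique normal forms (Corollary~\ref{cor:un-for-red-and-conv}) turn this into $\forall\vec m.\ \hat s\,\vec m=\hat t\,\vec m$. Being an arithmetic consequence of $\RCA+\CIND{\Sigma^0_{n+2}}$, this is provable in $\nT{n+1}$ by Proposition~\ref{prop:parsons-arith-to-t} — phrased, as in the previous proof, via a $\nT{n+1}$-provable reflection principle ``$P$ is a $\nC n$-proof of $s=t$ (with bounded types)'' $\cimp s\exteq t$ into which a concrete $P$ is substituted. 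Together with $\nT{n+1}\proves s\,\vec{\numeral m}\conv\numeral{\hat s\,\vec m}$ and the analogue for $t$, this yields $\nT{n+1}\proves s\exteq t$.
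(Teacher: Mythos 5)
Your overall route is the paper's: formalise, in $\RCA+\CIND{\Sigma^0_{n+2}}$, that $\he{}$ validates the given $\nC n$ proof (with Proposition~\ref{prop:he-models-t} replacing Lemma~\ref{lem:ind-for-hr-in-rca} for the $\induction$ steps), observe via Fact~\ref{fact:exteq-log-comp} that the conclusion $s\exteq_1 t$ unwinds to a $\Pi^0_2$ statement --- this is the only place the hypothesis $\level(s)=\level(t)\leq 1$ is used --- and then apply Proposition~\ref{prop:parsons-arith-to-t} to extract a $\nT{n+1}$-term witnessing a reflection principle into which the concrete proof $P$ is substituted. Your first and last paragraphs essentially reproduce that argument.

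The gap is in the middle: you make the entire proof conditional on a partial-normalisation (or Luckhardt-style extensionality-elimination) result for $\nC n$ proofs, forcing all types occurring in the proof to have level $\leq n+1$, and you leave this unproved, yourself calling it ``where the genuine work lies''. As written, then, you have not proved the theorem; you have reduced it to a cut-elimination-flavoured statement about the equational theory of $\C$ that is established nowhere (Proposition~\ref{prop:free-cut-elim} concerns typing derivations of $\T$, not proofs in the theory) and that the paper never invokes. The paper's proof performs no preprocessing of the $\nC n$ proof at all: it runs the $\he{}$-soundness argument directly on the given proof of $s=t$, and the level restriction enters only at the extraction step, because the logical complexity of $\exteq_\sigma$ grows with $\level(\sigma)$. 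Your motivating worry --- that satisfaction of lines, and in particular $\induction$ instances, at high type would cost more arithmetical induction than $\Sigma^0_{n+2}$ --- is a fair observation about a point the paper treats briskly, but swapping that gloss for an unproven normalisation theorem does not close the argument; it merely relocates the difficulty into a lemma you would still have to supply.
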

\begin{proof}
The proof is similar to the one above, only using the proof of Proposition~\ref{prop:he-models-t} instead of Lemma~\ref{lem:ind-for-hr-in-rca} to simulate induction steps of $\nC n$ in $\he{}$ (now requiring $\CIND{\Sigma^0_{n+2}}$).
The restriction on levels is required in order to invoke the extraction theorem, since the logical complexity of $\exteq_\sigma$ grows with the type level of $\sigma$.
In particular, $\exteq_1$ is a $\Pi^0_2$ relation, by Fact~\ref{fact:exteq-log-comp}, and unwinding its definition we have $s \exteq_1 t$ is equivalent to $\forall \vec r\, \forall \vec n\, \forall \vec d: {\vec r} \conv {\vec{ \numeral{n}}}  . \  \exists d :\,  {\vec r} \, \conv {t}\, {\vec r}$. 
Thus we extract the following `reflection' principle for some $\nT{n+1}$-term $d$:
\[
\nT{n+1} \proves (\text{``$P$ is a $\nC n$ proof of $s=t$''} \, \cand\, \vec d: {\vec r}\conv {\numeral{\vec n}})\ \cimp\  d(P,\vec d,\vec r) :\, {s}\, {\vec r}\, \conv \, {t}\, {\vec r}
\]
Again substituting a concrete $\nC n $ proof $P$ of $s=t$ into the above principle indeed gives us $\nT{n+1} \proves  s \exteq t$, as required.
\end{proof}

\section{Perspectives and further results}
\label{sect:perspectives}

In this section we shall give some further discussion and results related to the system $\C$ we have presented.

\subsection{On confluence and consistency}
We should point out that, from the point of view of just the extensional properties of extracted programs, it is not necessary to use the confluence result we presented in Section~\ref{sect:coterm-models}.
We could simply assume consistency of $\C$ as an axiom throughout Section~\ref{sect:t-sim-c} (indeed we shall take this direction in the next subsection for simplicity).
Consistency of $\C$ is a true $\Pi^0_1$ statement by meta-level reasoning, and we would be able to extract the same functionals, since $\Pi^0_1$ statements carry no computational content.
This is particularly pertinent when extracting an infinitely descending sequence of natural numbers from a progressing thread in Lemma~\ref{lem:mon-dec-thread}: we require these natural numbers to be uniquely defined, which follows by either confluence or consistency.
However this approach would compromise our ultimate interpretations of $\nC n$, requiring the same consistency principle to be added to the target theory $\nT{n+1}$ (with and without $\extensionality$).

Incidentally, note that our approach, by a special case of the simulations from Sections~\ref{sect:c-sim-t} and \ref{sect:t-sim-c}, implies the \emph{equiconsistency} of $\nC n$ and $\nT{n+1}$, over a weak base theory.

\subsection{Continuity at type 2}
It is well-known that the type 2 functionals of $\T$ are \emph{continuous}, in the sense that any type 1 function input is only queried a finite number of times.
The classical way to prove this is to augment the rewrite system from Section~\ref{sect:coterm-models} by fresh type 1 constant symbols that play the roles of the inputs, say of a type 2 functional, and adding appropriate reductions.
An account of this is given in \cite{Troelstra73:metamathematical-investigations}, though there are several other known arguments, e.g.\ by showing that the strucutre of \emph{total continuous functionals} forms a model for $\T$ (cf.~\cite{scarpellini71:model-barrec-higher-types}) or, more recently, via an elegant form of `syntactic continuity predicate' (cf.~\cite{xu20:continutity-predicate}).

For the case of $\C$, we may actually formalise a variation of this argument within second-order arithmetic, extending the simulation of $\C$ coterms within $\T$ to type 2 functionals.
We shall here refrain from an analysis of abstraction complexity, for the sake of brevity, and also focus solely on the interpretation of (co)terms in the standard model, disregarding their theories.
Our exposition will be brief, since the finer details are adaptations of earlier results.

Let us fix a $\C$ coderivation $t: \vec \sigma \seqar \nat$ s.t.\ each $\sigma_i = \overbrace{\nat \to \cdots \to \nat}^{k_i} \to \nat$, and let us henceforth work in $\ACA$, distinguishing second-order variables $f_i : \overbrace{\Nat \times \cdots \times \Nat}^{k_i} \to \Nat$, intuitively representing the inputs for $t$.

Within $\C$, introduce new (uninterpreted) constant symbols $\numeral f_i : \overbrace{\nat \to \cdots \to \nat}^{k_i} \to \nat$ for each $\sigma_i$, and new reduction steps:
\begin{equation}
\label{eqn:oracle-reduction}
\numeral f_i\, \numeral n_1\, \dots\, \numeral n_{k_i}\, \reduces\, \numeral{f_i (n_1, \dots, n_{k_i})}
\end{equation}
Notice that reduction is now still semi-recursive in the oracles $\vec f$, i.e.\ $\reduces, \reduces^*, \conv$ are now $\Sigma^0_1(\vec f)$.
To save the effort of reproving our confluence results from Section~\ref{sect:coterm-models} with these new oracle symbols, we shall simply henceforth assume a suitable consistency principle: 
\[
\Con \quad : \quad   \forall m, n . \, (\numeral m \conv \numeral n \ \cimp \ m=n )
\]
Note that, since this is a true $\Pi^0_1 $ statement, it carries no computational content and adding it to $\ACA$ still admits extraction into $\T$.
(The drawback of this, as mentioned in the previous subsection, is that we do not recover any bona fide interpretation of $\C$ into $\T$.)

From here, we define $\hr\sigma^{\vec { f}}$ just as $\hr \sigma$, but allowing coterms to include the symbols $\vec{\numeral f}$.
Since each $\hr\sigma$ is arithmetical in $\reduces$, we have that each $\hr \sigma^{\vec f}$ is arithmetical in our extended reduction relation, so with free second-order variables $\vec f$.
Note in particular that we have that each $\numeral f_i \in \hr {\sigma_i}^{\vec f}$, thanks to \eqref{eqn:oracle-reduction} above.
By adapting our approach from Section~\ref{sect:t-sim-c}, we may show the following:
\begin{theorem}
[$\ACA + \Con$]
 $\forall \vec f .\, t\, \numeral{\vec f}\, \in \hr\nat^{\vec f}  $
\end{theorem}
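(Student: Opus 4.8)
The plan is to mirror the proof of Theorem~\ref{thm:main-non-unif} verbatim, with the type structure $\hr{}$ replaced throughout by its oracle-relativised version $\hr{}^{\vec f}$ and the confluence result of Corollary~\ref{cor:un-for-red-and-conv} replaced by the assumed principle $\Con$. First I would record the infrastructural facts that transfer immediately: the closure properties of $\hr{}^{\vec f}$ under application and under conversion (Proposition~\ref{prop:hr-closure-props}), the fact that the typing rules of $\C$ preserve $\hr{}^{\vec f}$-membership (Observation~\ref{obs:preservation-of-hr}), and that each new oracle symbol $\numeral f_i$ lies in $\hr{\sigma_i}^{\vec f}$ by the reduction \eqref{eqn:oracle-reduction}; the latter is the only genuinely new base case. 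Crucially, because $t$ is a $\C$ coderivation it contains no occurrence of $\numeral{\vec f}$, so all of Section~\ref{sect:t-sim-c}'s analysis of $t$'s structure (it is finite, regular, progressing, with $\RCA$ proving progression via Proposition~\ref{prop:prog-decidability+provability-in-RCA}) goes through unchanged; the oracle symbols only ever appear inside the inputs $\vec s$ fed to $t$.

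Next I would re-run the branch-generation machinery of Definition~\ref{dfn:canonical-non-total-branch} and Proposition~\ref{prop:non-tot-branch-well-defined}, now with $\hr{}^{\vec f}$ in place of $\hr{}$. Since $\hr\nat^{\vec f}$ and $\hr\sigma^{\vec f}$ are arithmetical in the (still $\Sigma^0_1(\vec f)$) relation $\conv$, the branch $(t_i)_i$ generated by $t$ and a hypothetical non-total input $\vec s\in\hr{\vec\sigma}^{\vec f}$ is arithmetically well-defined, relative to $\vec f$, inside $\ACA$; indeed $\ACA$ affords arithmetical comprehension so there is no need to track the exact quantifier level, which is the point of not analysing abstraction complexity here. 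The monotone-decreasing-thread lemma (Lemma~\ref{lem:mon-dec-thread}) carries over using $\Con$ in place of confluence to guarantee uniqueness of the natural number each input converts to, and the `no progressing thread / arithmetical approximation' result (Proposition~\ref{prop:no-prog-thread-non-rec-branch}) is unchanged since it is already purely arithmetical in the branch.

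Then the contradiction is assembled exactly as in Theorem~\ref{thm:main-non-unif}: $\RCA$ (hence $\ACA+\Con$) proves $t$ progressing, so for the generated branch there is some $k$ with arbitrarily-often-progressing finite $\nat$-threads from $t_k$ (the arithmetical acceptance statement \eqref{eqn:ar-acc}, relativised to $\vec f$, provable in $\ACA$ by arithmetical comprehension applied to the $\Delta$-defined branch); this contradicts the bound from Proposition~\ref{prop:no-prog-thread-non-rec-branch}. Hence no such $\vec s$ exists, i.e.\ $t\in\hr{\vec\sigma\to\nat}^{\vec f}$, and in particular $t\,\numeral{\vec f}\in\hr\nat^{\vec f}$ since each $\numeral f_i\in\hr{\sigma_i}^{\vec f}$. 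As all of $\vec f$ were free, $\ACA+\Con\proves \forall\vec f.\, t\,\numeral{\vec f}\in\hr\nat^{\vec f}$. I do not anticipate a serious obstacle beyond bookkeeping — the one point deserving care is checking that the new reduction rules \eqref{eqn:oracle-reduction} do not disturb the structural facts used implicitly in Section~\ref{sect:coterm-models} (no rule has $\succ$ or $\numeral f_i$ producing a redex-at-head pathology, and $\Con$ is exactly the replacement for the parts of confluence actually invoked), so that the branch construction and Lemma~\ref{lem:mon-dec-thread} remain valid; this is why $\Con$ is assumed rather than reproved, and it is the only place the argument genuinely diverges from Section~\ref{sect:t-sim-c}.
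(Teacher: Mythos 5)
Your proposal is correct and is essentially the paper's own argument: relativise the branch construction of Definition~\ref{dfn:canonical-non-total-branch} and Proposition~\ref{prop:non-tot-branch-well-defined} to $\hr{}^{\vec f}$, replace confluence by $\Con$ wherever uniqueness of numerals is needed (note this applies to the relativised Proposition~\ref{prop:no-prog-thread-non-rec-branch} as well, not only to Lemma~\ref{lem:mon-dec-thread}, since its proof also invokes unique normal forms), and contradict progressiveness of $t$. The only cosmetic divergence is that you retain the arithmetical approximation of progression from Theorem~\ref{thm:main-non-unif} (deriving \eqref{eqn:ar-acc} and playing it against Proposition~\ref{prop:no-prog-thread-non-rec-branch}), whereas the paper observes that in $\ACA$ one may access the generated non-total branch as a set via arithmetical comprehension and obtain the well-ordering contradiction directly, so your detour is admissible but unnecessary.
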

\begin{proof}
[Proof sketch]
The argument is essentially the same as that for Theorem~\ref{thm:main-non-unif}.
Assuming otherwise, for contradiction, we may generate a non-hereditarily-total branch is just as in
Definition~\ref{dfn:canonical-non-total-branch}, and its well-definedness is shown just as in Proposition~\ref{prop:non-tot-branch-well-defined}.
Note that all induction/minimisation used is in fact arithmetical in $\reduces$ and $\hr{\sigma}^{\vec f}$, so the branch is indeed $\Delta^0_{n+2}(\vec f)$-well-defined (for $n$ the maximal type level in $t$).

Since we no longer concern ourselves with the refinement of type levels, the remainder of the argument is actually simpler than that of Section~\ref{sect:t-sim-c}.
Instead of dealing with the arithmetical approximation of progressiveness, we may immediately access the generated non-total branch \emph{as a set}, thanks to the availability of arithmetical comprehension in $\ACA$.
We also have a suitable version of Lemma~\ref{lem:mon-dec-thread} for $\hr{\nat}^{\vec f}$, this time using $\Con$ above instead of confluence, and so the appropriate contradiction of the well-ordering property of $\Nat$ is readily obtained.
\end{proof}

Expanding out this result we have that $\ACA + \Con \proves \forall \vec f . \exists n.\, t\, \vec f \, \conv \numeral n$.
Note that this yields the required syntactic continuity property:
since any $\conv$-sequence is finite, we may compute $t(\vec f)$ by querying each $f_i$ only finitely many times.
 
From here, by applying a relativised version of program extraction (see, e.g., \cite{kohlenbach08:applied-proof-theory}), we may witness the existential by a term $t'(\vec f)$ of $\T$, so that $\T + \Con \proves\, t\, \vec f\, \conv \, \numeral{t'(\vec f)}$.
Now, in the standard model $\nmod$, we have that $\Con$ is true and that $\conv$ is sound for equality (i.e.\ $s\conv t \implies s^\nmod = t^\nmod$), so we finally have:
\begin{corollary}
If $t$ is a level 2 coterm of $\C$, then there is a $\T$ term $t'$ s.t.\ $\interp{t'} = \interp t$.
\end{corollary}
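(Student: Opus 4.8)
The plan is to derive this from the type-2 totality result established just above, namely that $\ACA+\Con\proves\forall\vec f\,t\,\numeral{\vec f}\in\hr\nat^{\vec f}$, together with a relativised form of program extraction, and then transfer the resulting syntactic identity to the standard model $\nmod$ using soundness of $\conv$ for equality.

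First I would put $t$ into the shape the theorem expects. Since $\level(t)\le 2$, write its type as $\vec\sigma\to\nat$ with each $\sigma_i$ of the form $\overbrace{\nat\to\cdots\to\nat}^{k_i}\to\nat$; we may assume $t$ closed, absorbing any free variables among the oracle inputs below. Reading application as composition via $\cut$ (as already noted for regular coterms), $t$ may be taken to be a regular progressing coderivation $t:\vec\sigma\seqar\nat$; progressiveness survives this reading because every infinite branch of a $\cut$-composite eventually enters, and stays inside, one of the component coderivations, where it carries a progressing thread, and progressing threads need not begin at the root. Now introduce the fresh oracle constants $\numeral f_i:\sigma_i$ with the reductions \eqref{eqn:oracle-reduction}, so that reduction remains $\Sigma^0_1(\vec f)$ and each $\numeral f_i\in\hr{\sigma_i}^{\vec f}$.

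Next I would invoke the preceding theorem to get $\ACA+\Con\proves\forall\vec f\,\exists n.\ t\,\numeral{\vec f}\conv\numeral n$, a statement that is $\Pi^0_2$ in the second-order parameters $\vec f$. A relativised version of program extraction (see, e.g., \cite{kohlenbach08:applied-proof-theory}) then yields a $\T$-term $t'(\vec f)$, with the $\vec f$ as free variables of the types $\sigma_i$, such that $\T+\Con\proves t\,\numeral{\vec f}\conv\numeral{t'(\vec f)}$; abstracting over $\vec f$ by combinatory completeness of $\T$ gives a closed $\T$-term $t':\vec\sigma\to\nat$.

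Finally I would pass to $\nmod$. Since $\Con$ is a true $\Pi^0_1$ sentence and $\T$ is a sound theory, the extracted equation is genuinely true: for every $\vec f$, the coterm $t\,\numeral{\vec f}$ really does convert to $\numeral{t'(\vec f)}$. In $\nmod$ the reduction rules read as equations are valid (interpreting each $\numeral f_i$ as $f_i$), so $\conv$ is sound for $=$; hence $t^\nmod(\vec f)=(t\,\numeral{\vec f})^\nmod=(\numeral{t'(\vec f)})^\nmod=t'^\nmod(\vec f)$ for all $\vec f\in\vec\sigma^\nmod$. Extensionality of $=^\nmod$ then gives $t^\nmod=t'^\nmod$ in $\sigma^\nmod$, i.e.\ $\interp{t'}=\interp t$, as required. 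Given the theorem it rests on, the only points in this argument needing any care are the reduction of an arbitrary level-2 coterm to a progressing coderivation (with preservation of progressiveness under $\cut$) and the verifications that $\nmod\models\Con$ and that $\conv$ is sound in $\nmod$ — all routine; the genuine difficulty lives upstream, in making the relativised totality argument go through with the oracle symbols present, i.e.\ checking that $\Con$, in place of confluence, still makes the numbers read off a progressing thread unique, and that the generated non-total branch is arithmetical in the extended $\reduces$ so that it is available as a set in $\ACA$.
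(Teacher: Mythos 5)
Your proposal is correct and follows essentially the same route as the paper: expand the relativised totality theorem into the $\Pi^0_2$ statement $\ACA+\Con\proves\forall\vec f\,\exists n.\ t\,\numeral{\vec f}\conv\numeral n$, apply relativised program extraction to obtain a $\T$-term $t'(\vec f)$ with $\T+\Con\proves t\,\vec f\conv\numeral{t'(\vec f)}$, and then conclude in $\nmod$ using the truth of $\Con$ and the soundness of $\conv$ for equality. The extra preliminaries you supply (reducing a level-2 coterm to a progressing coderivation via $\cut$-as-composition, abstracting the oracles) are routine details the paper leaves implicit.
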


It would be interesting to see if we could adapt this approach to give models of $\C$ based on continuous functionals at higher types (cf.~\cite{ho-computability}), but such a development is beyond the scope of this work.
We point out that forms of continuity at higher types for $\T$ are less canonical, cf.~\cite{Troelstra73:metamathematical-investigations}.
Nonetheless, we shall show in Section~\ref{sect:crf=prf} that, for every $\C$ coterm there is a $\T$ term computing the same functional (in $\nmod$).

\subsection{A `term model' \`a la Tait and strong normalisation}
It is an immediate consequence of our results that $\C$-coterms are \emph{weakly normalising}.

\begin{proposition}
\label{prop:hr-wn}
If $t\in \hr{}$ then $t$ is weakly normalising.
\end{proposition}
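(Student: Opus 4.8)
The plan is to argue by induction on the type $\sigma$ of $t$, with the base case supplying the ``immediate'' part and the inductive step needing a little care. For $\sigma=\nat$: by definition of $\hr\nat$ we have $t\conv\numeral n$ for some $n\in\Nat$, so by the Church--Rosser property (Theorem~\ref{thm:cr}), in the form of Corollary~\ref{cor:un-for-red-and-conv}, in fact $t\reduces^*\numeral n$; since $\numeral n$ is $\reduces$-normal, $t$ is weakly normalising (WN for short). Along the way I would also use that WN is preserved under reduction: if $s$ is WN and $s\reduces^* s'$ then, again by confluence, $s'$ reduces to the same normal form, so $s'$ is WN; this is convenient since $\hr{}$ is only closed under conversion (Proposition~\ref{prop:hr-closure-props}), not obviously under the choice of a normalising strategy.

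For the inductive step, let $\sigma=\rho\to\tau$ and $t\in\hr{\rho\to\tau}$. I would apply $t$ to a fresh variable $x$ of type $\rho$. Such an $x$ lies in the evident extension of the predicate $\hr{\rho}$ to open farcs -- substituting any closed hereditarily total farc for $x$ returns a closed hereditarily total farc of the same type -- so $t\,x$ lies in the (open version of the) predicate $\hr{\tau}$, and hence is WN by the inductive hypothesis, say $t\,x\reduces^* N$ with $N$ normal. It then remains to pull this back to a normal reduct of $t$ itself. The key point enabling the pull-back is that a fresh variable $x$ is \emph{inert} with respect to the rewrite system of Figure~\ref{fig:reduction-rules}: being neither an application nor a constant nor a numeral, $x$ can never be the head symbol of a redex, nor occupy the principal ($\nat$) argument position of a $\cond$ or $\rec$ redex. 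Consequently every redex contracted in $t\,x\reduces^* N$ is either entirely inside the functional part $t$ (in which case the step projects to a reduction step of $t$), or a head redex of a partial application of $t$ that merely copies, deletes or passes the inert argument $x$ -- and the latter can only occur once the functional part has itself already reached a $\reduces$-normal form. Formalising this by a standard postponement/commutation argument (legitimate here since all coterms in sight are farcs, hence have only finitely many redexes and variable occurrences by Corollary~\ref{cor:pres-fin-props-by-reduction}, keeping the bookkeeping finitary), one extracts from $t\,x\reduces^* N$ a finite reduction of $t$ to a $\reduces$-normal coterm, so $t$ is WN.

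The main obstacle is precisely this last pull-back: cleanly disentangling the reduction of $t\,x$ into genuine reductions of $t$, rather than something like $t\,x$ merely \emph{having} a normal reduct while also admitting an infinite reduction (which WN alone permits). In fact, the simplest route is to bypass the inductive step entirely and deduce the proposition as an immediate corollary of the strong normalisation theorem proved below, since strong normalisation trivially implies weak normalisation; the type-indexed argument sketched above is only required if one wants a self-contained proof of weak normalisation that is prior to, or independent of, strong normalisation.
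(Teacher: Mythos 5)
Your base case coincides with the paper's (confluence plus normality of numerals), but your inductive step goes a different way and has two genuine problems. The paper simply applies $t$ to an actual element $s\in\hr\sigma$ (``for some/any $s\in\hr\sigma$''), so that $t\,s\in\hr\tau$ by definition and is weakly normalising by the induction hypothesis; a normalising sequence for $t$ is then extracted from one for $t\,s$ by discarding the steps not lying entirely inside $t$. Your fresh-variable variant instead requires an extension of $\hr{}$ to open farcs and a correspondingly strengthened induction hypothesis, which you neither state nor prove (note that a variable $x$ of type $\nat$ is \emph{not} in $\hr\nat$: it is normal and is not a numeral). Worse, the structural claim doing the work --- that a head redex of $t\,x$ ``can only occur once the functional part has itself already reached a $\reduces$-normal form'' --- is false: if $t=\cut\,r\,s$ (empty antecedent), then $\cut\,r\,s\,x\reduces s\,x\,(r\,x)$ fires immediately, for arbitrary, possibly non-normal, $r$ and $s$, and similarly for $\leftimp$, $\cond$, etc. So the ``inert variable plus postponement'' route does not close the real gap, which is exactly the projection step the paper performs directly on a normalising sequence for $t\,s$; the detour through open terms adds obligations without removing that difficulty.

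The claimed shortcut via strong normalisation is not available. The strong normalisation corollary at the end of the paper concerns closed $\C$ coterms, i.e.\ \emph{regular progressing} ones, whereas Proposition~\ref{prop:hr-wn} is about all hereditarily total farcs, with no progressiveness assumption; and strong normalisation is genuinely false on $\hr{}$. For instance, let $u:\nat$ be a closed farc admitting an infinite reduction (e.g.\ $H\,0$ from the proof of Proposition~\ref{prop:turing-completeness}, taking $f$ there to be a constant-$1$ coderivation). Then $\wk\,0\,u\reduces 0$, so $\wk\,0\,u\in\hr\nat$, yet it also has an infinite reduction inside the erased argument $u$. Hence weak normalisation on $\hr{}$ cannot be an ``immediate corollary'' of that theorem; at best it would cover progressing coterms only, and even there the strong normalisation development already invokes the type-$\nat$ instance of the present argument (Theorem~\ref{thm:main-non-unif} together with Corollary~\ref{cor:un-for-red-and-conv}), so nothing would be gained by reversing the order.
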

\begin{proof}
[Proof sketch]
We proceed by induction on the type of $t$:
\begin{itemize}
\item For $t$ of type $\nat$, we rely on the confluence result, Theorem~\ref{thm:cr}, and normality of numerals.
\item If $t$ has type $\sigma \to \tau$, then for some/any $s \in \hr\sigma$ we have that $ts \in \hr\tau$, by definition, and so is weakly normalising by the inductive hypothesis. We define a new normalisation sequence for $t$ from one for $ts$ by induction on its length, simply ignoring reductions that are not entirely inside $t$.
\qedhere
\end{itemize}
\end{proof}
Thus, by Theorem~\ref{thm:main-non-unif}, we have:
\begin{corollary}
Each closed $\C$ coterm is weakly normalising.
Moreover, any $\nC n$ coterm is provably weakly normalising inside $\RCA + \CIND{\Sigma^0_{n+2}}$.
\end{corollary}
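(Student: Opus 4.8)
The plan is to read off both assertions from the two results that precede them, Theorem~\ref{thm:main-non-unif} and Proposition~\ref{prop:hr-wn}.

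For the unconditional statement I would argue as follows. By the corollary immediately after Theorem~\ref{thm:main-non-unif}, $\hr{}$ is a model of $\C\setminus\extensionality$; in particular every closed $\C$ coterm $t$ of type $\sigma$ is interpreted in $\hr{}$, i.e.\ $t\in\hr\sigma$. (Concretely: a closed $\C$ coterm is a finite application of regular progressing coderivations and of the constants of $\T^-+\cond$; Theorem~\ref{thm:main-non-unif} puts each coderivation leaf in $\hr{}$, the constants $0,\succ,\cond$ and the combinators of $\T^-$ lie in $\hr{}$ directly, and $\hr{}$ is closed under application at each fixed type by Proposition~\ref{prop:hr-closure-props}\eqref{item:hr-closed-under-composition}, so a finite induction on the structure of $t$ suffices.) Proposition~\ref{prop:hr-wn} then gives that $t$ is weakly normalising.

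For the ``moreover'' clause, fix a closed $\nC n$ coterm $t:\sigma$. I would first check that $\RCA+\CIND{\Sigma^0_{n+2}}\proves t\in\hr\sigma$, by running the structural induction just described inside arithmetic: the coderivation leaves are $\nC n$-coderivations, hence handled by Theorem~\ref{thm:main-non-unif}, which is proved in precisely this theory; the constant leaves are handled at their fixed types already in $\RCA$ (the $\hr{}$-membership of $0,\succ$, of $\cond$, and of each $\T^-$-combinator unfolds, at fixed types, into instances of closure of $\hr{}$ under application and under conversion, Proposition~\ref{prop:hr-closure-props}, none needing induction on $\Nat$ beyond $\RCA$); and the application nodes are handled at each fixed type by the non-uniform Proposition~\ref{prop:hr-closure-props}\eqref{item:hr-closed-under-composition}. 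It then remains to note that the proof of Proposition~\ref{prop:hr-wn} formalises in $\RCA$: for a fixed type it is a finite induction on the type; the base case uses only confluence (Theorem~\ref{thm:cr}, via Corollary~\ref{cor:un-for-red-and-conv}) and normality of numerals; and the arrow case, after fixing a closed $\T^-$-term $s_0:\sigma$ ($\RCA$-provably in $\hr\sigma$, e.g.\ a constant-$0$ functional built from $0$, $\wk$, $\rightimp$), extracts a reduction sequence for $t$ from a normalising one for $t\, s_0$ by $\Sigma^0_1$-induction on its length. Composing these facts yields $\RCA+\CIND{\Sigma^0_{n+2}}\proves$ ``$t$ is weakly normalising''.

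The one step that needs real care — and the only place confluence is genuinely used — is the arrow case of Proposition~\ref{prop:hr-wn}: one must verify that deleting from a normalising reduction sequence of $t\, s_0$ the steps that are not entirely inside the $t$-subterm leaves a reduction sequence of $t$ whose endpoint is actually redex-free. The point is that the deleted ``interface'' steps neither create nor destroy redexes of the $t$-part, so by Theorem~\ref{thm:cr} they can be permuted past the internal ones; this is precisely why the whole argument stays inside $\RCA$, the extra induction strength $\CIND{\Sigma^0_{n+2}}$ entering only through the appeal to Theorem~\ref{thm:main-non-unif}. I do not foresee any obstacle beyond this bookkeeping.
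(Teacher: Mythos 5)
Your proposal is correct and follows exactly the paper's route: the corollary is obtained by combining Theorem~\ref{thm:main-non-unif} (membership of $\nC n$ coderivations in $\hr{}$, provably in $\RCA+\CIND{\Sigma^0_{n+2}}$) with Proposition~\ref{prop:hr-wn}, closure of $\hr{}$ under application handling general coterms. Your extra bookkeeping on the constants, the structural induction, and the $\RCA$-formalisation of the arrow case of Proposition~\ref{prop:hr-wn} only spells out details the paper leaves implicit.
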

Given that we also have a confluence for $\C$, we are not far from a strong normalisation result.
It would be interesting if we could define an \emph{increasing} measure for reduction to this end, perhaps induced by the progressing thread criterion.
Instead, we show that Tait's `convertibility' predicates may be suitably adapted for this purpose, yielding a minimal model for $\C$.
We will not formalise our exposition within arithmetic, but expect it to go through in a suitable fragment of second-order arithmetic.

 \smallskip
 
We will define a minimal `coterm model' in a similar way to Tait's term models of sytem $\T$ \cite{Tait:67:normalisation-of-t-+-bar-rec-typ01}.
This is complementary to our development of $\hr{}$ and $\he{}$: while those structures were `over-approximations' of the language of $\C$, the structure we are about to define is an `under-approximation', by virtue of its definition.
Naturally, the point is to show that the approximation is, in fact, tight.

\begin{definition}
[Convertibility]
We define the following sets of closed $\C$-coterms:
\begin{itemize}
\item $\cmod\nat := \{t:\nat \ |\ t \text{ is strongly normalising} \} $.
\item $\cmod{\sigma \to \tau} := \{ t:\sigma \to \tau \ | \ \forall s \in \cmod \sigma. \, ts \in \cmod \tau \}$.
\end{itemize} 
\end{definition}

We can adapt suitable versions of Proposition~\ref{prop:hr-wn} and Proposition~\ref{prop:hr-closure-props} to the setting of $\cmod{}$:
\begin{proposition}
\label{prop:cmod-properties}
We have the following:
\begin{enumerate}
\item\label{item:c-closed-under-application} If $t\in \cmod{\sigma \to \tau}$ and $s \in \cmod \sigma $ then $ts \in \cmod \tau$. ($\cmod{}$ closed under application)
\item\label{item:c-closed-under-reduction} If $t \in \cmod \tau$ and $t \reduces t'$ then $t'\in \cmod \tau$. ($\cmod{}$ closed under reduction)
\item\label{item:c-is-sn} If $t\in \cmod{\tau}$ then $t$ is strongly normalising. ($\cmod{} \subseteq \SN{}$)
\end{enumerate}
\end{proposition}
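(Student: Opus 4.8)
The plan is to run the standard Tait reducibility (``convertibility'') argument, taking care that every induction proceeds on the structure of the \emph{type}, never on the (non-wellfounded) coterm. Property~\eqref{item:c-closed-under-application} is immediate from the definition of $\cmod{\sigma\to\tau}$, so there is nothing to do there. For \eqref{item:c-closed-under-reduction} and \eqref{item:c-is-sn} I would carry out a simultaneous induction on the type $\tau$; this is legitimate precisely because types are finite, even though the members of each $\cmod\sigma$ range over arbitrary regular progressing coterms. Since every coterm in play is a \farc, reduction preserves this class (Corollary~\ref{cor:pres-fin-props-by-rtc-reduction}), a \farc\ has only finitely many redexes, so $\reduces$ is finitely branching; hence ``strongly normalising'' coincides with ``terminating'', and, via confluence (Theorem~\ref{thm:cr}), normal forms are unique. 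These remarks let us handle $\SN{}$ exactly as in the finitary case. This is the same adaptation as promised of Propositions~\ref{prop:hr-wn} and \ref{prop:hr-closure-props}, now with $\SN{}$ in place of weak normalisation and $\cmod{}$ in place of $\hr{}$.

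For the base case $\tau=\nat$: \eqref{item:c-is-sn} is immediate from the definition of $\cmod\nat$, and \eqref{item:c-closed-under-reduction} holds because, if $t\reduces t'$, any infinite reduction sequence out of $t'$ prefixes to one out of $t$, contradicting $t\in\SN{}$. For the inductive step I need two routine auxiliary facts. First, $\reduces$ is closed under contexts, so from $t\reduces t_1\reduces t_2\reduces\cdots$ and any coterm $s$ we get $t\,s\reduces t_1\,s\reduces\cdots$; in particular $t\,s\in\SN{}$ forces $t\in\SN{}$. Second, the \emph{erasing} contractions --- $\wk$, the $\Kcomb{}{}$ combinator, and the zero-cases of $\cond$ and $\rec$ --- preserve strong normalisation of a coterm provided the discarded sub-coterms are themselves strongly normalising (the usual ``$\mathrm K$-redex'' lemma, which here must be phrased for \farc s and verified using only finitary data about reduction sequences, available since \farc s have finitely many redexes). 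I would also fold into the simultaneous induction the auxiliary statement that each $\cmod\sigma$ is nonempty: at base type $0\in\cmod\nat$, and at an arrow type $\rho\to\sigma = \rho\to\sigma'_1\to\cdots\to\sigma'_k\to\nat$ one takes the finite, loop-free derivation $\wk^{k+1}\,0$ that erases all its arguments and returns $0$; its membership in $\cmod{\rho\to\sigma}$ follows from the inductive hypothesis for \eqref{item:c-is-sn} applied to $\rho$ and the $\sigma'_i$ together with the erasing-redex lemma. (If $\cmod\rho=\emptyset$ then $\cmod{\rho\to\sigma}$ is the whole function type and nonemptiness is trivial, so no circularity arises.)

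Now the inductive step at $\tau=\rho\to\sigma$. For \eqref{item:c-closed-under-reduction}: given $t\in\cmod{\rho\to\sigma}$ and $t\reduces t'$, take an arbitrary $s\in\cmod\rho$; then $t\,s\in\cmod\sigma$ by membership of $t$, and $t\,s\reduces t'\,s$ by context closure, so $t'\,s\in\cmod\sigma$ by the inductive hypothesis for \eqref{item:c-closed-under-reduction} at $\sigma$; hence $t'\in\cmod{\rho\to\sigma}$. For \eqref{item:c-is-sn}: let $e_\rho\in\cmod\rho$ be a base point supplied by the nonemptiness part of the inductive hypothesis; then $t\,e_\rho\in\cmod\sigma$, hence $t\,e_\rho\in\SN{}$ by the inductive hypothesis for \eqref{item:c-is-sn} at $\sigma$, and by the first auxiliary fact an infinite reduction out of $t$ would yield one out of $t\,e_\rho$, a contradiction; so $t\in\SN{}$.

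The main obstacle, as throughout this setting, is the non-wellfoundedness of coterms: one cannot build a canonical normal form or a canonical inhabitant by recursion on coterm syntax, so the nonemptiness/base-point lemma must be obtained purely by induction on types, leaning on combinatory completeness of the sequent calculus together with the fact that the witness derivations are \emph{finite} (hence their strong normalisation, modulo the erasing-redex lemma, is unproblematic). The second delicate point is precisely that erasing-redex lemma, which in the circular calculus must be stated for \farc s; but it reduces to the observation that along any reduction each erasing head-redex can fire at most once without being regenerated, so an infinite reduction would require infinitely many steps inside one of the finitely many discarded, strongly normalising sub-coterms. Everything else is the routine Tait bookkeeping, and this lemma also foreshadows the style of argument needed for the later adequacy statement that every $\C$-coterm lies in $\cmod{}$.
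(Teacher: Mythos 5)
Your proposal is correct and follows essentially the same route as the paper: item \eqref{item:c-closed-under-application} directly from the definition, and items \eqref{item:c-closed-under-reduction} and \eqref{item:c-is-sn} by induction on the type, using context closure of $\reduces$ and the fact that a diverging reduction from $t$ induces one from $t\,s$. The only difference is that you explicitly prove nonemptiness of each $\cmod\sigma$ (via finite $\wk$-erasing witnesses and an erasing-redex lemma), which is a legitimate and careful way to justify the step the paper passes over with the phrase ``for some/any $s \in \cmod\sigma$''.
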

\begin{proof}
\eqref{item:c-closed-under-application} is immediate from the definition of $\cmod{}$.

\eqref{item:c-closed-under-reduction} is proved by induction on type:
\begin{itemize}
\item Suppose $t\in \cmod\nat$ and $t \reduces t'$. By definition $t$ is strongly normalising, so also $t'$ is strongly normalising, and so $t'\in \cmod \nat$ by definition.
\item Suppose $t\in \cmod{\sigma \to \tau}$ and $t \reduces t'$. Let $s \in \cmod \sigma$. Then $ts \reduces t's$ by closure of $\reduces$ under contexts, and so $t's \in \cmod \tau$ by the inductive hypothesis. Since the choice of $s\in \cmod \sigma$ was arbitrary, we have $t' \in \cmod{\sigma \to \tau}$.
\end{itemize}

Finally, \eqref{item:c-is-sn} is also proved by induction on type:
\begin{itemize}
\item Suppose $t \in \cmod \nat$. Then $t$ is strongly normalising by definition.
\item Suppose $t \in \cmod{\sigma \to \tau}$ and, for contradiction, let $t=t_0 \reduces t_1 \reduces \cdots$ be a diverging reduction sequence. Then for some/any $s\in \cmod \sigma$, we have that $ts = t_0s \reduces t_1s \reduces \cdots $ is also a diverging reduction sequence, contradicting the inductive hypothesis. \qedhere
\end{itemize}
\end{proof}
Note that the strong normalisation condition for $\cmod \nat$ is crucial to justify closure under reduction, \eqref{item:c-closed-under-reduction}, at base type $\nat$. 
In contrast, for $\hr\nat$ we only asked for \emph{conversion} to a numeral, and so the analogous property of closure under conversion was a consequence of symmetry.

Let us call a coterm $t$ \textbf{neutral} if any redex of $ts$ is either entirely in $t$ or entirely in $s$.
We also have the following expected characterisation of convertibility:
\begin{lemma}
[Convertibility lemma]
\label{lem:conversion-lemma}
Let $t$ be neutral.
If $\forall t' \unreduces t . \ t' \in \cmod{\tau}$, then $t \in \cmod \tau$.
\end{lemma}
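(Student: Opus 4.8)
The plan is to prove the Convertibility Lemma by induction on the structure of the type $\tau$, following the classical Tait-style argument but adapted to our non-wellfounded (co)term setting, where induction on term structure is unavailable. The base case $\tau = \nat$ is the crux: assuming $t$ is neutral and every $t'$ with $t \reduces t'$ lies in $\cmod\nat$ (i.e.\ is strongly normalising), I must show $t$ itself is strongly normalising. Since $t$ is neutral, any reduction step from $t$ is an honest reduction of $t$ (there is no redex `created at the root' by applying $t$ to something, as $t$ is not itself being applied here), so any infinite reduction sequence $t \reduces t_1 \reduces t_2 \reduces \cdots$ has $t_1$ already in $\cmod\nat = \SN{}$, contradicting the infinitude of the tail $t_1 \reduces t_2 \reduces \cdots$. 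Hence $t \in \cmod\nat$. (If $t$ is already normal, the hypothesis is vacuous over an empty set of one-step reducts, and $t$ is trivially strongly normalising.)

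For the inductive step, suppose $\tau = \sigma \to \rho$ with the lemma known for $\rho$ (and, as usual in these arguments, the auxiliary facts of Proposition~\ref{prop:cmod-properties} available at all types, in particular $\cmod\sigma \subseteq \SN{}$ and closure of $\cmod{}$ under reduction). To show $t \in \cmod{\sigma \to \rho}$, fix an arbitrary $s \in \cmod\sigma$; I must show $ts \in \cmod\rho$. By the inductive hypothesis for $\rho$, it suffices to check that $ts$ is neutral and that every one-step reduct of $ts$ lies in $\cmod\rho$. Neutrality of $ts$: since $t$ is neutral, any redex of $ts$ is entirely inside $t$ or entirely inside $s$; and a redex of $(ts)u$ for any $u$ is, by neutrality of $t$ applied to the subterm $ts$ together with the shape of our reduction rules, again localised — this is the place where one must unwind the definition of `neutral' carefully, but it is routine given that all our reduction rules in Figure~\ref{fig:reduction-rules} fire at a node whose label is a combinator constant, never at a bare application node. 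One-step reducts of $ts$: such a reduct has the form $t's$ with $t \reduces t'$, or $ts'$ with $s \reduces s'$, since (by neutrality of $t$) no redex straddles the application. In the first case $t' \in \cmod{\sigma\to\rho}$ by hypothesis, so $t's \in \cmod\rho$ by Proposition~\ref{prop:cmod-properties}\eqref{item:c-closed-under-application}. In the second case, we do a side induction on $\eta(s)$, the length of the longest reduction sequence from $s$ (finite since $s \in \cmod\sigma \subseteq \SN{}$): $s' \in \cmod\sigma$ by closure under reduction, and $ts'$ is neutral with all its one-step reducts in $\cmod\rho$ by the side-inductive hypothesis (together with the first case), hence $ts' \in \cmod\rho$ by the main inductive hypothesis for $\rho$. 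Therefore every one-step reduct of $ts$ is in $\cmod\rho$, and since $ts$ is neutral we conclude $ts \in \cmod\rho$ by the inductive hypothesis, and thus $t \in \cmod{\sigma\to\rho}$.

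The main obstacle I anticipate is bookkeeping around the notion of `neutral' and ensuring the mutual/side inductions are well-founded without appeal to induction on coterm structure: all inductions here are on the type $\tau$ (a finite object) and on reduction lengths of coterms already known to be strongly normalising, so no ill-founded recursion on the infinite syntax of $t$ is needed. A secondary subtlety is making sure that when $t \reduces t'$, the reduct $t'$ is genuinely a one-step reduct of the whole of $t$ and not of some applied form — but this is exactly what neutrality buys us, and it is also why the lemma is stated for neutral $t$ rather than arbitrary $t$. Once the Convertibility Lemma is in hand, the standard follow-up (not part of this statement) would be to show every combinator constant, including $\cond$ and the sequent-calculus rules, preserves $\cmod{}$, and hence every $\C$ coderivation is in $\cmod{}$, yielding strong normalisation; but for the present statement the induction on types described above suffices.
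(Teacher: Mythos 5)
Your proof is correct and follows essentially the same route as the paper's: an induction on the type, with the arrow case handled by showing every one-step reduct of $ts$ lies in $\cmod\tau$ via a sub-induction on a termination measure for $s$ (your longest-reduction-length $\eta(s)$ plays the role of the paper's reduction-tree size $\tree(s)$, both finite for the same finitely-many-redexes/K\"onig reason). Your extra care about neutrality of $ts$ is a point the paper leaves implicit, and your remark that neutrality is used in the base case is unnecessary but harmless.
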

\begin{proof}
We proceed by induction on type. The base case, for type $\nat$, follows by definition of strong normalisation, so let us assume,
\begin{equation}
\label{eqn:unreduces-to-convertible}
\forall t' \unreduces t . \ t' \in \cmod{\sigma \to \tau}
\end{equation}
and assume that the statement of the Proposition holds for all smaller types (IH), in particular $\sigma$ and $\tau$.

To prove $t \in \cmod{\sigma \to \tau}$, let $s \in \cmod \sigma$ and we show that $ts \in \cmod \tau$.
In fact, we will show,
\begin{equation}
\label{eqn:application-unreduces-to-convertible}
\forall r \unreduces ts .\ r \in \cmod \tau 
\end{equation}
whence $ts \in \cmod \tau$ follows by the inductive hypothesis (IH) for $\tau$.
Since $s \in \cmod \sigma$ and so is strongly normalising by Proposition~\ref{prop:cmod-properties}, we may prove \eqref{eqn:application-unreduces-to-convertible} by a sub-induction on the size of the complete reduction tree of $s$, say $\tree(s)$.\footnote{Since regular progressing coterms are, in particular, finite applications of coderivations, constants and variables (i.e.\ \farc s), there are only finitely many redexes in a $\C$ coterm, and so the reduction tree is finitely branching. By K\"onig's lemma, the complete reduction tree is thus finite.}
\begin{itemize}
\item Suppose $r = t's$, so $t \reduces t'$. Then $t' \in \cmod {\sigma \to \tau}$ by \eqref{eqn:unreduces-to-convertible}, and so $t's = r \in \cmod \tau$.
\item Suppose $t = ts' $, so $s \reduces s'$. Then we have $\tree(s')<\tree(s)$ so by the sub-inductive hypothesis \eqref{eqn:application-unreduces-to-convertible} we have $\forall r' \unreduces ts'  . \ r' \in \cmod \tau$. 
Thus by the main inductive hypothesis (IH) for $\tau$, we have $ts' = r \in \cmod \tau$. 
\end{itemize}
In all cases we have that $r \in \cmod \tau$, yielding \eqref{eqn:application-unreduces-to-convertible} as required.
\end{proof}

Now we can go on to define a non-converting branch, just like we did for the standard model $\nmod$ in Proposition~\ref{prop:termination} (non-total branch), and for $\hr{}$ (also $\he{}$) in Definition~\ref{dfn:canonical-non-total-branch} (non-hereditarily-recursive branch).
As in the latter case, we need to prove well-definedness of such a branch, cf.~Observation~\ref{obs:preservation-of-hr} and Proposition~\ref{prop:non-tot-branch-well-defined}.

\begin{proposition}
[Preservation of convertibility]
\label{prop:rules-preserve-cmod}
Let $\vec r \in \cmod{\vec \rho}$ and $\vec s \in \cmod{\vec \sigma}$. We have the following:\footnote{All rules have type as presented in Figures~\ref{fig:seq-calc-min}, \ref{fig:rules-0-s-rec} and \ref{fig:cond-rule+axioms}.}
\begin{itemize}
\item If $s \in \cmod \sigma $ then $\id\, s \in \cmod \sigma$.
\item If $ r \in \cmod \rho,s \in \cmod \sigma$ and $t\, \vec r\, s\, r\, \vec s\, \in \cmod \tau$ then $\exch\, t\, \vec r\, r\, s\, \vec s \, \in \cmod \tau$.
\item If $ s \in \cmod{\sigma}$ and $t\, \vec s \, \in \cmod \tau$ then $\wk\, t\, \vec s\, s\, \in \cmod \tau$.
\item If $s \in \cmod{\sigma}$ and $t\, \vec s\, s\, s\, \in \cmod \tau$ then $\cntr\, t\, \vec s\, s\, \in \cmod \tau$.
\item If $t_0\, \vec s\, \in \cmod \sigma$ and $\forall s \in \cmod \sigma . \ t_1\, \vec s\, s\, \in \cmod \tau$ then $\cut\, t_0\, t_1\, \vec s\, \in \cmod \tau$.
\item If $r \in \cmod{\rho \to \sigma}$ and $t_0 \, \vec s\, \in \cmod \rho$ and $\forall s \in \cmod \sigma.\ t_1\, \vec s\, s\, \in \cmod \tau$ then $\leftimp\, t_0\, t_1\, \vec s\, r\, \in \cmod \tau$.
\item If $\forall s \in \cmod \sigma .\ t\, \vec s\, s\, \in \cmod \tau$ then $\rightimp\, t\, \vec s\, \in \cmod {\sigma \to \tau}$.
\smallskip
\item $0 \in \cmod \nat$ .
\item If $s\in \cmod \nat$ then $\succ s \in \cmod{\nat}$.
\item If $s \in \cmod \nat$ and $t_0\, \vec s\, \in \cmod \tau$ then $\cond\, t_0\, t_1\, \vec s\, 0\, \in \cmod \tau$.
\item If $s \in \cmod \nat$ and $t_1\, \vec s\, s\, \in \cmod \tau$ then $\cond\, t_0\, t_1\, \vec s\, \succ s\, \in \cmod \tau$.
\end{itemize}
\end{proposition}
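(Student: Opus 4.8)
The plan is to prove each item uniformly by an appeal to the Convertibility Lemma (Lemma~\ref{lem:conversion-lemma}). Fix a combinator $\mathsf r$ and a list of arguments $\vec a$ such that $M := \mathsf r\,\vec a$ is an instance of the left-hand side of the corresponding reduction rule of Figure~\ref{fig:reduction-rules}; in the $\rightimp$ case, where the claim is $\rightimp\,t\,\vec s\in\cmod{\sigma\to\tau}$, we unfold the definition of $\cmod{\sigma\to\tau}$ and instead prove $\rightimp\,t\,\vec s\,s\in\cmod\tau$ for an arbitrary $s\in\cmod\sigma$, so that again $M=\rightimp\,t\,\vec s\,s$ is a redex. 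In all cases $M$ is \emph{neutral}: since the number of arguments consumed by $\mathsf r$ is fixed by its type, the only redex of $M\,u$ with $\mathsf r$ at its head is $M$ itself. Hence, by Lemma~\ref{lem:conversion-lemma}, it suffices to show that every one-step reduct of $M$ is in $\cmod\tau$ (respectively $\cmod\sigma$, for $\id$). These reducts are of two kinds: (i) the contractum $r(\vec a)$, obtained by firing the head redex; and (ii) a term $\mathsf r\,\vec a'$ obtained by reducing a single argument, $\vec a\reduces\vec a'$.

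For kind (i), the contractum lies in the required set directly: for $\id,\exch,\wk,\cntr$ it is a rearrangement, duplication or erasure of subterms already hypothesised to be convertible, so the hypothesis applies verbatim; for $\cut$ and $\leftimp$ it is $t_1\,\vec s\,(t_0\,\vec s)$, respectively $t_1\,\vec s\,(r\,(t_0\,\vec s))$, and we use $t_0\,\vec s\in\cmod\sigma$ (resp.\ $t_0\,\vec s\in\cmod\rho$ together with $r\in\cmod{\rho\to\sigma}$, whence $r\,(t_0\,\vec s)\in\cmod\sigma$ by Proposition~\ref{prop:cmod-properties}\eqref{item:c-closed-under-application}) to instantiate the hypothesis $\forall s\in\cmod\sigma.\ t_1\,\vec s\,s\in\cmod\tau$; for $\rightimp$ it is $t\,\vec s\,s$, in $\cmod\tau$ since $s\in\cmod\sigma$; and for the two $\cond$ items it is $t_0\,\vec s$, resp.\ $t_1\,\vec s\,s$, which is the hypothesis.

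For kind (ii), I would run an auxiliary induction on the (finite) sizes of the complete reduction trees of the coterm arguments $\vec a$; these trees are finite by K\"onig's lemma, since each $a_i$ is strongly normalising — being either a member of some $\cmod{}$ (hence in $\SN{}$ by Proposition~\ref{prop:cmod-properties}\eqref{item:c-is-sn}) or a coterm occurring inside such a member (hence strongly normalising a fortiori) — and $\C$-coterms have only finitely many redexes. Reducing a single argument $a_i\reduces a_i'$ strictly decreases the measure, and every hypothesis of the item is preserved for $\mathsf r\,\vec a'$: a hypothesis ``$b\in\cmod\gamma$'' or ``$\forall s\in\cmod\sigma.\ c\,s\in\cmod\tau$'' whose constituents are subterms of $\vec a$ survives the reduction using closure of $\cmod{}$ under reduction (Proposition~\ref{prop:cmod-properties}\eqref{item:c-closed-under-reduction}) and closure of $\reduces$ under contexts (e.g.\ $t\,\vec s\in\cmod\tau$ and $t\reduces t'$ give $t\,\vec s\reduces t'\,\vec s$, so $t'\,\vec s\in\cmod\tau$). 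Thus the inductive hypothesis yields $\mathsf r\,\vec a'\in\cmod\tau$, completing the verification, and Lemma~\ref{lem:conversion-lemma} delivers $M\in\cmod\tau$. The remaining two items are immediate: $0$ and $\succ s$ carry no head redex, so they are strongly normalising, hence in $\cmod\nat$, as soon as their proper subterms are.

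The work is essentially bookkeeping, and the only point needing care is exactly that in kind (ii) every hypothesis of the item genuinely survives a reduction inside an argument — which is why Proposition~\ref{prop:cmod-properties} is invoked repeatedly. One conceptual subtlety worth flagging is that, in contrast to Tait's argument for $\T$, the treatment of $\cond$ cannot be reduced to ``the scrutinee converts to a numeral'', because $\C$ has non-numeral $\reduces$-normal coterms of type $\nat$ (Remark~\ref{rmk:non-numeral-normal}); the present formulation sidesteps this by packaging only the $0$-step and $\succ$-step of $\cond$ as separate items, leaving the neutral case — scrutinee in normal form but not a numeral — to be absorbed transparently by the Convertibility Lemma when closure of $\cmod{}$ under $\cond$ is later assembled from these items.
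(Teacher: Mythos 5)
Your proposal is correct and follows essentially the same route as the paper's proof: invoke the Convertibility Lemma after checking neutrality, discharge the head contractum directly from the items' hypotheses (using closure of $\cmod{}$ under application in the $\leftimp$ case), and handle reductions inside arguments by an induction on the finite complete reduction trees, using closure of $\cmod{}$ under reduction, with $0$ and $\succ$ treated separately since they have no head redex. Your explicit neutrality check and your inclusion of the coderivation arguments $t,t_0,t_1$ in the measure (so that reductions inside them are also covered) are harmless refinements of the paper's argument, which tacitly relies on coderivations being closed and redex-free.
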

\begin{proof}
We proceed by induction on $\tree(\vec s) +\tree(s) + \tree(\vec r) +  \tree( r) $.
In most cases there is a redex at the head, so we shall directly use the conversion lemma, rather showing that any term obtained from reduction is convertible. 
\begin{itemize}
\item $\id\, s \reduces \id\, s' \in \cmod \sigma$ by the inductive hypothesis, and $\id \, s \reduces s \in \cmod\sigma$ by assumption. Thus $\id\, s\, \in \cmod \sigma$ by Lemma~\ref{lem:conversion-lemma}.
\item $\exch\, t\, \vec r\, r\, s\, \vec s\, \reduces\, \exch\, t\, \vec r'\, r'\, s'\, \vec s'\, \in \cmod \tau$ by inductive hypothesis, and $\exch\, t\, \vec r\, r\, s\, \vec s\, \reduces \,  t\, \vec r\, s\, r\, \vec s\, \in \cmod \tau$ by assumption. Thus $\exch\, t\, \vec r\, r\, s\, \vec s\, \in \cmod \tau$ by Lemma~\ref{lem:conversion-lemma}.
\item $\wk\, t\, \vec s\, s\, \reduces \, \wk\, t\, \vec s'\, s'\, \in \cmod \tau$ by inductive hypothesis, and $\wk\, t\, \vec s\, s\, \reduces \, t\, \vec s\, \in \cmod \tau $ by assumption. Thus $\wk\, t\, \vec s\, s\, \in \cmod \tau$ by Lemma~\ref{lem:conversion-lemma}.
\item $\cntr\, t\, \vec s\, s\, \reduces\, \cntr\, t\, \vec s'\, s'\, \in \cmod \tau$ by inductive hypothesis, and $\cntr\, t\, \vec s\, s\, \reduces\, t\, \vec s\, s\, s\, \in \cmod \tau$ by assumption. Thus $\cntr\, t\, \vec s\, s\, \in \cmod \tau$ by Lemma~\ref{lem:conversion-lemma}.
\item $\cut\, t_0\, t_1\, \vec s\, \reduces \cut\, t_0\, t_1\, \vec s'\, \in \cmod \tau$ by inductive hypothesis, and $\cut\, t_0\, t_1\, \vec s\, \reduces t_1\, \vec s\, (t_0\, \vec s)\, \in \cmod \tau$ by assumptions.
Thus $\cut\, t_0\, t_1\, \vec s\, \in \cmod \tau$ by Lemma~\ref{lem:conversion-lemma}.
\item $\leftimp\, t_0\, t_1\, \vec s\, r\, \reduces\, \leftimp\, t_0\, t_1\, \vec s'\, r'\, \in \cmod \tau$ by inductive hypothesis, and $\leftimp\, t_0\, t_1\, \vec s\, r\, \reduces\, t_1\, \vec s\, (r\, (t_0 \, \vec s))\, \in \cmod \tau $ by assumptions and closure of $\cmod{}$ under application, Proposition~\ref{prop:cmod-properties}.\eqref{item:c-closed-under-application}.
Thus $\leftimp\, t_0\, t_1\, \vec s\, r\, \in \cmod \tau$ by Lemma~\ref{lem:conversion-lemma}.
\item To show $\rightimp\, t\, \vec s\, \in \cmod{\sigma \to \tau} $, let $s \in \cmod \sigma$ and we shall show that $\rightimp\, t\, \vec s\, s\, \in \cmod \tau$. We proceed by induction on $\tree (s)$ (as well as $\tree(\vec s)$).
 $\rightimp\, t\, \vec s\, s\, \reduces \rightimp\, t\, \vec s'\, s'\, \in \cmod \tau$ by inductive hypothesis, and $\rightimp\, t\, \vec s\, s\, \reduces\, t\, \vec s\, s\, \in \cmod\tau$ by assumption.
 Thus $\rightimp\, t\, \vec s\, s\, \in \cmod \tau$ by Lemma~\ref{lem:conversion-lemma}, as required.
  \smallskip
 \item $0 \in \cmod \nat$ since it is already normal.
 \item There is no reduction at the head of $\succ s$, so any reduction sequence for $\succ s$ projects to one for $s$, and so terminates by assumption.
 \item $\cond\, t_0\, t_1\, \vec s\, 0\, \reduces \, \cond\, t_0\, t_1\, \vec s'\, 0\, \in \cmod{\tau}$ by inductive hypothesis, and $\cond\, t_0\, t_1\, \vec s\, 0\, \reduces t_0\, \vec s\, \in \cmod \tau$ by assumption.
 Thus $\cond\, t_0\, t_1\, \vec s\, 0\, \in \cmod \tau$ by Lemma~\ref{lem:conversion-lemma}.
 \item $\cond\, t_0\, t_1\, \vec s\, \succ s\, \reduces\, \cond\, t_0\, t_1\, \vec s'\, \succ s'\, \in \cmod{\tau}$ by inductive hypothesis, and $\cond\, t_0\, t_1\, \vec s\, \succ s\, \reduces \, t_1\, \vec s\, s\, \in \cmod \tau$ by assumption.
 Thus $\cond\, t_0\, t_1\, \vec s\, \succ s\, \in \cmod \tau$ by Lemma~\ref{lem:conversion-lemma}. \qedhere
\end{itemize}
\end{proof}

As a consequence of our results in Sections~\ref{sect:coterm-models} and \ref{sect:t-sim-c}, observe the following:
\begin{observation}
\label{obs:cmod-nf-nat-numerals}
If $s \in \cmod \nat$ then $s$ reduces to a unique numeral.
\end{observation}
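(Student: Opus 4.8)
The plan is to deduce the statement directly from two results already established: that $\hr{}$ is a model of $\C\setminus\extensionality$ (the corollary following Theorem~\ref{thm:main-non-unif}, from Section~\ref{sect:t-sim-c}) and that reduction of \farc s is confluent with unique normal forms (Section~\ref{sect:coterm-models}). So the proof is essentially a chaining exercise, and the only place thought is required is bookkeeping about what $\cmod\nat$ actually contains.

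First I would make explicit what membership in $\cmod\nat$ gives. An $s\in\cmod\nat$ is a closed $\C$-coterm of type $\nat$; being a term of the language of $\C$, it is a finite application of symbols of $\C$ (regular progressing coderivations) and of the constants $0,\succ,\cond,\id,\exch,\dots$, and is in particular a closed \farc, so the confluence results of Section~\ref{sect:coterm-models} all apply to it. Since $\hr{}$ is a model of $\C\setminus\extensionality$, every closed term of the language of $\C$ is interpreted in $\hr{}$; in particular $s\in\hr\nat$. (Only the truth of this $\Sigma^0_1$ fact is used, so it suffices to invoke the model-theoretic corollary; no tracking of induction strength is needed for a conclusion at base type.)

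Second I would unwind $\hr\nat=\{t:\nat\mid\exists n.\, t\conv\numeral n\}$ to obtain $n\in\Nat$ with $s\conv\numeral n$, and then close the argument with confluence: by UN for $\conv$ (Corollary~\ref{cor:un-for-red-and-conv}.\eqref{item:un-conv}) this $n$ is the unique natural number with $s\conv\numeral n$, and by peaks-and-valleys (Corollary~\ref{cor:un-for-red-and-conv}.\eqref{item:peaks-and-valleys}) together with normality of numerals one in fact gets $s\reduces^*\numeral n$, which is exactly the claim. I would also note that the strong normalisation clause built into the definition of $\cmod\nat$ plays no role: the same conclusion holds for every closed $\C$-coterm of type $\nat$, and if such an $s$ is moreover $\reduces$-normal then $s=\numeral n$, recovering numerality of normal forms for this restricted class — in contrast to the general failure recorded in Remark~\ref{rmk:non-numeral-normal}.

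The one point I would take care to spell out, and the only genuine subtlety, is the reading of $\cmod\nat$: it ranges over honest $\C$-coterms, i.e.\ over finite applications of regular progressing coderivations (and constants), and not over arbitrary strongly normalising closed coterms of type $\nat$. Under the latter, wider reading the statement would be \emph{false}: the coterm $H=\cond\,0\,1\,H$ of Remark~\ref{rmk:non-numeral-normal} is closed, of type $\nat$, and trivially strongly normalising (it is already $\reduces$-normal), yet it reduces to no numeral — but $H$ is precisely not a $\C$-coterm, since the coderivation presenting it loops through a $\cut$ whose context is empty, carries no $\nat$-thread along that loop, and so is not progressing (nor is it a finite application of progressing coderivations). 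With that reading pinned down, no obstacle remains: the statement is the composition of the cited model-theoretic and confluence results.
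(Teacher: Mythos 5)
Your argument is correct and is essentially the paper's own proof: both deduce $s\conv\numeral n$ from the fact that $\cmod\nat$ contains only $\C$-coterms (the paper cites Theorem~\ref{thm:main-non-unif} directly, you via the corollary that $\hr{}$ models $\C\setminus\extensionality$, which is the same content), and then obtain uniqueness and $s\reduces^*\numeral n$ from Corollary~\ref{cor:un-for-red-and-conv}. Your closing remark pinning down the reading of $\cmod\nat$ against the counterexample of Remark~\ref{rmk:non-numeral-normal} is a sensible elaboration of the paper's ``contains \emph{only} $\C$-coterms'' aside, not a different route.
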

\begin{proof}
Since $\cmod \nat$ contains \emph{only} $\C$-coterms, we have as a special case of Theorem~\ref{thm:main-non-unif} that $ s \conv \numeral n$ for some $n \in \Nat$.
By confluence, namely Corollary~\ref{cor:un-for-red-and-conv}, we have that $n$ is unique and furthermore $s \reduces^* \numeral n$.
\end{proof}
In fact, both our weak normalisation argument and our confluence result seem to be crucial for establishing the above property of normal elements of $\cmod\nat$, as well as the fact that $\cmod{}$ `under-approximates' the class of $\C$-coterms.
It is otherwise not immediate how we rule out possibilities such as the coterm $\cdots\succ \succ \succ x$, reached at the `limit' of reducing the following (non-progressing) coderivation (applied to $x$):\footnote{Note that no induction on `size' is available for general (non-wellfounded) coterms.}
\begin{equation}
\label{eqn:ssssss}
\vlderivation{
	\vliin{\cut}{\bullet}{\nat \seqar \nat}{
		\vlin{\succ}{}{\nat \seqar \nat}{\vlhy{}}
	}{
		\vlin{\cut}{\bullet}{\nat \seqar \nat}{\vlhy{\vdots}}
	}
}
\end{equation}
This is the same issue that we raised in Section~\ref{sect:on-normality-and-numerality}, and in particular another problematic example was given in Remark~\ref{rmk:non-numeral-normal}, in the form of a coterm $\cond\, 0\, 1\, (\cond\, 0\, 1\, (\cdots ))$.
In contrast to that coterm, where there were two consistent interpretations in the standard model ($0$ or $1$), the coterm $\cdots \succ\succ\succ x$ has no consistent interpretation in $\nmod$.

\begin{theorem}
[Convertibility for $\C$]
Any $\C$-coderivation $t: \vec \sigma \seqar \tau$ is in $\cmod{\vec \sigma \to \tau}$.
\end{theorem}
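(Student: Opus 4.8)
The plan is to mimic Tait's classical convertibility argument, now adapted to coderivations, using the machinery already assembled: the closure properties of $\cmod{}$ (Proposition~\ref{prop:cmod-properties}), the Convertibility Lemma (Lemma~\ref{lem:conversion-lemma}), and the preservation-of-convertibility facts for the typing rules (Proposition~\ref{prop:rules-preserve-cmod}). Since $\cmod{}$ is an ``under-approximation'' of the class of $\C$-coterms, the obstacle is precisely that we cannot proceed by induction on the (infinite, non-wellfounded) structure of $t$: a naive Tait induction breaks because coderivations have no well-founded subterm order. Instead, exactly as in the totality proof of Proposition~\ref{prop:termination} and its formalised analogue Proposition~\ref{prop:non-tot-branch-well-defined}, the plan is to argue by \emph{contradiction via an infinite descent on a progressing thread}.

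First I would record the statement we actually prove: for every $\C$-coderivation $t:\vec\sigma\seqar\tau$ and all $\vec s\in\cmod{\vec\sigma}$ we have $t\,\vec s\in\cmod{\tau}$; the theorem is the case where the $\vec s$ are fresh variables (which lie in every $\cmod{\sigma}$, being normal and neutral, by Lemma~\ref{lem:conversion-lemma}). Suppose not. Then, just as in Definition~\ref{dfn:canonical-non-total-branch} but with $\hr{}$ replaced by $\cmod{}$ throughout, I would generate an infinite branch $(t_i:\vec\sigma_i\seqar\tau_i)_i$ together with inputs $\vec s_i\in\cmod{\vec\sigma_i}$ maintaining the invariant $t_i\,\vec s_i\notin\cmod{\tau_i}$. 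The case analysis on the last rule of $t_i$ is dictated by Proposition~\ref{prop:rules-preserve-cmod}: for each rule, if all the relevant premisses (applied to the appropriate inputs drawn from $\cmod{}$) were in $\cmod{}$, then so would be the conclusion, contradicting the invariant; hence at least one premiss, fed suitable inputs from $\cmod{}$, fails to be in $\cmod{}$, and that premiss becomes $t_{i+1}$. The initial-sequent cases ($\id,0,\succ$) are excluded by the first three bullets of Proposition~\ref{prop:rules-preserve-cmod}. For $\cond$, the contractum inputs are exactly the ones handled by the last two bullets, and for a $\cond$-successor step the numeral argument strictly drops, as in Definition~\ref{dfn:canonical-non-total-branch}(\ref{item:can-br-cond}); this is where numerality of $\cmod\nat$-normal forms, Observation~\ref{obs:cmod-nf-nat-numerals}, is used to know the $\cond$ argument converts to a genuine numeral and to case-split on zero vs.\ successor.

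Now, since $t$ is progressing, the infinite branch $(t_i)_i$ carries a progressing $\nat$-thread $(\nat^i)_{i\geq k}$. Let $s_i\in\vec s_i$ be the input attached to $\nat^i$; by $s_i\in\cmod\nat$ and Observation~\ref{obs:cmod-nf-nat-numerals}, each $s_i$ reduces to a unique numeral $\numeral{n_i}$. Exactly as in Lemma~\ref{lem:mon-dec-thread} (whose statement and proof transfer verbatim, using confluence — Corollary~\ref{cor:un-for-red-and-conv} — in place of the $\hr{}$ facts, and reading off the immediate-ancestor relation against the construction cases), the sequence $(n_i)_{i\geq k}$ is non-increasing and strictly decreases at each $i$ where $\nat^i$ is principal for a $\cond$ step, i.e.\ infinitely often. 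This is an infinite strictly descending chain of natural numbers, contradicting well-foundedness of $\Nat$. Hence $t\,\vec s\in\cmod{\tau}$ after all, so $t\in\cmod{\vec\sigma\to\tau}$ by iterating the definition of $\cmod{}$ (equivalently, by repeated use of Proposition~\ref{prop:cmod-properties}(\ref{item:c-closed-under-application}) in reverse, i.e.\ directly from the definition of $\cmod{\sigma\to\tau}$).

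The main obstacle, as flagged, is organising the non-convertible branch so that the case analysis genuinely goes through rule-by-rule: one must check, for each typing rule, that Proposition~\ref{prop:rules-preserve-cmod} is stated in the exact contrapositive form needed (``conclusion not in $\cmod{}$ $\Rightarrow$ some designated premiss not in $\cmod{}$ on designated inputs''), and in the $\rightimp$ and $\leftimp$ cases one must choose the offending higher-type input $s\in\cmod\sigma$ witnessing failure — here no minimisation is needed (unlike Definition~\ref{dfn:canonical-non-total-branch}(\ref{item:can-br-rightimp})) since we are not formalising, but some input must be picked. Everything else is a faithful transcription of the Proposition~\ref{prop:termination} argument with $\nmod$ replaced by $\cmod{}$, and the only genuinely new ingredients — closure of $\cmod{}$ under application and reduction, the Convertibility Lemma, and numerality of normal $\cmod\nat$-elements — are already in hand.
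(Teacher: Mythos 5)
Your overall architecture coincides with the paper's: assume $t\,\vec s\notin\cmod\tau$ for some $\vec s\in\cmod{\vec\sigma}$, generate a non-convertible branch exactly as in Definition~\ref{dfn:canonical-non-total-branch} but with $\cmod{}$ in place of $\hr{}$, using Proposition~\ref{prop:rules-preserve-cmod} contrapositively for the case analysis, and then extract an infinitely often strictly decreasing sequence of natural numbers from a progressing thread via Observation~\ref{obs:cmod-nf-nat-numerals} and confluence. That part is fine and is what the paper does.

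However, there is a genuine gap in your treatment of the $\cond$ case, which is precisely the one point the paper singles out as nontrivial. When the branch reaches a $\cond$ step, the principal input $s\in\cmod\nat$ is in general \emph{not} a literal numeral (think of it arising from a $\cut$): it only \emph{reduces} to one. The last two bullets of Proposition~\ref{prop:rules-preserve-cmod} apply only to arguments of the syntactic shape $0$ or $\succ s$, so their contrapositive does not directly hand you a failing premiss. In the $\hr{}$ setting (Definition~\ref{dfn:canonical-non-total-branch}, $\cond$ case) one replaces $s$ by its numeral using closure of $\hr\tau$ under \emph{conversion}; but $\cmod{}$ is only closed under \emph{reduction} (Proposition~\ref{prop:cmod-properties}), and that closure points the wrong way: from $\cond\,t_0\,t_1\,\vec s\,s\notin\cmod\tau$ and $s\reduces^*\numeral n$ you cannot conclude $\cond\,t_0\,t_1\,\vec s\,\numeral n\notin\cmod\tau$, nor directly that $t_0\,\vec s\notin\cmod\tau$ or $t_1\,\vec s\,r\notin\cmod\tau$ for some $r$ reducing to $\numeral{n-1}$. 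The paper bridges exactly this by an inner induction on $\tree(\vec s)+\tree(s)$ (legitimate since all inputs are strongly normalising), using the contrapositive of the Convertibility Lemma~\ref{lem:conversion-lemma} to locate a one-step reduct outside $\cmod\tau$ and pushing the failure down to a head reduction, pulling conclusions back along Proposition~\ref{prop:cmod-properties} (closure under reduction). Your sketch asserts the zero/successor case split goes through "as in Definition~\ref{dfn:canonical-non-total-branch}", which silently imports the conversion-closure argument that is unavailable for $\cmod{}$; without the inner reduction-tree induction this step of the branch construction does not go through as stated. The rest of your argument is a faithful transcription of the paper's.
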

\begin{proof}
Suppose for contradiction we have $\vec s\in \cmod{\vec \sigma} $ such that $t\, \vec s\, \notin \cmod{\tau}$.
We define a branch $(t_i: \vec \sigma_i \seqar \tau_i)_i$ of $t$ and inputs $\vec s_i \in \cmod{\vec \sigma_i}$ s.t.\ $t_i\, \vec s_i\, \notin \cmod{\tau_i}$ by induction on $i$ just like in Definition~\ref{dfn:canonical-non-total-branch} (or the proof of Proposition~\ref{prop:termination}).
The only difference is that we use Proposition~\ref{prop:rules-preserve-cmod} above for preservation in $\cmod{}$ rather than the analogous closure properties for $\hr{}$ (or $\nmod$).

There is one subtlety, which is the treatment of the $\cond$ case.
Suppose we have a derivation,
\[
\vlderivation{
\vliin{\cond}{}{\vec \sigma, \nat \seqar \tau}{
	\vltr{t}{\vec \sigma \seqar \tau}{\vlhy{\ }}{\vlhy{\ }}{\vlhy{\ }}
}{
	\vltr{t'}{\vec \sigma , \nat \seqar \tau}{\vlhy{\ }}{\vlhy{\ }}{\vlhy{\ }}
}
}
\]
and $\vec s_i = (\vec s,s)$ with $\vec s \in \cmod{\vec \sigma}$, $s\in \cmod \nat$ and $\cond\, t\, t' \vec s\, s\, \notin \cmod \tau$.
Since $s\in \cmod \nat$ we have from Observation~\ref{obs:cmod-nf-nat-numerals} that $s$ reduces to a unique numeral $\numeral n$.
We will show that, 
\begin{itemize}
\item if $n=0$ then $t\, \vec s\, \notin \cmod \tau$; and,
\item if $n = m+1$ then there is some $r \in \cmod \nat$ reducing to $\numeral m$ with $t'\vec s\, r\, \notin \cmod \tau$;
\end{itemize}
by induction on $\tree(\vec s) + \tree(s)$.
By the conversion lemma, Lemma~\ref{lem:conversion-lemma}, there must be a reduction from $\cond\, t\, t'\, \vec s\, s$ not reaching $\cmod \tau$. 
Let us consider the possible cases:
\begin{itemize}
\item If $s = 0 $ and $\cond\, t\, t'\, \vec s\, s \, \reduces\, t\, \vec s\, \notin \cmod\tau$ then we are done.
\item If $s = \succ r$ and $\cond\, t\, t'\, \vec s\, s\, \reduces t' \vec s\, r\, \notin \cmod \tau$ then we are done. (Note that such $r$ must strongly normalise to $\numeral m$, and so in particular $r \in \cmod \nat$).
\item If $\cond\, t\, t'\, \vec s\, s\, \reduces \, \cond t\, t' \vec s' s' \notin \cmod \tau$, then by the inductive hypothesis either,
\begin{itemize}
\item $n=0$ and $t\, \vec s' \notin \cmod \tau$, so $t\, \vec s \notin \cmod \tau$ by Proposition~\ref{prop:cmod-properties}.\eqref{item:c-closed-under-reduction}; or,
\item $n=m+1$ and there is some $r \in \cmod \nat$ reducing to $\numeral m$ s.t.\ $t'\, \vec s'\, r \, \notin \cmod \tau$, so $t' \vec s\, r\, \notin \cmod\tau$ by Proposition~\ref{prop:cmod-properties}.\eqref{item:c-closed-under-reduction}.
\end{itemize}
\end{itemize}

From here, any progressing thread $(\nat^i)_{i\geq k}$ along $(t_i)_i$ yields a sequence of coterms $(r_i \in \cmod\nat)_{i\geq k}$ that, under normalisation, induces an infinitely often descending sequence of natural numbers, yielding the required contradiction. 
\end{proof}

Since $\cmod{}$ is closed under application, Proposition~\ref{prop:cmod-properties}.\eqref{item:c-closed-under-application}, we inherit $\cmod{}$ membership for all $\C$-coterms.
Since elements of $\cmod{}$ are strongly normalising, Proposition~\ref{prop:cmod-properties}.\eqref{item:c-is-sn}, and since reduction is confluent, Theorem~\ref{thm:cr}, we finally have:
\begin{corollary}
[Strong normalisation for $\C$]
Any closed $\C$ coterm strongly normalises to a unique normal form.
\end{corollary}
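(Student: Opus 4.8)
The plan is to deduce the corollary directly from the Convertibility theorem just proved, together with the closure properties of $\cmod{}$ (Proposition~\ref{prop:cmod-properties}) and the confluence of $\reduces$ established in Section~\ref{sect:coterm-models} (Theorem~\ref{thm:cr}).

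First I would establish that every closed $\C$ coterm belongs to $\cmod{}$ at its type, arguing by induction on its structure as a finite application of constants. A coterm that is a single constant is either a coderivation-symbol — including the nullary-rule combinators $0$, $\succ$, $\id$ — in which case membership in $\cmod{}$ is exactly the Convertibility theorem, or one of the combinators $\exch,\wk,\cntr,\cut,\leftimp,\rightimp,\cond$, in which case membership in $\cmod{}$ at the appropriate functional type follows by unfolding the relevant clauses of Proposition~\ref{prop:rules-preserve-cmod} (using closure of $\cmod{}$ under application for the hypotheses). A coterm of the form $s_1\,s_2$ has, by the induction hypothesis and well-typedness, $s_1 \in \cmod{\sigma\to\tau}$ and $s_2 \in \cmod\sigma$, whence $s_1\,s_2 \in \cmod\tau$ by Proposition~\ref{prop:cmod-properties}.\eqref{item:c-closed-under-application}. (This is precisely the remark immediately preceding the statement.)

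Given $t \in \cmod\tau$, Proposition~\ref{prop:cmod-properties}.\eqref{item:c-is-sn} yields that $t$ is strongly normalising, so every maximal $\reduces$-sequence from $t$ is finite and in particular $t$ has a $\reduces$-normal form. For uniqueness I would invoke confluence: a closed $\C$ coterm has no variables and is a finite application of regular coderivations and constants, hence is a \farc, so Theorem~\ref{thm:cr} applies (equivalently Corollary~\ref{cor:un-for-red-and-conv}.\eqref{item:un-red}); if $t\reduces^* n_1$ and $t\reduces^* n_2$ with $n_1,n_2$ both normal, Church--Rosser gives $t'$ with $n_1\reduces^* t' \unreduces^* n_2$, and normality forces $n_1 = t' = n_2$. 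Putting these together, $t$ strongly normalises to a unique normal form, which is the claim.

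I do not anticipate any real obstacle: all the substance has already been supplied, by the Convertibility theorem (which gives $\cmod{}$-membership, hence strong normalisation) and by the confluence development of Section~\ref{sect:coterm-models} (which gives uniqueness). The only slight care needed is the bookkeeping in the first step — recognising that the combinator constants are handled by Proposition~\ref{prop:rules-preserve-cmod} rather than directly by the Convertibility theorem, and that closed $\C$ coterms qualify as \farc s so that Theorem~\ref{thm:cr} is applicable — but both points are immediate from the definitions.
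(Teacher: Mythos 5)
Your proposal is correct and takes essentially the same route as the paper: membership of all closed $\C$ coterms in $\cmod{}$ via the Convertibility theorem together with closure under application (Proposition~\ref{prop:cmod-properties}.\eqref{item:c-closed-under-application}), strong normalisation from Proposition~\ref{prop:cmod-properties}.\eqref{item:c-is-sn}, and uniqueness of normal forms from confluence (Theorem~\ref{thm:cr}). Your explicit treatment of the bare rule-combinators via Proposition~\ref{prop:rules-preserve-cmod} only spells out a detail the paper leaves implicit, so the argument is the same.
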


\subsection{Towards an infinitary $\lambda$-calculus for $\C$ and cut-elimination}
From the point of view of the \emph{Curry-Howard} correspondence, rules of the sequent calculus are usually associated with meta-level term forming operations for the $\lambda$-calculus, rather than comprising constants in their own right.
Let us describe this in some more detail here.

First, the appropriate notion of term is defined as follows:
\begin{itemize}
\item $0$ is a term of type $\nat$.
\item $\pred$ is a term of type $\nat \to \nat$.\footnote{The interpretation of $\pred$ is `predecessor' in the standard model: $\pred^\nmod (0):= 0$ and $\pred^\nmod(n+1) := n$.}
\item If $t$ is a term of type $\sigma \to \tau$ and $s$ is a term of type $\sigma $ then $(ts)$ is a term of type $\tau$.
\item If $t$ is a term of type $\tau$ and $x$ is a variable of type $\sigma$ then $\lambda x t$ is a term of type $\sigma \to \tau$.
\item If $r$ is a term of type $\nat$ and $s,t$ are terms of type $\tau$ then $\ifthelse r s t$ is a term of type $\tau$.
\end{itemize}
From here the typing rules of the sequent calculus can be recast in annotated style, with variable annotations on the LHS and term annotation on the RHS:

\[
\small
		\vlinf{\exch}{}{\vec x:{\vec \rho}, x:{\rho}, y:{\sigma}, \vec y:{\vec \sigma }\seqar t(\vec x,x,y,\vec y): \tau}{\vec x: {\vec \rho},  y: {\sigma}, x: {\rho}, \vec y: {\vec \sigma }\seqar t(\vec x,y,x,\vec y):\tau}
		\qquad
		\vlinf{\cntr}{}{\vec x: {\vec \sigma}, x: {\sigma} \seqar t(\vec x,x,x): \tau}{\vec x: {\vec \sigma}, x_0: {\sigma}, x_1: {\sigma }\seqar t(\vec x, x_0, x_1): \tau}
		\]
		\[
		\small
				\vlinf{\id}{}{x: {\sigma} \seqar x:\sigma}{}
						\qquad
						\vlinf{\wk}{}{\vec x: {\vec \sigma}, x:{\sigma} \seqar t(\vec x): \tau}{\vec x: {\vec \sigma }\seqar t(\vec x): \tau}
				\qquad
						\vlinf{\rightimp}{}{\vec x: {\vec \sigma }\seqar \lambda x . t(\vec x,x) : \sigma \to \tau}{\vec x: {\vec \sigma}, x: \sigma \seqar t(\vec x, x): \tau}
		\]
		\[
		\small
		\vliinf{\cut}{}{\vec x: {\vec \sigma }\seqar t(\vec x, s(\vec x)): \tau}{\vec x: {\vec \sigma} \seqar  s(\vec x): \sigma }{\vec x: {\vec \sigma}, x: \sigma \seqar t(\vec x,x): \tau}
		\qquad
		\vliinf{\leftimp}{}{\vec x:{\vec \sigma }, y:{\rho \to \sigma} \seqar t(\vec x, y\, r(\vec x) ) :  \tau}{\vec x: {\vec \sigma} \seqar r(\vec x):\rho}{\vec x: {\vec \sigma}, x:{\sigma }\seqar t(\vec x,x):\tau}
		\]
		\[
		\small
		\vlinf{0}{}{\seqar 0: \nat}{}
		\qquad
		\vlinf{\pred}{}{x:{\nat} \seqar \pred x: \nat}{}
\]
\[
\small
		\vliinf{\cond}{}{ \vec x: {\vec \sigma  }, y:{\nat }\seqar \ifthelse{y}{s(\vec x)}{t(\vec x,\pred y)} :  \tau}{\vec x: {\vec \sigma }\seqar s(\vec x): \tau}{ x: {\vec \sigma  }, y: {\nat }\seqar t(\vec x, y): \tau}
		\]
		
		\smallskip
		
		As we mentioned from the start, the reason for not pursuing such an approach is that this association of a term to a derivation is not \emph{continuous} and so does not a priori produce a well-defined coterm from arbitrary coderivations.
		Consider, for instance, the operation on terms induced by the $\cut$ rule: we may have to go further along the right branch to eventually print the root of any associated coterm.
		As an extreme example (an adaptation of) the coderivation from \eqref{eqn:ssssss}, 
		\[
		\vlderivation{
			\vliin{\cut}{\bullet}{\nat \seqar \nat}{
				\vlin{\pred}{}{\nat \seqar \nat}{\vlhy{}}
			}{
				\vlin{\cut}{\bullet}{\nat \seqar \nat}{\vlhy{\vdots}}
			}
		}
		\]
		has no well-defined coterm representation induced by the type system above.
		One of the reasons for employing our combinatory approach was simply to avoid such problems; however we also point out that our notion of reduction from Section~\ref{sect:coterm-models} gives us a consistent way to interpret the derivation above as an infinitary coterm: `at the limit' it normalises to $\cdots \pred\pred\pred x$ (after application to $x$).
		It would be interesting to make this idea more formal, in particular using tools from \emph{infinitary term rewriting} (see, e.g., \cite{inf-trs,KetSim11:inf-comb-red-sys,simonsen12:habilitation-thesis}).
		
		We should mention also that the system and (co)terms above, as presented, are \emph{not} strongly normalising, due to the presence of predcessor, $\pred$, instead of successor, $\succ$, and the corresponding $\cond$ rule.
		An appropriate treatment should seek to avoid this issue.
		
		On a similar subject, the Curry-Howard correspondence also motivates the investigation of \emph{cut-elimination} for our type system, cf.~\cite{BaeDouSau16:cut-elim,BaeDouKupSau:bouncing-threads}.
		The relationship to normalisation for our reduction system is not entirely clear, since the way we associate (co)derivations and (co)terms means that reduction is not internal for coderivations, i.e.\ if $t \reduces s$ and $t$ is a coderivation, $s$ is not necessarily a coderivation.
		This is related to the fact that we, strictly speaking, distinguish $\cut$ and application.
		On the other hand it is this distinction that admits the aforementioned continuity property, as well as admitting the possibility of cut-free normal (co)terms.
		The precise relationship between normalisation of coterms and cut-elimination subsumes that for minimal logic, between normalisation for \emph{natural deduction} (or simply typed $\lambda$-calculus) and cut-elimination in the sequent calculus, for which there is a substantial literature. See, e.g., \cite{basic-proof-theory} for a detailed account of such matters.

		We should mention that it would make sense, for the pursuit of cut-elimination, to replace the $\succ$ initial sequent with a corresponding rule, $\vlinf{\succ}{}{\vec \sigma \seqar \nat}{\vec \sigma \seqar \nat}$, with the obvious interpretation, to admit an appropriate cut-reduction against the conditional rule, $\cond$.
		
\subsection{Incorporating fixed point operators}
\label{sect:fixed points}
We could naturally extend our type system by (co)inductive fixed points, similarly to work such as \cite{clairambault2010:muLJ} for intuitionistic logic and \cite{baelde12:mumall} for linear logic. Indeed, as we mentioned in the introduction, the circular proof theory of linear logic type systems with fixed points is increasingly well-developed, e.g.~\cite{BaeDouSau16:cut-elim,DeSaurin:infinets:2019,BaeDouKupSau:bouncing-threads}.

\def\rdots{\rotatebox[origin=l]{29}{$\scriptscriptstyle\ldots\mathstrut$}}

At the level of expressivity, $\T$ already has the capacity to express a range of (co)inductive types. 
In particular, being the type theoretic counterpart to Peano Arithmetic, $\T$ admits recursion on effective well-orders of order type $<\epsilon_0$ (see \cite{Kreisel51,tait68:constructive-reasoning}), and so can encode (co)inductive types of closure ordinal $\omega^{\rdots^\omega}$.\footnote{Note, incidentally, that in a higher typed setting, coinductive types may themselves be encoded as inductive types.}

Similarly to \cite{BaeDouSau16:cut-elim}, extensions of $\C$ with (co)inductive types may be duly defined by demanding progressing threads on the LHS on least fixed points, or on the RHS on greatest fixed points.
In terms of conservativity over $\T$, the pertinent question is whether the encoding of (co)inductive types in $\T$ admits a corresponding coding of circular derivations, in particular preserving ancestry and the progressing thread criterion.
This does not seem to be too technical, but a comprehensive treatment is beyond the scope of this work.

However let us consider one pertinent example arising from the aforementioned work \cite{KupPinPou20:sysT}, where a circular version of $\T$ was presented with a slightly different type language. In particular they include a \emph{Kleene star} operator for list formation, along with the accompanying rules (ancestry indicated by colours):
\[
\vliinf{\lefrul{*}}{}{\purple{\vec \sigma}, \blue{\sigma^*}\seqar \tau}{\purple{\vec \sigma }\seqar \tau}{\purple{\vec \sigma }, \sigma, \blue{\sigma^*}  \seqar \tau}
\qquad
\vlinf{\rigrul{*}}{}{\seqar \tau^*}{}
\qquad
\vliinf{\rigrul*}{}{\purple{\vec \sigma }\seqar \tau^*}{\purple{\vec \sigma }\seqar \tau}{\purple{\vec\sigma} \seqar \tau^*}
\]
The semantics of these constants are intuitive, and explained in \cite{KupPinPou20:sysT}.
Being a least fixed point, the associated progressing thread condition is that each infinite branch has an infinite thread on a $*$-type on the LHS that is infinitely often principal.
Semantically this induces a similar totality argument to ours for $\nat$-threads, cf.~Proposition~\ref{prop:termination}: at a progress point the corresponding \emph{list} decreases in length. 

As suggested above, we may duly encode the fixed point $*$, along with its rules and corresponding notion of progressing coderivation, within our type system for $\C$. 
First, temporarily using product types to ease the exposition, we may embed each type $\sigma^*$ into a type $\nat \times (\nat \to \sigma)$.\footnote{Note that this embedding also induces a well-behaved notion of type level in the presence of Kleene $*$: we may set $\level(\sigma^*) := \level (\sigma)$, as long as $\level(\sigma)>0$.
The type $\nat^*$ may also be embedded at level $0$ using a coding of sequences, but to preserve ancestry, similarly to the current encoding, we should nonetheless encode it as a pair $\nat \times \nat$, the first component still representing the length.} 
Semantically the first component represents the length of the list, and the second component represents a stream from which the list is extracted, where we do not care about the values of elements beyond the length specified by the first component.
Using this embedding we may duly derive the translations of the typing rules above.
The first $\rigrul *$ rule is translated to the following derivation:
\[
\small
\vlderivation{
	\vliin{\rigrul\times}{}{\seqar \nat \times (\nat \to \tau)}{
		\vlin{0}{}{\seqar \nat}{\vlhy{}}
	}{
		\vlin{\rightimp}{}{\seqar \nat \to \tau}{
		\vlin{\wk}{}{\nat \seqar \tau}{
		\vliq{0_\tau}{}{\seqar \tau}{\vlhy{}}
		}
		}
	}
}
\]
Here $0_\tau$ is just the $0$ function of type $\tau$, and so we set the empty $\tau$-list to be just the stream $(0_\tau,0_\tau,\dots)$.
Note that it would not matter if we set it to something else, since the first component tells us to ignore elements beyond the length, in this case the entire stream.
The second $\rigrul *$ rule is translated to the following derivation:
\[
\small
\vlderivation{
	\vliiin{2\cut}{}{\purple{\vec \sigma }\seqar \nat \times (\nat \to \tau)}{
		\vlhy{\purple{\vec \sigma} \seqar \tau}
	}{
		\vlhy{\purple{\vec \sigma }\seqar \nat \times (\nat \to \tau)}
	}{
		\vlin{\lefrul\times}{}{\tau, \nat \times (\nat \to \tau) \seqar \nat \times (\nat \to \tau)}{
		\vliin{\rigrul\times}{}{\tau, \nat, \nat \to \tau \seqar \nat \times (\nat \to \tau)}{
			\vlin{\succ}{}{\nat \seqar \nat}{\vlhy{}}
		}{
			\vlin{\rightimp}{}{\tau, \nat \to \tau \seqar \nat \to \tau}{
			\vliin{\cond}{}{\tau, \nat \to \tau, \nat \seqar \tau}{
				\vlin{\id}{}{\tau \seqar \tau}{\vlhy{}}
			}{
				\vliin{\leftimp}{}{\nat \to \tau, \nat \seqar \tau}{
					\vlin{\id}{}{\nat \seqar \nat}{\vlhy{}}
				}{
					\vlin{\id}{}{\tau \seqar \tau}{\vlhy{}}
				}
			}
			}
		}
		}
	}
}
\]
Finally the $\lefrul *$ rule is translated to the following derivation:
\[
\small
\vlderivation{
	\vliin{\cond}{}{\purple{\vec \sigma}, \blue{\nat}, \nat \to \sigma \seqar \tau}{
		\vlin{\wk}{}{\purple{\vec \sigma} , \nat \to \sigma \seqar \tau}{
		\vlhy{\purple{\vec \sigma }\seqar \tau}
		}
	}{
		\vliin{\cut}{}{\purple{\vec \sigma}, \blue{\nat}, \nat \to \sigma \seqar \tau}{
			\vliin{\leftimp}{}{\nat \to \sigma \seqar \sigma}{
				\vlin{0}{}{\seqar \nat}{\vlhy{}}
			}{
				\vlin{\id}{}{\sigma \seqar \sigma}{\vlhy{}}
			}
		}{
			\vliin{\cut}{}{\vec \sigma, \sigma, \nat , \nat \to \sigma \seqar \tau}{
				\vlin{\rightimp}{}{\nat \to \sigma \seqar \nat \to \sigma}{
				\vliin{\cut}{}{\nat \to \sigma, \nat \seqar \sigma}{
					\vlin{\succ}{}{\nat \seqar \nat}{\vlhy{}}
				}{
					\vliin{\leftimp}{}{\nat \to \sigma, \nat \seqar \sigma}{
						\vlin{\id}{}{\nat \seqar \nat}{\vlhy{}}
					}{
						\vlin{\id}{}{\sigma \seqar \sigma}{\vlhy{}}
					}
				}
				}
			}{
				\vlhy{\purple{\vec \sigma}, \sigma, \blue{\nat }, \nat \to \sigma \seqar \tau}
			}
		}
	}
}
\]
The verification of the semantics of these derivations is left as an exercise to the reader. 
Note that, as indicated by the colouring of type occurrences, ancestry is preserved by this translation.
In particular, for the translation of $\sigma^*$, ancestry \emph{and progressiveness} are preserved on the first component (the blue $\blue \nat$) of a left rule.

It should be straightforward to formalise the ideas above to obtain an embedding of the system from \cite{KupPinPou20:sysT} into ours, thereby inheriting similar results, but a comprehensive development is beyond the scope of this work.
It would be interesting, for future work, to more generally study extensions of $\C$ by suitable fixed points, least and greatest, and in particular show that they may be interpreted back into $\C$ using translations like the ones above.
The key point above was to use the first component (length) to store a recursive parameter on which ancestry is preserved. 
For the case of Kleene $*$ this is straightforward since its closure ordinal is $\omega$, however we must be more careful for fixed points with greater closure ordinals: while we can indeed code recursive ordinals by natural numbers, we must be able to implement `circular recursion' on them in a way that preserves ancestry and progressiveness.
Naturally, some use of higher types should be required, trading off ordinal complexity for abstraction complexity, cf.~\cite{Kreisel51,tait68:constructive-reasoning,Par72:n-quant-ind,Buss1995witness}.

\subsection{`Cyclic recursive functionals' are G\"odel primitive recursive}
\label{sect:crf=prf}

Besides the relationships between $\C$ and $\T$ as theories and in the various models we have discussed, like $\hr{}$ and $ \he{} $, it is natural to ask about their relationships in the standard model $\nmod$. 
In this section we shall show that the interpretations of their terms in $\nmod$ in fact comprise the same algebra of functionals, by reduction of $\C$ computability to some form of higher type recursion on ordinals smaller than $\epsilon_0$.

Recall that the G\"odel primitive recursive functionals, $\PRF{}$, are just the interpretations of $\T$ terms in $\nmod$, i.e.\ $\{ \interp t \ | \ \text{$t$ a $\T$-term} \}$.
Now writing $\CRF{}$ (`cyclic recursive functionals') for the interpretations of $\C$ terms in $\nmod$, i.e.\ $\{\interp t \ | \ \text{$t$ a $\C$-coterm} \}$, the main result of this subsection is:

\begin{theorem}
\label{thm:crf=prf}
$\CRF{} = \PRF{}$. I.e.\ for every $\C$ coterm there is a $\T$ term computing the same functional (in $\nmod$) and vice versa.
\end{theorem}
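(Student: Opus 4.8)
The plan is to prove the two inclusions separately; the left-to-right one is essentially free.

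The inclusion $\PRF{}\subseteq\CRF{}$ is already latent in Example~\ref{ex:sim-prim-rec}: translate a $\T$ term into a coterm by replacing each recursor $\rec_\sigma$ with the progressing coderivation $\rec'$ of \eqref{eqn:sim-rec-naively-in-ct} and leaving the remaining combinators untouched (they are all already combinators of $\T^-+\cond$, and $\cond$ is in turn $\T$-definable, e.g.\ by a $\rec$ step that discards the recursive call). The verification in Example~\ref{ex:sim-prim-rec} uses only axioms from Figures~\ref{fig:eq-ax-seq-calc-min}, \ref{fig:ax-t-non-eq} and \ref{fig:cond-rule+axioms}, all of which hold in $\nmod$, so $\interp{\rec'}=\interp{\rec}$ and the translated coterm computes the same functional as the original term at every type. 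Hence $\PRF{}\subseteq\CRF{}$.

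For $\CRF{}\subseteq\PRF{}$ it suffices to treat a regular progressing coderivation $t:\vec\sigma\seqar\nat$ (an arbitrary output type $\vec\tau\to\nat$ is handled by currying, and a general progressing coterm is a finite application of such coderivations to constants and variables of $\T^-+\cond$, all of which are $\T$-definable or become extra arguments). The idea is to reduce the evaluation of $t$ to an instance of \emph{effective transfinite recursion} of order type below $\epsilon_0$, which $\T$ computes at every finite type by the classical trade-off of ordinal complexity against abstraction complexity (Kreisel, Tait; cf.~\cite{Kreisel51,tait68:constructive-reasoning,schwichtenberg75:highertypes-vs-type-level}), treating the higher-type inputs $\vec\sigma$ as oracle arguments. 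Concretely: exactly as in the proof of Proposition~\ref{prop:termination}, evaluating $t$ on inputs $\vec a$ unfolds into a branch through the finite graph $G$ of $t$ together with transformed context tuples, and progressiveness forces some $\nat$-thread to strictly decrease infinitely often along every such branch. Using the decidability of the progressing criterion (Fact~\ref{fact:decidability-of-progressivity}) and the underlying B\"uchi/Ramsey combinatorics (cf.~\cite{KMPS16:buchi-reverse,Das19:log-comp-cyc-arith}), one extracts from $t$ alone an explicit primitive-recursive well-ordering $\prec_t$ of some order type $\alpha_t<\epsilon_0$ (built from the B\"uchi automaton for $t$, with a loop-budget component controlled by the automaton and $\nat$-value components ordered lexicographically) together with a rank function assigning to each vertex of $G$ and the $\nat$-values in its context an element of $\alpha_t$ that $\prec_t$-decreases along every edge of every evaluation branch. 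This makes ``evaluate $t$'' a genuine $\prec_t$-recursion: define $\mathrm{Eval}$ by recursion on $\prec_t$ so that on (the code of) a subderivation of $t$ and a context tuple it returns the value prescribed by the reduction rules of Figure~\ref{fig:reduction-rules}, recursing on immediate subderivations with the transformed context and invoking the oracle arguments at $\leftimp$ steps. Packaging $\mathrm{Eval}$ through the transfinite-recursion operators available in $\T$ yields the desired term $t':\vec\sigma\seqar\nat$, and a routine induction along $\prec_t$ shows that $\mathrm{Eval}$, hence $\interp{t'}$, agrees with $\interp{t}$ on every $\vec a$.

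The main obstacle is the middle step of the second inclusion: turning the automaton-theoretic progressiveness of $t$ into a concrete, primitive-recursive ranking that is provably low enough in the ordinal hierarchy (below $\epsilon_0$), and then realising the resulting transfinite recursion as an honest $\T$ term with the higher-type arguments threaded through correctly, so that the ordinal height of $\prec_t$ and the type level required by $t$ trade off within $\T$. Everything else — the easy inclusion, currying, $\T$-definability of $\cond$ and of the sequent combinators, and the correctness induction for $\mathrm{Eval}$ — is routine.
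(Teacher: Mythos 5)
Your easy inclusion and the overall shape of the hard one (reduce evaluation of a fixed regular progressing coderivation to transfinite recursion below $\epsilon_0$, with the higher-type context entries threaded through as arguments, then appeal to the classical fact that $\PRF{}$ absorbs such recursion) match the paper. But there is a genuine gap exactly at the step you yourself flag as ``the main obstacle'': you assert that from $t$ alone one can extract a primitive recursive well-ordering $\prec_t$ of order type $<\epsilon_0$ together with a rank function that \emph{decreases along every edge of every evaluation branch}, built from the B\"uchi automaton via a ``loop-budget'' plus lexicographically ordered $\nat$-values. This is not a routine consequence of Fact~\ref{fact:decidability-of-progressivity} or of the B\"uchi/Ramsey combinatorics, and it is where the entire difficulty of the theorem lives. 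Progressiveness only says that along each \emph{infinite} branch \emph{some} $\nat$-thread progresses infinitely often; which thread works depends on the whole branch, threads can be constant for long stretches, and fresh $\nat$-values introduced along the way (e.g.\ at cuts) are unconstrained, so no fixed lexicographic combination of the finitely many coordinates decreases at every step. Producing an explicit edge-wise decreasing ordinal ranking from a B\"uchi-type acceptance condition is a substantial piece of work in its own right, and your proposal gives no argument for it.

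The paper sidesteps this construction entirely. It defines the natural relation $\runreduces$ on pairs (node of the unfolding, tuple of $\nat$-values respecting immediate ancestry), proves in $\RCA$ that $\runreduces$ terminates (Theorem~\ref{thm:runs-terminate}, using Proposition~\ref{prop:prog-decidability+provability-in-RCA} that $\RCA$ proves $t$ progressing), deduces in $\ACA$ that the converse relation is inductive (Corollary~\ref{cor:runreduces-inductive}), hence provably inductive in $\PA$, and only then invokes Arai's arithmetized bounding theorem (Theorem~\ref{thm:prov-wf-implies-prf}) to obtain an order-preserving $\PRF{}$ map into some $\alpha\prec\epsilon_0$, combined with Kreisel's closure of $\PRF{}$ under $\REC(\alpha)$ and closure under simultaneous recursion. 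In other words, the ordinal ranking you want is obtained metamathematically from \emph{provable} well-foundedness, not constructed from the automaton; the evaluation functionals are then defined by simultaneous recursion on $\runle$ (your $\mathrm{Eval}$ corresponds to the paper's family $g_i$, and that part of your sketch is fine). To repair your proof you would either have to carry out the explicit ranking construction you allude to, or replace it by this provable-well-foundedness-plus-bounding route.
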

\noindent
Notice that the right-left inclusion follows readily from the encoding of primitive recursion from Example~\ref{ex:sim-prim-rec}, so we concentrate on the left-right inclusion.
The statement above could indeed be refined in terms of type level, as for the main results in this work, but we shall drop such a specialisation for the sake of brevity.

For the remainder of this section we shall work inside the standard model $\nmod$. We employ (higher-order) recursion theoretic methods, and in particular take advantage of some well-known meta-recursion-theoretic results.

\subsubsection{Recursion schemes on well-founded relations}
Let $\lhd$ be a well-founded strict partial order on $\Nat$. 
We say that
$f:\nat \to \tau$ is obtained from $g: \nat \to (\nat \to \tau) \to \tau$ by \textbf{recursion on $\lhd$} (henceforth $\REC(\lhd)$) if:
\begin{equation}
\label{eqn:rec-on-wfpo}
f( n) \ = \ g(n, \lambda z \lhd n . f(z))
\end{equation}
Here $\lambda$, as expected, refers to abstraction of arguments; the `guarding' of the abstraction by $\lhd$ is formally defined as follows,
\[
(\lambda z \lhd n . f(z)) (m) \ := \ \begin{cases}
f(m) & \text{if $m \lhd n$} \\
0_\tau & \text{otherwise}
\end{cases}
\]
where $0_\tau$ is the `zero' functional of type $\tau$, setting $0_\nat := 0$ and $0_{\sigma \to \tau}: x \mapsto 0_\tau$.

We say that $\{f_i : \nat \to \tau_i \}_{i=1}^k$ are obtained by \textbf{simultaneous recursion on $\lhd$} (henceforth $\SREC(\lhd)$) from $\{ h_i : \nat \to (\nat \to \tau_1) \to \cdots \to (\nat \to \tau_k) \}_{i=1}^k$ if:
\begin{equation}
\label{eqn:simrec-wfpo}
f_i (n) \ = \ h_i (n, \lambda z \lhd n . f_1(z) , \dots, \lambda z \lhd n .  f_k(z)  )
\end{equation}
The following is well-known:
\begin{proposition}
[Closure under simultaneous recursion]
\label{prop:rec-implies-srec}
If $\PRF{}$ is closed under $\REC(\lhd)$ then
 $\PRF{}$ is also closed under $\SREC(\lhd)$.
\end{proposition}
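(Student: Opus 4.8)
The plan is to reduce the simultaneous recursion \eqref{eqn:simrec-wfpo} to a single instance of $\REC(\lhd)$ by \emph{tupling} $f_1, \dots, f_k$ into one functional, and then invoke the hypothesised closure of $\PRF{}$ under $\REC(\lhd)$.

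The only obstruction to naive tupling is that the $f_i$ have different types $\tau_i$, so their values cannot be paired within the product-free language of $\T$. To get around this, I would first fix a product-free type $\nu$ that is a common retract-upper-bound of $\tau_1, \dots, \tau_k$, i.e.\ $\T$-terms $e_i : \tau_i \to \nu$ and $r_i : \nu \to \tau_i$ with $r_i \comp e_i = \id_{\tau_i}$, and moreover (easily arranged) $r_i(0_\nu) = 0_{\tau_i}$. Such a $\nu$ exists by standard facts about the retract order on simple types; one may take a \emph{universal type} of level $\max_i \level(\tau_i)$, built by a short induction on type level. Setting $\mu := \nat \to \nu$, I then encode a tuple $(a_1, \dots, a_k)$ with $a_i \in \tau_i^\nmod$ as the functional in $\mu^\nmod$ sending $i \mapsto e_i(a_i)$ for $1 \le i \le k$ and $i \mapsto 0_\nu$ otherwise; write $\tuple{a_1,\dots,a_k}$ for this, and $\pi_i(b) := r_i(b(i))$ for the associated projection, so that $\pi_i(\tuple{a_1,\dots,a_k}) = a_i$. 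Both the tupling and the $\pi_i$ are given by fixed $\T$-terms together with a bounded case split on the index, hence are primitive recursive.

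Next I would set $F : \nat \to \mu$ to be $F(n) := \tuple{f_1(n), \dots, f_k(n)}$ and check that $F$ satisfies an instance of $\REC(\lhd)$. Writing $q_j(P) := \lambda m.\, r_j(P(m)(j))$, an unwinding of the definition of the guarded abstraction, using $r_j(0_\nu) = 0_{\tau_j}$, shows $q_j(\lambda z \lhd n.\, F(z)) = \lambda z \lhd n.\, f_j(z)$; hence, with
\[
G(n, P) \ := \ \lambda i.\ \text{if } 1 \le i \le k \text{ then } e_i(h_i(n, q_1(P), \dots, q_k(P))) \text{ else } 0_\nu ,
\]
equation \eqref{eqn:simrec-wfpo} gives $F(n) = G(n, \lambda z \lhd n.\, F(z))$. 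Since each $h_i \in \PRF{}$, and $G$ is obtained from the $h_i$ by $\lambda$-abstraction, application and the fixed terms $e_i, r_j$ plus a bounded case split, we have $G \in \PRF{}$. By the assumed closure of $\PRF{}$ under $\REC(\lhd)$, $F \in \PRF{}$, and therefore $f_i = \lambda n.\, \pi_i(F(n)) \in \PRF{}$ for each $i$, as required.

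The one genuinely technical ingredient is the common retract type $\nu$ with zero-preserving embeddings and retractions; I expect this --- though routine --- to be the main point to get right, and would either cite it as a classical fact about retracts of finite simple types or give a brief self-contained construction by induction on type level. Everything else is a mechanical unwinding of the definitions of $\SREC(\lhd)$ and $\REC(\lhd)$.
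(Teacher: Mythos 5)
Your proposal is correct and follows essentially the same route as the paper: tuple $f_1,\dots,f_k$ into a single functional, observe that this tuple satisfies one instance of $\REC(\lhd)$ whose step function is built primitive-recursively from the $h_i$, then recover each $f_i$ by projection. The only divergence is how the tuple type is realised. The paper simply works with the product type $\tau_1\times\cdots\times\tau_k$, appealing to the usual primitive recursive (de)pairing operations (with a citation to Troelstra), whereas you simulate the product inside the product-free type language via a common retract type $\nu$ with zero-preserving embedding/retraction pairs and encode tuples as functionals of type $\nat\to\nu$. Your version is more self-contained on this point, at the cost of invoking the (classical, routine) retract fact; you correctly identify the condition $r_i(0_\nu)=0_{\tau_i}$ as exactly what is needed for the projections to commute with the guarded abstraction $\lambda z\lhd n$ --- the paper glosses over the analogous point, namely that the zero of the product projects to each $0_{\tau_j}$. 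So both arguments buy the same reduction of $\SREC(\lhd)$ to $\REC(\lhd)$; yours makes the pairing machinery explicit, the paper's delegates it to standard references.
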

\begin{proof}
[Proof sketch]
For simplicity, we shall make use of product types, thanks to usual primitive recursive (de)pairing operations (see, e.g., \cite{Troelstra73:metamathematical-investigations} for more details).
For $x_i$ of type $\sigma_i$, we will write $\tuple{x_1, \dots, x_k}$ for an element of type $\sigma_1 \times \cdots \times \sigma_k$, and conversely for a list $z$ of type $\sigma_1 \times \cdots \times \sigma_k$ we shall write $\proj i z$ for the $i$\textsuperscript{th} element of the list.

Let $f_1, \dots, f_k$ be as above, satisfying \eqref{eqn:simrec-wfpo}, and define $f: \nat \to (\tau_1 \times \cdots \times \tau_k)$ by:
\[
f(n) := \tuple{f_1(n), \dots, f_k(n)}
\]
Note that we have,
\[
\begin{array}{rcl}
f(n) & = & \tuple{f_i(n)}_{i=1}^k \\
	& = & \tuple{h_i (n, \lambda z \lhd n . f_1 (z), \dots, \lambda z \lhd n . f_k (z) ) }_{i=1}^k \\
	& = & \tuple{h_i (n,   \projraised 1 {\lambda z \lhd n .f(z)}, \dots, \projraised k {\lambda z \lhd n .f(z)} ) ) }_{i=1}^k
 \end{array}
\]
where $\pi_j' : (\nat \to (\tau_1 \times \cdots \times \tau_k)) \to \nat \to \tau_j$ by $\projraised j {g, n} := \proj j {g(n)}$.
This is an instance of $\REC(\lhd)$, so $f \in \PRF{}$. From here we indeed have, for $j=1, \dots, k$, that $f_j = \lambda n . (\proj j {f(n)}) \in \PRF{}$.
\end{proof}

\subsubsection{Closure under recursion on provably well-founded orders}
In this subsection we shall assume a standard primitive recursive representation of the ordinals up to $\epsilon_0$ as natural numbers, written $\alpha,\beta$ etc., and the usual (strict) well-order on them, written $\prec$.
Note that, while we may indeed represent $\prec$ primitive recursively in, say, $\RCA$, we certainly cannot prove that it is a well-order on $\epsilon_0$ even in $\ACA$, being its proof-theoretic ordinal.

Recalling the recursion schemes of the previous subsubsection,
we shall write simply $\REC(\alpha)$ for recursion on $\prec$ restricted to the initial segment $\alpha$ of $\epsilon_0$, and $\REC(\prec \epsilon_0)$ for the union of $\REC(\alpha)$ for $\alpha \prec \epsilon_0$.

The following is a well-known result, originally due to Kreisel \cite{Kreisel59:epsilon0-interpretation} by means of G\"odel's Dialectica functional interpretation (see also \cite{tait65:inf-long-terms,howard80:ord-anal-t}):
\begin{theorem}
[Kreisel]
\label{thm:prf-closed-under-recursion-<-epsilon0}
If $\alpha \prec \epsilon_0$ then $\PRF{}$ is closed under $\REC(\alpha)$.
\end{theorem}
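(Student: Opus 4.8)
The plan is to reduce recursion on an arbitrary initial segment $\alpha \prec \epsilon_0$ to recursion along the standard order of $\Nat$, iterated at successively higher types; this is the well-known trade-off between ordinal complexity and type level (cf.\ \cite{Kreisel51,tait68:constructive-reasoning,Buss1995witness}). Write $\omega_0 := 1$ and $\omega_{k+1} := \omega^{\omega_k}$, so that $\epsilon_0 = \sup_k \omega_k$. Since $\alpha \prec \epsilon_0$, we have $\alpha \preceq \omega_n$ for some fixed $n \in \Nat$, and $\REC(\alpha)$ is subsumed by $\REC(\omega_n)$: given $g$ witnessing an instance of $\REC(\alpha)$ as in \eqref{eqn:rec-on-wfpo}, one extends it by the zero functional outside the segment coding $\alpha$, obtaining an instance of $\REC(\omega_n)$ whose solution agrees with $f$ on that segment (and checking membership of that segment is primitive recursive). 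Thus it suffices to prove, by induction on $n$, that $\PRF{}$ is closed under $\REC(\omega_n)$ at \emph{every} type $\tau$. The base case $\REC(\omega_1) = \REC(\omega)$ is course-of-values recursion along $\Nat$, which reduces to ordinary primitive recursion via the primitive recursive sequence (de)coding available in $\PRF{}$, hence to the $\rec_\tau$ combinators of $\T$.

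For the inductive step, assume $\PRF{}$ is closed under $\REC(\omega_n)$ at all types and consider an instance of $\REC(\omega_{n+1}) = \REC(\omega^{\omega_n})$, with $f : \nat \to \tau$ defined from $g$ by \eqref{eqn:rec-on-wfpo} relative to $\prec$ restricted below $\omega^{\omega_n}$. The idea is to recurse on the Cantor normal form: every $\beta \prec \omega^{\omega_n}$ is a finite, strictly descending sequence of exponents (with multiplicities), each lying below $\omega_n$. One then introduces an auxiliary functional $F : \nat \to (\nat \to \tau)$, where intuitively $F(\gamma)$ packages the restriction of $f$ to ordinals below $\omega^{\gamma+1}$ as a function of the restriction of $f$ to ordinals below $\omega^{\gamma}$, and one checks that $F$ satisfies an instance of $\REC(\omega_n)$ at the \emph{raised} type $\nat \to \tau$ — this is precisely where the type level must increase by one, matching the refinement into the fragments $\nT n$. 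By the inductive hypothesis $F \in \PRF{}$, and $f$ is recovered from $F$ by a primitive recursive read-off that iterates $F$ along the finitely many exponents of the Cantor normal form of its argument. Where the construction of the auxiliary functionals requires solving several mutually dependent recursions, one first applies Proposition~\ref{prop:rec-implies-srec} to replace $\SREC(\omega_n)$ by a single instance of $\REC(\omega_n)$.

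The main obstacle is the combinatorial bookkeeping in the inductive step: one must verify that the passage through Cantor normal forms — computing the normal form, the relevant ``predecessor within a segment'' operations, and the packing/unpacking between type $\tau$ and type $\nat \to \tau$ — is carried out by genuinely primitive recursive functionals, and that the resulting equation for $F$ is literally an instance of $\REC(\omega_n)$ so that the inductive hypothesis applies verbatim. This is the heart of the Tait--Howard ordinal analysis of $\T$ (cf.\ \cite{tait65:inf-long-terms,howard80:ord-anal-t}, and originally \cite{Kreisel59:epsilon0-interpretation} via the Dialectica interpretation), and a fully careful treatment would also track how the type level of the recursion grows with $n$.
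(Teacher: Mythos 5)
The paper does not actually prove this statement: it imports it as a classical black box, citing Kreisel's original argument via the Dialectica interpretation \cite{Kreisel59:epsilon0-interpretation} (each instance of transfinite induction below $\epsilon_0$ is provable in $\PA$, and the interpretation extracts a witnessing $\T$-term), with \cite{tait65:inf-long-terms,howard80:ord-anal-t} as alternative sources. Your proposal instead reconstructs the direct Tait-style proof: pass from $\alpha$ to some $\omega_n$, and reduce $\REC(\omega^{\beta})$ to $\REC(\beta)$ at a raised type by recursion on Cantor normal forms, so that ordinal height is traded for type level. That is a legitimate and standard route — indeed it is the one implicit in the papers this article cites for the ordinal/type-level trade-off (\cite{tait68:constructive-reasoning,schwichtenberg75:highertypes-vs-type-level}) — and it has the advantage over Kreisel's proof of staying entirely inside $\PRF{}$ without any detour through provability in $\PA$, at the cost of the normal-form bookkeeping you acknowledge. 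Two points in your sketch deserve tightening. First, the type of your auxiliary functional is off as written: if $F(\gamma)$ is to send (a code of) the restriction of $f$ below $\omega^{\gamma}$ to its extension below $\omega^{\gamma+1}$, then $F$ should have type $\nat \to (\nat \to \tau) \to (\nat \to \tau)$, not $\nat \to (\nat \to \tau)$; the recursion furnished by the inductive hypothesis is then $\REC(\omega_n)$ at value type $(\nat\to\tau)\to(\nat\to\tau)$, and it is this that makes the type level climb by (roughly) one per exponentiation, matching the intended refinement. Second, in the base case you cannot literally ``sequence-code'' values of higher type $\tau$ as numbers; the standard fix is to represent the course-of-values function as an object of type $\nat \to \tau$ and use $\rec_{\nat\to\tau}$, which is how the reduction to the combinators of $\T$ actually goes. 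With those repairs your outline is the correct classical argument, merely a different (and more syntactic) route than the one the paper cites.
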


We want to eventually reduce well-founded arguments on coderivations to some sort of recursion on ordinals $\prec \epsilon_0$. 
For this, we shall exploit the notion of `provably well-founded relations', going back to Gentzen:

\begin{definition}
[(Provably) inductive relations]
A relation $\lhd$ is \textbf{inductive} if the following holds: if whenever $ \forall y \lhd x X(y) $ we have $ X(x)$, then $\forall x X(x)$.

 $\lhd$ is \textbf{provably inductive in $\PA$} if,
\[
\PA(X) \proves \forall x ( \forall y \lhd x X(y) \cimp  X(x)) \cimp \forall x X(x)
\]
where $X$ is some fresh unary predicate symbol added to $\PA$.
\end{definition}

Gentzen already showed that any recursive well-order $\lhd$ on $\Nat$ provably inductive in $\PA$ has order type some $\alpha\prec \epsilon_0$ \cite{gentzen43:provable-well-orders}. 
This result was arithmetised by Takeuti (and independently Harrington), who further showed the existence of an order-preserving embedding that is $\prec\epsilon_0$-recursive (see \cite{takeuti-proof-theory,FriShe95:elementary-descent-rec}), i.e.\ in $\PRF{}$ (by Kreisel's result above, Theorem~\ref{thm:prf-closed-under-recursion-<-epsilon0}).
For our results we will need a generalisation of this result to well-founded \emph{partial} orders, which are not necessarily total.
Naturally every well-founded partial order can be order-preserving embedded into the ordinals, by well-founded induction, but, again, we need an arithmetised version of such a result to extract a suitable embedding in $\PRF{}$.
We expect that the particular well-founded partial orders we consider admit suitable arithmetisable linearisations, so as to directly apply Takeuti's and Harrington's result, but thankfully a more than suitable generalisation for well-founded partial orders has already been obtained by Arai:

\begin{theorem}
[\cite{arai98:from-the-attic}]
\label{thm:prov-wf-implies-prf}
Let $\lhd$ be a primitive recursive well-founded strict partial order on $\Nat$ that is provably inductive in $\PA$.
Then there is some $\alpha_\lhd \prec \epsilon_0$ and some $\mu_\lhd: \nat \to \nat $ in $ \PRF{}$ with $x \lhd y \implies \mu_\lhd(x)\prec  \mu_\lhd (y) \prec \alpha_\lhd$.
\end{theorem}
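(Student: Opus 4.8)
The plan is to construct $\mu_\lhd$ essentially as the $\lhd$-rank function and to extract both the ordinal bound $\alpha_\lhd \prec \epsilon_0$ and the $\PRF{}$-definability of $\mu_\lhd$ from the given $\PA$-proof of provable inductivity. First I would observe that for \emph{any} well-founded strict partial order $\lhd$ the rank function $\mathrm{rk}_\lhd(x) := \sup\{ \mathrm{rk}_\lhd(y) + 1 : y \lhd x\}$ is order-preserving by construction, so that taking $\mu_\lhd := \mathrm{rk}_\lhd$ and $\alpha_\lhd := \sup_x (\mathrm{rk}_\lhd(x)+1)$ immediately yields $x \lhd y \implies \mu_\lhd(x) \prec \mu_\lhd(y) \prec \alpha_\lhd$. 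Thus the entire content of the theorem lies in two quantitative refinements: that $\alpha_\lhd$ may be taken $\prec \epsilon_0$, and that $\mu_\lhd$ (read off in the standard ordinal notations below $\alpha_\lhd$) is G\"odel primitive recursive, i.e.\ lies in $\PRF{}$. Crucially, I would aim to obtain these directly for the partial order, since a primitive recursive linear extension of $\lhd$ need neither remain provably inductive in $\PA$ nor keep its order type below $\epsilon_0$ (an $\omega$-antichain already illustrates the latter blow-up).

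Both refinements I would obtain by a proof-mining analysis of the hypothesis $\PA(X) \proves \forall x(\forall y \lhd x\, X(y) \cimp X(x)) \cimp \forall x\, X(x)$. The standard route is to compose a negative translation with G\"odel's Dialectica interpretation, interpreting this $\PA(X)$-proof into $\T$ (with the free predicate $X$ realised by an oracle). Interpreting well-foundedness in its `no-counterexample' form yields a closed $\T$-term, hence a $\PRF{}$-functional $\Phi$, that bounds descent: for every candidate descending-sequence generator $F$, $\Phi(F)$ locates a stage at which $F$ fails to descend along $\lhd$. By Kreisel's characterisation of the type $1$ fragment of $\T$ (its type-$1$ functions are exactly those computable by effective transfinite recursion below $\epsilon_0$, cf.\ the discussion around Theorem~\ref{thm:prf-closed-under-recursion-<-epsilon0}), the descent witnessed by $\Phi$ is bounded by some fixed $\alpha_\lhd \prec \epsilon_0$, which I would take as the required bound on the order type.

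With $\Phi$ and $\alpha_\lhd$ in hand I would then define $\mu_\lhd$ explicitly as a $\PRF{}$ construction: using $\Phi$ to control a bounded search, one computes for each $x$ its rank $\mathrm{rk}_\lhd(x) \prec \alpha_\lhd$ by recursion along the standard well-order $\prec$ restricted to $\alpha_\lhd$, which stays inside $\PRF{}$ by Kreisel's theorem, Theorem~\ref{thm:prf-closed-under-recursion-<-epsilon0} (and, if a simultaneous computation of several rank-approximations is convenient, by closure under $\SREC(\prec\epsilon_0)$, Proposition~\ref{prop:rec-implies-srec}). Order-preservation $x \lhd y \implies \mu_\lhd(x) \prec \mu_\lhd(y)$ and the global bound $\mu_\lhd(x) \prec \alpha_\lhd$ are then verified directly, the former from the defining recursion and the latter from the extracted bound, while provable inductivity guarantees that the recursion is total on all of $\Nat$.

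The hard part will be the effective, $\PRF{}$-level extraction itself, and in particular carrying it out for a genuine partial order. For linear orders this is the classical arithmetised Gentzen analysis, refined to elementary-descent-recursive witnesses by Takeuti and Harrington (cf.\ \cite{takeuti-proof-theory,FriShe95:elementary-descent-rec}); the novelty, and the technical heart of Arai's argument, is to push the descent-bounding functional $\Phi$ and the rank computation through \emph{without} linearising $\lhd$, so that neither provable inductivity nor the $\epsilon_0$-bound is lost in passing to a total order. Managing the uniform dependence of $\alpha_\lhd$ and $\mu_\lhd$ on the given proof, and certifying in a weak base theory that the extracted $\mu_\lhd$ is total and order-preserving, is where the real work lies.
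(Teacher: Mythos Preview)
The paper does not give a proof of this theorem at all: it is stated as a citation of Arai's result \cite{arai98:from-the-attic}, and the surrounding text explicitly attributes the ``technical heart'' to that reference (noting also that Arai in fact proves a stronger elementary-recursive version in $I\Delta_0+\exp$). So there is no ``paper's own proof'' to compare your proposal against.

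That said, as a sketch of how Arai's argument might go, your outline is reasonable in spirit but remains just that, a plan rather than a proof. The genuinely delicate step you correctly flag---extracting the descent-bounding functional and the rank computation \emph{without} passing through a linearisation of $\lhd$---is exactly where the work lies, and your proposal does not actually carry it out; it only names the ingredients (negative translation, Dialectica, Kreisel's characterisation) and asserts they suffice. In particular, the passage from a no-counterexample functional $\Phi$ to a total $\PRF{}$ rank function via ``bounded search controlled by $\Phi$'' is not obviously well-defined for a partial order, and making this precise is the content of Arai's paper. For the purposes of the present work, though, the theorem is simply imported as a black box, so no further argument is required here.
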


In fact Arai's result is much stronger: if $\lhd$ is elementary recursive, then so is $f_\lhd$, and the result above can actually be demonstrated within elementary recursive arithmetic, in particular $I\Delta_0 + \exp $. As it happens, the relations we shall consider (those induced by $\C$ coderivations) will indeed all be elementary recursive, but we shall not need such a strengthening of Theorem~\ref{thm:prov-wf-implies-prf}.

\begin{corollary}
\label{cor:prov-wf-implies-simrec}
Let $\lhd$ be as in Theorem~\ref{thm:prov-wf-implies-prf} above.
Then $\PRF{}$ is closed under $\REC(\lhd)$ and $\SREC(\lhd)$.
\end{corollary}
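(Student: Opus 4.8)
The plan is to reduce $\REC(\lhd)$ to recursion along an initial segment of $\epsilon_0$, via the primitive recursive embedding furnished by Theorem~\ref{thm:prov-wf-implies-prf}. First I would fix $\alpha_\lhd \prec \epsilon_0$ and $\mu_\lhd \in \PRF{}$ with $x \lhd y \implies \mu_\lhd(x) \prec \mu_\lhd(y) \prec \alpha_\lhd$. Given $f:\nat \to \tau$ obtained from $g \in \PRF{}$ by $\REC(\lhd)$, so that $f(n) = g(n, \lambda z \lhd n. f(z))$, the key construction is a `stagewise approximation' $F : \nat \to (\nat \to \tau)$, defined by recursion along $\prec$ restricted to $\alpha_\lhd$ (an instance of $\REC(\alpha_\lhd)$, with values of type $\nat \to \tau$) by
\[
F(\beta)(n) \ = \ g\big(\, n,\ \lambda z \lhd n.\ \big(\text{if } \mu_\lhd(z) \prec \beta \text{ then } F(\mu_\lhd(z))(z) \text{ else } 0_\tau \big)\,\big).
\]
The crucial point, and where I expect the main (if routine) work to lie, is checking that this is an admissible instance of $\REC(\alpha_\lhd)$: in computing $F(\beta)$ the only recursive calls are to $F(\mu_\lhd(z))$ for those $z \lhd n$ with $\mu_\lhd(z) \prec \beta$, hence always to stages strictly $\prec \beta$, using precisely the strict $\lhd$-monotonicity of $\mu_\lhd$; and the auxiliary data ($\mu_\lhd$, the ordinal comparison $\prec$, the definition by cases) all lie in $\PRF{}$. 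Theorem~\ref{thm:prf-closed-under-recursion-<-epsilon0} (Kreisel), applied with $\alpha_\lhd \prec \epsilon_0$, then gives $F \in \PRF{}$.

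Next I would show $f = \lambda n. F(\alpha_\lhd)(n)$. The natural statement to prove by $\lhd$-induction on $n$ is: for every ordinal $\beta$ with $\mu_\lhd(z) \prec \beta$ for all $z \lhd n$, one has $F(\beta)(n) = f(n)$. Indeed, for such $\beta$ every guard `$\mu_\lhd(z) \prec \beta$' (over $z \lhd n$) passes, so $F(\beta)(n) = g(n, \lambda z \lhd n. F(\mu_\lhd(z))(z))$; and for each $z \lhd n$ the induction hypothesis at $z$, invoked with the ordinal $\mu_\lhd(z)$ — legitimate since $w \lhd z$ implies $\mu_\lhd(w) \prec \mu_\lhd(z)$ — yields $F(\mu_\lhd(z))(z) = f(z)$, whence $F(\beta)(n) = g(n, \lambda z \lhd n. f(z)) = f(n)$. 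Finally, since $z \lhd n$ implies $\mu_\lhd(z) \prec \mu_\lhd(n) \prec \alpha_\lhd$, the hypothesis holds for $\beta = \alpha_\lhd$, so $f(n) = F(\alpha_\lhd)(n)$ for all $n$. As $F \in \PRF{}$ and $\PRF{}$ is closed under application (here to the numeral $\alpha_\lhd$), this gives $f \in \PRF{}$, establishing closure under $\REC(\lhd)$.

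Closure under $\SREC(\lhd)$ then comes for free: Proposition~\ref{prop:rec-implies-srec} upgrades the closure under $\REC(\lhd)$ just established to closure under $\SREC(\lhd)$. The only genuinely delicate step, as noted, is the verification that the defining equation for $F$ is a bona fide instance of $\REC(\alpha_\lhd)$ rather than an illicit self-reference; everything else is bookkeeping with the $\lambda z \lhd n$ guards and the $0_\tau$ defaults, together with the cited closure results.
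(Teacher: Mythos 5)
Your proposal is correct and follows essentially the same route as the paper's proof: reduce $\SREC(\lhd)$ to $\REC(\lhd)$ via Proposition~\ref{prop:rec-implies-srec}, use the embedding $\mu_\lhd$ and bound $\alpha_\lhd$ from Theorem~\ref{thm:prov-wf-implies-prf} to recast $\REC(\lhd)$ as an ordinal-indexed, guarded approximation falling under $\REC(\alpha_\lhd)$, and conclude by Kreisel's Theorem~\ref{thm:prf-closed-under-recursion-<-epsilon0}. The only (cosmetic) difference is the direction of verification: the paper defines the approximation $f'(\alpha,n)$ explicitly from $f$ and checks it satisfies the ordinal-recursion equation, recovering $f(n)$ as $f'(\mu_\lhd(n)+1,n)$, whereas you define $F$ by the recursion and recover $f$ by $\lhd$-induction, evaluating at the top stage.
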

\begin{proof}
By Proposition~\ref{prop:rec-implies-srec} it suffices to show closure under $\REC(\lhd)$.
Suppose $g: \nat \to (\nat \to \tau) \to \tau$ is in  $\PRF{}$ and we will show that $f: \nat \to \tau$ is in $\PRF{}$ where:
\[
f(n) \ = \ g(n, \lambda z \lhd n . f(z))
\]
Letting $\mu_\lhd$ and $\alpha_\lhd \prec \epsilon_0$ be as obtained by Theorem~\ref{thm:prov-wf-implies-prf} above, let us write: 
\[
f'(\alpha, n) = \begin{cases}
f(n) & \text{if $\mu (n) \prec \alpha$} \\
0_\tau & \text{otherwise}
\end{cases}
\]
Notice that, as long as $\mu(n) \prec \alpha \prec \alpha_\lhd$, we have:
\[
\begin{array}{rcl}
 f'(\alpha) & = & \lambda n. g(n,  f' (\mu(n)))  \\
	& = & \lambda n .  g(n, (\lambda \beta \prec \alpha . f'(\beta)) (\mu(n))  ) 
\end{array}
\]
This is an instance of $\REC(\alpha_\lhd)$ and so $  f' \in \PRF{}$ by Theorem~\ref{thm:prf-closed-under-recursion-<-epsilon0}.
From here we have $f(n) = f'( \mu_\lhd (n) + 1, n)$, and so indeed $f \in \PRF{}$, as required.
\end{proof}

\subsubsection{A well-founded order on `runs' of progressing coderivations}
For the remainder of this subsection, let us fix a $\C$ coderivation $t $ whose (finitely many) distinct sub-coderivations are $\{t_i:  \vec \nat_i , \vec \sigma_i   \seqar \tau_i\}_{i=1}^n$, with all $\nat$ occurrences indicated.

Let $T\subseteq \{0,1\}^*$ be the underlying (infinite) coderivation tree of $t$, and let $\{T_1, \dots, T_n\}$ partition $T$ into the sets of nodes rooting $t_1, \dots, t_n$ respectively.
Notice that all these sets are provably recursive (given $t$) in even $\RCA$.

We define a binary relation $\runreduces$ on $T \times \Nat^*$ (i.e.\ $\runreduces \ \subseteq (T \times \Nat^*)\times (T \times \Nat^*)$) as follows.
$(u,\vec m) \runreduces (v,\vec n)$ if:
\begin{itemize}
\item $v$ is a child of $u$;
\item $u$ roots some $t_i$ (i.e.\ $u \in T_i$) and $v$ roots some $t_j$ (i.e.\ $v \in T_j$).
\item $|\vec m| = |\vec \nat_i|$ and $|\vec n| = |\vec \nat_j|$.
\item if $\nat_{jl}$ is an immediate ancestor of $\nat_{ik}$ then $n_l \leq m_k$.
\item if $\nat_{ik}$ is principal for a $\cond$ step and $\nat_{jl}$ is its immediate ancestor, then $ n_l < m_k$.
\end{itemize}
We do not impose any other constraints on $\runreduces$. 

\begin{example}
[Examples of `runs']
Revisiting Example~\ref{ex:sim-prim-rec}, suppose $t$ is the coderivation on the RHS of \eqref{eqn:sim-rec-naively-in-ct}. We have:
\begin{itemize}
\item $(\epsilon, 6 ) \runreduces (1,5) \runreduces (10,5)$.
\item $(\epsilon, 6 ) \runreduces (1,3) \runreduces (10,2)$.
\item $(\epsilon, 6 ) \runreduces (1,2) \runreduces (11,8)$.
\end{itemize}

Revisiting Section~\ref{sect:ack-cyclic}, now suppose $t$ is the coderivation from \eqref{eqn:ack-cyc-der}. We have:
\begin{itemize}
\item $(\epsilon, 5,3) \runreduces (0, 5, 5, 3) \runreduces (00, 5,3) \runreduces (000,3)$
\item $(\epsilon,5,3) \runreduces (0, 5,4,3) \runreduces (01, 4,4,3) \runreduces (010,4,4) \runreduces (0100,4) \runreduces (01001, 2)$
\item $(\epsilon, 5,3) \runreduces (0,5,5,3) \runreduces (01, 3,5,2) \runreduces (011, 3,4,0 )\runreduces (0110,4,0)$
\item $(\epsilon, 5,3) \runreduces (0, 2, 4,3 ) \runreduces (01, 1, 4, 3) \runreduces (011, 1, 4, 1) \runreduces (0111 , 1, 9)$
\end{itemize}
\end{example}

Note that $\runreduces$ is clearly a polynomial-time recursive relation, and in particular is provably $\Delta^0_1$ in even $\RCA$.
Since we have fixed $t$ in advance, we may actually establish the well-foundedness of $\runreduces$ within $\RCA$:

\begin{theorem}
[$\RCA$]
\label{thm:runs-terminate}
$\runreduces$ is terminating, i.e.\ $\forall f . \exists n .  f(n) \not\runreduces f(n+1)$.
\end{theorem}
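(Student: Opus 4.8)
The plan is to reduce the termination of $\runreduces$ to the progressing criterion on the fixed coderivation $t$, arguing by contradiction exactly as in the proof of Proposition~\ref{prop:termination}. Suppose $f$ is an infinite $\runreduces$-sequence, so $f(k) = (u_k, \vec m_k)$ with $(u_k,\vec m_k) \runreduces (u_{k+1}, \vec m_{k+1})$ for all $k$. By the first two clauses in the definition of $\runreduces$, each $u_{k+1}$ is a child of $u_k$ in $T$, so $(u_k)_k$ traces out an infinite branch $B$ of the coderivation tree $T$ of $t$. Since $t$ is progressing, $B$ carries an infinitely progressing $\nat$-thread; let $(\nat^{i_k})_{k \geq l}$ be such a thread, living along the tail $(u_k)_{k\geq l}$, where $\nat^{i_k}$ sits at position $p_k$ in the antecedent of the sequent at $u_k$. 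Writing $n_k := (\vec m_k)_{p_k}$ for the corresponding natural-number component of $\vec m_k$, the fourth clause of the definition of $\runreduces$ gives $n_{k+1} \leq n_k$ for all $k \geq l$, and the fifth clause gives $n_{k+1} < n_k$ whenever $\nat^{i_k}$ is principal for a $\cond$ step — which happens infinitely often by definition of a progressing thread. Hence $(n_k)_{k\geq l}$ is a non-increasing sequence of natural numbers that strictly decreases infinitely often, which is impossible; this contradiction establishes $\forall f\, \exists n\, f(n) \not\runreduces f(n+1)$.

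The only subtlety, and the reason this can be done inside $\RCA$ rather than needing a stronger base theory, is that we have fixed $t$ (hence $n$, the sets $T_i$, the ancestry relation on the $\nat$ occurrences, and the decidable predicate $\runreduces$) \emph{in advance}, so all of the data above is available as genuinely recursive in $\RCA$; in particular, $f$ is given as a function (second-order object), the induced branch $B$ is recursive in $f$, and by Proposition~\ref{prop:prog-decidability+provability-in-RCA} (from \cite{Das19:log-comp-cyc-arith}) $\RCA$ proves that $t$ is progressing, so $\RCA$ proves that $B$ has an infinitely progressing thread. Extracting such a thread as a set from the branch $B$ is the one step that requires care in $\RCA$: rather than appeal to comprehension, one uses the same device as in the proofs of Proposition~6.2 of \cite{Das19:log-comp-cyc-arith} and Proposition~\ref{prop:no-prog-thread-non-rec-branch} above, namely the arithmetical approximation of progression — it suffices to have, for every $k$, arbitrarily long finite progressing threads starting at $u_k$, a purely arithmetical (indeed $\Pi^0_2$) consequence of progressiveness of $t$, and then the bounded-counting argument (a single $\Sigma^0_1$ induction, as in the `moreover' clause of Proposition~\ref{prop:no-prog-thread-non-rec-branch}) delivers the contradiction without invoking additional set-existence principles.

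I expect the genuine mathematical content to be exactly the monotone-descent argument, which is routine given Proposition~\ref{prop:termination}; the main obstacle, such as it is, is purely a matter of bookkeeping inside $\RCA$: making precise that the infinitely-often-decreasing sequence of naturals can be obtained from $f$ without set comprehension, which is handled by reusing the arithmetical-acceptance machinery already developed in Section~\ref{sect:t-sim-c}. One should also be slightly careful that the four-clause definition of $\runreduces$ gives $n_{k+1} \leq n_k$ along the thread (from the "immediate ancestor" clause) and $n_{k+1} < n_k$ at progress points (from the "$\cond$-principal" clause) even though $\runreduces$ only constrains the components linked by ancestry — but since a thread is by definition a maximal path through the immediate-ancestry graph, consecutive elements of the thread are always ancestry-linked, so both clauses apply, and no further hypothesis on $\runreduces$ is needed.
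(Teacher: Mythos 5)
Your argument is correct and is essentially the paper's own proof: an infinite $\runreduces$-sequence traces an infinite branch of $t$, the (RCA-provable) progressing criterion yields a thread along it, and the ancestry clauses of $\runreduces$ turn that thread into a non-increasing sequence of naturals that strictly decreases infinitely often, a contradiction. Your extra care about extracting the thread inside $\RCA$ (via the arithmetical approximation of Proposition~\ref{prop:no-prog-thread-non-rec-branch}) is a reasonable refinement of the paper's sketch rather than a different route, since here the branch is recursive in the given $f$ and so available as a set parameter.
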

\begin{proof}
[Proof sketch]
Suppose, for contradiction, that $f: \Nat \to \Nat $ with $\forall n . f(n) \runreduces f(n+1)$.
Writing $f(n) = (u_n, \vec m_n)$, we have that $(u_n)_{n\in \omega}$ is (the tail of) a branch of $t$.
By Proposition~\ref{prop:prog-decidability+provability-in-RCA} (itself from \cite{Das19:log-comp-cyc-arith}), $\RCA$ proves that $t$ is progressing, and so we have an infinitely progressing thread along $(u_n)_{n \in \omega}$.
From this thread we can extract from $(\vec m_n)_{n\in \omega}$ a sequence of natural numbers $(m_{ni_n})_{n\in \omega}$ corresponding to the thread. However, by construction, $(m_{ni_n})_{n \in \omega}$ is monotone decreasing (by induction on $n$) and has no least element (since it follows a progressing thread). 
\end{proof}
\noindent
Let us note that the uniform version of the above result, quantifying over all $\C$ coderivations $t$, requires $\CIND{\Sigma^0_2}$, cf.~\cite{KMPS19:buchi-reverse,Das19:log-comp-cyc-arith}.

\begin{corollary}
[$\ACA$]
\label{cor:runreduces-inductive}
$\rununreduces$ is inductive, i.e.\ $  (\forall x (\forall y \rununreduces x X(y) \cimp X(x)) \cimp \forall x X(x))$.
\end{corollary}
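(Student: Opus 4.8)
The plan is to obtain this from Theorem~\ref{thm:runs-terminate} by the standard equivalence, for a recursive relation, between being terminating (admitting no infinite descending chain) and validating the transfinite induction scheme. Unwinding the definition of $\rununreduces$, a hypothesis $\forall y \rununreduces x\, X(y)$ is just $\forall y\,(x \runreduces y \cimp X(y))$, so what must be shown, working in $\ACA$, is: if $\forall x\,(\forall y\,(x \runreduces y \cimp X(y)) \cimp X(x))$ then $\forall x\, X(x)$.

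First I would argue by contradiction. Assuming the induction hypothesis together with $\cnot\forall x\, X(x)$, form the ``bad set'' $B := \{x : \cnot X(x)\}$; this is available by $\Delta^0_0$-comprehension when $X$ is a set, and by arithmetical comprehension when $X$ is presented as a formula, which is the point at which $\ACA$ (equivalently, full induction in $\PA(X)$) is invoked. Then $B$ is nonempty, and contraposing the induction step shows that every $x \in B$ has some $y \in B$ with $x \runreduces y$; that is, $B$ has no $\runreduces$-maximal element.

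Next I would extract a single infinite $\runreduces$-chain lying in $B$. Since $\runreduces$ is provably $\Delta^0_1$ (it is polynomial-time decidable in the fixed coderivation $t$), and since every $x \in B$ has a successor in $B$, the function $g(x) := \mu y\,[x \runreduces y \cand y \in B]$ is total on $B$. Choosing some $x_0 \in B$ (say, the numerically least one) and setting $f(0) := x_0$ and $f(n+1) := g(f(n))$ defines, by primitive recursion, an $f : \Nat \to \Nat$ with $f(n) \runreduces f(n+1)$ for every $n$, directly contradicting Theorem~\ref{thm:runs-terminate}. Hence $\forall x\, X(x)$, which is the claimed induction principle.

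The argument is essentially routine bookkeeping; the one genuine subtlety is that $\runreduces$ need not be finitely branching (a $\cut$ on $\nat$, for instance, places an unconstrained numeral in the right premiss), so the passage from ``$B$ has no maximal element'' to an actual descending chain is a choice-like step. It is handled here by always taking the numerically least successor, which is legitimate precisely because $\runreduces$ is recursive; this also explains why it is comfortable to state the result over $\ACA$, which in turn yields the first-order formulation ``provably inductive in $\PA$'' needed to invoke Theorem~\ref{thm:prov-wf-implies-prf}.
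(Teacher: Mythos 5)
Your argument is correct and is essentially the paper's own proof: both proceed by contradiction, use the recursiveness of $\runreduces$ to pick the least successor violating $X$ at each stage, obtain the resulting infinite sequence as a function via (arithmetical) comprehension, and then contradict the termination result, Theorem~\ref{thm:runs-terminate}. The only cosmetic difference is that you first form the bad set $B$ and a successor function $g$ before iterating, whereas the paper defines the sequence $n_i$ directly and then appeals to comprehension for its graph.
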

\begin{proof}
Suppose, for contradiction, that $\cnot X(n)$ and $\forall x (\forall y \rununreduces x X(y) \cimp X(x) )$, i.e.\ $\forall x (\cnot X(x) \cimp \exists y \rununreduces x \cnot X(y) )$.
Since $\runreduces$ is (provably) recursive, we may define $n_i$ with $n_0 = n$ and $n_{i+1}$ least such that $n_i \runreduces n_{i+1}$ and $\cnot X(n_{i+1})$, recursively in $i$.
By comprehension we may have the graph of the function $f(i) := n_i$ for $i\in \omega$, contradicting termination, Theorem~\ref{thm:runs-terminate} above.
\end{proof}

Let us henceforth write $\runge$ for the transitive closure of $\runreduces$.
\begin{corollary}
$\PRF{}$ is closed under $\REC (\runle)$, and so also $\SREC(\runle)$.
\end{corollary}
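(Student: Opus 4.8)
The plan is to obtain closure under $\REC(\runle)$ as an instance of Corollary~\ref{cor:prov-wf-implies-simrec}, with $\lhd$ taken to be the relation $\runle$ on the (codes of) elements of $T\times \Nat^*$. To invoke that corollary I need to check that $\runle$ satisfies the hypotheses of Theorem~\ref{thm:prov-wf-implies-prf}: that it is a primitive recursive (indeed elementary) well-founded strict partial order on $\Nat$, and that it is provably inductive in $\PA$.

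First I would note that $\runle$ is the transitive closure of $\runreduces$, and $\runreduces$ was already observed to be polynomial-time recursive, so $\runle$ is primitive recursive; it is a strict partial order since it is the transitive closure of a relation whose `tree descent' component $v$ a child of $u$ already forbids cycles. Well-foundedness is Theorem~\ref{thm:runs-terminate}, and being inductive (for arbitrary second-order $X$, in particular the fresh predicate symbol of $\PA(X)$) is exactly Corollary~\ref{cor:runreduces-inductive}, which was proved in $\ACA$; since $\ACA$ is conservative over $\PA$ for arithmetical consequences and the inductiveness statement for a fixed predicate symbol $X$ is such a consequence, $\runle$ is provably inductive in $\PA$. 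Here I am implicitly transferring the $\ACA$ proof of inductiveness of $\runreduces$ to inductiveness of its transitive closure $\runle$, which is routine: an $X$ closed under the hypothesis for $\runle$ is in particular closed under it for $\runreduces$ (as $\runreduces\ \subseteq\ \runle$).

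With the hypotheses of Theorem~\ref{thm:prov-wf-implies-prf} verified, Corollary~\ref{cor:prov-wf-implies-simrec} gives directly that $\PRF{}$ is closed under $\REC(\runle)$ and $\SREC(\runle)$, which is the assertion of the corollary. I would write this up in essentially two sentences, since all the genuine work has been done upstream.

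The only mild subtlety — and hence the step I would be most careful about — is the coding: $\runle$ lives on $T\times\Nat^*$, so to apply the ordinal-embedding machinery I must fix a primitive recursive bijection between $T\times\Nat^*$ (or a primitive recursive superset of it, e.g.\ all of $\Nat\times\Nat^*$, extending $\runle$ arbitrarily, say trivially, off the intended domain) and $\Nat$, under which $\runle$ becomes a primitive recursive relation on $\Nat$ in the literal sense demanded by Theorem~\ref{thm:prov-wf-implies-prf}. This is bookkeeping rather than a real obstacle, since $T$ is (provably) recursive in the fixed $t$ and $\Nat^*$ has a standard primitive recursive coding, but it is the point at which one must be slightly pedantic to make the citation of Arai's theorem strictly legitimate.
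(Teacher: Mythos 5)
Your overall route is exactly the paper's: invoke Corollary~\ref{cor:runreduces-inductive}, pass to $\PA(X)$ by conservativity, and conclude via Corollary~\ref{cor:prov-wf-implies-simrec} (through Theorem~\ref{thm:prov-wf-implies-prf}); the coding and strict-partial-order bookkeeping you add is harmless and is indeed left implicit in the paper. However, the one step you spell out --- transferring inductiveness from the one-step relation to its transitive closure --- is justified in the wrong direction. Since $\rununreduces\ \subseteq\ \runle$, the hypothesis $\forall y \runle x\, X(y)$ is \emph{stronger} than $\forall y \rununreduces x\, X(y)$; monotonicity therefore gives that a predicate progressive for $\rununreduces$ is automatically progressive for $\runle$, and hence that inductiveness of $\runle$ implies inductiveness of $\rununreduces$ --- precisely the converse of what you need. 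Your claim that ``an $X$ closed under the hypothesis for $\runle$ is in particular closed under it for $\runreduces$'' does not follow from the inclusion, so the first link of your chain (progressive for $\runle$ $\Rightarrow$ progressive for $\rununreduces$ $\Rightarrow$ universal) breaks.

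The repair is short, and it is what the paper's ``a fortiori'' is standing for: either observe that the transitive closure of a terminating relation is terminating --- from a failure of $\runle$-inductiveness the argument of Corollary~\ref{cor:runreduces-inductive} produces an infinite $\runge$-chain, and interpolating (by minimisation and comprehension in $\ACA$) a witnessing finite $\runreduces$-path for each transitive-closure step yields an infinite $\runreduces$-run contradicting Theorem~\ref{thm:runs-terminate} --- or apply Corollary~\ref{cor:runreduces-inductive} to the auxiliary set $\{x \ |\ \forall y\, (y \runleq x \cimp X(y))\}$, which exists by arithmetical comprehension and is $\rununreduces$-progressive whenever $X$ is $\runle$-progressive. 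A secondary caveat: your justification that $\runle$ is primitive recursive (``transitive closure of a polynomial-time relation'') is not valid in general, since transitive closures of recursive relations are in general only $\Sigma^0_1$; here decidability needs the specific shape of $\runreduces$ (the chain length is determined by the tree positions, and unconstrained coordinates can be bounded), a point the paper likewise leaves implicit.
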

\begin{proof}
[Proof sketch]
By Corollary~\ref{cor:runreduces-inductive} above, we immediately have that $\runle$ is inductive, a fortiori, provably in $\ACA$, and so also in $\PA(X)$ by conservativity.
The result now follows by Corollary~\ref{cor:prov-wf-implies-simrec}.
\end{proof}

\subsubsection{Main result}

We continue to work with the fixed regular coderivation $t$ from the previous subsubsection and its distinct sub-coderivations $\{t_i : \vec \nat_i, \vec \sigma_i \seqar \tau_i \}_{i=1}^n$. 

\begin{proof}
[Proof of Theorem~\ref{thm:crf=prf}]
We shall assume some basic primitive recursive coding and decoding $\tuple{\cdot}$ of lists, and suppress the explicit functions associated with it, namely those adding and extracting elements from lists.

We show that the functionals $\{g_i : \nat \to \vec \sigma_i \to \tau_i\}_{i=1}^n$ with $g_i(\tuple{u, \vec m} ) := \interp{t_i}(\vec m_i)$, as long as $u \in T_i$, may be obtained by $\SREC(\runle)$.
\begin{itemize}
\item If $t_i $ has form $ \vlinf{\id_\nat}{}{\nat \seqar  \nat}{}$ then 
$g_i(\tuple{u,m}) = m$.
\item If $t_i $ has form $ \vlinf{\id_\sigma}{}{\sigma \seqar \sigma}{}$ with $\sigma \neq \nat$ then 
$g_i (\tuple{u})  = \interp{\id_\sigma}$.
\end{itemize}
Henceforth, we shall always assume that $\sigma \neq \nat$.
\begin{itemize}
\item 
If $t_i  $ has form $ 
\vlderivation{
\vlin{\wk_\nat}{}{ \vec \nat , \nat , \vec \sigma \seqar \tau}{
\vltr{t_j}{ \vec \nat , \vec \sigma \seqar \tau}{\vlhy{\ }}{\vlhy{\  }}{\vlhy{\ }}
}
}
$
then:
\[
\begin{array}{rcl}
 g_i (\tuple{u, \vec m,m}) & = & g_j (\tuple{u0,\vec m})\\
	& = & ({\lambda z \runle \tuple{u,\vec m, m} . g_j (z)}) (\tuple{u0,\vec m})
\end{array}
\]
\end{itemize}
Henceforth, we shall simply write recursive calls more compactly as $g_j (\tuple{u0, \vec m_0})$ or $g_k (\tuple{u1, \vec m_1})$ when $\tuple{u0, \vec m_0} \runle \tuple{u, \vec m}$ and $\tuple{u1, \vec m_1} \runle \tuple{u, \vec m}$, rather than fully writing $(\lambda z \runle \tuple{u, \vec m}. g_j(z)) (\tuple{u0, \vec m_0})$ or $(\lambda z \runle \tuple{u, \vec m}. g_k (z)) (\tuple{u1, \vec m_1})$.
\begin{itemize}
\item If $t_i $ has form $ 
\vlderivation{
\vlin{\wk}{}{ \vec \nat , \vec \sigma, \sigma \seqar \tau}{
\vltr{t_j}{ \vec \nat , \vec \sigma \seqar \tau}{\vlhy{\ }}{\vlhy{\ }}{\vlhy{\ }}
}
}
$
then
 \(
 g_i (\tuple{u, \vec m}) 
   = \interp{\wk} ( g_j (\tuple{u0, \vec m}) )
 \), 
 where the $\wk$ constant has the appropriate type, i.e.\ $(\vec \sigma \to \tau) \to \vec \sigma \to \sigma \to \tau$.
\end{itemize}
Henceforth, we shall omit the types of the constants we use, being determined by the context in which it appears.
\begin{itemize} 
 \item If $t_i $ has form $ 
 \vlderivation{
 	\vlin{\cntr_\nat}{}{\vec \nat, \nat, \vec \sigma  \seqar \tau}{
 	\vltr{t_j}{\vec \nat , \nat ,\nat, \vec \sigma  \seqar \tau}{\vlhy{\ }}{\vlhy{\ }}{\vlhy{\ }}
 	}
 }
 $
 then
 \(
 g_i (\tuple{u, \vec m, m}) =   g_j (\tuple{u0, \vec m, m, m })
 \).
 \item If $t_i $ has form $
  \vlderivation{
  	\vlin{\cntr}{}{\vec \nat, \vec \sigma ,  \sigma \seqar \tau}{
  	\vltr{t_j}{\vec \nat , \vec \sigma, \sigma, \sigma \seqar \tau}{\vlhy{\ }}{\vlhy{\ }}{\vlhy{\ }}
  	}
  }
 $
 then
 \(
 g_i (\tuple{u,\vec m})  =  \interp{\cntr} ( g_j (\tuple{u0, \vec m}) ) 
 \).
 \item If $t_i $ has form $
 \vlderivation{
 	\vliin{\cut_\nat}{}{\vec \nat , \vec \sigma \seqar \tau}{
 		\vltr{t_j}{\vec \nat , \vec \sigma \seqar \nat}{\vlhy{\ }}{\vlhy{\ }}{\vlhy{\ }}
 	}{
 		\vltr{t_k}{\vec \nat , \nat , \vec \sigma\seqar \tau}{\vlhy{\ }}{\vlhy{\ }}{\vlhy{\ }}
 	}
 }
 $
 then
 \[
 g_i (\tuple{u, \vec m})  =  \lambda \vec x. g_k ( \tuple{u1, \vec m, g_j (\tuple{u0, \vec m} , \vec x)}, \vec x)
 \]
 \item If $t_i $ has form $
 \vlderivation{
 	\vliin{\cut}{}{\vec \nat , \vec \sigma \seqar \tau}{
 		\vltr{t_j}{\vec \nat , \vec \sigma \seqar \sigma}{\vlhy{\ }}{\vlhy{\ }}{\vlhy{\ }}
 	}{
 		\vltr{t_k}{\vec \nat , \vec \sigma, \sigma \seqar \tau}{\vlhy{\ }}{\vlhy{\ }}{\vlhy{\ }}
 	}
 }
 $
 then
 \[
 g_i(\tuple{u, \vec m}) = \interp{\cut} ( g_j (\tuple{u0, \vec m}) , g_k (\tuple{u1, \vec m}) )
 \]
 \item If $t_i $ has form $
 \vlderivation{
 	\vliin{\leftimp_\nat}{}{\vec \nat , \vec \sigma, \rho \to \nat\seqar \tau}{
 		\vltr{t_j}{\vec \nat , \vec \sigma\seqar \rho}{\vlhy{\ }}{\vlhy{\ }}{\vlhy{\ }}
 	}{
 		\vltr{t_k}{\vec \nat, \nat , \vec \sigma\seqar \tau }{\vlhy{\ }}{\vlhy{\ }}{\vlhy{\ }}
 	}
 }
 $
 then
 \[
 g_i(\tuple{u, \vec m})  =  \lambda \vec x. \lambda h. ( g_k ( \tuple{u1, \vec m, h(g_j (\tuple{u0, \vec m}, \vec x) ) }, \vec x ) )
 \]
 \item If $t_i$ has form
 $
  \vlderivation{
  	\vliin{\leftimp}{}{\vec \nat , \vec \sigma, \rho \to \sigma \seqar \tau}{
  		\vltr{t_j}{\vec \nat , \vec \sigma \seqar \rho}{\vlhy{\ }}{\vlhy{\ }}{\vlhy{\ }}
  	}{
  		\vltr{t_k}{\vec \nat, \vec \sigma, \sigma \seqar \tau }{\vlhy{\ }}{\vlhy{\ }}{\vlhy{\ }}
  	}
  }
 $
 then
 \[
 g_i (\tuple{u , \vec m}) = \interp{\leftimp} (g_j (\tuple{u0, \vec m}) , g_k (\tuple{u1, \vec m}))
 \]
\item If $t_i $ has form 
$
\vlderivation{
	\vlin{\rightimp_\nat}{}{ \vec \nat , \vec \sigma \seqar \nat \to \tau}{
	\vltr{t_j}{ \vec \nat, \nat , \vec \sigma \seqar \tau}{\vlhy{\ }}{\vlhy{\ }}{\vlhy{\ }}
	}
}$
then
\(
g_i (\tuple{u, \vec m}) =  \lambda \vec x. \lambda m . g_j (\tuple{u0, \vec m , m }, \vec x)
\)
\item If $t_i $ has form
$
\vlderivation{
	\vlin{\rightimp}{}{\vec \nat , \vec \sigma \seqar \sigma \to \tau}{
	\vltr{t_j}{ \vec \nat , \vec \sigma, \sigma\seqar \tau}{\vlhy{\ }}{\vlhy{\ }}{\vlhy{\ }}
	}
}
$
then
\(
g_i (\tuple{u, \vec m}) = \interp{\rightimp} (g_j (\tuple{u0, \vec m}))
\)
\medskip
\item If $t_i$ has form $\vlinf{0}{}{\seqar \nat}{}$ then $g_i ( \tuple{u} ) = 0$.
\item If $t_i$ has form $\vlinf{\succ}{}{\nat \seqar \nat}{}$ then $g_i (\tuple{u,m}) = \succ m $.
\item If $t_i$ has form 
$
\vlderivation{
	\vliin{\cond}{}{\vec \nat, \nat , \vec \sigma\seqar \tau}{
		\vltr{t_j}{\vec \nat , \vec \sigma\seqar \tau}{\vlhy{\ }}{\vlhy{\ }}{\vlhy{\ }}
	}{
		\vltr{t_k}{\vec \nat, \nat , \vec \sigma\seqar \tau}{\vlhy{\ }}{\vlhy{\ }}{\vlhy{\ }}
	}
}
$
then
\[
\begin{array}{rcl}
g_i ( \tuple{u, \vec m, m}) & = & \begin{cases}
g_j (\tuple {u0, \vec m}) & \text{if $m=0$} \\
g_k ( \tuple{u1, \vec m, m-1}) &\text{otherwise}
\end{cases} 
\end{array}
\]
where the conditional ``if then else'' operation is obtained, as usual, by primitive recursion.
\end{itemize}
From here we have $\interp{t_i} (\vec m, \vec x)  = g_i (\tuple{u, \vec m}, \vec x)$, for some $u \in T_i$ (say the least one), so $\interp{t_i} \in \PRF{}$ for $i= 1, \dots, n$.
Since our initial $\C$ coderivation $t$ is just some $t_i$, we thus indeed have $\interp t \in \PRF{}$, so in particular $\interp t$ is computed by a term of $\T$.
\end{proof}

\section{Conclusions}

In this work we presented a circular version $\C$ of G\"odel's system $\T$ and investigated its expressivity at the level of abstraction complexity (i.e.\ type level).
To this end, we showed that $\nC n$ and $\nT{n+1}$ have the same logical and recursion-theoretic strength, by means of interpretations in each direction, over at least the type 1 quantifier-free theory.

We also gave several further results about the coterms of $\C$, for instance models of hereditarily computable functions, continuity at type 2, strong normalisation and confluence, and a translation to terms $\T$ computing the same funtional, at all types.

As mentioned in the Introduction, our ultimate motivation is to bring one of the hallmarks of 20\textsuperscript{th} century proof theory to the setting of non-wellfounded proofs: a bona fide correspondence between theories in predicate logic and type systems for functional programming languages.
The obtention of a Dialectica-style proof interpretation, cf.~\cite{dialectica}, between circular versions of arithmetic \cite{Sim17:cyclic-arith,BerTat17:lics,Das19:log-comp-cyc-arith} and the theory $\C$ here presented is thus the natural next step.

\bibliography{ct-refs}
\bibliographystyle{alpha}

\appendix

\newpage

\section{Partial cut-elimination for $\T$}
\label{sect:free-cut-elim}

In this section we prove a (presumably folklore) result that derivations of $\T$ can be partially normalised so that all types occurring have level dominated by that of a recursor (cf.~Proposition~\ref{prop:free-cut-elim}).
It is well-known that the complexity of cut-elimination (at the level of provability of sequents) is `superexponential' (see, e.g., \cite{hajek-pudlak:metamathematics}),
so this argument should go through already inside $\RCA$ (and even weaker theories), though we stop short of giving a complexity-theoretic analysis to this effect, so as not to complicate the exposition.

\subsection{Adapting the system for cut-elimination}

For simplicity, we no longer consider the exchange rule, but rather close all the instances of the other rules by composition with exchange, making formal how we were already informally typesetting rules in the main body of this paper. The axiomatisation of such combinators and their interpretations in the standard model $\nmod$ are as expected, and we do not formalise this in detail.

We shall also assume that all identity initial sequents are in atomic form.
Formally, we define $\id_\nat'$ as $\id_\nat$, and inductively define $\id_{\sigma\to\tau}'$ as below, left.
Clearly $\id_\nat' = \id_\nat$, and we give an argument by induction on type that $\id_{\sigma \to \tau}'= \id_{\sigma \to \tau}$ below, right (under $\extensionality$):
\[
\vlderivation{
	\vlin{\rightimp}{}{\sigma \to \tau \seqar \sigma \to \tau}{
	\vliin{\leftimp}{}{\sigma \to \tau, \sigma \seqar \tau}{
		\vltr{\id_\sigma' }{\sigma \seqar \sigma}{\vlhy{\ }}{\vlhy{\ }}{\vlhy{\ }}
	}{
		\vltr{\id_\tau'}{\tau \seqar \tau}{\vlhy{\ }}{\vlhy{\ }}{\vlhy{\ }}
	}
	}
}
\qquad
\begin{array}{rcll}
\id_{\sigma \to \tau}'x\, y & = & \rightimp \, (\leftimp\, \id_\sigma'\id_\tau')\, x\, y & \text{by definition of $\id'$}\\
& = & \leftimp \, \id_\sigma'\id_\tau' x\, y & \text{by $\rightimp$ axiom}\\
& = & \id_\tau' (x\, (\id_\sigma' y )) \\
& = & x\, y & \text{by inductive hypotheses}\\
& = & \id_{\sigma \to \tau}\, x\, y & \text{by $\id$ axiom}
\end{array}
\]

Furthermore, to simplify the termination argument for cut-elimination, we shall admit contraction by absorbing it into the other rules.
For this, we shall need the following variant $\leftimp' $ of the left-arrow rule, and its corresponding axioms:
\begin{equation}
\label{eqn:rule-leftimp-with-contraction}
\vliinf{\leftimp' }{}{\vec \sigma, \rho \to \sigma \seqar \tau}{\vec \sigma, \rho \to \sigma \seqar \rho}{\vec \sigma, \rho \to \sigma, \sigma \seqar \tau}
\qquad
\leftimp' s\, t\, \vec x\, y \ = \ t\, \vec x\, y\, (y\, (s\, \vec x\, y ))
\end{equation}

Note that this rule is easily derivable in $\T$, along with its corresponding equation, by combining the $\leftimp$ and $\cntr$ rules:
\[
\vlderivation{
	\vlin{\cntr}{}{\vec \sigma, \underline{\rho \to \sigma} \seqar \tau}{
	\vliin{\leftimp}{}{\vec \sigma, \rho \to \sigma, \underline{\rho \to \sigma} \seqar \tau}{
		\vlhy{\vec \sigma, \rho \to \sigma \seqar \rho}
	}{
		\vlhy{\vec \sigma, \rho \to \sigma, \sigma \seqar \tau}
	}
	}
}
\qquad
\begin{array}{rl}
&\cntr\, (\leftimp\, s\, t)\, \vec x\, y  \\
 = & \leftimp\, s\, t\, \vec x\, y\, y \\
	 = & t\, \vec x\, y (y\, (s\, \vec x\, y))
\end{array}
\]

We may freely use $\leftimp' $ in derivations, understanding it to be shorthand for the derivation above.
A \emph{contraction-free} derivation is one whose only $\cntr$-steps are already part of $\leftimp' $ steps.

\begin{proposition}
\label{prop:contr-free}
For every $\T$ derivation $t: \vec \sigma \seqar \tau$ there is a contraction-free derivation $t' : \vec \sigma \seqar \tau$, possibly with more $\cut_\nat$-occurrences, s.t.\ $\T\proves t = t'$.
\end{proposition}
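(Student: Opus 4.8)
The plan is to proceed by straightforward induction on the structure of the derivation $t$, showing that contractions can be pushed upward and either absorbed into $\leftimp$ steps (via the derived rule $\leftimp'$ from \eqref{eqn:rule-leftimp-with-contraction}) or permuted through and eventually eliminated at the initial sequents. First I would set up the induction: for each $\T$ rule, assume by the inductive hypothesis that the immediate subderivations have been made contraction-free with provably equal interpretations, and then handle the outermost step. For all rules other than $\cntr$, the translation is essentially the identity on the last inference, simply reassembling the already-contraction-free subderivations, and provable equality follows from the inductive hypotheses together with the Leibniz property (congruence of $=$ under the combinators); these cases are routine.

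The interesting case is when $t$ concludes with a $\cntr$ step, say contracting two copies of a type $\pi$ in the antecedent into one, with subderivation $s$. Here I would analyse the topmost rule $\mathsf r$ of $s$ (after applying the inductive hypothesis to make $s$ contraction-free) according to whether $\mathsf r$ acts on one of the two contracted occurrences of $\pi$ or not. If $\mathsf r$ is principal for neither contracted occurrence (e.g. it is a $\wk$, $0$, $\succ$, a left-rule or cut acting on some other formula, or a $\rightimp$), the $\cntr$ commutes upward past $\mathsf r$, and we recurse. If $\mathsf r$ introduces one of the contracted $\pi$'s — the key subcase being when $\pi = \rho \to \sigma$ and $\mathsf r = \leftimp$ introducing that $\pi$ — then the configuration $\cntr$-over-$\leftimp$ is exactly what $\leftimp'$ was designed to absorb, as shown in the derivation displayed just after \eqref{eqn:rule-leftimp-with-contraction}; here I would replace the two steps by a single $\leftimp'$ step on (the inductive-hypothesis translation of) the premisses, and verify the equation using the $\leftimp'$ axiom from \eqref{eqn:rule-leftimp-with-contraction} chased against the $\cntr$ and $\leftimp$ axioms. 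The remaining possibility is that $\pi = \nat$ and the contracted occurrences trace back to $\id_\nat$ or to another structural rule — for $\nat$-contractions one ultimately reaches either an atomic identity or a weakening/structural step, and the contraction either disappears or is handled by the commutation cases above; since contractions on $\nat$ do not increase type level, we may freely leave $\cut_\nat$-steps around, which is why the statement only promises ``possibly with more $\cut_\nat$-occurrences.''

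For the termination of this procedure I would use a measure on derivations that strictly decreases under each commutation or absorption step: for instance, a multiset of the heights at which $\cntr$-steps (not already part of a $\leftimp'$) occur, ordered so that pushing a contraction upward past any rule strictly decreases the measure, and absorbing into $\leftimp'$ removes an element entirely; this terminates by well-foundedness of the multiset order. Provable equality throughout is maintained by the fact that every local transformation is justified by the equational axioms of Figures~\ref{fig:eq-ax-seq-calc-min} and \ref{fig:rec-axioms} together with the derived axioms for $\leftimp'$ and $\id'$, so $\T \proves t = t'$ follows by composing the equations established at each step via transitivity and the Leibniz property.

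The main obstacle I anticipate is bookkeeping in the commutation cases: when $\cntr$ is pushed past a branching rule ($\cut$, $\leftimp$), one must be careful about whether the contracted formula occurs in one or both premisses, possibly duplicating the contraction into each branch, and one must check that the chosen measure still decreases and that the induced equation is correct under the axioms. This is the kind of routine-but-delicate permutation analysis familiar from cut-elimination arguments, and it is precisely why absorbing contraction into $\leftimp'$ (rather than trying to eliminate it outright) simplifies matters — it localises the duplication to exactly the places where it is genuinely needed.
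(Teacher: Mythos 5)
Your overall strategy (structural induction, commuting $\cntr$ upward, absorbing $\cntr$-over-$\leftimp$ into $\leftimp'$) matches the paper's in outline, but you miss the one case that is actually the point of the proposition: a $\cntr$ step immediately below a $\rec$ step which contracts the \emph{recursion parameter} (the principal $\nat$ of $\rec$) with a $\nat$ in the context. Your case analysis asserts that $\nat$-contractions either commute past everything or ``disappear'' at an atomic identity or structural step, but this configuration does neither. The contraction cannot be pushed into the premisses of $\rec$: in the left premiss the recursion parameter is absent altogether, and in the right premiss the parameter occurrence stands for the \emph{predecessor} $y$ of the conclusion's value $\succ y$, so naively identifying it with the context copy changes the computed functional. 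The paper resolves exactly this case by rebuilding the premisses with $\cut_\nat$ steps against $0$ (left premiss) and $\succ$ (right premiss) to re-supply the correct value of the contracted context variable at each stage of the recursion, and then verifies $\T \proves t = t'$ by an \emph{object-level} induction on the recursion parameter (using the $\induction$ rule), not merely by chasing the equational axioms. This is also the true source of the proviso ``possibly with more $\cut_\nat$-occurrences'' in the statement, which your proposal instead explains away with the vague remark that one ``may freely leave $\cut_\nat$-steps around.''

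Without this case your procedure gets stuck (or, if you force the commutation, produces a derivation that is not provably equal to $t$), so the proof as written is incomplete at its crux. The remaining content of your proposal — the routine commutations, the $\leftimp'$ absorption, and the termination measure (the paper just uses structural induction, so the multiset measure is unnecessary) — is fine, but you need to add the $\cntr$-below-$\rec$ transformation with the $0$/$\succ$ cuts and its verification by object-level induction to have a correct argument.
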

\begin{proof}
[Proof sketch]
Follows by a straightforward induction on the structure of $t$, commuting $\cntr$ steps over those above.
The most interesting commutation is when a $\cntr$ step is immediately below a $\rec$ step, for which we must introduce some $\nat$-cuts:
\[
\vlderivation{
\vlin{\cntr}{}{\vec \sigma, \nat \seqar \tau}{
\vliin{\rec}{}{\vec \sigma , \red\nat, \underline{\blue \nat} \seqar \tau}{
	\vltr{s}{\vec \sigma , \red \nat \seqar \tau}{\vlhy{\ }}{\vlhy{\ }}{\vlhy{ \ }}
}{
	\vltr{t}{\vec \sigma , \red \nat , \blue \nat , \tau \seqar \tau}{\vlhy{\ }}{\vlhy{\ }}{\vlhy{\ }}
}
}
}
\quad\mapsto\quad
\vlderivation{
\vliin{\rec}{}{\vec \sigma, \nat \seqar \tau}{
	\vliin{\cut}{}{\vec \sigma \seqar \tau}{
		\vlin{0}{}{\seqar \nat}{\vlhy{}}
	}{
		\vltr{s}{\vec \sigma , \red \nat \seqar \tau}{\vlhy{\ }}{\vlhy{\ }}{\vlhy{ \ }}
	}
}{
	\vliin{\cut}{}{\vec \sigma, \blue{\nat}, \tau \seqar \tau}{
		\vlin{\succ}{}{\blue{\nat} \seqar \nat}{\vlhy{}}
	}{
		\vltr{t}{\vec \sigma , \red \nat , \blue \nat , \tau \seqar \tau}{\vlhy{\ }}{\vlhy{\ }}{\vlhy{\ }}
	}
}
}
\]
We can verify that the derivation after transformation is provably equivalent in $\T$ by an object-level induction on the recursion parameter:
\[
\begin{array}{rcll}
 \rec\, (\cut\, 0\, s)\, (\cut\, \succ\, t)\, \vec x\, 0 & = & \cut\, 0\, s\, \vec x & \text{by $\rec$ axioms}\\
& = & s\, \vec x\, 0 & \text{by $\cut$ axiom}\\
& = & \rec\, s\, t\, \vec x\, 0\, 0 & \text{by $\rec$ axioms}\\
&=& \cntr\, (\rec\, s\, t)\, \vec x\, 0 & \text{by $\cntr $ axiom}\\
\noalign{\medskip}
 \rec\, (\cut\, 0\, s)\, (\cut\, \succ\, t)\, \vec x\, \succ y & = & \cut\, \succ\, t\, \vec x\, y\, (\rec\, (\cut\, 0\, s)\, (\cut\, \succ\, t)\, \vec x\,  y) & \text{by $\rec$ axioms}\\
& = &  \cut\, \succ\, t\, \vec x\, y\, (\cntr\, (\rec\, s\, t)\, \vec x\, y) & \text{by inductive hypothesis}\\
& = & t\, \vec x\, \succ y\, y\, (\cntr\, (\rec\, s\, t)\, \vec x\, y) & \text{by $\cut$ axiom}\\
& = & \rec\, s\, t\, \vec x\, \succ y\, \succ y & \text{by $\rec$ axioms}\\
& = & \cntr\, (\rec\, s\, t)\, \vec x\, \succ y & \text{by $\cntr$ axiom} \qedhere
\end{array}
\]
\end{proof}

We shall also use a generalised version of the cut rule that incorporates both context sharing and context splitting behaviour:
\begin{equation}
\label{eqn:rule-cut-with-weakening}
\vliinf{\cut'}{}{\vec \rho_0, \vec \rho_1, \vec \sigma \seqar \sigma}{\vec \rho_0, \vec \sigma, \sigma \seqar \tau}{\vec \rho_1, \vec \sigma \seqar \tau}
\qquad
\cut' s\, t\,  \vec x_0\, \vec x_1\, \vec y \ = \ t\, \vec x_1\, \vec y\, (s\, \vec x_0\, \vec y)
\end{equation}
The point of this rule is to absorb extraneous weakening steps from cut-reductions into cuts.
Note that we have been implicitly using this until now, e.g.\ as in the derivation in the proof sketch of Proposition~\ref{prop:contr-free}.
Similarly to $\leftimp' $, the rule $\cut' $ and its corresponding equation can be derived from $\cut$ and $\wk$:
\[
\vlderivation{
	\vliin{\cut}{}{\vec \rho_0, \vec \rho_1, \vec \sigma \seqar \sigma }{
		\vliq{\wk}{}{\vec \rho_0, \underline{\vec \rho_1} , \vec \sigma \seqar \sigma}{
		\vlhy{\vec \rho_0, \vec \sigma \seqar \sigma}
		}
	}{
		\vliq{\wk}{}{\underline{\vec \rho_0}, \vec \rho_1, \vec \sigma, \sigma \seqar \tau}{
		\vlhy{\vec \rho_1, \vec \sigma \seqar \tau}
		}
	}
}
\qquad
\begin{array}{rl}
 & \cut \, (\wk^* s)\, (\wk^* t)\, \vec x_0\, \vec x_1\, \vec y\\
  = & \wk^* t\, \vec x_0\, \vec x_1\, \vec y ( \wk^* s\, \vec x_0\, \vec x_1\, \vec y ) \\
	= & t\,  \vec x_1\, \vec y\, ( \wk^* s\, \vec x_0\, \vec x_1\, \vec y ) \\
	 = & t\,  \vec x_1\, \vec y \, (  s\, \vec x_0\, \vec y )
\end{array}
\]

Again, we may freely use $\cut' $ in derivations, understanding it to be shorthand for the derivation above.
However, crucially, we will only count instances of $\leftimp' $ and $\cut' $ as single steps, in order to facilitate the upcoming induction arguments.\footnote{Another approach could be to ignore all $\wk$ and $\cntr$ steps (`weak inferences') when counting the size/depth of derivations, but then we would need a lemma for commuting above those steps in order to reduce an induction measure based on size/depth.}

\subsection{Main free-cut elimination argument}
As usual, our overall argument will be by an induction on the complexity of cut-formulas:
\begin{itemize}
	\item The \emph{level} of a $\cut'$ on $\sigma$ is just $\level(\sigma)$.
	\item The $d$-level of a derivation is the multiset of all its cut-levels $>d$.
\end{itemize}
We assume that multisets of natural numbers are ordered in the usual way.\footnote{It is well-known that for any well-order $<$ of order type $\alpha$, the corresponding multiset order has order type $\omega^\alpha$. Thus the multiset order on even $\omega$ is already not available in $\RCA$, whose proof-theoretic ordinal is $\omega^\omega$.
However, cut-elimination arguments based on the multiset ordering are nonetheless typically formalisable in $\RCA$ thanks to explicit complexity bounds on the multiset branching, i.e.\ the number of smaller number occurrences that replace a larger one when reducing.}
We may write $\cut_d'$, $\cut_{\leq d}'$ and $\cut_{>d}'$ for a $\cut'$ instance of level $d$, $\leq d$ or $>d$ respectively.

\begin{lemma}
\label{lem:lower-multiset-of-cuts}
	Let $t: \vec \sigma \seqar \tau$ be a contraction-free $\nT n$-derivation with at least one $\cut_{>n}'$. Then there is a contraction-free $\nT n$-derivation $t': \vec \sigma \seqar \tau$ of lower $n$-level such that $\nT n \proves t = t' $.
\end{lemma}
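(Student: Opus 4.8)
The plan is to perform one round of the standard Gentzen-style cut-reduction, choosing a \emph{topmost} maximal-level cut and pushing it upwards until it disappears, then iterating. Concretely, pick an occurrence of a $\cut'_{d}$ with $d > n$ maximal, and among all such, one that is uppermost, i.e.\ no $\cut'_{>n}$ occurs in either premiss. Since the premisses are $\nT n$-derivations with $\cut'_{\leq n}$ only (in the relevant subderivations), we may reduce this one cut completely. Because the overall derivation is finite (a $\T$-term), this single-cut elimination can be iterated, but for the lemma we only need to remove \emph{one} maximal-level cut, which strictly decreases the $n$-level multiset.

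First I would set up the single-cut elimination by induction on the sum of the sizes of the two premisses of the chosen $\cut'_d$. The cases split in the usual way. \emph{(a) Logical/principal case:} both premisses have the cut-formula $\sigma = \rho \to \sigma'$ principal, i.e.\ the left premiss ends with $\rightimp$ and the right with $\leftimp'$ (or $\leftimp$); here one rewrites the $\cut'$ on $\rho \to \sigma'$ into $\cut'$s on $\rho$ and on $\sigma'$, both of level $<d$, using the equational axioms for $\rightimp$, $\leftimp'$ and $\cut'$ to verify provable equality in $\nT n$ — exactly the kind of computation carried out throughout Section~\ref{sect:c-sim-t}. Since the atomic identity axioms have been reduced to atomic form and the base type cut is $\cut'_0$, there is no `identity against identity' obstruction at level $>n\geq 0$. \emph{(b) Commutation cases:} the cut-formula is not principal in one of the premisses; then the last rule of that premiss is permuted below the $\cut'$, yielding one or two $\cut'_d$s on premisses of strictly smaller size, to which the induction hypothesis of the single-cut argument applies. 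The delicate commutations are against $\rec$ (where, as in Proposition~\ref{prop:contr-free}, one may need to introduce auxiliary $\cut'_\nat$ steps, of level $0 \le n$, verified by an object-level induction on the recursion parameter) and against another $\cut'_{\le n}$ (permuted freely). In every case, provable equality of the transformed derivation is checked using only the equational axioms of Figures~\ref{fig:eq-ax-seq-calc-min} and \ref{fig:rec-axioms}, together with $\extensionality$ where function arguments must be compared, precisely as in the verifications of Lemma~\ref{lem:ded-thm-realised}.

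The key bookkeeping point, and what makes the lemma's conclusion come out, is that $\cut'$ and $\leftimp'$ absorb weakening and contraction, so a single commutation never \emph{creates} a new $\cut'$ of level $\geq d$: the new cuts are either on subformulas of $\sigma$ (hence level $<d$) or on $\nat$ (level $0$) or are copies of $\cut'_{\le n}$s already present. Thus after eliminating the chosen topmost $\cut'_d$, the multiset of cut-levels $>n$ has had one copy of the maximum $d$ removed and possibly gained finitely many entries $<d$ and $>n$ — but all strictly below $d$ — so the $n$-level strictly decreases in the multiset order. Contraction-freeness is preserved because the only $\cntr$ we ever introduce is the one hidden inside $\leftimp'$, and the auxiliary $\rec$-commutation uses $\cut'_\nat$ rather than bare $\cntr$.

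The main obstacle I anticipate is the $\rec$-commutation case: when a maximal-level $\cut'$ must be permuted past a $\rec$ step whose conclusion supplies the cut-formula, one cannot simply commute, since $\rec$ is binary and iterates. The fix is the same device as in Proposition~\ref{prop:contr-free} — distribute the cut into both premisses of the $\rec$ and verify the resulting derivation is $\nT n$-provably equal by induction on the recursion parameter — but one must check that the cut-formula's level is unchanged and that no new high-level cut is spawned (the newly inserted cuts are on $\nat$). A secondary subtlety is ensuring the uppermost-cut choice is legitimate: among maximal-level cuts, pick one all of whose proper subderivations contain no $\cut'$ of that maximal level, so that the single-cut elimination genuinely applies; this is possible because there are finitely many cuts. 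I would also remark (without a full complexity analysis, as the paper does) that every step here is primitive recursive in the derivation, so the whole argument, and hence Proposition~\ref{prop:free-cut-elim}, formalises in $\RCA$.
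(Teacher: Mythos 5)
Your overall strategy is the same family of argument as the paper's: isolate a topmost $\cut'_{>n}$, permute it upwards through the non-principal cases (with weakening and contraction absorbed into $\cut'$ and $\leftimp'$, and with the $\rec$ case handled by distributing the cut into both premisses and verifying provable equality by an object-level induction on the recursion parameter), fire a key case when $\rightimp$ meets $\leftimp'$, and check $\nT n$-provable equality with the equational axioms throughout. The gap is in the key case, which is exactly the place the paper flags as the only transformation that introduces more cuts.

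Because $\leftimp'$ absorbs contraction, it keeps its principal formula $\rho\to\sigma$ in \emph{both} of its premisses $r:\vec\rho_1,\vec\sigma,\rho\to\sigma\seqar\rho$ and $t:\vec\rho_1,\vec\sigma,\rho\to\sigma,\sigma\seqar\tau$. So you cannot rewrite the $\cut'$ on $\rho\to\sigma$ into ``cuts on $\rho$ and on $\sigma$'' alone: in the reduct you describe, $\rho\to\sigma$ still occurs in the antecedents of $r$ and $t$, and the resulting derivation does not even conclude $\vec\rho_0,\vec\rho_1,\vec\sigma\seqar\tau$. The paper's key case duplicates the $\rightimp$-premiss $s'$ and introduces two further cuts $\cut'\,s'\,r$ and $\cut'\,s'\,t$ on $\rho\to\sigma$ itself, i.e.\ of the \emph{same} level $d$, which are then discharged by applying the induction hypothesis to these two strictly smaller subderivations (and whose provable equalities must be verified). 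Your load-bearing claim that ``a single commutation never creates a new $\cut'$ of level $\geq d$'' is therefore false, and with it the multiset bookkeeping as you state it. Your inner induction on the sum of premiss sizes could in principle absorb these residual same-formula cuts (their combined premiss size is smaller), but the proposal neither notices that they arise nor accounts for them in the $n$-level computation.

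A second, related inaccuracy: the justification ``cuts on subformulas of $\sigma$, hence level $<d$'' is not valid. Since $\level(\rho\to\sigma)=\max(1+\level(\rho),\level(\sigma))$, only the argument type $\rho$ has strictly smaller level; the codomain $\sigma$ may satisfy $\level(\sigma)=\level(\rho\to\sigma)$. So the strict decrease of the $n$-level cannot be read off from a blanket ``subformula'' claim, and needs the more careful arrangement of the paper's key case (induction hypothesis on the two smaller subderivations carrying the residual level-$d$ cuts) rather than the bookkeeping you give. Finally, a minor point on your choice of cut: requiring simultaneously that $d$ be maximal among levels $>n$ and that no $\cut'_{>n}$ occur in either premiss is not always realisable (a maximal-level cut may have lower-level but still $>n$ cuts above it); the paper simply takes a topmost $\cut'_{>n}$ of whatever level, and your setup should be adjusted accordingly.
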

\begin{proof}
We proceed by induction on $|t|$, by inspection of a topmost $\cut_{>n}$.
All permutation cases are given in the subsequent subsections, as well as arguments verifying that the derivations before and after transformation are provably equivalent in $\nT n$.
In each case, we may have to apply the induction hypothesis zero, one or two times to smaller subderivations.
Note that the only transformation that introduces more cuts (even after applying inductive hypotheses) is the $\rightimp$-$\leftimp'$ `key' case.
\end{proof}

Proposition~\ref{prop:free-cut-elim} is now an immediate consequence of the following result:
\begin{theorem}
	For any $\nT n$-derivation $t:\vec \sigma \seqar \tau$ there is a $\cut_{>n}'$-free $\nT n$-derivation $t':\vec \sigma \seqar \tau$, such that $\T \proves t'\vec x \ =  \ t\, \vec x$.
\end{theorem}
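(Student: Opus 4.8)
The plan is to run a standard cut-elimination induction, but only driving down the \emph{big} cuts (those of level $>n$), so that we reach a $\cut_{>n}'$-free derivation without ever touching cuts of level $\leq n$ (which are harmless in $\nT n$, and in fact unavoidable once we have absorbed contractions into $\leftimp'$). First I would apply Proposition~\ref{prop:contr-free} to replace $t$ by a provably equal contraction-free $\nT n$-derivation of the same sequent; this costs only extra $\cut_\nat$ steps, which are of level $0\leq n$ and so do not contribute to the $n$-level. From here the argument is a well-founded recursion on the $n$-level of the derivation: while the derivation has at least one $\cut_{>n}'$, invoke Lemma~\ref{lem:lower-multiset-of-cuts} to obtain a provably equal contraction-free $\nT n$-derivation of strictly smaller $n$-level.

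The recursion terminates because the $n$-levels are multisets of natural numbers, which are well-ordered; each application of Lemma~\ref{lem:lower-multiset-of-cuts} strictly decreases this measure, so after finitely many steps we reach a contraction-free $\nT n$-derivation $t'$ whose $n$-level is empty, i.e.\ one with no $\cut'$ of level $>n$ at all — a $\cut_{>n}'$-free derivation of $\vec\sigma\seqar\tau$. Along the way, provable equality in $\nT n$ is maintained at each step (both Proposition~\ref{prop:contr-free} and Lemma~\ref{lem:lower-multiset-of-cuts} supply $\nT n \proves (\text{old}) = (\text{new})$), and since $\nT n$-provable equality is transitive (it is an equivalence relation closed under the equality axioms of Figure~\ref{fig:equality-axioms}), we conclude $\nT n \proves t = t'$, hence in particular $\T \proves t'\,\vec x = t\,\vec x$ after applying the relevant rule-as-constant axioms, or simply by weakening the theory. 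Since $\nT n$ is a subtheory of $\T$, the displayed equation $\T\proves t'\,\vec x = t\,\vec x$ follows a fortiori.

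For the deduction of Proposition~\ref{prop:free-cut-elim} itself: the output derivation $t'$ is $\cut_{>n}'$-free and contraction-free, and a simple inspection of the remaining rules shows that in a $\cut_{>n}'$-free contraction-free $\nT n$-derivation of a sequent $\vec\sigma\seqar\tau$ in which $\tau$ and each $\sigma_i$ have level $\leq n$, every type occurrence has level $\leq n$: cuts introduce only types of level $\leq n$ by hypothesis, $\leftimp'$ (and $\leftimp$) on $\rho\to\sigma$ with $\rho\to\sigma$ a subformula of the context or endsequent introduces only the strictly smaller $\rho$ and $\sigma$, $\rightimp$ and $\rec$ and $\cond$ act within the level bound, and the identity axioms are atomic. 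Thus $t'^\nmod = t'^{\nmod}$ trivially and $\nT n \proves t = t'$ as required. The part that actually carries the content is Lemma~\ref{lem:lower-multiset-of-cuts}, whose proof is the case analysis of cut-permutations (supplied in the subsequent subsections); the only subtlety there — and the main obstacle — is the $\rightimp$-versus-$\leftimp'$ `key' reduction, which is the unique step that \emph{creates} new cuts. One must check that the newly created cuts have level strictly below that of the cut being eliminated, so that the multiset measure still strictly decreases despite the increase in cardinality; this is exactly the reason for reducing a \emph{topmost} $\cut_{>n}'$ and for having absorbed contraction into $\leftimp'$, so that the induction on $|t|$ inside Lemma~\ref{lem:lower-multiset-of-cuts} and the outer induction on the multiset measure do not interfere.
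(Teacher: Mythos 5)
Your proposal is correct and follows essentially the same route as the paper: normalise to a contraction-free derivation via Proposition~\ref{prop:contr-free}, then induct on the $n$-level (well-founded by the multiset order), invoking Lemma~\ref{lem:lower-multiset-of-cuts} at each step and chaining the provable equalities. One small caveat in your aside on the key case: the paper's $\rightimp$-$\leftimp'$ reduction actually creates new cuts on $\rho\to\sigma$ of the \emph{same} level, which are disposed of by the inner induction on $|t|$ applied to the smaller subderivations rather than by a strict drop in cut-level, but since you use the Lemma as a black box this does not affect the correctness of your proof of the theorem.
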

\begin{proof}
Assume $t$ is contraction-free by Proposition~\ref{prop:contr-free}.
The result follows by induction on the $n$-level of $t$, applying Lemma~\ref{lem:lower-multiset-of-cuts} above for the inductive steps.
\end{proof}

\subsection{$\rightimp$-$\leftimp'$ key case}
\[
\vlderivation{
	\vliin{\cut'}{}{\vec \rho_0, \vec \rho_1 , \vec \sigma \seqar \tau}{
		\vlin{\rightimp}{}{\vec \rho_0, \vec \sigma \seqar \rho \to \sigma}{
		\vltr{s}{\vec \rho_0, \vec \sigma, \rho\seqar \sigma}{\vlhy{\ }}{\vlhy{\ }}{\vlhy{\ }}
		}
	}{
		\vliin{\leftimp' }{}{\vec \rho_1, \vec \sigma, \rho \to \sigma \seqar \tau}{
			\vltr{r}{\vec \rho_1, \vec \sigma, \rho \to \sigma \seqar \rho}{\vlhy{\ }}{\vlhy{\ }}{\vlhy{\ }}
		}{
			\vltr{t}{\vec \rho_1, \vec \sigma , \rho \to \sigma , \sigma \seqar \tau}{\vlhy{\ }}{\vlhy{\ }}{\vlhy{\ }}
		}
	}
}
\]

is transformed to:

\[
\vlderivation{
\vliin{\cut}{}{\vec \rho_0, \vec \rho_1, \vec \sigma \seqar \tau}{
		\vliin{\cut'}{}{\vec \rho_0, \vec \rho_1 , \vec \sigma \seqar \sigma}{
			\vliin{\cut'}{}{\vec \rho_0, \vec \rho_1, \vec \sigma \seqar \rho}{
				{
				\vltr{s'}{
				\vec \rho_0, \vec \sigma \seqar \rho \to \sigma
				}{\vlhy{\ }}{\vlhy{\ }}{\vlhy{\ }}
				}
			}{
				\vltr{r}{\vec \rho_1, \vec \sigma, \rho \to \sigma \seqar \rho}{\vlhy{\ }}{\vlhy{\ }}{\vlhy{\ }}
			}
		}{
			\vltr{s}{\vec \rho_0, \vec \sigma, \rho\seqar \sigma}{\vlhy{\ }}{\vlhy{\ }}{\vlhy{\ }}
		}
	}{
		\vliin{\cut'}{}{\vec \rho_0, \vec \rho_1, \vec \sigma, \sigma \seqar \tau}{
			{
			\vltr{s'}{
			\vec \rho_0, \vec \sigma \seqar \rho \to \sigma
			}{\vlhy{\ }}{\vlhy{\ }}{\vlhy{\ }}
		}
	}{
		\vltr{t}{\vec \rho_1, \vec \sigma , \rho \to \sigma , \sigma \seqar \tau}{\vlhy{\ }}{\vlhy{\ }}{\vlhy{\ }}
	}
}
}
\]
where we write $s'$ for the derivation 
\[
\vlderivation{
	\vlin{\rightimp}{}{\vec \rho_0, \vec \sigma \seqar \rho \to \sigma}{
			\vltr{s}{\vec \rho_0, \vec \sigma, \rho\seqar \sigma}{\vlhy{\ }}{\vlhy{\ }}{\vlhy{\ }}
			}
}
\]

We verify that $\nT n$ proves the equality of the derivations before and after transformation as follows:
\[
\begin{array}{rll}
& \cut\, (\cut' (\cut' s' r)\, s )\, (\cut' s' t )\, \vec x_0\, \vec x_1\, \vec y &  \\ 
= & \cut' s' t \, \vec x_0\, \vec x_1\, \vec y\, (\cut' (\cut' s' r)\, s \, \vec x_0\, \vec x_1\, \vec y) & \text{by $\cut$ axiom}\\
 = & \cut' s' t \, \vec x_0\, \vec x_1\, \vec y\, ( s\, \vec x_0\, \vec y\, ( \cut' s' r\, \vec x_0\, \vec x_1\, \vec y ) ) & \text{by $\cut'$ axiom} \\
 =&  \cut' s' t \, \vec x_0\, \vec x_1\, \vec y\, ( s\, \vec x_0\, \vec y\, ( r\,  \vec x_1\, \vec y\, (s' \vec x_0\, \vec y) ) ) & \text{by $\cut'$ axiom}\\
 =& t\, \vec x_1\, \vec y\, (s' \vec x_0\, \vec y)\, ( s\, \vec x_0\, \vec y\, ( r\,  \vec x_1\, \vec y\, (s' \vec x_0\, \vec y) ) ) & \text{by $\cut'$ axiom} \\
 = & t\, \vec x_1\, \vec y\, (s' \vec x_0\, \vec y)\, ( s'  \vec x_0\, \vec y\, ( r\,  \vec x_1\, \vec y\, (s' \vec x_0\, \vec y) ) ) & \text{by $\rightimp$ axiom} \\
 = & \leftimp \, r\, t\, \vec x_1\, \vec y\, (s' \vec x_0\, \vec y) & \text{by $\leftimp$ axiom}\\
 =& \cut' s' (\leftimp\, r\, t)\, \vec x_0\, \vec x_1\, \vec y & \text{by $\cut' $ axiom}
\end{array}
\]

\subsection{$\cut'$-$\rec$ commutative case}
The most interesting commutative case is commuting a $\cut_{>n}'$ above a $\rec_{\leq n}$ step:
\[
\vlderivation{
	\vliin{\cut'}{}{\vec \rho_0, \vec \rho_1, \vec \sigma , \nat \seqar \tau}{
		\vltr{r}{\vec x_0, \vec \sigma , \nat \seqar \sigma}{\vlhy{\ }}{\vlhy{ \ }}{\vlhy{\ }}
	}{
		\vliin{\rec_\tau}{}{\vec \rho_1, \vec \sigma, \sigma , \nat \seqar \tau}{
			\vltr{s}{\vec \rho_1, \vec \sigma, \sigma \seqar \tau}{\vlhy{\ }}{\vlhy{ \ }}{\vlhy{ \ }}
		}{
			\vltr{t}{\vec \rho_1, \vec \sigma, \sigma , \nat ,\tau \seqar \tau}{\vlhy{\ }}{\vlhy{\ }}{\vlhy{ \ }}
		}
	}
}
\]
is transformed to,
\[
\vlderivation{
\vlin{\cntr}{}{\vec \rho_0, \vec \rho_1, \vec \sigma , \underline{\nat} \seqar \tau}{
\vliin{\rec_\tau}{}{\vec \rho_0, \vec \rho_1, \vec \sigma , \red{\nat}, \underline{{\blue{\nat}}}  \seqar \tau}{
	\vliin{\cut'}{}{\vec \rho_0, \vec \rho_1, \vec \sigma, \red{\nat} \seqar \tau}{
		\vltr{r}{\vec x_0, \vec \sigma , \red{\nat} \seqar \sigma}{\vlhy{\ }}{\vlhy{ \ }}{\vlhy{\ }}
	}{
		\vltr{s}{\vec \rho_1, \vec \sigma, \sigma \seqar \tau}{\vlhy{\ }}{\vlhy{ \ }}{\vlhy{ \ }}
	}
}{
	\vliin{\cut'}{}{\vec \rho_0, \vec \rho_1, \vec \sigma, \red{\nat }, \blue{\nat} , \tau \seqar \tau}{
		\vltr{r}{\vec x_0, \vec \sigma , \red{\nat} \seqar \sigma}{\vlhy{\ }}{\vlhy{ \ }}{\vlhy{\ }}
	}{
		\vltr{t}{\vec \rho_1, \vec \sigma, \sigma , \blue{\nat} ,\tau \seqar \tau}{\vlhy{\ }}{\vlhy{\ }}{\vlhy{ \ }}
	}
}
}
}
\]
where we have underlined principal types and used colours to identify type occurrences according to ancestry.

We verify that $\nT n$ proves the equality of the derivations before and after transformation as follows.
First we show that,
\begin{equation}
\label{eqn:comm-cut-rec-ih}
\rec\, (\cut' r\, s)\, (\cut' r\, t)\, \vec x_0\, \vec x_1\, \vec y\, a\, z \ = \ \rec\, s\, t\, \vec x_1\, \vec y\, (r\, \vec x_0\, \vec y\, a)\, z
\end{equation}
by (object-level) induction on $z$.
For the base case:
\[
\begin{array}{rll}
& \rec\, (\cut' r\, s)\, (\cut' r\, t)\, \vec x_0\, \vec x_1\, \vec y\, a\, 0 & \\
= & \cut' r\, s\, \vec x_0\, \vec x_1\, \vec y\, a & \text{by $\rec$ axioms}\\
=& s\, \vec x_1\, \vec y\, (r\, \vec x_0\, \vec y\, a) & \text{by $\cut'$ axiom}\\
= & \rec\, s\, t\, \vec x_1 \, \vec y\, (r\, \vec x_0\, \vec y\, a)\, 0 & \text{by $\rec$ axioms}
\end{array}
\]
For the inductive step:
\[
\begin{array}{rll}
& \rec\, (\cut' r\, s)\, (\cut' r\, t)\, \vec x_0\, \vec x_1\, \vec y\, a\, \succ z & \\
= & \cut' r\, t\, \vec x_0\, \vec x_1\, \vec y\, a\, z\, (\rec\, (\cut' r\, s) \, (\cut' r\, t)\, \vec x_0\, \vec x_1\, \vec y\, a\, z ) & \text{by $\rec$ axioms}\\
=& \cut' r\, t\, \vec x_0\, \vec x_1\, \vec y\, a\, z\, (\rec \, s\, t\, \vec x_1\, \vec y \, (r\, \vec x_0\, \vec y\, a)\, z) & \text{by inductive hypothesis, \eqref{eqn:comm-cut-rec-ih}}\\
=& t\, \vec x_1\, \vec y \,(r\, \vec x_0\, \vec y\, a)\, z\, (\rec \, s\, t\, \vec x_1\, \vec y \, (r\, \vec x_0\, \vec y\, a)\, z) & \text{by $\cut'$ axiom}\\
=& \rec\, s\, t\, \vec x_1\, \vec y\, (r\, \vec x_0\, \vec y\, a)\, \succ z & \text{by $\rec$ axioms}
\end{array}
\]
From here we conclude the verification in $\nT n$ by the $\cntr$ and $\cut$ axioms.

\subsection{$\cut'$-$\cut$ commutative cases}
Sometimes we have to commute a $\cut_{>n}'$ over a $\cut_{\leq n}$.
It is notationally cumbersome to consider all splitting possibilities for commuting $\cut'$ over $\cut'$, so instead we just permute a $\cut_{>n}'$ step over a $\cut_{\leq n}$ step. 
From here, commutation over a $\cut_{\leq n}'$ step follows by further commutation over $\wk$-steps (see next subsection).

\[
\vlderivation{
	\vliin{\cut_{>n}'}{}{\vec \rho_0, \vec \rho_1, \vec \sigma \seqar \tau}{
		\vliin{\cut_{\leq n}}{}{\vec \rho_0, \vec \sigma \seqar \sigma}{
			\vltr{r}{\vec \rho_0, \vec \sigma \seqar \rho}{\vlhy{\ }}{\vlhy{\ }}{\vlhy{ \ }}
		}{
			\vltr{s}{\vec \rho_0, \vec \sigma, \rho \seqar \sigma}{\vlhy{\ }}{\vlhy{\ }}{\vlhy{\ }}
		}
	}{
		\vltr{t}{\vec \rho_1, \vec \sigma, \sigma \seqar \tau}{\vlhy{\ }}{\vlhy{\ }}{\vlhy{\ }}
	}
}
\]
is transformed to:
\[
\vlderivation{
	\vliin{\cut_{\leq n}'}{}{\vec \rho_0, \vec \rho_1, \vec \sigma \seqar \tau }{
		\vltr{r}{\vec \rho_0, \vec \sigma \seqar \rho}{\vlhy{\ }}{\vlhy{\ }}{\vlhy{ \ }}
	}{
		\vliin{\cut_{>n}'}{}{\vec \rho_0, \vec \rho_1, \vec \sigma, \rho \seqar \tau}{
			\vltr{s}{\vec \rho_0, \vec \sigma, \rho \seqar \sigma}{\vlhy{\ }}{\vlhy{\ }}{\vlhy{\ }}
		}{
			\vltr{t}{\vec \rho_1, \vec \sigma, \sigma \seqar \tau}{\vlhy{\ }}{\vlhy{\ }}{\vlhy{\ }}
		}
	}
}
\]

We verify that $\nT n$ proves the equality of the derivations before and after transformation as follows:
\[
\begin{array}{rll}
& \cut' r\, (\cut' s\, t)\, \vec x_0\, \vec x_1\, \vec y & \\
=& \cut' s\, t\, \vec x_0\, \vec x_1\, \vec y\, (r\, \vec x_0\, \vec y) & \text{by $\cut'$ axiom}\\
=& t\, \vec x_1\, \vec y\, (s\, \vec x_0\, \vec y\, (r\, \vec x_0\, \vec y)) & \text{by $\cut'$ axiom}\\
=& t\, \vec x_1\, \vec y\, (\cut\, r\, s\, \vec x_0, \vec y) & \text{by $\cut $ axiom}\\
=& \cut' (\cut\, r\, s)\, t\, \vec x_0\, \vec x_1\, \vec y & \text{by $\cut' $ axiom}
\end{array}
\]

There is also a similar cut-commutation the other way around, when the right premiss of a $\cut'$ step ends with $\cut$.
There is no commutation of $\cut'$ above the left side of a $\rec$ step, since that would immediately imply that the level of the cut is bounded by that of a recursor.

\subsection{$\cut'$-$\wk$ key and commutative case}

There are two possible interactions between $\cut'$ and $\wk$, depending on whether the cut-formula is weakened or not. Both are relatively simple.

\[
\vlderivation{
	\vliin{\cut'}{}{\vec \rho_0, \vec \rho_1, \vec \sigma \seqar \tau}{
		\vltr{s}{\vec \rho_0, \vec \sigma \seqar \sigma}{\vlhy{\ }}{\vlhy{\ }}{\vlhy{\ }}
	}{
		\vlin{\wk}{}{\vec \rho_1, \vec \sigma, \sigma \seqar \tau}{
		\vltr{t}{\vec \rho_1, \vec \sigma \seqar \tau}{\vlhy{\ }}{\vlhy{\ }}{\vlhy{\ }}
		}
	}
}
\]
is transformed to:
\[
\vlderivation{
	\vliq{\wk}{}{\vec \rho_0, \vec \rho_1, \vec \sigma \seqar \tau}{
	\vltr{t}{\vec \rho_1, \vec \sigma \seqar \tau}{\vlhy{\ }}{\vlhy{\ }}{\vlhy{\ }}
	}
}
\]
We verify that $\nT n$ proves the equality of the derivations before and after transformation as follows:
\[
\begin{array}{rll}
& \wk^*t\, \vec x_0\, \vec x_1\, \vec y & \\
= & t\, \vec x_1\, \vec y & \text{by $\wk$ axioms}\\
= & \wk\, t\, \vec x_1, \vec y\, (s\, \vec x_0\, \vec y) & \text{by $\wk $ axiom}\\
=& \cut' s\, (\wk\, t)\, \vec x_0\, \vec x_1, \vec y & \text{by $\cut'$ axiom}
\end{array}
\]

\[
\vlderivation{
\vliin{\cut'}{}{\vec \rho_0, \rho, \vec \rho_1, \vec \sigma \seqar \tau}{
	\vltr{s}{\vec \rho_0, \purple \rho, \vec \sigma \seqar \sigma }{\vlhy{\ }}{\vlhy{\ }}{\vlhy{\ }}
}{
	\vlin{\wk}{}{\rho, \vec \rho_1, \vec \sigma, \sigma \seqar \tau}{
	\vltr{t}{\vec \rho_1, \vec \sigma, \sigma \seqar \tau}{\vlhy{\ }}{\vlhy{\ }}{\vlhy{\ }}
	}
}
}
\]
is transformed to either the left or right derivations below, depending on whether the purple $\purple \rho$ is present or not, respectively:
\[
\vlderivation{
\vliin{\cut'}{}{\vec \rho_0, \rho, \vec \rho_1, \vec \sigma \seqar \tau}{
	\vltr{s}{\vec \rho_0, \purple \rho, \vec \sigma \seqar \sigma }{\vlhy{\ }}{\vlhy{\ }}{\vlhy{\ }}
}{
	\vltr{t}{\vec \rho_1, \vec \sigma, \sigma \seqar \tau}{\vlhy{\ }}{\vlhy{\ }}{\vlhy{\ }}
}
}
\qquad
\vlderivation{
	\vlin{\wk}{}{\vec \rho_0, \rho, \vec \rho_1, \vec \sigma \seqar \tau}{
	\vliin{\cut'}{}{\vec \rho_0, \vec \rho_1, \vec \sigma \seqar \tau}{
		\vltr{s}{\vec \rho_0, \vec \sigma \seqar \sigma }{\vlhy{\ }}{\vlhy{\ }}{\vlhy{\ }}
	}{
		\vltr{t}{\vec \rho_1, \vec \sigma, \sigma \seqar \tau}{\vlhy{\ }}{\vlhy{\ }}{\vlhy{\ }}
	}
	}
}
\]
We verify that $\nT n$ proves the equality of the derivations before and after transformation, respectively, as follows:
\[
\begin{array}{rll}
& \cut' s\, t\, \vec x_0\, x\, \vec x_1\, \vec y & \\
= & t\, \vec x_1\, \vec y\, (s\, \vec x_0\, x\, \vec y) & \text{by $\cut'$ axiom}\\
= & \wk\, t\, x\, \vec x_1\, \vec y\, (s\, \vec x_0\, x\, \vec y) & \text{by $\wk$ axiom}\\
= & \cut' s\, (\wk\, t)\, \vec x_0\, x\, \vec x_1\, \vec y & \text{by $\cut'$ axiom}
\end{array}
\qquad
\begin{array}{rll}
& \wk\, (\cut'\, s\, t)\, \vec x_0\, x\, \vec x_1\, \vec y & \\
= & \cut'\, s\, t\, \vec x_0\, \vec x_1\, \vec y & \text{by $\wk$ axiom}\\
= & t\, \vec x_1\, \vec y\, (s\, \vec x_0\, \vec y) & \text{by $\cut'$ axiom} \\
= & \wk\, t\, x\, \vec x_1\, \vec y\, (s\, \vec x_0\, \vec y) & \text{by $\wk $ axiom}\\
= & \cut' s\, (\wk\, t)\, \vec x_0\, x\, \vec x_1\, \vec y & \text{by $\cut'$ axiom}
\end{array}
\]

There are similar cases when the left premiss of a $\cut'$ concludes a $\wk$ step.

\subsection{$\cut'$-$\leftimp'$ commutative cases}
We have to treat the cases when a $\leftimp'$ step is on the left or on the right separately.

\[
\vlderivation{
	\vliin{\cut'}{}{\vec \rho_0, \vec \rho_1, \vec \sigma ,\rho \to \sigma \seqar \tau }{
		\vliin{\leftimp'}{}{\vec \rho_0, \vec \sigma , \rho \to \sigma \seqar \pi}{
			\vltr{r}{\vec \rho_0, \vec \sigma , \rho \to \sigma \seqar \rho}{\vlhy{\ }}{\vlhy{\ }}{\vlhy{\ }}
		}{
			\vltr{s}{\vec \rho_0, \sigma, \rho \to \sigma,\sigma\seqar \pi }{\vlhy{\ }}{\vlhy{\ }}{\vlhy{\ }}
		}
	}{
		\vltr{t}{\vec \rho_1, \vec \sigma, \purple{\rho \to \sigma}, \pi, \seqar \tau}{\vlhy{\ }}{\vlhy{\ }}{\vlhy{\ }}
	}
}
\]
is transformed to,
\[
\vlderivation{
	\vliin{\leftimp'}{}{\vec \rho_0, \vec \rho_1, \vec \sigma, \rho \to \sigma \seqar \tau}{
		\vliq{\wk}{}{\vec \rho_0, \vec \rho_1, \vec \sigma , \rho \to \sigma \seqar \rho}{
		\vltr{r}{\vec \rho_0, \vec \sigma , \rho \to \sigma \seqar \rho}{\vlhy{\ }}{\vlhy{\ }}{\vlhy{\ }}
		}
	}{
		\vliin{\cut'}{}{\vec \rho_0, \vec \rho_1 , \vec \sigma, \rho \to \sigma , \sigma \seqar \tau}{
			\vltr{s}{\vec \rho_0, \sigma, \rho \to \sigma,\sigma\seqar \pi }{\vlhy{\ }}{\vlhy{\ }}{\vlhy{\ }}
		}{
			\vltr{t}{\vec \rho_1, \vec \sigma, \purple{\rho \to \sigma}, \pi \seqar \tau}{\vlhy{\ }}{\vlhy{\ }}{\vlhy{\ }}
		}
	}
}
\]
where the purple occurrence of $\purple{\rho\to\sigma}$ after transformation is present only if it is present before transformation.

We verify that $\nT n$ proves the equality of the derivations before and after transformation as follows,
\[
\begin{array}{rll}
&\leftimp' (\wk^*r)\, (\cut' s\, t)\, \vec x_0\, \vec x_1\, \vec y\, z & \\
=& \cut' s\, t\, \vec x_0\, \vec x_1\, \vec y\, z\, (z\, (\wk^* r\, \vec x_0\, \vec x_1\, \vec y\, z)) & \text{by $\leftimp'$ axiom}\\
=& \cut' s\, t\, \vec x_0\, \vec x_1\, \vec y\, z\, (z\, ( r\, \vec x_0\, \vec y\, z)) & \text{by $\wk$ axioms}\\
=& t\, \vec x_1\, \vec y\, \purple{z}\, (s\, \vec x_0\, \vec y\, z\, (z\, ( r\, \vec x_0\, \vec y\, z)) ) & \text{by $\cut'$ axiom}\\
=& t\, \vec x_1\, \vec y\, \purple{z}\, (\leftimp' r\, s\, \vec x_0\, \vec y\, z ) & \text{by $\leftimp'  $ axiom}\\
=& \cut' (\leftimp' r\, s)\, t\, \vec x_0\, \vec x_1\, \vec y\, z
\end{array}
\]
where, again, the purple $\purple z$ is present just if the purple $\purple{\rho\to\sigma}$ are present in the derivations.

\[
\vlderivation{
	\vliin{\cut'}{}{\vec \rho_0, \vec \rho_1, \vec \sigma, \rho \to \sigma \seqar \tau}{
		\vltr{r}{\vec \rho_0, \vec \sigma, \purple{\rho \to \sigma} \seqar \pi}{\vlhy{\ }}{\vlhy{\ }}{\vlhy{\ }}
	}{
		\vliin{\leftimp'}{}{\vec \rho_1, \vec \sigma, \rho\to \sigma, \pi \seqar \tau}{
			\vltr{s}{\vec \rho_1, \vec \sigma, \rho \to \sigma, \pi \seqar \rho}{\vlhy{\ }}{\vlhy{\ }}{\vlhy{\ }}
		}{
			\vltr{t}{\vec \rho_1, \vec \sigma, \rho \to \sigma, \sigma, \pi \seqar \tau}{\vlhy{\ }}{\vlhy{\ }}{\vlhy{ \ }}
		}
	}
}
\]
is transformed to,
\[
\vlderivation{
	\vliin{\leftimp'}{}{\vec \rho_0, \vec \rho_1, \vec \sigma, \rho \to \sigma \seqar \tau}{
		\vliin{\cut'}{}{\vec \rho_0, \vec \rho_1, \vec \sigma, \rho \to \sigma \seqar \rho}{
			\vltr{r}{\vec \rho_0, \vec \sigma, \purple{\rho \to \sigma} \seqar \pi}{\vlhy{\ }}{\vlhy{\ }}{\vlhy{\ }}
		}{
			\vltr{s}{\vec \rho_1, \vec \sigma, \rho \to \sigma, \pi \seqar \rho}{\vlhy{\ }}{\vlhy{\ }}{\vlhy{\ }}
		}
	}{
		\vliin{\cut'}{}{\vec \rho_0, \vec \rho_1, \vec \sigma, \rho \to \sigma, \sigma \seqar \tau}{
			\vltr{r}{\vec \rho_0, \vec \sigma, \purple{\rho \to \sigma} \seqar \pi}{\vlhy{\ }}{\vlhy{\ }}{\vlhy{\ }}
		}{
			\vltr{t}{\vec \rho_1, \vec \sigma, \rho \to \sigma, \sigma, \pi \seqar \tau}{\vlhy{\ }}{\vlhy{\ }}{\vlhy{ \ }}
		}
	}
}
\]
where the purple occurrences of $\purple{\rho \to \sigma}$ after transformation are present just if they are before transformation.

We verify that $\nT n$ proves the equality of the derivations before and after transformation as follows,
\[
\begin{array}{rll}
& \leftimp ' (\cut'r\, s)\, (\cut'r\, t)\, \vec x_0\, \vec x_1\, \vec y\, z & \\
= & \cut'r\, t\, \vec x_0\, \vec x_1\, \vec y\, z\, (z\, (\cut'r\, s\, \vec x_0\, \vec x_1\, \vec y\, z ) )& \text{by $\leftimp'$ axiom }\\
=& \cut'r\, t\, \vec x_0\, \vec x_1\, \vec y\, z\, (z\, (s\, \vec x_1\, \vec y\, z\, (r\, \vec x_0\, \vec y\, \purple z) ) ) & \text{by $\cut'$ axiom}\\
=& t\, \vec x_1\, \vec y\, z\, (z\, (s\, \vec x_1\, \vec y\, z\, (r\, \vec x_0\, \vec y\, \purple z) ) )\, (r\, \vec x_0\, \vec y\, \purple z) & \text{by $\cut'$ axiom}\\
=& \leftimp' s\, t\, \vec x_1\, \vec y\, z\, (r\, \vec x_0\, \vec y\, \purple z) & \text{by $\leftimp'$ axiom}\\
=& \cut'r\, (\leftimp's\, t)\, \vec x_0\, \vec x_1\, \vec y\, z & \text{by $\cut'$ axiom}
\end{array}
\]
where the purple $\purple z$ occurrences are present only if the purple $\purple{\rho\to\sigma}$ occurrences are present in the derivations.

\subsection{$\cut'$-$\rightimp$ commutative case}

\[
\vlderivation{
	\vliin{\cut'}{}{\vec \rho_0, \vec \rho_1, \vec \sigma \seqar \rho \to \tau}{
		\vltr{s}{\vec \rho_0, \vec \sigma \seqar \sigma}{\vlhy{\ }}{\vlhy{\ }}{\vlhy{\ }}
	}{
		\vlin{\rightimp}{}{\vec \rho_1, \vec \sigma, \sigma \seqar \rho \to \tau}{
		\vltr{t}{\vec \rho_1, \vec \sigma, \sigma, \rho \seqar \tau}{\vlhy{\ }}{\vlhy{\ }}{\vlhy{\ }}
		}
	}
}
\]
is transformed to:
\[
\vlderivation{
	\vlin{\rightimp}{}{\vec \rho_0, \vec \rho_1, \vec \sigma \seqar \rho \to \tau}{
	\vliin{\cut' }{}{\vec \rho_0, \vec \rho_1, \vec \sigma, \rho \seqar \tau}{
		\vltr{s}{\vec \rho_0, \vec \sigma \seqar \sigma}{\vlhy{\ }}{\vlhy{\ }}{\vlhy{\ }}
	}{
		\vltr{t}{\vec \rho_1, \vec \sigma, \sigma, \rho \seqar \tau}{\vlhy{\ }}{\vlhy{\ }}{\vlhy{\ }}
	}
	}
}
\]

We verify that $\nT n$ proves the equality of the derivations before and after transformation as follows:
\[
\begin{array}{rll}
& \rightimp\, (\cut's\, t) \vec x_0\, \vec x_1\, \vec y\, z & \\
=& \cut' s\, t\, \vec x_0\, \vec x_1\, \vec y\, z & \text{by $\rightimp$ axiom}\\
=& t\, \vec x_1\, \vec y\, (s\, \vec x_0\, \vec y)\, z & \text{by $\cut'$ axiom}\\
=& \rightimp\, t\, \vec x_1\, \vec y\, (s\, \vec x_0\, \vec y)\, z & \text{by $\rightimp$ axiom}\\
=& \cut'\, s\, (\rightimp\, t)\, \vec x_0\, \vec x_1\, \vec y\, z & \text{by $\cut'$ axiom}
\end{array}
\]

\end{document}